\definecolor{corlinks}{RGB}{0,0,150}
\definecolor{cormenu}{RGB}{0,0,150}
\definecolor{corurl}{RGB}{0,0,150}
\newtheorem{theorem}{Theorem}
\newtheorem{lemma}{Lemma}
\newtheorem{corollary}{Corollary}
\newtheorem{definition}{Definition}
\newtheorem{proposition}{Proposition}
\newtheorem{remark}{Remark}
\newcommand{\eqdef}{\stackrel{\rm def}{=}}
\DeclareMathOperator{\E}{\mathbb{E}}
\def\colorful{1}
\NewDocumentCommand{\diamondinclusion}{m >{\SplitArgument{1}{\\}}m m}
 {%
  \dodiamondinclusion{#1}#2{#3}%
 }
\NewDocumentCommand{\dodiamondinclusion}{mmmm}
 {%
  \begingroup
  \setlength{\arraycolsep}{0pt}%
  \renewcommand{\arraystretch}{-2}%
  \begin{matrix}
  && #2 \\
  & \rsubseteq{45} && \rsubseteq{-45} \\
  #1 &&&& #4 \\
  & \rsubseteq{-45} && \rsubseteq{45} \\
  && #3
  \end{matrix}%
  \endgroup
 }
\NewDocumentCommand{\rsubseteq}{m}
 {%
  \rotatebox[origin=c]{#1}{$\subseteq$}%
 }
\begin{document}

\title{Conspiracies between Learning Algorithms,\\Circuit Lower Bounds and Pseudorandomness\vspace{0.4cm}}

\author{Igor C. Oliveira\\
  \small{Charles University in Prague}\\
  %\and Rahul Santhanam\footnote{\texttt{rahul.santhanam@cs.ox.ac.uk}.}\\
  \and Rahul Santhanam\\
  \small{University of Oxford}\vspace{0.3cm}
}

%\date{}

\maketitle

\vspace{-0.5cm}

\begin{abstract}

We prove several results giving new and stronger connections between learning theory, circuit complexity and pseudorandomness. Let $\mathfrak{C}$ be any typical class of Boolean circuits, and $\mathfrak{C}[s(n)]$ denote $n$-variable $\mathfrak{C}$-circuits of size $\leq s(n)$. We show:\\

\vspace{0.3cm}

\noindent \textbf{Learning Speedups.}  If $\mathfrak{C}[\mathsf{poly}(n)]$ admits a randomized weak learning algorithm under the uniform distribution with membership queries that runs in time $2^n/n^{\omega(1)}$, then for every $k\geq 1$ and $\varepsilon>0$ the class $\mathfrak{C}[n^k]$ can be learned to high accuracy in time $O(2^{n^\varepsilon})$. There is $\varepsilon > 0$ such that $\mathfrak{C}[2^{n^{\varepsilon}}]$ can be learned in time $2^n/n^{\omega(1)}$ if and only if $\mathfrak{C}[\mathsf{poly}(n)]$ can be learned in time $2^{(\log n)^{O(1)}}$.\\

\noindent \textbf{Equivalences between Learning Models.} We use learning speedups to obtain equivalences between various randomized learning and compression models, including sub-exponential time learning with membership queries, sub-exponential time learning with membership and equivalence queries, probabilistic function compression and probabilistic average-case function compression.\\

\noindent \textbf{A Dichotomy between Learnability and Pseudorandomness.} In the non-uniform setting, there is non-trivial learning for $\mathfrak{C}[\mathsf{poly}(n)]$ if and only if there are no exponentially secure pseudorandom functions computable in $\mathfrak{C}[\mathsf{poly}(n)]$.\\

\noindent \textbf{Lower Bounds from Nontrivial Learning.} If for each $k \geq 1$, (depth-$d$)-$\mathfrak{C}[n^k]$ admits a randomized weak learning algorithm with membership queries under the uniform distribution that runs in time $2^n/n^{\omega(1)}$, then for each $k\geq 1$, $\mathsf{BPE} \nsubseteq$ (depth-$d$)-$\mathfrak{C}[n^k]$. If for some $\varepsilon > 0$ there are $\mathsf{P}$-natural proofs useful against $\mathfrak{C}[2^{n^{\varepsilon}}]$, then $\mathsf{ZPEXP} \nsubseteq \mathfrak{C}[\mathsf{poly}(n)]$.\\

\noindent \textbf{Karp-Lipton Theorems for Probabilistic Classes.} If there is a $k > 0$ such that $\mathsf{BPE} \subseteq \mathtt{i.o.}\mathsf{Circuit}[n^k]$, then $\mathsf{BPEXP} \subseteq \mathtt{i.o.}\mathsf{EXP}/O(\mathsf{log}\,n)$. If $\mathsf{ZPEXP} \subseteq \mathtt{i.o.}\mathsf{Circuit}[2^{n/3}]$, then $\mathsf{ZPEXP} \subseteq \mathtt{i.o.}\mathsf{ESUBEXP}$.\\

\noindent \textbf{Hardness Results for $\mathsf{MCSP}$.} All functions in non-uniform $\mathsf{NC}^1$ reduce to the Minimum Circuit Size Problem via truth-table reductions computable by $\mathsf{TC}^0$ circuits. In particular, if $\mathsf{MCSP} \in \mathsf{TC}^0$ then $\mathsf{NC}^1 = \mathsf{TC}^0$.~\\
\end{abstract}

\newpage

\tableofcontents

\newpage

\section{Introduction}
\label{s:introduction}

Which classes of functions can be efficiently learned? Answering this question has been a major research direction in computational learning theory since the seminal work of Valiant \cite{Valiant84} formalizing efficient learnability.

For concreteness, consider the model of learning with membership queries under the uniform distribution. In this model, the learner is given oracle access to a target Boolean function and aims to produce, with high probability, a hypothesis that approximates the target function well on the uniform distribution. Say that a circuit class $\mathfrak{C}$ is {\it learnable} in time $T$ if there is a learner running in time $T$ such that for each function $f \in \mathfrak{C}$, when given oracle access to $f$ the learner outputs the description of a Boolean function $h$ approximating $f$ well under the uniform distribution. The hypothesis $h$ is not required to be from the same class $\mathfrak{C}$ of functions. (This and other learning models that appear in our work are defined in Section \ref{s:preliminaries}.) %Several early results in learning theory are described in \citep{KearnsVazirani:94}.

Various positive and conditional negative results are known for natural circuit classes in this model, and here we highlight only a few. Polynomial-time algorithms are known for polynomial-size DNF formulas \citep{DBLP:journals/jcss/Jackson97}. Quasi-polynomial time algorithms are known for polynomial-size constant-depth circuits with AND, OR and NOT gates \citep{DBLP:journals/jacm/LinialMN93} (i.e., $\mathsf{AC}^0$ circuits), and in a recent breakthrough \citep{CIKK16}, for polynomial-size constant-depth circuits which in addition contain MOD[$p$] gates, where $p$ is a fixed prime ($\mathsf{AC}^0[p]$ circuits). In terms of hardness, it is known that under certain cryptographic assumptions, the class of polynomial-size constant-depth circuits with threshold gates ($\mathsf{TC}^0$ circuits) is not learnable in sub-exponential time \citep{DBLP:journals/jacm/NaorR04}. (We refer to Section \ref{s:preliminaries} for a review of the inclusions between standard circuit classes.)

However, even under strong hardness assumptions, it is still unclear how powerful a circuit class needs to be before learning becomes utterly infeasible. For instance, whether non-trivial learning algorithms exist for classes beyond $\mathsf{AC}^0[p]$ remains a major open problem. 

Inspired by \citep{CIKK16}, we show that a general and surprising {\it speedup} phenomenon holds unconditionally for learnability of strong enough circuit classes around the border of currently known learning algorithms. Say that a class is {\it non-trivially learnable} if it is learnable in time $\leq 2^n/n^{w(1)}$, where $n$ is the number of inputs to a circuit in the class, and furthermore the learner is only required to output a hypothesis that is an approximation for the unknown function with inverse polynomial advantage. We show that for ``typical'' circuit classes such as constant-depth circuits with Mod[$m$] gates where $m$ is an arbitrary but fixed composite ($\mathsf{ACC}^0$ circuits), constant-depth threshold circuits, formulas and general Boolean circuits, non-trivial learnability in fact implies high-accuracy learnability in time $2^{n^{o(1)}}$, i.e., in sub-exponential time.

\begin{lemma} [Speedup Lemma, Informal Version]
\label{l:speedup}
Let $\mathfrak{C}$ be a typical circuit class. Polynomial-size circuits from $\mathfrak{C}$ are non-trivially learnable if and only if polynomial-size circuits from $\mathfrak{C}$ are \emph{(}strongly\emph{)} learnable in sub-exponential time. Subexponential-size circuits from $\mathfrak{C}$ are non-trivially learnable if and only if polynomial-size circuits from $\mathfrak{C}$ are \emph{(}strongly\emph{)} learnable in quasi-polynomial time.
\end{lemma}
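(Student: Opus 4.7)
\medskip
\noindent\textbf{Proof Plan.} Both backward directions are trivial, since strong sub-exponential or quasi-polynomial learning immediately implies non-trivial weak learning. I would prove the forward directions by instantiating the learning-from-natural-properties framework of Carmosino--Impagliazzo--Kabanets--Kolokolova in a speedup regime, combined with a standard padding argument to translate between the two size/time settings.

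The first step is to convert a non-trivial weak learner for $\mathfrak{C}[\mathsf{poly}(n)]$ into a randomized distinguisher that, in time $2^n/n^{\omega(1)}$ and on input the truth table of an $n$-variable function $g$, accepts $g \in \mathfrak{C}[\mathsf{poly}(n)]$ with probability noticeably larger than on a uniformly random $g$. One runs the learner using the truth table as membership oracle and estimates the hypothesis's empirical agreement on fresh samples: on a $\mathfrak{C}[\mathsf{poly}]$-input the agreement is the guaranteed $\tfrac12 + 1/\mathsf{poly}(n)$, while on random truth tables a counting argument bounds it by $\tfrac12 + o(1)$. This yields a $\mathsf{P}$-natural property useful against $\mathfrak{C}[\mathsf{poly}(n)]$ with the same $n^{\omega(1)}$ savings.

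I would then use this distinguisher to learn an unknown $f \in \mathfrak{C}[n^k]$ on $n$ inputs by the following reconstruction. Plug $f$ as the ``hard function'' into a Nisan--Wigderson pseudorandom generator $G^{f}$ built from a combinatorial design whose slices have size $n$, with seed length $m = n^{\Theta(\varepsilon)}$ and stretch to $M = 2^{m}$ output bits. Typicality of $\mathfrak{C}$ ensures that the truth table of $G^{f}$ on $m$ inputs is computable by a $\mathfrak{C}[\mathsf{poly}(m)]$-size circuit (each output bit is $f$ composed with a projection, combined via small $\mathfrak{C}$-selectors), so feeding this table to the Step~1 distinguisher must yield the ``small-circuit'' verdict. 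By the Impagliazzo--Wigderson reconstruction paradigm (Yao's next-bit predictor followed by the hybrid argument), the resulting advantage can be converted, using $\mathsf{poly}(M)$ membership queries to $f$ and $\mathsf{poly}(M)$ calls to the distinguisher, into a $\mathsf{poly}(M)$-size oracle circuit that agrees with $f$ on a $\tfrac12 + 1/\mathsf{poly}(n)$ fraction of inputs. Standard uniform-distribution boosting then amplifies this to high accuracy, and the total running time is $2^{O(n^{\varepsilon})}$, as required. The ``subexponential-size / quasi-polynomial-time'' equivalence is obtained from the ``polynomial-size / sub-exponential-time'' one by padding: a size-$2^{n^{\varepsilon}}$ $\mathfrak{C}$-circuit on $n$ inputs, padded with dummy variables, becomes a poly-size $\mathfrak{C}$-circuit on $N = 2^{n^{\varepsilon}}$ inputs, and $2^{\mathsf{polylog}(N)} = 2^{n^{O(\varepsilon)}}$ falls below the $2^n/n^{\omega(1)}$ threshold for small enough $\varepsilon$; the reverse direction is symmetric.

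The main obstacle is the parameter balance in the NW reduction. One must choose the design and the stretch $M$ so that simultaneously (i) the output truth table of $G^{f}$ truly lies in $\mathfrak{C}[\mathsf{poly}(m)]$ for each listed typical class -- which constrains the design to slices of size exactly $n$ with gluing via $\mathfrak{C}$-computable selectors, and is the step where the closure hypotheses on $\mathfrak{C}$ are invoked -- and (ii) the $n^{\omega(1)}$ speedup of the distinguisher is not drowned by the $2^{O(m)}$ factor incurred by the reconstruction loop over $G^{f}$'s output bits. Balancing these yields $m = n^{\Theta(\varepsilon)}$ and total time $2^{O(n^{\varepsilon})}$; verifying the typicality closure for each of $\mathsf{ACC}^0$, constant-depth threshold circuits, formulas, and general $\mathsf{Circuit}$ is then a case-by-case but routine check.
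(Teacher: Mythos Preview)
Your overall architecture matches the paper's: convert the non-trivial learner into a complexity distinguisher via a counting argument, then feed an NW-style generator built from the target $f$ to that distinguisher and reconstruct. However, two steps do not go through as written.

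First, the accuracy you claim after reconstruction is off by an exponential factor. With stretch $M = 2^{m} = 2^{n^{\Theta(\varepsilon)}}$, Yao's next-bit predictor plus the hybrid argument yields advantage $\Omega(1/M) = 2^{-n^{\Theta(\varepsilon)}}$ on $f$, not $1/\mathsf{poly}(n)$. ``Standard uniform-distribution boosting'' cannot repair this: boosting requires a weak learner that succeeds under each reweighted distribution the booster generates, whereas NW reconstruction hands you only a single fixed circuit correlating with $f$ under the uniform distribution. The paper (via the black-box generator of Carmosino--Impagliazzo--Kabanets--Kolokolova) resolves this not by boosting but by building hardness amplification into the generator: one applies NW to an amplified version $\tilde f$ of $f$ (direct product plus Goldreich--Levin, computable within $\mathfrak{C}$ for any typical $\mathfrak{C}\supseteq \mathsf{AC}^0[p]$), so that a $2^{-n^{\varepsilon}}$-predictor for $\tilde f$ decodes to a circuit that is $n^{-k}$-close to $f$. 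Without this amplification layer your second step does not reach the stated accuracy, and closure of $\mathfrak{C}$ under the amplification map is an additional ingredient you have not accounted for.

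Second, the high-end ``only if'' direction (non-trivial learning of $\mathfrak{C}[2^{n^{\varepsilon}}]$ implies quasi-polynomial strong learning of $\mathfrak{C}[\mathsf{poly}]$) does not follow from the low-end statement by padding. Padding works for the \emph{if} direction, exactly as you describe, and the paper uses it there too. But in the other direction there is nothing to un-pad: to achieve running time $2^{(\log n)^{O(1)}}$ one must take the generator's inner input length to be $\ell = (\log n)^{O(1)}$, and then each $g_{z}$ is a function on $\ell$ bits of size $\mathsf{poly}(n) = 2^{\ell^{\Theta(1)}}$, i.e., sub-exponential rather than polynomial in $\ell$. The distinguisher must therefore be assumed to work against $\mathfrak{C}[2^{\ell^{\varepsilon}}]$-circuits on $\ell$ bits, which is precisely the high-end hypothesis. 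The paper obtains this direction by rerunning the distinguisher-to-learner argument with this parameter choice, not by reducing to the low-end case.
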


Note that the class of {\it all} \,Boolean functions is learnable in time $\leq 2^n/n^{\Omega(1)}$ with $\geq 1/n$ advantage simply by querying the function oracle on $2^n/n^{O(1)}$ inputs, and outputting the best constant in $\{0,1\}$ for the remaining (unqueried) positions of the truth-table. Our notion of non-trivial learning corresponds to merely beating this trivial brute-force algorithm -- this is sufficient to obtain much more dramatic speedups for learnability of typical circuit classes.

In order to provide more intuition for this result, we compare the learning scenario to another widely investigated algorithmic framework. Consider the problem of checking if a circuit from a fixed circuit class is \emph{satisfiable}, a natural generalization of the $\mathsf{CNF}$-$\mathsf{SAT}$ problem. Recall that $\mathsf{ACC}^0$ circuits are circuits of constant depth with AND, OR, NOT, and modulo gates. There are non-trivial satisfiability algorithms for $\mathsf{ACC}^0$ circuits of size up to $2^{n^\varepsilon}$, where $\varepsilon > 0$ depends on the depth and modulo gates \citep{DBLP:journals/jacm/Williams14}. On the other hand, if such circuits admitted a non-trivial learning algorithm, it follows from the Speedup Lemma that polynomial size $\mathsf{ACC}^0$ circuits can be learned in quasi-polynomial time (see Figure \ref{f:speedup}).    

The Speedup Lemma suggests new approaches both to designing learning algorithms and to proving hardness of learning results. To design a quasi-polynomial time learning algorithm for polynomial-size circuits from a typical circuit class, it suffices to obtain a minimal improvement over the trivial brute-force algorithm for sub-exponential size circuits from the same class. Conversely, to conclude that the brute-force learning algorithm is essentially optimal for a typical class of polynomial-size circuits, it suffices to use an assumption under which subexponential-time learning is impossible.

We use the Speedup Lemma to show various structural results about learning. These include equivalences between several previously defined learning models, a dichotomy between sub-exponential time learnability and the existence of pseudo-random function generators in the non-uniform setting, and implications from non-trivial learning to circuit lower bounds. 

The techniques we explore have other consequences for complexity theory, such as Karp-Lipton style results for bounded-error exponential time, and results showing hardness of the Minimum Circuit Size Problem for a standard complexity class. In general, our results both exploit and strengthen the rich web of connections between learning, pseudo-randomness and circuit lower bounds, which promises to have further implications for our understanding of these fundamental notions. We now describe these contributions in more detail.

\begin{figure}[H]
\centering
\includegraphics[scale=1, trim={3.5cm 17.9cm 3cm 4.3cm}, clip]{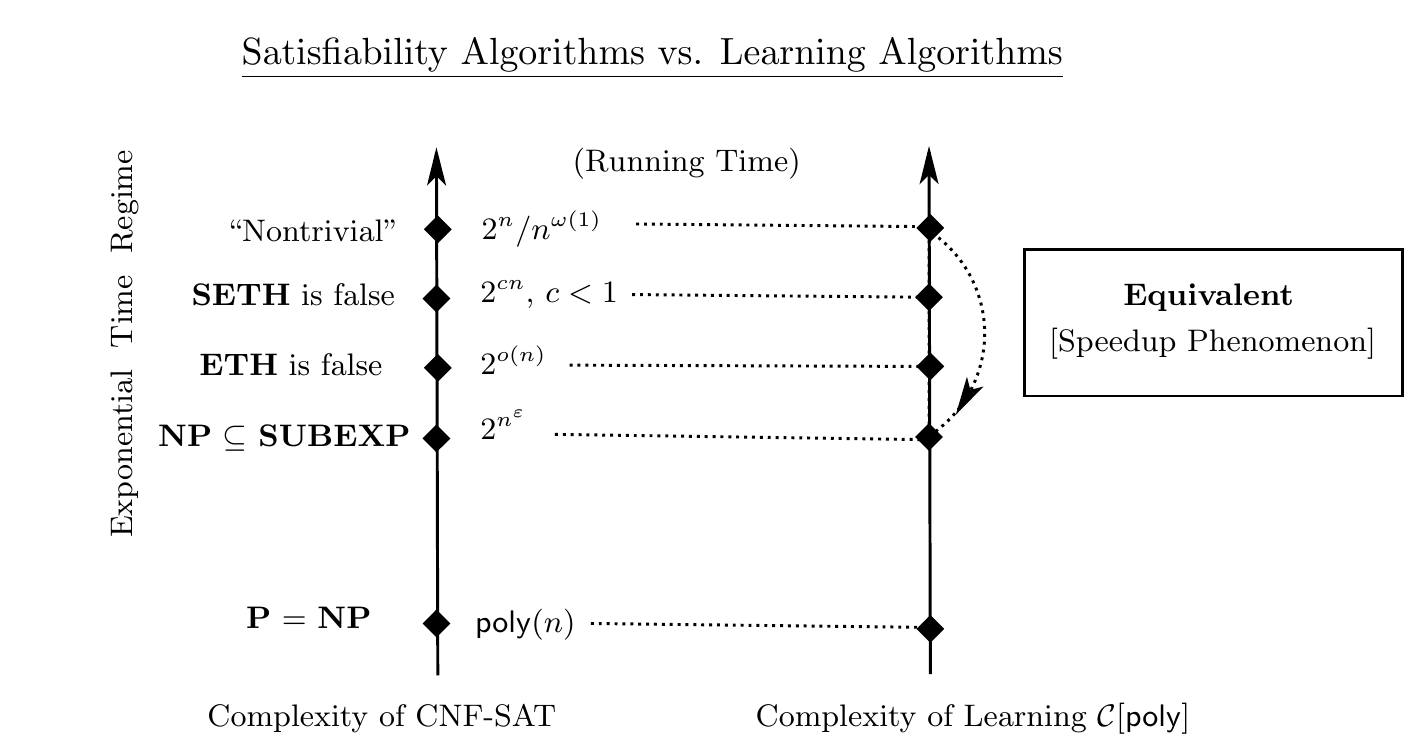}
\caption{A speedup phenomenon in computational learning theory for typical circuit classes for learning under the uniform distribution with membership queries. The speedup procedure simultaneously boosts accuracy and running time.}
\label{f:speedup}
\end{figure}

\subsection{Summary of Results}\label{ss:results}

We state below informal versions of our main results. We put these results in perspective and compare them to previous work in Section \ref{ss:related_work}.\\

{\bf \noindent Equivalences for Learning Models.}\vspace{0.15cm}

The Speedup Lemma shows that learnability of polynomial size circuits for typical circuit classes is not sensitive to the distinction between randomized sub-exponential time algorithms and randomized non-trivial  algorithms. We use the Speedup Lemma to further show that for such classes, learnability for a range of previously defined learning models is equivalent. These include the worst-case and average-case versions of function compression as defined by Chen et al. \citep{DBLP:journals/cc/ChenKKSZ15} (see also \citep{DBLP:journals/eccc/Srinivasan15}), and randomized learning with membership and equivalence queries \citep{DBLP:journals/ml/Angluin87}.\footnote{Our notion of randomized learning with membership and equivalence queries allows the learner's hypothesis to be incorrect on a polynomially small fraction of the inputs.}  The equivalence between function compression and learning in particular implies that accessing the entire truth table of a function represented by the circuit from the class confers no advantage in principle over having limited access to the truth table. 

\begin{theorem} [Equivalences for Learning Models, Informal Version]
\label{t:equivalences}
The following are equivalent for polynomial-size circuits from a typical circuit class $\mathfrak{C}$\emph{:}
\begin{enumerate}

\item[\emph{1.}] Sub-exponential time learning with membership queries. 

\item[\emph{2.}] Sub-exponential time learning with membership and equivalence queries.

\item[\emph{3.}] Probabilistic function compression.

\item[\emph{4.}] Average-case probabilistic function compression.

\item[\emph{5.}] Exponential time distinguishability from random functions.
\end{enumerate}
\end{theorem}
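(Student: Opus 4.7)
The plan is to prove the cycle $(1) \Rightarrow (2) \Rightarrow (3) \Rightarrow (4) \Rightarrow (5) \Rightarrow (1)$, where every arrow but the last is essentially a simulation or definitional inclusion, and the final arrow is where the Speedup Lemma does the heavy lifting. The conceptual point is that distinguishability, compression, and learning differ primarily in whether the algorithm is given the full truth table, a part of it, or only oracle access, and the Speedup Lemma shows that this distinction collapses for typical circuit classes at the sub-exponential time scale.

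\textbf{The easy arcs.} Any MQ-learner is trivially an MQ$+$EQ-learner that never issues an equivalence query, giving $(1) \Rightarrow (2)$. For $(2) \Rightarrow (3)$, the compressor receives the $2^n$-bit truth table of $f$ as input, so it can simulate a sub-exponential time MQ$+$EQ learner by answering each membership query via table lookup in $O(1)$ time and each equivalence query by scanning the truth table against the current hypothesis in $O(2^n)$ time; the total running time remains $\poly(2^n)$ and the learner's final hypothesis is the compressed description. Worst-case compression is a special case of average-case compression, yielding $(3) \Rightarrow (4)$. For $(4) \Rightarrow (5)$, I would build a distinguisher that runs the average-case compressor on its input truth table $T$ and accepts iff the compressed hypothesis agrees with $T$ on more than $1/2 + 1/\poly(n)$ of the positions: on $f \in \mathfrak{C}[\poly(n)]$ this succeeds by hypothesis, whereas a union bound over all possible short outputs of the compressor combined with Chernoff concentration of agreement rates shows that a uniformly random truth table is accepted with vanishing probability. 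The distinguisher runs in time $\poly(2^n) = 2^{O(n)}$.

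\textbf{The key step and main obstacle.} For $(5) \Rightarrow (1)$, I would follow the Carmosino--Impagliazzo--Kabanets--Kolokolova paradigm: an exponential time distinguisher that separates $\mathfrak{C}[\poly(n)]$ from random functions can be converted, via a Nisan--Wigderson style reconstruction (equivalently, a Yao next-bit predictor argument applied to the truth table), into a randomized non-trivial weak learner for $\mathfrak{C}[\poly(n)]$ running in time $2^n/n^{\omega(1)}$ under the uniform distribution with membership queries. The Speedup Lemma then upgrades this non-trivial learner into a sub-exponential time strong learner, delivering $(1)$ and closing the cycle. The main obstacle is precisely this distinguisher-to-learner conversion: one must verify that the reconstruction argument produces a hypothesis evaluable within the non-trivial time budget and that it goes through uniformly for every typical class $\mathfrak{C}$ considered, so that the structural hypotheses required by the Speedup Lemma are met. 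Once that conversion is in hand, the Speedup Lemma does the rest, and the remaining arrows are routine.
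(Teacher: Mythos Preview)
Your cycle structure is fine and close to the paper's hub-and-spoke organization (everything routes through distinguishers). Two points deserve correction.

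First, in $(2)\Rightarrow(3)$ you output the learner's hypothesis as the compressed circuit, but the learner is only guaranteed to be $n^{-k}$-close to $f$, not exactly equal. The paper patches this by computing the sparse error function $h' = h \oplus f$ explicitly (it has at most $2^n/n^k$ ones, hence a circuit of size $o(2^n/n)$) and outputting $h \oplus h'$. Without this step you get average-case compression (item 4), not exact compression.

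Second, and more importantly, your description of $(5)\Rightarrow(1)$ is circular. You propose: distinguisher $\to$ non-trivial weak learner (via ``Yao next-bit predictor on the truth table'') $\to$ sub-exponential learner (via the Speedup Lemma). But a next-bit predictor on the $2^n$-bit truth table needs access to up to $2^n-1$ prior bits, so it does not give a learner running in time $2^n/n^{\omega(1)}$. The actual mechanism is different: the black-box generator of CIKK maps $f$ on $n$ bits to a family of functions $g_z$ on $\ell = n^{\varepsilon'}$ bits, each still in $\mathfrak{C}[\mathsf{poly}]$; the $2^{O(m)}$-time distinguisher is then invoked \emph{at input length $m=\ell$}, costing $2^{O(n^{\varepsilon'})}$, and the NW reconstruction directly yields a sub-exponential learner. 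This \emph{is} the engine inside the Speedup Lemma (Lemma~\ref{l:distfastlearning} in the paper), so invoking the Speedup Lemma afterward as a separate black box is redundant. In short: the distinguisher-to-sub-exponential-learner step is one move, not two, and the scale reduction to $\ell \ll n$ is where the running-time saving comes from.
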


In particular, in the randomized sub-exponential time regime and when restricted to learning under the uniform distribution, Valiant's model \citep{Valiant84} and Angluin's model \citep{DBLP:journals/ml/Angluin87} are equivalent in power with respect to the learnability of typical classes of polynomial size circuits.

\vspace{0.5cm}

{\bf \noindent A Dichotomy between Learning and Pseudorandomness.}\vspace{0.15cm} 

It is well-known that if the class of polynomial-size circuits from a class $\mathfrak{C}$ is learnable, then there are no pseudo-random function generators computable in $\mathfrak{C}$, as the learner can be used to distinguish random functions from pseudo-random ones \citep{DBLP:journals/jacm/KearnsV94}. A natural question is whether the converse is true: can we in general build pseudo-random functions in the class from non-learnability of the class? 
We are able to use the Speedup Lemma in combination with other techniques to show such a result in the {\it non-uniform} setting, where the pseudo-random function generator as well as the learning algorithm are non-uniform. As a consequence, for each typical circuit class $\mathfrak{C}$, there is a {\it dichotomy} between pseudorandomness and learnability -- either there are pseudo-random function generators computable in the class, or the class is learnable, but not both.

\begin{theorem} [Dichotomy between Learning and Pseudorandomness, Informal Version]
\label{t:dichotomy}

Let $\mathfrak{C}$ be a typical circuit class. There are pseudo-random function generators computable by polynomial-size circuits from $\mathfrak{C}$ that are secure against sub-exponential size Boolean circuits if and only if polynomial-size circuits from $\mathfrak{C}$ are learnable non-uniformly in sub-exponential time.
\end{theorem}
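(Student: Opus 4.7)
The plan is to derive the equivalence by routing through Theorem~\ref{t:equivalences}, which already identifies sub-exponential learnability of $\mathfrak{C}[\poly(n)]$ with ``exponential time distinguishability from random functions,'' and then to connect distinguishability to the pseudo-random function primitive itself. The overall proof therefore has two technical components: a non-uniform adaptation of Theorem~\ref{t:equivalences}, and a translation between distinguishability of class-computable functions from uniform random functions and the PRF security statement for a natural keyed family inside $\mathfrak{C}[\poly(n)]$.

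First I would verify that the non-uniform version of Theorem~\ref{t:equivalences} goes through. Its proof rests on the Speedup Lemma together with Karp--Lipton-style constructions; these respect the addition of non-uniform advice, so the equivalence between sub-exponential learning with membership queries and sub-exponential-size distinguishers carries over without loss. Concretely, a non-uniform learner running in time $2^{n^{o(1)}}$ is convertible into a non-uniform distinguisher of size $2^{n^{o(1)}}$ by the classical Kearns--Valiant simulation (run the learner on oracle $f$, extract a hypothesis $h$, then test agreement on fresh random inputs). The equivalence with probabilistic (average-case) function compression supplies the converse.

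Next I would pin down a canonical keyed family $\{F_s\}_s$, where $s$ of length $\poly(n)$ encodes a $\mathfrak{C}[n^c]$-circuit $C_s$ and $F_s(x) = C_s(x)$. Typicality of $\mathfrak{C}$ ensures $F_s \in \mathfrak{C}[\poly(n)]$ through universal-circuit closure, and that a natural uniform seed distribution is admissible. Under this encoding, PRF security against sub-exponential-size non-uniform adversaries is formalized as a statement about distinguishing advantage on the oracle $F_s$ versus a uniformly random function, and the non-uniform version of Theorem~\ref{t:equivalences} relates this advantage directly to the learnability status of $\mathfrak{C}[\poly(n)]$, giving the equivalence claimed in Theorem~\ref{t:dichotomy}.

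The main obstacle is matching the seed distribution and the keyed-function space in the PRF construction with the distribution under which Theorem~\ref{t:equivalences} discusses distinguishers. In the non-uniform regime the seed encoding can be hard-wired into the advice, but one must still argue that distinguishability is preserved when the adversary receives only oracle access rather than the full truth table; this is the PRF-versus-function-compression bridge, and it is subsumed by the equivalences already assembled in Theorem~\ref{t:equivalences}. A secondary subtlety is that the Speedup Lemma must be applied at matched parameter scales in both directions, so that the $2^{n^{o(1)}}$ security bound is preserved without polynomial blow-up eroding it.
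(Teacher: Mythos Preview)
Your plan has a genuine gap in the direction ``no PRFs $\Rightarrow$ learnable.'' You propose to consider the canonical keyed family $\{F_s\}$ (where $s$ encodes an arbitrary $\mathfrak{C}[n^c]$-circuit), observe that by hypothesis it is broken by some distinguisher $D$, and then invoke Theorem~\ref{t:equivalences} item~5. But item~5 is the existence of a \emph{complexity distinguisher} in the sense of Definition~\ref{d:probcompdist}: a single oracle procedure $A$ such that $\Pr[A^g=1]\le 1/3$ for \emph{every} $g\in\mathfrak{C}[\mathsf{poly}]$, and $\E_{f\sim\mathcal{F}_n}[A^f]\ge 2/3$. Breaking the canonical family gives only the \emph{average-case} statement $\bigl|\E_{s}[D^{F_s}]-\E_{f\sim\mathcal{F}_n}[D^f]\bigr|>\varepsilon$, which says nothing about any particular $g$; there may well be functions $g^\star\in\mathfrak{C}[\mathsf{poly}]$ on which $D^{g^\star}$ outputs~$1$ with high probability. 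The learning reconstruction in Lemma~\ref{l:distfastlearning} needs exactly the per-function guarantee (to ensure the distinguisher works against the specific NW family $\{g_z\}$ built from the target $f$), so you cannot plug an average-case PRF-breaker into it.

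This is precisely the place where the paper does real work. It sets up a two-player zero-sum \emph{PRF--Distinguisher game} (Section~\ref{ss:prf_dist}) whose rows are functions in $\mathcal{C}_n[s]$ and whose columns are oracle circuits. The hypothesis ``no PRF in $\mathfrak{C}$'' says that for every row mixed strategy there is a good column response; the min-max theorem then yields a single mixed column strategy that beats \emph{every} row strategy, in particular every pure row strategy, i.e.\ every individual $g\in\mathfrak{C}$. The small-support min-max theorem of Althöfer and Lipton--Young (Theorem~\ref{t:efficient_min_max}) bounds the description size of this mixed strategy, turning it into a bounded-size randomized oracle circuit. This is how one passes from ``every family has some distinguisher'' to ``one universal distinguisher,'' and it is the source of the non-uniformity in the statement. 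Your proposal contains no analogue of this step; without it the argument does not go through. (A minor side remark: Theorem~\ref{t:equivalences} does not use Karp--Lipton-style arguments; that is Theorem~\ref{t:KL}.)
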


\vspace{0.3cm}

%\newpage

{\bf \noindent Nontrivial Learning implies Circuit Lower Bounds.}\vspace{0.15cm}

In the algorithmic approach of Williams \citep{DBLP:journals/siamcomp/Williams13}, non-uniform circuit lower bounds against a class $\mathfrak{C}$ of circuits are shown by designing algorithms for satisfiability of $\mathfrak{C}$-circuits that beat the trivial brute-force search algorithm. Williams' approach has already yielded the result that $\mathsf{NEXP} \not \subseteq \mathsf{ACC^0}$ \citep{DBLP:journals/jacm/Williams14}.

It is natural to wonder if an analogue of the algorithmic approach holds for learning, and if so, what kinds of lower bounds would follow using such an analogue. We establish such a result -- non-trivial learning algorithms yield lower bounds for bounded-error probabilistic exponential time, just as non-trivial satisfiability algorithms yield lower bounds for non-deterministic exponential time. Our connection between learning and lower bounds has a couple of nice features. Our notion of ``non-trivial algorithm'' can be made even more fine-grained than that of Williams -- it is not hard to adapt our techniques to show that it is enough to beat the brute-force algorithm by a super-constant factor for learning algorithms with constant accuracy, as opposed to a polynomial factor in the case of Satisfiability. Moreover, non-trivial learning for bounded-depth circuits yields lower bounds against circuits with the {\it same} depth, as opposed to the connection for Satisfiability where there is an additive loss in depth \citep{oliveirathesis, DBLP:conf/icalp/JahanjouMV15}.
 
\begin{theorem} [Circuit Lower Bounds from Learning and from Natural Proofs, Informal Version]
\label{t:Lbs}
Let $\mathfrak{C}$ be any circuit class closed under projections. 
\begin{itemize}
\item[\emph{(}i\emph{)}] If polynomial-size circuits from $\mathfrak{C}$ are non-trivially learnable, then \emph{(}two-sided\emph{)} bounded-error probabilistic exponential time does not have polynomial-size circuits from $\mathfrak{C}$. 
\item[\emph{(}ii\emph{)}] If sub-exponential size circuits from $\mathfrak{C} = \mathsf{ACC}^0$ are non-trivially learnable, then one-sided error probabilistic exponential time does not have polynomial-size circuits from $\mathsf{ACC}^0$. 
\item[\emph{(}iii\emph{)}] If there are natural proofs useful against sub-exponential size circuits from $\mathfrak{C}$, then zero-error probabilistic exponential time does not have polynomial-size circuits from $\mathfrak{C}$.
\end{itemize}
\end{theorem}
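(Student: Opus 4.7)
The plan is a proof by contradiction. Assume $\mathsf{BPEXP} \subseteq \mathfrak{C}[n^k]$ for some $k$, and invoke the Speedup Lemma to boost the assumed non-trivial learner into a learner for $\mathfrak{C}[n^k]$ running in time $2^{n^{\varepsilon}}$ for every $\varepsilon > 0$, producing a hypothesis of any desired $1/\mathsf{poly}(n)$ accuracy. The sub-exponential learner doubles as a sub-exponential distinguisher between $\mathfrak{C}[n^k]$ truth tables and random truth tables: run the learner with oracle access to the given truth table and then check on a small random sample whether its hypothesis agrees well. I would then invoke an $\mathsf{MA}$-style collapse in the spirit of Impagliazzo--Kabanets--Wigderson: under the assumption, the characteristic function of each $\mathsf{BPEXP}$ language at length $n$ has a $\mathfrak{C}[n^k]$ circuit that Merlin can provide and Arthur can verify using the learner, placing $\mathsf{BPEXP}$ inside $\mathsf{MA}$ with sub-exponential verification. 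Combined with Santhanam's lower bound that $\mathsf{MA}/1 \not\subseteq \mathsf{SIZE}(n^k)$ for every $k$ (or, alternatively, with the Fortnow--Santhanam hierarchy theorem for $\mathsf{BPTIME}$ with small advice), this yields the desired contradiction, establishing $\mathsf{BPEXP} \not\subseteq \mathfrak{C}[n^k]$.

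\medskip

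\noindent\textbf{Parts (ii) and (iii).} For (ii), the $\mathsf{ACC}^0$-specific strengthening to one-sided error exploits Williams' non-trivial $\mathsf{ACC}^0$-SAT algorithm, which runs in time $2^n/n^{\omega(1)}$: in the reduction of (i), the randomized verification of hypothesis agreement can be replaced by SAT-style queries against the succinct candidate circuit, eliminating one side of the error and giving a lower bound against a one-sided probabilistic class. For (iii), I would apply the $\mathsf{P}$-natural property $R$ as a filter: by largeness, a uniformly random $2^n$-bit truth table lies in $R$ with probability at least $2^{-O(n)}$, and by constructivity membership in $R$ is testable in time $\mathsf{poly}(2^n) = 2^{O(n)}$. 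A $\mathsf{ZPEXP}$ machine can therefore sample truth tables until one satisfies $R$, succeeding in expected polynomially many trials, and output the corresponding function $f_n$ bit by bit; usefulness against $\mathfrak{C}[2^{n^{\varepsilon}}]$ forces $f_n \notin \mathfrak{C}[2^{n^{\varepsilon}}]$, and encoding $f_n$ as a language indexed by $n$-bit strings yields, after appropriate padding, a $\mathsf{ZPEXP}$ language outside $\mathfrak{C}[\mathsf{poly}(n)]$.

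\medskip

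\noindent\textbf{Main obstacle.} The main obstacle is the $\mathsf{MA}$-style collapse in part (i): turning the learner's approximate hypothesis into an $\mathsf{MA}$-certificate for exact membership requires converting $1/\mathsf{poly}(n)$ agreement under the uniform distribution into exact computation of a $\mathsf{BPEXP}$ language. This calls for a random self-reducible $\mathsf{BPEXP}$-complete language, obtained by arithmetizing simulations of probabilistic exponential-time machines in the style of the standard $\mathsf{PSPACE}$-complete random self-reducible languages, and a careful accounting of error and of the $O(\log n)$ bits of advice needed to select among the at most $2^{n^{O(\varepsilon)}}$ candidate hypotheses. Without this delicate calibration one only obtains an approximate algorithm that does not contradict the advice-bounded $\mathsf{BPTIME}$ hierarchy.
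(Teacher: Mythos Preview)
Your proposal has genuine gaps in all three parts, and the paper's approach is quite different throughout.

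\textbf{Part (i).} The obstacle you identify is fatal to your strategy: no random self-reducible $\mathsf{BPEXP}$-complete language is known, and ``arithmetizing simulations of probabilistic exponential-time machines'' does not produce one in any known way. The paper sidesteps this entirely. It never works with a $\mathsf{BPEXP}$-complete problem. Instead it uses the Trevisan--Vadhan $\mathsf{PSPACE}$-complete function $L^\star$, which \emph{is} downward and random self-reducible, and runs a win-win argument: either $L^\star \notin \mathfrak{C}[n^k]$ for all $k$ (done, since $L^\star \in \mathsf{DSPACE}(n) \subseteq \mathsf{BPE}$), or the sub-exponential learner for $\mathfrak{C}$, applied inductively via downward self-reducibility, places $L^\star$ and hence all of $\mathsf{PSPACE}$ in $\mathsf{BPSUBEXP}$. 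In the second case, the almost-everywhere space hierarchy (Corollary~\ref{c:iockthard}) gives languages in $\mathsf{PSPACE} \subseteq \mathsf{BPSUBEXP} \subseteq \mathsf{BPE}$ outside $\mathsf{Circuit}[n^k]$. For non-typical $\mathfrak{C}$ closed under projections, a further win-win on the Circuit Value Problem reduces to the typical case. No $\mathsf{MA}$ collapse, no Santhanam-style advice argument, and no $\mathsf{BPEXP}$-complete problem is used.

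\textbf{Part (ii).} Your plan to use Williams' $\mathsf{ACC}^0$-SAT algorithm to ``eliminate one side of the error'' is not how the paper proceeds, and it is unclear how SAT queries on a candidate circuit would turn a two-sided learner-based verification into a one-sided one. The paper instead uses Williams' \emph{lower bound} $\mathsf{NEXP} \not\subseteq \mathsf{ACC}^0$ as a black box: the High-End Speedup Lemma gives quasi-polynomial learners from the sub-exponential assumption, so in the second branch of the win-win one gets $\mathsf{PSPACE} \subseteq \mathsf{BPQP}$, hence $\mathsf{NP} \subseteq \mathsf{RQP}$ via search-to-decision for SAT, hence $\mathsf{NEXP} = \mathsf{REXP}$ by translation, and Williams' theorem finishes.

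\textbf{Part (iii).} Your sampling argument does not define a language. A $\mathsf{ZPEXP}$ machine must compute a \emph{fixed} function: on every run where it does not output `?', it must output the same bit on input $x$. If you sample a random truth table $f_n \in R_n$ and output $f_n(x)$, different runs land on different $f_n$ and give inconsistent answers; this is not a valid zero-error (or even bounded-error) algorithm for any language. The paper's argument is substantially more involved: the natural property yields, via the easy witness method and the Impagliazzo--Wigderson generator, the collapse $\mathsf{BPEXP} = \mathsf{ZPEXP}$; separately, the natural property gives complexity distinguishers and hence (via the Speedup machinery) places $\mathsf{PSPACE}$ in $\mathsf{BPQP}$, so $\mathsf{EXPSPACE} = \mathsf{BPEXP} = \mathsf{ZPEXP}$; then the space hierarchy supplies the hard language inside $\mathsf{EXPSPACE} = \mathsf{ZPEXP}$.
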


Observe that the existence of natural proofs against sub-exponential size circuits yields stronger lower bounds than learning and satisfiability algorithms. (We refer to Section \ref{s:preliminaries} for a review of the inclusions between exponential time classes.)

\vspace{0.6cm}

{\bf \noindent Karp-Lipton Theorems for Probabilistic Exponential Time.}\vspace{0.15cm}

Our main results are about learning, but the techniques have consequences for complexity theory. Specifically, our use of pseudo-random generators has implications for the question of Karp-Lipton theorems for probabilistic exponential time. A Karp-Lipton theorem for a complexity class gives a connection between uniformity and non-uniformity, by showing that a non-uniform inclusion of the complexity class also yields a uniform inclusion. Such theorems were known for a range of classes such as $\mathsf{NP}$, $\mathsf{PSPACE}$, $\mathsf{EXP}$, and $\mathsf{NEXP}$ \citep{DBLP:conf/stoc/KarpL80, DBLP:journals/cc/BabaiFNW93, DBLP:journals/jcss/ImpagliazzoKW02}, but not for bounded-error probabilistic exponential time. We show the first such theorem for bounded-error probabilistic exponential time. A technical caveat is that the inclusion in our consequent is not completely uniform, but requires a logarithmic amount of advice.

\begin{theorem} [Karp-Lipton Theorem for Probabilistic Exponential Time, Informal Version]
\label{t:KL}~\\
If bounded-error probabilistic exponential time has polynomial-size circuits infinitely often, then bounded-error probabilistic exponential time is infinitely often in deterministic exponential time with logarithmic advice. 
\end{theorem}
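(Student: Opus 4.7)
The plan is to reduce $\mathsf{BPEXP}$ to the stated hypothesis on $\mathsf{BPE}$ via padding, and then recover the $\mathsf{BPEXP}$ computation from the promised non-uniform circuits by brute-force enumeration combined with verification.

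Let $L \in \mathsf{BPEXP}$ be decided by $M \in \mathsf{BPTIME}(2^{n^c})$, amplified to error $\leq 2^{-(n+1)}$. I would first define a padded language $L'$ on inputs of length $m$ by $L'(y) = L(x)$ when $y = x \cdot 1^{m-n}$ with $n := \lfloor m^{1/c}\rfloor$, and $L'(y) = 0$ otherwise. Since $L(x)$ runs in $2^{|x|^c} \leq 2^m$ time, $L' \in \mathsf{BPE}$. The hypothesis then yields an infinite set $S$ of lengths on which $L'$ admits a circuit of size $\leq m^k$. Because $m \mapsto \lfloor m^{1/c}\rfloor$ is at most $(n+1)^c - n^c = O(n^{c-1})$-to-one on $[n^c,(n+1)^c)$, pigeonhole forces the image $T = \{\lfloor m^{1/c}\rfloor : m \in S\}$ to be infinite; for each $n \in T$ one can pick $m(n) \in S$ with $\lfloor m(n)^{1/c}\rfloor = n$, and the offset $m(n) - n^c$ fits in $O(\log n)$ bits.

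The deterministic algorithm for $L$ on input $x$ of length $n$ works as follows. The $O(\log n)$-bit advice encodes a flag indicating whether $n \in T$, together with the offset $m - n^c$ when set. If $n \notin T$, output anything (admissible since we only need correctness infinitely often). If $n \in T$, reconstruct $m = m(n)$ and enumerate all Boolean circuits $C$ on $m$ inputs of size $\leq m^k$ --- there are $2^{\mathrm{poly}(m)} = 2^{\mathrm{poly}(n)}$ candidates. For each candidate, verify agreement with $L'$ at length $m$, and output $C(x, 1^{m-n})$ for the first $C$ that passes the test. Enumeration alone stays within $\mathsf{EXP}$, and the hypothesis guarantees at least one candidate is correct whenever $n \in T$.

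The step I expect to be the main obstacle is the verification: naively checking $C = L'$ on all $2^m$ inputs would require deciding the $\mathsf{BPE}$-language $L'$ at every point, and brute-forcing over its $2^{O(m)}$ random bits costs $2^{2^{O(m)}}$ time --- doubly exponential in $n$ and outside the $\mathsf{EXP}$ budget. Circumventing this while staying within $O(\log n)$ advice seems to require using the hypothesis more subtly: for instance, amplifying $M$ further so that an Adleman-style short list of random seeds decides $L'$ correctly at every input with constant probability, and then using pairwise consistency checks among enumerated candidate circuits to single out the correct one without ever actually derandomizing from scratch; alternatively, one can iterate the hypothesis on an auxiliary ``circuit-correctness'' $\mathsf{BPE}$-language and bootstrap. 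Pinning down this verification primitive so that the whole procedure fits in $\mathsf{EXP}$ with just $O(\log n)$ advice is the quantitatively delicate core of the argument, and is the place where I would expect the proof to do real work beyond the padding-and-pigeonhole scaffolding above.
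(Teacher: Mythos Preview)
Your padding-plus-enumeration scaffolding is sound, and you are right that verification is the crux --- but it is not merely ``quantitatively delicate''; it is the entire difficulty, and neither of your suggested fixes closes it. An Adleman-style list of good seeds is itself $2^{O(m)}$ bits long and exists only non-uniformly, so you cannot search it in $\mathsf{EXP}$; and pairwise consistency among enumerated circuits gives no leverage, since arbitrarily many incorrect candidates can agree with one another. The bootstrapping idea (apply the hypothesis to a ``circuit-correctness'' language) runs into the same wall: full correctness of a candidate at length $m$ is a conjunction over $2^m$ inputs, not a single $\mathsf{BPE}$ instance, so you are back to doubly-exponential cost.

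The paper sidesteps verification entirely by a pseudorandomness argument in the contrapositive. For each $\mathsf{BPEXP}$ machine $M$ it builds a candidate deterministic simulation $M_\varepsilon$ that derandomizes $M$ via the Nisan--Wigderson generator $G^{\mathsf{HE}}_n$ instantiated with the Trevisan--Vadhan $\mathsf{PSPACE}$-complete function $f_{\mathsf{TV}}$; the $O(\log m)$ advice selects which seed-length $n$ (within a polynomial window) to feed the generator. If every such $M_\varepsilon$ fails almost everywhere, then for every large enough $n$ some input $x$ of the corresponding length produces a circuit $C_x$ (the acceptance predicate of $M$ on $x$ as a function of its randomness) that distinguishes $G^{\mathsf{HE}}_n$ from uniform. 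Passing $x$ as $O(n^\varepsilon)$ bits of advice, NW reconstruction (Lemma~\ref{l:GenericDistLearn}) turns this into a learner for $f_{\mathsf{TV}}$; the downward- and random-self-reducibility of $f_{\mathsf{TV}}$ then absorb the advice and convert learning into decision (Lemma~\ref{l:PSPACELearnSolve}), yielding $\mathsf{PSPACE} \subseteq \mathsf{BPSUBEXP}$. Upward translation plus the almost-everywhere space hierarchy (Corollary~\ref{c:iockthard}) then give $\mathsf{BPE} \not\subseteq \mathtt{i.o.}\mathsf{Circuit}[n^k]$ for every $k$. The slogan is: ``failure of derandomization $\Rightarrow$ distinguisher $\Rightarrow$ learner for $f_{\mathsf{TV}}$ $\Rightarrow$ algorithm for $f_{\mathsf{TV}}$ $\Rightarrow$ collapse $\Rightarrow$ lower bound'', and at no point is any candidate circuit certified correct.
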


\vspace{0.3cm}

{\bf \noindent Hardness of the Minimum Circuit Size Problem.}\vspace{0.15cm}

Our techniques also have consequences for the complexity of the Minimum Circuit Size Problem (MCSP). In MCSP, the input is the truth table of a Boolean function together with a parameter $s$ in unary, and the question is whether the function has Boolean circuits of size at most $s$. MCSP is a rare example of a problem in $\mathsf{NP}$ which is neither known to be in $\mathsf{P}$ or $\mathsf{NP}$-complete. In fact, we don't know much unconditionally about the complexity of this problem. We know that certain natural kinds of reductions cannot establish $\mathsf{NP}$-completeness \citep{DBLP:conf/coco/MurrayW15}, but until our work, it was unknown whether MCSP is hard for \emph{any} standard complexity class beyond $\mathsf{AC}^0$ \citep{DBLP:journals/siamcomp/AllenderBKMR06}. We show the first result of this kind. %MCSP is hard for polynomial-size Boolean formulas under non-uniform truth-table reductions computable by polynomial-size threshold circuits.

\begin{theorem} [Hardness of the Minimum Circuit Size Problem, Informal Version]
\label{t:MCSP}
The Minimum Circuit Size Problem is hard for polynomial-size formulas under truth-table reductions computable by polynomial-size constant-depth threshold circuits.
\end{theorem}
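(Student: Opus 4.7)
The plan is to construct, for every polynomial-size formula $F$ on $n$ variables, a non-uniform $\mathsf{TC}^0$-truth-table reduction to $\mathsf{MCSP}$ that computes $F$. The ``In particular'' consequence is then immediate: composing such a reduction with any hypothetical $\mathsf{TC}^0$ algorithm for $\mathsf{MCSP}$ yields $\mathsf{NC}^1 \subseteq \mathsf{TC}^0$, and the reverse inclusion is trivial.

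The first step is to invoke Barrington's theorem, replacing $F(x)$ by the evaluation of an iterated product $\pi_1(x) \cdots \pi_m(x)$ in $S_5$, where $m = \mathrm{poly}(n)$ and each $\pi_j(x)$ is one of two fixed permutations depending on a single bit of $x$. The sequence $(\pi_j(x))_{j=1}^m$ is $\mathsf{AC}^0$-computable from $x$, so it remains to evaluate the iterated $S_5$ product by a non-adaptive $\mathsf{TC}^0$-tt reduction to $\mathsf{MCSP}$.

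The central step is to encode partial products via MCSP queries. For each $k \in [m]$ and each $\sigma \in S_5$, I would construct, in $\mathsf{TC}^0$ from $x$ together with hardwired advice, a truth-table $T_{k,\sigma}(x)$ of polynomial length (on $\Theta(\log n)$ variables) whose circuit complexity lies below a threshold $s$ precisely when the $k$-th partial product $P_k(x) := \pi_1(x) \cdots \pi_k(x)$ equals $\sigma$. Querying $\mathsf{MCSP}$ non-adaptively on all $O(m)$ pairs $(T_{k,\sigma}(x), s)$ then yields, bit by bit, the entire sequence of partial products; a $\mathsf{TC}^0$ combining circuit extracts $P_m(x)$ and compares it against the distinguished element to obtain $F(x)$.

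The main obstacle lies in building this family of truth-tables with the required complexity gap in $\mathsf{TC}^0$. The constructor cannot itself evaluate $P_k(x)$, so the gap must arise from combinatorial structure rather than from branching on the unknown value. The natural approach is to exploit non-uniform advice: hardwire a function $h$ on $\Theta(\log n)$ variables whose truth-table is guaranteed to be hard for circuits of size $s$ (such an $h$ exists by the standard counting argument), and let the $\mathsf{TC}^0$ transformation stitch $x$, $\sigma$, and $h$ into a candidate truth-table that collapses to a low-complexity object exactly when $\sigma$ matches the true partial product, and otherwise is forced to inherit the hardness of $h$. Engineering this gadget so that the complexity gap is robust, and showing that $\mathsf{TC}^0$ is powerful enough to perform the combination while still producing the desired separation, is the technical heart of the argument.
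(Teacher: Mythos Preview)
Your proposal has a genuine gap at precisely the point you flag as ``the technical heart of the argument.'' You want a $\mathsf{TC}^0$ circuit that, given $x$ (with $k$, $\sigma$, and a hard truth-table $h$ hardwired), outputs a truth-table $T_{k,\sigma}(x)$ whose circuit complexity is below $s$ iff $P_k(x)=\sigma$. But the output of this $\mathsf{TC}^0$ circuit is a deterministic function of $x$; for its complexity to track the predicate $[P_k(x)=\sigma]$, the circuit must output \emph{different} strings depending on that predicate. Any mechanism that makes $h$ ``cancel'' exactly when $P_k(x)=\sigma$ and ``survive'' otherwise is, up to post-processing, computing the bit $[P_k(x)=\sigma]$ inside $\mathsf{TC}^0$---which is iterated $S_5$ multiplication, i.e., an $\mathsf{NC}^1$-complete task. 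You have not proposed any algebraic or combinatorial structure through which the complexity of $T_{k,\sigma}(x)$ could depend on $P_k(x)$ \emph{implicitly}, and I do not see one; the sketch stops exactly where the difficulty begins.

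The paper's route is entirely different and avoids this obstacle. It never tries to encode a single hard bit into a complexity gap. Instead it applies the Nisan--Wigderson generator (with $\mathsf{AC}^0[p]$-computable designs) to a target $\mathsf{NC}^1$ function $f$: every $g_z\in\mathsf{NW}(f)$ has small $\mathsf{Formula}$-complexity, while a random function does not, so a single $\mathsf{MCSP}$ query on $\mathtt{tt}(g_z)$ distinguishes the NW distribution from uniform. The Nisan--Wigderson \emph{reconstruction} procedure then turns any such distinguisher into a weak approximator for $f$; with the design parameters chosen here (intersections $O(\log n)$, output length $\mathsf{poly}(n)$), and using non-uniformity to hardwire the hybrid index, the fixed seed, and the finitely many values of $f$ needed, this reconstruction is a non-adaptive $\mathsf{AC}^0$ circuit with $\mathsf{MCSP}$ oracle gates. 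Finally, one picks an $\mathsf{NC}^1$-complete function that is random-self-reducible and applies XOR-lemma-style hardness amplification---both implementable in non-uniform $\mathsf{TC}^0$---to upgrade $(1/2-1/\mathsf{poly})$-approximation to exact worst-case computation. The $\mathsf{TC}^0$ cost thus enters only at the amplification stage, not in the $\mathsf{MCSP}$-query gadget itself. If you want to salvage your Barrington-based plan, you would need a mechanism of comparable strength to NW reconstruction; as written, the proposal does not supply one.
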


%\vspace{0.1cm}

%\begin{mdframed}
\noindent \rule{\textwidth}{0.5pt}\vspace{0.1cm}
\noindent \textbf{Remark.} This work contains several related technical contributions to the research topics mentioned above. We refer to the appropriate sections for more details. Finally, in Section \ref{s:open_problems} we highlight some open problems and directions that we find particularly attractive.
%~\\\noindent \rule{\textwidth}{0.5pt}
%\end{mdframed}

\vspace{0.1cm}
  
\subsection{Related Work}\label{ss:related_work}

\subsubsection{Speedups in Complexity Theory}

We are not aware of any unconditional speedup result of this form involving the time complexity of a natural class of computational problems, under a general computational model. In any case, it is instructive to compare Lemma \ref{l:speedup} to a few other speedup theorems in computational complexity.

A classic example is Blum's Speedup Theorem \citep{DBLP:journals/jacm/Blum67}. It implies that there is a recursive function $f \colon \mathbb{N} \to \mathbb{N}$ such that if an algorithm computes this function in time $T(n)$, then there is an algorithm computing $f$ in time $O(\log T(n))$. Lemma \ref{l:speedup} differs in an important way. It refers to a natural computational task, while the function provided by Blum's Theorem relies on an artificial construction. Another well-known speedup result is the Linear Speedup Theorem (cf. \citep[Section 2.4]{DBLP:books/daglib/0072413}). Roughly, it states that if a Turing Machine computes in time $T(n)$, then there is an equivalent Turing Machine that computes in time $T(n)/c$. The proof of this theorem is based on the simple trick of increasing the alphabet size of the machine. It is therefore dependent on the computational model, while Lemma \ref{l:speedup} is not.

Perhaps closer to our result are certain conditional derandomization theorems in complexity theory. We mention for concreteness two of them. In \citep{DBLP:journals/jcss/ImpagliazzoKW02}, it is proved that if $\mathsf{MA} \neq \mathsf{NEXP}$, then $\mathsf{MA} \subseteq \mathtt{i.o.}\mathsf{NTIME}[2^{n^\varepsilon}]/n^\varepsilon$, while in \citep{DBLP:journals/jcss/ImpagliazzoW01}, it is shown that if $\mathsf{BPP} \neq \mathsf{EXP}$, then $\mathsf{BPP} \subseteq \mathtt{i.o.}\mathsf{pseudo}$-$\mathsf{DTIME}[2^{n^\varepsilon}]$. It is possible to interpret these results as computational speedups, but observe that the faster algorithms have either weaker correctness guarantees, or require advice. Lemma \ref{l:speedup} on the other hand transforms a non-trivial learning algorithm into a sub-exponential time learning algorithm of the same type.

Further results have been discovered in more restricted computational models. For instance, in the OPP model, \citep{DBLP:conf/stoc/PaturiP10} proved that if Circuit-SAT has algorithms running in time $2^{(1-\delta)n}$, then it also has OPP algorithms running in time $2^{\varepsilon n}$. In bounded-depth circuit complexity, \citep{DBLP:journals/jacm/AllenderK10} established among other results that if the Formula Evaluation Problem has uniform $\mathsf{TC}^0$-circuits of size $O(n^k)$, then it also has uniform $\mathsf{TC}^0$-circuits of size $O(n^{1+\varepsilon})$.

If one considers other notions of complexity, we can add to this list several results that provide different, and often rather unexpected, forms of speedup. We mention, for instance, depth reduction in arithmetic circuit complexity (see .e.g. \citep{DBLP:conf/focs/AgrawalV08}), reducing the number of rounds in interactive proofs \citep{DBLP:journals/jcss/BabaiM88}, decreasing the randomness complexity of bounded-space algorithms \citep{DBLP:journals/jcss/NisanZ96}, cryptography in constant locality \citep{DBLP:journals/siamcomp/ApplebaumIK06}, among many others. 

\subsubsection{Connections between Pseudorandomness, Learning and Cryptography}

There are well-known connections between learning theory, theoretical cryptography and pseudorandomness (see e.g. \citep{DBLP:books/cu/Goldreich2001}). Indeed, pseudorandom distributions lie at the heart of the definition of semantic security \citep{DBLP:conf/stoc/GoldwasserM82, DBLP:journals/jcss/GoldwasserM84}, which permeates modern cryptography, and to every secure encryption scheme there is a naturally associated hard-to-learn (decryption) problem.

The other direction, i.e., that from a generic hard learning problem it is always possible to construct secure cryptographic schemes and other basic primitives, is much less clear.\footnote{Recall that secure private-key encryption is equivalent to the existence of one-way functions, pseudorandom generators and pseudorandom functions, with respect to polynomial time computations (cf. \citep{DBLP:books/crc/KatzLindell2007}). Nevertheless, not all these equivalences are known to hold when more refined complexity measures are considered, such as circuit depth. In particular, generic constructions of pseudorandom functions from the other primitives are not known in small-depth classes. This can be done under certain \emph{specific} hardness assumptions \citep{DBLP:journals/jacm/NaorR04}, but here we restrict our focus to generic relations between basic cryptographic primitives.} Following a research line initiated in \citep{DBLP:conf/focs/ImpagliazzoL90}, results more directly related to our work were established in \citep{DBLP:conf/crypto/BlumFKL93}. They proved in particular that private-key encryption and pseudorandom generators exist under a stronger \emph{average-case} hardness-of-learning assumption, where one also considers the existence of a hard distribution over the functions in the circuit class $\mathfrak{C}$.

However, these results and subsequent work leave open the question of whether hardness of learning in the usual case, i.e., the mere assumption that any efficient learner fails on some $f \in \mathfrak{C}$, implies the existence of pseudorandom functions computable by $\mathfrak{C}$-circuits. While there is an extensive literature basing standard cryptographic primitives on a variety of conjecturally hard learning tasks (see e.g., \citep{DBLP:journals/jacm/Regev09} and references therein for such a line of work), to our knowledge Theorem \ref{t:dichotomy} is the first result to establish a \emph{general equivalence} between the existence of pseudorandom functions and the hardness of learning, which holds for \emph{any} typical circuit class. A caveat is that our construction requires non-uniformity, and is established only in the exponential security regime.

\subsubsection{Lower Bounds from Learning Algorithms}

While several techniques from circuit complexity have found applications in learning theory in the past (see e.g., \citep{DBLP:journals/jacm/LinialMN93}), Fortnow and Klivans \citep{DBLP:journals/jcss/FortnowK09} were the first to systematically investigate the connection between learning algorithms and lower bounds in a generic setting.\footnote{For a broader survey on connections between algorithms and circuit lower bounds, we refer to \citep{MR3281015}.} 

For \emph{deterministic} learning algorithms using membership and equivalence queries, initial results from \citep{DBLP:journals/jcss/FortnowK09} and \citep{DBLP:journals/toct/HarkinsH13} were strengthened and simplified in \citep{DBLP:conf/coco/KlivansKO13}, where it was shown that non-trivial deterministic learning algorithms for $\mathfrak{C}$ imply that $\mathsf{EXP} \nsubseteq \mathfrak{C}$.

The situation for \emph{randomized} algorithms using membership queries is quite different, and only the following comparably weaker results were known. First, \citep{DBLP:journals/jcss/FortnowK09} proved that randomized polynomial time algorithms imply $\mathsf{BPEXP}$ lower bounds. This result was refined in \citep{DBLP:conf/coco/KlivansKO13}, where a certain connection involving sub-exponential time randomized learning algorithms and $\mathsf{PSPACE}$ was observed. More recently, \citep{DBLP:conf/icalp/Volkovich14} combined ideas from \cite{DBLP:conf/coco/KlivansKO13} and \citep{DBLP:journals/siamcomp/Santhanam09} to prove that efficient randomized learning algorithms imply lower bounds for $\mathsf{BPP}/1$, i.e., probabilistic polynomial time with advice. However, in contrast to the deterministic case, obtaining lower bounds from weaker running time assumptions had been elusive.\footnote{Some connections to lower bounds are also known in the context of learnability of \emph{arithmetic} circuits. We refer to \citep{DBLP:journals/jcss/FortnowK09, DBLP:conf/colt/Volkovich16} for more details.}

Indeed, we are not aware of any connection between two-sided non-trivial randomized algorithms and circuit lower bounds, even when considering different algorithmic frameworks in addition to learning. In particular, Theorem \ref{t:Lbs} (\emph{i}) seems to be the first result in this direction. It can be seen as an analogue of the connection between satisfiability algorithms and lower bounds established by Williams \citep{DBLP:journals/siamcomp/Williams13, DBLP:journals/jacm/Williams14}. But apart from this analogy, the proof of Theorem \ref{t:Lbs} employs significantly different techniques.

\subsubsection{Useful Properties, Natural Properties, and Circuit Lower Bounds}

The concept of natural proofs, introduced by Razborov and Rudich \citep{RR97}, has had a significant impact on research on unconditional lower bounds. Recall that a property $\mathcal{P}$ of Boolean functions is a natural property against a circuit class $\mathfrak{C}$ if it is: (1) efficiently computable (constructivity); (2) rejects all $\mathfrak{C}$-functions, and accepts at least one ``hard'' function (usefulness), and (3) is satisfied by most Boolean functions (denseness). In case $\mathcal{P}$ satisfies only conditions (1) and (2), is it said to be useful against $\mathfrak{C}$.

There are natural properties against $\mathsf{AC}^0[p]$ circuits, when $p$ is prime \citep{RR97}. But under standard cryptographic assumptions, there is no natural property against $\mathsf{TC}^0$ \citep{DBLP:journals/jacm/NaorR04}. Consequently, the situation for classes contained in $\mathsf{AC}^0[p]$ and for those that contain $\mathsf{TC}^0$ is reasonably well-understood. More recently, \citep{DBLP:journals/siamcomp/Williams16} (see also \citep{DBLP:journals/jcss/ImpagliazzoKW02}) proved that if $\mathsf{NEXP} \nsubseteq \mathfrak{C}$ then there are useful properties against $\mathfrak{C}$. This theorem combined with the lower bound from \citep{DBLP:journals/jacm/Williams14} show that $\mathsf{ACC}^0$ admits useful properties. 

Given these results, the existence of natural properties against $\mathsf{ACC}^0$ has become one of the most intriguing problems in connection with the theory of natural proofs.  Theorem \ref{t:Lbs} (\emph{iii}) shows that if there are $\mathsf{P}$-natural properties against sub-exponential size $\mathsf{ACC}^0$ circuits, then $\mathsf{ZPEXP} \nsubseteq \mathsf{ACC}^0$. This would lead to an improvement of Williams' celebrated lower bound which does not seem to be accessible using his techniques alone.\footnote{The result that $\mathsf{P}$-natural  properties against sub-exponential size circuits yield $\mathsf{ZPEXP}$ lower bounds was also obtained in independent work by Russell Impagliazzo, Valentine Kabanets and Ilya Volkovich (private communication).}

\subsubsection{Karp-Lipton Theorems in Complexity Theory}

Karp-Lipton theorems are well-known results in complexity theory relating non-uniform circuit complexity and uniform collapses. A theorem of this form was first established in \citep{DBLP:conf/stoc/KarpL80}, where they proved that if $\mathsf{NP} \subseteq \mathsf{Circuit}[\mathsf{poly}]$, then the polynomial time hierarchy collapses. This result shows that non-uniform circuit lower bounds cannot be avoided if our goal is a complete understanding of uniform complexity theory. 

Since their fundamental work, many results of this form have been discovered for complexity classes beyond $\mathsf{NP}$. In some cases, the proof required substantially new ideas, and the new Karp-Lipton collapse led to other important advances in complexity theory. Below we discuss the situation for two exponential complexity classes around $\mathsf{BPEXP}$, which is connected to Theorem \ref{t:KL}. 

A stronger Karp-Lipton theorem for $\mathsf{EXP}$ was established in \citep{DBLP:journals/cc/BabaiFNW93}, using techniques from interactive proofs and arithmetization. An important application of this result appears  in \citep{DBLP:conf/coco/BuhrmanFT98} in the proof that $\mathsf{MAEXP} \nsubseteq \mathsf{Circuit}[\mathsf{poly}]$. This is still one of the strongest known non-uniform lower bounds. For $\mathsf{NEXP}$, a Karp-Lipton collapse was proved in \citep{DBLP:journals/jcss/ImpagliazzoKW02}. This time the proof employed the easy witness method and techniques from pseudorandomness, and the result plays a fundamental role in Williams' framework \citep{DBLP:journals/siamcomp/Williams13}, which culminated in the proof that $\mathsf{NEXP} \nsubseteq \mathsf{ACC}^0$ \citep{DBLP:journals/jacm/Williams14}. (We mention that a Karp-Lipton theorem for $\mathsf{EXP}^\mathsf{NP}$ has also been established in \citep{DBLP:conf/fsttcs/BuhrmanH92}.) Karp-Lipton collapse theorems are known for a few other complexity classes contained in $\mathsf{EXP}$, and they have found applications in a variety of contexts in algorithms and complexity theory (see e.g., \citep{DBLP:journals/tcs/Yap83, DBLP:journals/jcss/FortnowS11}).

Despite this progress on proving Karp-Lipton collapses for exponential time classes, there is no published work on such for probabilistic classes. Theorem \ref{t:KL} is the first such result for the class $\mathsf{BPEXP}$. 

\subsubsection{The Minimum Circuit Size Problem}

The Minimum Circuit Size Problem (MCSP) and its variants has received a lot of attention in both applied  and theoretical research. Its relevance in practice is clear. From a theoretical point of view, it is one of the few natural problems in $\mathsf{NP}$ that has not been shown to be in $\mathsf{P}$ or $\mathsf{NP}$-complete. The hardness of MCSP is also connected to certain fundamental problems in proof complexity (cf. \citep{DBLP:books/daglib/0033749, MR3275844}).

Interestingly, a well-understood variant of MCSP is the Minimum DNF Size Problem, for which both NP-hardness \citep{masek79} and near-optimal hardness of approximation have been established \citep{DBLP:journals/siamcomp/AllenderHMPS08, DBLP:conf/focs/KhotS08}. However, despite the extensive literature on the complexity of the MCSP problem \citep{DBLP:conf/stoc/KabanetsC00, DBLP:journals/siamcomp/AllenderBKMR06, DBLP:conf/mfcs/AllenderD14, DBLP:conf/fsttcs/HitchcockP15, DBLP:conf/stacs/AllenderHK15, DBLP:conf/coco/MurrayW15, DBLP:conf/fsttcs/HitchcockP15, DBLP:journals/corr/AllenderGM15, DBLP:conf/coco/HiraharaW16}, and the intuition that it must also be computationally hard, there are few results providing evidence of its difficulty. Among these, we highlight the unconditional proof that $\mathsf{MCSP} \notin \mathsf{AC}^0$ \citep{DBLP:journals/siamcomp/AllenderBKMR06}, and the reductions showing  that $\mathsf{Factoring} \in \mathsf{ZPP}^\mathsf{MCSP}$ \citep{DBLP:journals/siamcomp/AllenderBKMR06} and $\mathsf{SZK} \subseteq \mathsf{BPP}^\mathsf{MCSP}$ \citep{DBLP:conf/mfcs/AllenderD14}. The lack of further progress has led to the formulation and investigation of a few related problems, for which some additional results have been obtained (cf. \citep{DBLP:journals/siamcomp/AllenderBKMR06, DBLP:conf/stacs/AllenderHK15, DBLP:journals/corr/AllenderGM15, DBLP:conf/coco/HiraharaW16}). 

More recently, \citep{DBLP:conf/coco/MurrayW15} provided some additional explanation for the difficulty of proving hardness of $\mathsf{MCSP}$. They unconditionally established that a  class of local reductions that have been used for many other NP-completeness proofs cannot work, and that the existence of a few other types of reductions would have significant consequences in complexity theory. Further results along this line appear in \citep{DBLP:conf/fsttcs/HitchcockP15}. 

Theorem \ref{t:MCSP} contributes to our understanding of the difficulty of MCSP by providing the first hardness results for a standard complexity class beyond $\mathsf{AC}^0$.  We hope this result will lead to further progress on the quest to determine the complexity of this elusive problem.\footnote{We have learned from Eric Allender (private communication) that in independent work with Shuichi Hirahara, they have shown some hardness results for the closely related problem of whether a string has high KT complexity. These results do not yet seem to transfer to MCSP and its variants. In addition, we have learned from Valentine Kabanets (private communication) that in recent independent work with Russell Impagliazzo and Ilya Volkovich, they have also obtained some results on the computational hardness of MCSP.}

\subsection{Main Techniques}\label{ss:techniques}

\subsubsection{Overview}

Our results are obtaining via a mixture of techniques from learning theory, computational complexity, pseudo-randomness and
circuit complexity. We refer to Figure \ref{f:techniques} for a web of connections involving the theorems stated in Section \ref{ss:results} and the  methods employed in the proofs. We start with an informal description of most of the techniques depicted in Figure \ref{f:techniques}, with pointers to some relevant references.\footnote{This is not a comprehensive survey of the original use or appearance of each method. It is included here only as a quick guide to help the reader to assimilate the main ideas employed in the proofs.}\\

\noindent \textbf{Nisan-Wigderson Generator} \citep{DBLP:journals/jcss/NisanW94}. The NW-Generator allows us to convert a function $f \colon \{0,1\}^n \to \{0,1\}$ into a family of functions $\mathsf{NW}(f)$. Crucially, if an algorithm $A$ is able to distinguish $\mathsf{NW}(f)$ from a random function, there is a reduction that only needs oracle access to $f$ and $A$, and that can be used to weakly approximate $f$. The use of the NW-Generator in the context of learning, for a function $f$ that is not necessarily hard, appeared recently in \citep{CIKK16}.\footnote{Interestingly, another unexpected and somewhat related use of the NW-generator appears in proof complexity (see e.g., \citep{MR3270137} and references therein).}\\

\begin{figure}[H]
\centering
\includegraphics{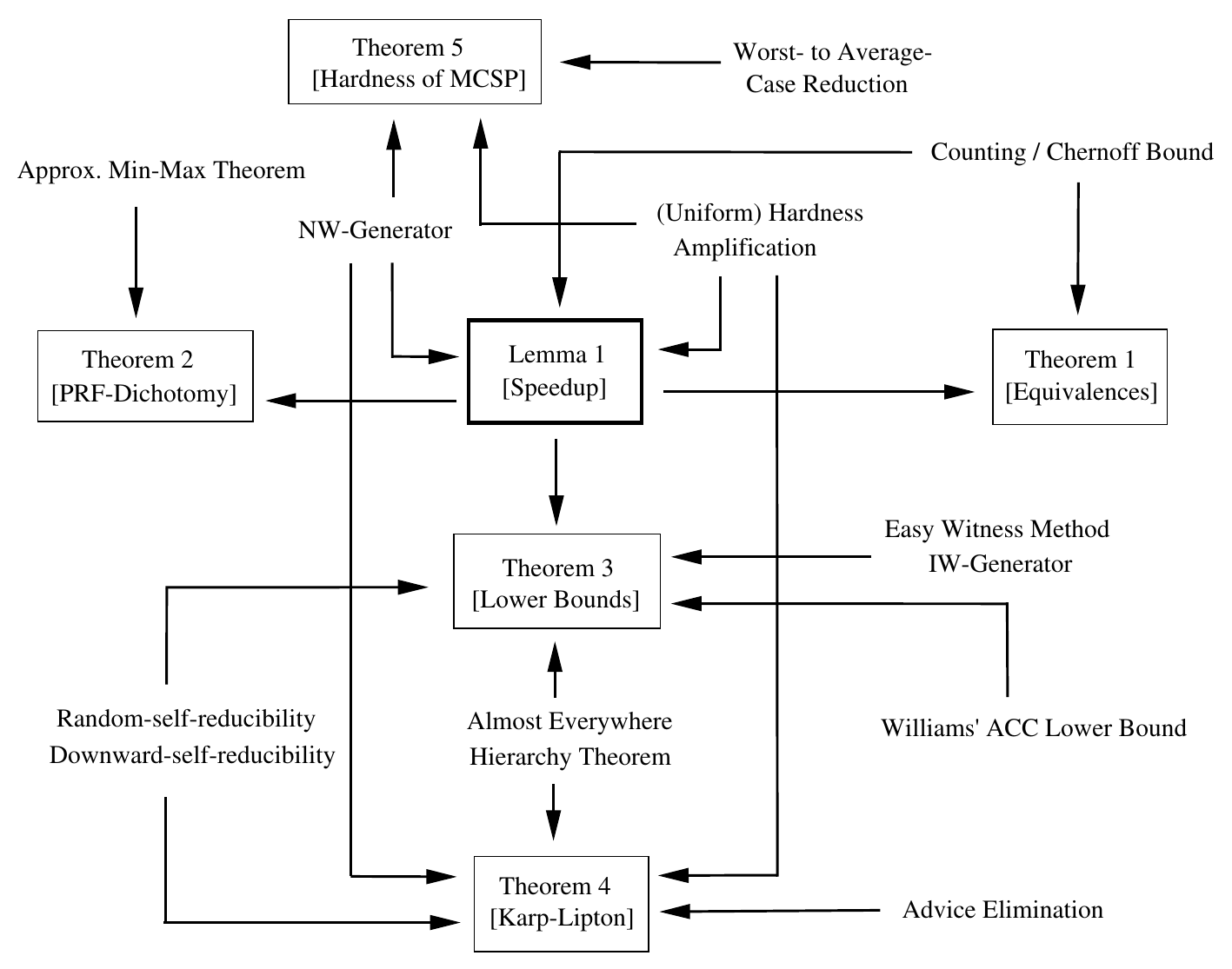}
\caption{An overview of the main techniques employed in the proof of each result discussed in Section \ref{ss:results}. An arrow from $P$ to $Q$ indicates that the proof of $Q$ relies on $P$.}
\label{f:techniques}
\end{figure}

\noindent \textbf{(Uniform) Hardness Amplification}. This is a well-known technique in circuit complexity (cf. \citep{DBLP:books/sp/goldreich2011/GoldreichNW11}), allowing one to produce a not much more complex function $\tilde{g} \colon \{0,1\}^{m(n)} \to \{0,1\}$, given oracle access to some function $g \colon \{0,1\}^n \to \{0,1\}$, that is much harder to approximate than $g$. The uniform formulation of this result shows that a weak approximator for $\tilde{g}$ can be converted into a strong approximator for $g$. The connection to learning was explicitly observed in \citep{DBLP:conf/colt/BonehL93}.\\

\noindent \textbf{Counting and Concentration Bounds}. This is a standard argument which allows one to prove that most Boolean functions on $n$-bit inputs cannot be approximated by Boolean circuits of size $\leq 2^n/n^{\omega(1)}$ (Lemma \ref{l:random_functions}). In particular, learning algorithm running in non-trivial time can only successfully learn a negligible fraction of all Boolean functions.\\

\noindent \textbf{Small-Support Min-Max Theorem} \cite{MR1274423, DBLP:conf/stoc/LiptonY94}. This is an approximate version of the well-known min-max theorem from game theory. It provides a bound on the support size of the mixed strategies. To prove Theorem \ref{t:dichotomy}, we consider a game between a function family generator and a candidate distinguisher, and this result allows us to move from a family of distinguishers against different classes of functions to a single universal distinguisher of bounded complexity.\\

\noindent \textbf{Worst-Case to Average-Case Reduction}. The NW-Generator and hardness amplification can be used to boost a very weak approximation into a strong one. In some circuit classes such as $\mathsf{NC}^1$, a further reduction allows one to obtain a circuit that is correct on every input with high probability (see e.g. \citep{DBLP:conf/approx/AllenderAW10}). This is particularly useful when proving hardness results for $\mathsf{MCSP}$.\\

\noindent \textbf{Easy Witness Method} \citep{DBLP:journals/jcss/Kabanets01} \textbf{and Impagliazzo-Wigderson Generator} \citep{DBLP:conf/stoc/ImpagliazzoW97}. The easy witness method is usually employed as a win-win argument: either a verifier accepts a string encoded by a small circuit, or every accepted string has high worst-case circuit complexity. No matter the case, it can be used to our advantage, thanks to the generator from \citep{DBLP:conf/stoc/ImpagliazzoW97} that transforms a worst-case hard string (viewed as a truth table) into a pseudorandom distribution of strings.\\

\noindent \textbf{(Almost Everywhere) Hierarchy Theorems.} A difficulty when proving Theorems \ref{t:Lbs} and \ref{t:KL} is that there are no known tight hierarchy theorems for randomized time. Our approach is therefore indirect, relying on the folklore result that bounded-space algorithms can diagonalize on every input length against all bounded-size circuits (Lemma \ref{l:iohardness} and Corollary \ref{c:iockthard}).\\

\noindent \textbf{Random-self-reducibility and Downward-self-reducibility.} These are important notions of self-reducibility shared by certain functions. Together, they can be used via a recursive procedure to obtain from a learning algorithm for such a function, which requires oracle access to the function, a standard randomized algorithm computing the same function \citep{DBLP:journals/jcss/ImpagliazzoW01, DBLP:journals/cc/TrevisanV07}.\\

\noindent \textbf{Advice Elimination.} This idea is important in the contrapositive argument establishing Theorem \ref{t:KL}. Assuming that a certain deterministic simulation of a function in $\mathsf{BPEXP}$ is not successful, it is not clear how to determine on each input length a ``bad'' string of that length for which the simulation fails. Such bad strings are passed as advice in our reduction, and in order to eliminate the dependency on them, we use an advice-elimination strategy  from \citep{DBLP:journals/cc/TrevisanV07}.

\subsubsection{Sketch of Proofs}

We describe next in a bit more detail how the techniques described above are employed in the proof of our main results. We stress that the feasibility of all these arguments crucially depend on the parameters associated to each result and technique. However, for simplicity our focus here will be on the qualitative connections.\\

\noindent \textbf{Lemma \ref{l:speedup}} (Speedup Lemma).
Given query access to a function $f \in \mathfrak{C}$ that we would like to learn to high accuracy, the first idea is to notice that if there is a  distinguisher against $\mathsf{NW}(f)$, then we can non-trivially approximate $f$ using membership queries. But since this is not the final goal of a strong learning algorithm, we consider $\mathsf{NW}(\tilde{f})$, the generator applied to the amplified version of $f$. Using properties of the NW-generator and hardness amplification, it follows that if there is a distinguisher against $\mathsf{NW}(\tilde{f})$, it is possible to approximate $\tilde{f}$, which in turn provides a strong approximator for $f$. (A similar strategy is employed in \citep{CIKK16}, where a natural property is used instead of a distinguisher.)

Next we use the assumption that $\mathfrak{C}$ has non-trivial learning algorithms to obtain a distinguisher against $\mathfrak{C}$. (For this approach to work, it is fundamental that the functions in $\mathsf{NW}(\tilde{f}) \subseteq \mathfrak{C}$. In other words, the reductions discussed above should not blow-up the complexity of the involved functions by too much. For this reason, $\mathfrak{C}$ must be a sufficiently strong circuit class.) By a counting argument and a concentration bound, while a non-trivial learning algorithm will weakly learn every function in $\mathfrak{C}$, it must fail to learn a random Boolean function with high probability. We apply this idea to prove that a non-trivial learner can be used as a distinguisher against $\mathsf{NW}(\tilde{f})$. 

These techniques can therefore be combined in order to boost a non-trivial learner for $\mathfrak{C}$ into a high-accuracy learner for $\mathfrak{C}$. This takes care of the accuracy amplification. The running time speedup comes from the efficiency of the reductions involved, and from the crucial fact that each function in $\mathsf{NW}(\tilde{f})$ is a function over $m \ll n$ input bits. In particular, the non-trivial but still exponential time learning algorithm for $\mathfrak{C}$ only needs to be invoked on Boolean functions over $m$ input bits. (This argument only sketches one direction in Lemma \ref{l:speedup}.)\\

\noindent \textbf{Theorem \ref{t:equivalences}} (Learning Equivalences). At the core of the equivalence between all learning and compression models in Theorem \ref{t:equivalences} is the idea that in each case we can obtain a certain distinguisher from the corresponding algorithm. Again, this makes fundamental use of counting and concentration bounds to show that non-trivial algorithms can be used as distinguishers. On the other hand, the Speedup Lemma shows that a distinguisher can be turned into a sub-exponential time randomized learning algorithm that requires membership queries only. 

In some models considered in the equivalence, additional work is necessary. For instance, in the learning model where equivalence queries are allowed, they must be simulated by the distinguisher. For exact compression, a hypothesis output by the sub-exponential time learner might still contain errors, and these need to be corrected by the compression algorithm. A careful investigation of the parameters involved in the proof make sure the equivalences indeed hold.\\

\noindent \textbf{Theorem \ref{t:dichotomy}} (Dichotomy between Learning and PRFs). It is well-known that the existence of learning algorithms for a class $\mathfrak{C}$ implies that $\mathfrak{C}$-circuits cannot compute pseudorandom functions. Using the Speedup Lemma, it follows that the existence of \emph{non-trivial} learning algorithms for $\mathfrak{C}$ implies that $\mathfrak{C}$ cannot compute \emph{exponentially} secure pseudorandom functions. 

For the other direction, assume that every samplable family $\mathcal{F}$ of functions from $\mathfrak{C}$ can be distinguished from a random function by some procedure $D_\mathcal{F}$ of sub-exponential complexity. By introducing a certain two-player game (Section \ref{ss:prf_dist}), we are able to employ the small-support min-max theorem to conclude that there is a single circuit of bounded size that distinguishes every family of functions in $\mathfrak{C}$ from a random function. In turn, the techniques behind the Speedup Lemma imply that every function in $\mathfrak{C}$ can be learned in sub-exponential time. 

We remark that the non-uniformity in the statement of Theorem \ref{t:dichotomy} comes from the application of a non-constructive min-max theorem.\\

\noindent \textbf{Theorem \ref{t:Lbs}} (Lower Bounds from Non-trivial Learning and Natural Proofs). Here we combine the Speedup Lemma with the self-reducibility approach from \citep{DBLP:journals/jcss/ImpagliazzoW01, DBLP:journals/cc/TrevisanV07, DBLP:journals/jcss/FortnowK09, DBLP:conf/coco/KlivansKO13} and other standard arguments. Assuming a non-trivial learning algorithm for $\mathfrak{C}$, we first boost it to a high-accuracy \emph{sub-exponential} time learner. Now if $\mathsf{PSPACE} \nsubseteq \mathfrak{C}$ we are done, since $\mathsf{PSPACE} \subseteq \mathsf{BPEXP}$. Otherwise, using a special \emph{self-reducible} complete function $f \in \mathsf{PSPACE}$ \citep{DBLP:journals/cc/TrevisanV07}, we are able obtain from a sub-exponential time \emph{learning} algorithm for $f$ a \emph{sub-exponential} time \emph{decision} algorithm computing $f$. Using the completeness of $f$ and a strong hierarchy theorem for bounded-space algorithms, standard techniques allow us to translate the hardness of $\mathsf{PSPACE}$ against bounded-size circuits and the non-trivial upper bound on the randomized complexity of $f$ into a non-uniform circuit lower bound for randomized exponential time. A win-win argument is used crucially to establish that no depth blow-up is necessary when moving from a non-trivial algorithm for (depth-$d$)-$\mathfrak{C}$ to a (depth-$d$)-$\mathfrak{C}$ circuit lower bound. For $\mathfrak{C} = \mathsf{ACC}^0$, we combine certain complexity collapses inside the argument with Williams' lower bound \citep{DBLP:journals/jacm/Williams14}.

In order to obtain even stronger lower bounds from natural properties against sub-exponential size circuits, we further combine this approach with an application of the easy witness method. This and other standard techniques lead to the collapse $\mathsf{BPEXP} = \mathsf{ZPEXP}$, which strengthens the final circuit lower bound.\\

\noindent \textbf{Theorem \ref{t:KL}} (Karp-Lipton Collapse for Probabilistic Time). This result does not rely on the Speedup Lemma, but its argument is somewhat more technically involved than the proof of Theorem \ref{t:Lbs}. The result is established in the contrapositive. Assuming that an attempted derandomization of $\mathsf{BPEXP}$ fails, we show that polynomial space can be simulated in sub-exponential randomized time. Arguing similarly to the proof of Theorem \ref{t:Lbs}, we conclude that there are functions in randomized exponential time that are not infinitely often computed by small circuits.

The first difficulty is that the candidate derandomization procedure on $n$-bit inputs requires the use of the NW-generator applied to a function on $n^c$-bit inputs, due to our setting of parameters. However, in order to invoke the self-reducibility machinery, we need to make sure the generator can be broken on \emph{every} input length, and not on infinitely many input lengths. To address this, we introduce logarithmic advice during the simulation, indicating which input length in $[n^c, (n+1)^c]$ should be used in the generator. This amount of advice is reflected in the statement of the theorem.

A second difficulty is that if the derandomization fails on some input string of length $n$, it is important in the reduction to know a ``bad'' string with this property. For each input length, a bad string is passed as advice to the learning-to-decision reduction (this is the second use of advice in the proof). This time we are able to remove the advice using an advice-elimination technique, which makes use of self-correctability as in \citep{DBLP:journals/cc/TrevisanV07}. Crucially, the advice elimination implies that randomized exponential time \emph{without advice} is not infinitely often contained in $\mathfrak{C}$, which completes the proof of the contrapositive of Theorem \ref{t:KL}.\\

\noindent \textbf{Theorem \ref{t:MCSP}} (Hardness of MCSP). Recall that this result states that MCSP is hard for $\mathsf{NC}^1$ with respect to non-uniform $\mathsf{TC}^0$ reductions. The proof of Theorem \ref{t:MCSP} explores the fine-grained complexity of the Nisan-Wigderson reconstruction procedure and of the hardness amplification reconstruction algorithm. In order words, the argument depends on the combined circuit complexity of the algorithm that turns a distinguisher for $\mathsf{NW}(\tilde{f})$ into a high-accuracy approximating circuit for $f$, under the notation of the proof sketch for Lemma \ref{l:speedup}. This time we obtain a distinguisher using an oracle to  MCSP. It is possible to show that this reduction can be implemented in non-uniform $\mathsf{TC}^0$. 

Observe that the argument just sketched only provides a randomized reduction that approximates the initial Boolean function $f$ under the uniform distribution. But Theorem \ref{t:MCSP} requires a \emph{worst-case} reduction from $\mathsf{NC}^1$ to MCSP. In other words, we must be able to compute any  $\mathsf{NC}^1$ function correctly on every input. This can be achieved using that there are functions in $\mathsf{NC}^1$ that are $\mathsf{NC}^1$-hard under $\mathsf{TC}^0$-reductions, and that in addition admit randomized worst-case to average-case reductions computable in $\mathsf{TC}^0$. Using non-uniformity, randomness can be eliminated by a standard argument. Altogether, this completes the proof that $\mathsf{NC}^1$ reduces to MCSP via a non-uniform $\mathsf{TC}^0$ computation.\\

\vspace{0.2cm}

These proofs provide a few additional examples of the use of pseudorandomness in contexts where this notion is not intrinsic to the result under consideration. For instance, the connection between non-trivial learning algorithms and lower bounds (Theorem \ref{t:Lbs}), the Karp-Lipton collapse for probabilistic exponential time (Theorem \ref{t:KL}), and the hardness of the Minimum Circuit Size Problem (Theorem \ref{t:MCSP}) are statements that do not explicitly refer to pseudorandomness. Nevertheless, the arguments discussed above rely on this concept in fundamental ways. This motivates a further investigation of the role of pseudorandomness in complexity theory, both in terms of finding more applications of the ``pseudorandom method'', as well as in discovering alternative proofs relying on different techniques.

\section{Preliminaries and Notation}
\label{s:preliminaries}

\subsection{Boolean Function Complexity}

We use $\mathcal{F}_m$ to denote the set of all Boolean functions $f \colon \{0,1\}^m \to \{0,1\}$. If $W$ is a probability distribution, we use $w \sim W$ to denote an element sampled according to $W$. Similarly, for a finite set $A$, we use $a \sim A$ to denote that $a$ is selected uniformly at random from $A$. Under this notation, $f \in \mathcal{F}_m$ represents a fixed function, while $f \sim \mathcal{F}_m$ is a uniformly random function. For convenience, we let $\mathcal{U}_n \eqdef \{0,1\}^n$. Following standard notation, $X \equiv Y$ denotes that random variables $X$ and $Y$ have the same distribution. We use standard asymptotic notation such as $o(\cdot)$ and $O(\cdot)$, and it always refer to a parameter $n \to \infty$, unless stated otherwise.

We say that $f, g \in \mathcal{F}_n$ are $\varepsilon$-close if $\Pr_{x\sim\mathcal{U}_n}[f(x) = g(x)]\geq 1-\varepsilon$. We say that $h \in \mathcal{F}_n$ computes $f$ with advantage $\delta$ if $\Pr_{x\sim\mathcal{U}_n}[f(x) = h(x)]\geq 1/2 + \delta$. It will sometimes be convenient to view a Boolean function $f \in \mathcal{F}_m$ as a subset of $\{0,1\}^m$ in the natural way.

We often represent Boolean functions as strings via the truth table mapping. Given a Boolean function $f \in \mathcal{F}_n$, $\mathtt{tt}(f)$ is the $2^n$-bit string which represents the truth table of $f$ in the standard way, and conversely, given a string $y \in \{0,1\}^{2^n}$, $\mathtt{fn}(y)$ is the Boolean function in $\mathcal{F}_n$ whose truth table is represented by $y$.

Let $\mathfrak{C} = \{\mathcal{C}_n\}_{n \in \mathbb{N}}$ be a class of Boolean functions, where each $\mathcal{C}_n \subseteq \mathcal{F}_n$. Given a language $L \subseteq \{0,1\}^*$, we write $L \in \mathfrak{C}$ if for every large enough $n$ we have that $L_n \eqdef \{0,1\}^n \cap L$ is in $\mathcal{C}_n$. Often we will abuse notation and view $\mathfrak{C}$ as a class of Boolean circuits. For convenience, we use number of wires to measure circuit size. We denote by $\mathfrak{C}[s(n)]$ the set of $n$-variable $\mathfrak{C}$-circuits of size at most $s(n)$. As usual, we say that a uniform complexity class $\Gamma$ is contained in $\mathfrak{C}[\mathsf{poly}(n)]$ if for every $L \in \Gamma$ there exists $k\geq 1$ such that $L \in \mathfrak{C}[n^k]$.

We say that $\mathfrak{C}$ is \emph{typical} if $\mathfrak{C} \in \{\mathsf{AC}^0, \mathsf{AC}^0[p], \mathsf{ACC}^0, \mathsf{TC}^0, \mathsf{NC}^1, \mathsf{Formula}, \mathsf{Circuit}\}$. Recall that 
$$
\mathsf{CNF}, \mathsf{DNF} \,\subsetneq\, \mathsf{AC}^0 \,\subsetneq\, \mathsf{AC}^0[p] \,\subsetneq\, \mathsf{ACC}^0 \,\subseteq\, \mathsf{TC}^0 \,\subseteq\, \mathsf{NC}^1 = \mathsf{Formula}[\mathsf{poly}] \,\subseteq\, \mathsf{Circuit}[\mathsf{poly}].
$$
\noindent We assume for convenience that $\mathsf{TC}^0$ is defined using (unweighted) majority gates instead of weighted threshold gates. Also, while $\mathsf{NC}^1$ typically refers to circuits of polynomial size and logarithmic depth, we consider the generalized version where $\mathsf{NC}^{1}[s]$ is the class of languages computed by circuits of size $\leq s$ and depth $\leq \log s$. 

While we restrict our statements to typical classes, it is easy to see that they generalize to most circuit classes of interest. When appropriate we use $\mathfrak{C}_d$ to restrict attention to $\mathfrak{C}$-circuits of depth at most $d$. In this work, we often find it convenient to suppress the dependence on $d$, which is implicit for instance in the definition of a circuit family from a typical bounded-depth circuit class, such as the first four typical classes in the list above. It will be clear from the context whether the quantification over $d$ is existential or universal. 

Given a sequence of Boolean functions $\{f_n\}_{n \in \mathbb{N}}$ with $f_n \colon \{0,1\}^n \to \{0,1\}$, we let $\mathfrak{C}^f$ denote the extension of $\mathfrak{C}$ that allows $\mathcal{C}_n$-circuits to have oracle gates computing $f_n$.

For a complexity class $\Gamma$ and a language $L \subseteq \{0,1\}^*$, we say that $L \in \mathtt{i.o.}\Gamma$ if there is a language $L' \in \Gamma$ such that $L_n = L'_n$ for infinitely many values of $n$. Consequently, if $\Gamma_1 \nsubseteq \mathtt{i.o.}\Gamma_2$ then there is a language in $\Gamma_1$ that disagrees with each language in $\Gamma_2$ on every large enough input length. 

Recall the following diagram of class inclusions involving standard complexity classes:\footnote{Non-uniform lower bounds against unrestricted polynomial size circuits are currently known only for $\mathsf{MAEXP}$, the exponential time analogue of $\mathsf{MA}$ \citep{DBLP:conf/coco/BuhrmanFT98}.}
$$
\mathsf{ZPP} \subseteq \diamondinclusion{\mathsf{RP}}{\mathsf{NP}\\\mathsf{BPP}}{\mathsf{MA}} \subseteq \mathsf{PSPACE} \subseteq \mathsf{EXP} \subseteq \mathsf{ZPEXP} \subseteq \diamondinclusion{\mathsf{REXP}}{\mathsf{NEXP}\\\mathsf{BPEXP}}{\mathsf{MAEXP}} \subseteq \mathsf{EXPSPACE}.
$$
\noindent In order to avoid confusion, we fix the following notation for exponential complexity classes. $\mathsf{E}$ refers to languages computed in time $2^{O(n)}$. $\mathsf{EXP}$ refers to languages computed with bounds of the form $2^{n^c}$ for some $c \in \mathbb{N}$. $\mathsf{SUBEXP}$ denotes complexity $2^{n^{\varepsilon}}$ for a fixed but arbitrarily small $\varepsilon > 0$. Finally, $\mathsf{ESUBEXP}$ refers to a bound of the form $2^{2^{n^\varepsilon}}$, again for a fixed but arbitrarily small $\varepsilon > 0$. These conventions are also used for the $\mathsf{DSPACE}(\cdot)$ and $\mathsf{BPTIME}(\cdot)$ variants, such as $\mathsf{BPE}$, $\mathsf{BPSUBEXP}$ and $\mathsf{EXPSPACE}$. For instance, a language $L \subseteq \{0,1\}^*$ is in $\mathsf{BPSUBEXP}$ if for every $\varepsilon > 0$ there is a bounded-error randomized algorithm that correctly computes $L$ in time $\leq 2^{n^\varepsilon}$ on every input of length $n$, provided that $n$ is sufficiently large. For quasi-polynomial time classes such as $\mathsf{RQP}$ and $\mathsf{BPQP}$, the convention is that for each language in the class there is a constant $c \geq 1$ such that the corresponding algorithm runs in time at most $O(n^{(\log n)^c})$.

We will use a few other standard notions, and we refer to standard textbooks in computational complexity and circuit complexity for more details.

\subsection{Learning and Compression Algorithms}

The main learning model with which we concern ourselves is PAC learning under the uniform distribution with membership queries.

\begin{definition}[Learning Algorithms]
\label{d:learning}
Let $\mathfrak{C}$ be a circuit class. Given a size function $s \colon \mathbb{N} \rightarrow \mathbb{N}$ and a time function $T \colon \mathbb{N} \rightarrow \mathbb{N}$, we say that $\mathfrak{C}[s]$ has $(\varepsilon(n), \delta(n))$-learners running in time $T(n)$ if there is a randomized oracle algorithm $A^f$ \emph{(}the learner\emph{)} such that for every large enough $n\in \mathbb{N}$\emph{:}
\begin{itemize}
\item For every function $f \in \mathfrak{C}[s(n)]$, given oracle access to $f$, with probability at least $1 - \delta(n)$ over its internal randomness, $A^f(1^n)$ outputs a Boolean circuit $h$ such that $\Pr_{x \sim \mathcal{U}_n}[f(x) \neq h(x)] \leq \varepsilon(n)$.
\item For every function $f$, $A^f(1^n)$ runs in time at most $T(n)$.
\end{itemize}
\end{definition}

It is well-known that the confidence of a learning algorithm can be amplified without significantly affecting the running time (cf. \citep{KearnsVazirani:94}), and unless stated otherwise we assume that $\delta(n) = 1/n$.

A \emph{weak learner} for $\mathfrak{C}[s(n)]$ is a $(1/2-1/n^c, 1/n)$-learner, for some fixed $c > 0$ and sufficiently large $n$. We say $\mathfrak{C}[s]$ has \emph{strong learners} running in time $T$ if for each $k \geq 1$ there is a $(1/n^k, 1/n)$-learner for $\mathfrak{C}[s]$ running in time $T$. Different values for the accuracy parameter $k$ can lead to different running times, but we will often need only a fixed large enough $k$ when invoking the learning algorithm. On the other hand, when proving that a class has a strong learner, we show that the claimed asymptotic running time holds for all fixed $k \in \mathbb{N}$. For simplicity, we may therefore omit the dependence of $T$ on $k$. We say that $\mathfrak{C}[s]$ has \emph{non-trivial learners} if it has $(1/2- 1/n^k,1/n)$-learners running in time $T(n)= 2^n/n^{\omega(1)}$, for some fixed $k \in \mathbb{N}$.

We also discuss randomized learning under the uniform distribution with membership queries and \emph{equivalence queries} \citep{DBLP:journals/ml/Angluin87}. In this stronger model, the learning algorithm is also allowed to make queries of the following form: Is the unknown function $f$ computed by the Boolean circuit $C$? Here $C$ is an efficient representation of a Boolean circuit produced be the learner. The oracle answers ``yes'' if the Boolean function computed by $C$ is $f$; otherwise it returns an input $x$ such that $C(x) \neq f(x)$.  

\begin{definition}[Compression Algorithms]
\label{d:compression}
Given a circuit class $\mathfrak{C}$ and a size function $s\colon \mathbb{N} \rightarrow \mathbb{N}$, a compression algorithm for $\mathfrak{C}[s]$ is an algorithm $A$ for which the following hold\emph{:}
\begin{itemize}
\item Given an input $y \in \{0,1\}^{2^n}$, $A$ outputs a circuit $D$ \emph{(}not necessarily in $\mathfrak{C}$\emph{)} of size $o(2^n/n)$ such that if $\mathtt{fn}(y) \in \mathfrak{C}[s(n)]$ then $D$ computes $\mathtt{fn}(y)$.
\item $A$ runs in time polynomial in $|y| = 2^n$.
\end{itemize}
We say $\mathfrak{C}[s]$ admits compression if there is a \emph{(}polynomial time\emph{)} compression algorithm for $\mathfrak{C}[s]$.
\end{definition} 

We will also consider the following variations of compression. If the algorithm is probabilistic, producing a correct circuit with probability $\geq 2/3$, we say $\mathfrak{C}[s]$ has \emph{probabilistic compression}. If the algorithm produces a circuit $D$ which errs on at most $\varepsilon(n)$ fraction of inputs for $\mathtt{fn}(y)$ in $\mathfrak{C}[s]$, we say that $A$ is an \emph{average-case compression algorithm with error} $\varepsilon(n)$. We define correspondingly what it means for a circuit class to have average-case compression or probabilistic average-case compression.

\subsection{Natural Proofs and the Minimum Circuit Size Problem}

We say that $\mathfrak{R} = \{\mathcal{R}_n\}_{n \in \mathbb{N}}$ is a \emph{combinatorial property} (of Boolean functions) if $\mathcal{R}_n \subseteq \mathcal{F}_n$ for all $n$. We use $L_{\mathfrak{R}}$ to denote the language of truth-tables of functions in $\mathfrak{R}$. Formally, $L_{\mathfrak{R}} = \{y \mid y = \mathtt{tt}(f)~\text{for some}~f \in \mathcal{R}_n~\text{and}~n \in \mathbb{N}\}$.

\begin{definition}[Natural Properties \citep{RR97}]\label{d:natural_property}
Let $\mathfrak{R} = \{\mathcal{R}_n\}$ be a combinatorial property,  $\mathfrak{C}$ a circuit class, and $\mathfrak{D}$ a \emph{(}uniform or non-uniform\emph{)} complexity class. We say that $\mathfrak{R}$ is a $\mathfrak{D}$-natural property useful against $\mathfrak{C}[s(n)]$ if there is $n_0 \in \mathbb{N}$ such that the following holds\emph{:}
\begin{itemize}
\item[\emph{(}i\emph{)}] \emph{Constructivity.} $L_{\mathfrak{R}} \in \mathfrak{D}$.
\item[\emph{(}ii\emph{)}] \emph{Density.} For every $n \geq n_0$, $\Pr_{f \sim \mathcal{F}_n}[f \in \mathcal{R}_n] \geq 1/2$.
\item[\emph{(}iii\emph{)}] \emph{Usefulness.} For every $n \geq n_0$, we have $\mathcal{R}_n \cap \mathcal{C}_n[s(n)] = \emptyset$.
\end{itemize}
\end{definition}

\begin{definition} [Minimum Circuit Size Problem] \label{d:mcsp}
Let $\mathfrak{C}$ be a circuit class. The Minimum Circuit Size Problem for $\mathfrak{C}$, abbreviated as $\mathsf{MCSP}$-$\mathfrak{C}$, is defined as follows\emph{:}
\begin{itemize}
\item \emph{Input.} A pair $(y,s)$, where $y \in \{0,1\}^{2^n}$ for some $n \in \mathbb{N}$, and $1 \leq s \leq 2^n$ is an integer \emph{(}inputs not of this form are rejected\emph{)}.
\item \emph{Question.} Does $\mathtt{fn}(y)$ have $\mathfrak{C}$-circuits of size at most $s$?
\end{itemize}
\end{definition}

We also define a variant of this problem, where the circuit size is not part of the input.

\begin{definition} [Unparameterized Minimum Circuit Size Problem] \label{d:umcsp}
Let $\mathfrak{C}$ be a circuit class, and $s \colon \mathbb{N} \rightarrow \mathbb{N}$ be a function. The Minimimum Circuit Size Problem for $\mathfrak{C}$ with parameter $s$, abbreviated as $\mathsf{MCSP}$-$\mathfrak{C}[s]$, is defined as follows\emph{:}
\begin{itemize}
\item \emph{Input.} A string $y \in \{0,1\}^{2^n}$, where $n \in \mathbb{N}$ \emph{(}inputs not of this form are rejected\emph{)}.
\item \emph{Question.} Does $\mathtt{fn}(y)$ have $\mathfrak{C}$-circuits of size at most $s(n)$?
\end{itemize}
\end{definition}

Note that a dense property useful against $\mathfrak{C}[s(n)]$ is a dense subset of the complement of MCSP-$\mathfrak{C}[s]$.

\subsection{Randomness and Pseudorandomness}

\begin{definition}[Pseudorandom Generators]\label{d:PRG}
Let $\ell \colon \mathbb{N} \to \mathbb{N}$, $h \colon \mathbb{N} \to \mathbb{N}$ and $\varepsilon \colon \mathbb{N} \rightarrow [0,1]$ be functions, and let $\mathfrak{C}$ be a circuit class. A sequence $\{G_n\}$ of functions
$G_n \colon \{0,1\}^{\ell(n)} \to \{0,1\}^n$ is an $(\ell, \varepsilon)$ pseudorandom generator
\emph{(PRG)} against $\mathfrak{C}[h(n)]$ if for any sequence of circuits $\{D_n\}$ with $D_n \in \mathfrak{C}[h(n)]$ and for all large enough $n$,
$$
\left | \Pr_{w \sim U_n}[D_n(w)=1] - \Pr_{x \sim \mathcal{U}_{\ell(n)}}[D_n(G_n(x))=1] \right | \;\leq\;
\varepsilon(n).
$$ 
\noindent The pseudorandom generator is called quick if its range is computable in time $2^{O(\ell(n))}$. 
\end{definition}

\begin{theorem} [PRGs from computational hardness \citep{DBLP:journals/jcss/NisanW94, DBLP:conf/stoc/ImpagliazzoW97}]
\label{t:hardness_PRG}
Let $s \colon \mathbb{N} \rightarrow \mathbb{N}$ be a time-constructible function such that $n \leq s(n) \leq 2^n$ for every $n \in \mathbb{N}$. There is a constant $c > 0$ and an algorithm which, given as input $n$ in unary and the truth table of a Boolean function on $s^{-1}(n)$ bits which does not have circuits of size $n^c$, computes the range of a $(\ell(n), 1/n)$ pseudorandom generator against $\mathsf{Circuit}[n]$ in time $2^{O(\ell(n))}$, where $\ell(n) = O( (s^{-1}(n))^2/\log n)$.
\end{theorem}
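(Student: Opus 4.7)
The plan is to follow the two-step hardness-to-randomness paradigm from \citep{DBLP:journals/jcss/NisanW94, DBLP:conf/stoc/ImpagliazzoW97}: first amplify the worst-case hardness of the given function into average-case hardness while preserving the input length up to a constant factor, then plug the amplified function into the Nisan-Wigderson generator based on a combinatorial design.

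Write $m = s^{-1}(n)$ and let $f \colon \{0,1\}^m \to \{0,1\}$ be the input function, which by assumption has no circuits of size $n^c$. First I would apply a worst-case to average-case amplification, producing from $\mathtt{tt}(f)$ the truth table of a function $\hat{f} \colon \{0,1\}^{m'} \to \{0,1\}$, where $m' = O(m)$, such that no circuit of size $\mathsf{poly}(n)$ (with the polynomial tuned to absorb the reconstruction overhead below) computes $\hat{f}$ with advantage better than $1/(2n)$. A convenient implementation is the Impagliazzo-Wigderson construction, which encodes $\mathtt{tt}(f)$ by a locally list-decodable Reed-Muller code and then takes an XOR/direct product: any mildly successful predictor for $\hat{f}$ can be converted, using $O(\log n)$ bits of non-uniform advice plus oracle calls to the predictor, into a slightly larger circuit exactly computing $f$, contradicting the worst-case lower bound. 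This step produces $\mathtt{tt}(\hat{f})$ in time $2^{O(m')}$ from $\mathtt{tt}(f)$.

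Next I would construct an $(\ell, m')$-design, i.e.\ subsets $S_1, \ldots, S_n \subseteq [\ell]$ with $|S_i| = m'$ and $|S_i \cap S_j| \leq \log n$ for $i \neq j$. Nisan's greedy construction yields such a family with $\ell = O((m')^2 / \log n) = O(m^2 / \log n)$ in time $\mathsf{poly}(n) \cdot 2^{O(\ell)}$. The generator is then
\[
G_n(x) \;=\; \bigl(\hat{f}(x|_{S_1}),\, \hat{f}(x|_{S_2}),\, \ldots,\, \hat{f}(x|_{S_n})\bigr),
\]
whose entire range of size $2^{\ell(n)}$ can be enumerated in time $2^{O(\ell(n))}$, since each coordinate of each output is a table lookup into $\mathtt{tt}(\hat{f})$ after the amplification step.

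Correctness follows by the standard NW reconstruction: if $D \in \mathsf{Circuit}[n]$ distinguishes $G_n$ from uniform with advantage exceeding $1/n$, a hybrid argument yields an index $i$ and a next-bit predictor of size $O(n)$ with advantage $\Omega(1/n^2)$ for the $i$-th output bit from the previous ones. Fixing the seed bits outside $S_i$ by averaging and using that each $\hat{f}(x|_{S_j})$ with $j \neq i$ depends on at most $\log n$ of the free variables $x|_{S_i}$ (so it can be hard-wired as a CNF/DNF of size $O(n)$), one obtains a circuit of size $\mathsf{poly}(n)$ that predicts $\hat{f}$ on the uniform distribution over $\{0,1\}^{m'}$ with advantage $\Omega(1/n^2)$, contradicting the average-case hardness of $\hat{f}$. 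The main technical obstacle is calibrating parameters so that the polynomial blowup incurred by the hybrid step, the hard-wiring of the other outputs, and the reconstruction in the amplification step all fit inside the size bound the amplified hardness can defeat; choosing the constant $c$ in the worst-case hardness assumption sufficiently large (relative to the fixed polynomial produced by these reductions) absorbs all of these losses uniformly in $n$.
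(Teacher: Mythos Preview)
The paper does not prove this theorem; it is stated as a known result imported from \citep{DBLP:journals/jcss/NisanW94, DBLP:conf/stoc/ImpagliazzoW97} and used as a black box in later arguments. Your proposal is a correct outline of the standard Nisan--Wigderson plus Impagliazzo--Wigderson construction from those references, so there is nothing to compare against here.
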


\begin{definition}[Distinguishers and Distinguishing Circuits]\label{d:distinguisher}
Given a probability distribution $W_n$ with  $\mathsf{Support}(W_n) \subseteq \{0,1\}^n$ and a Boolean function $h_n \colon \{0,1\}^n \to \{0,1\}$, we say that $h_n$ is a distinguisher for $W_n$ if
$$
\left | \Pr_{w \sim W_n}[h_n(w)=1] - \Pr_{x \sim \mathcal{U}_n}[h_n(x)=1] \right | \;\geq\;1/4.
$$
We say that a circuit $D_n$ is a circuit distinguisher for $W_n$ if $D_n$ computes a function $h_n$ that is a distinguisher for $W_n$. A function $f \colon \{0,1\}^{*} \rightarrow \{0,1\}$ is a distinguisher for a sequence of distributions $\{W_n\}$ if for each large enough $n$, $f_n$ is a distinguisher for $W_n$, where $f_n$ is the restriction of $f$ to $n$-bit inputs.
\end{definition}

The following is a slight variant of a definition in \citep{CIKK16}.

\begin{definition}[Black-Box Generator]\label{d:blackbox_generator}
Let $\ell \colon \mathbb{N} \to \mathbb{N}$, $\gamma(n) \in [0,1]$, and $\mathfrak{C}$ be a circuit class. A black-box $(\gamma, \ell)$-function generator within $\mathfrak{C}$ is a mapping that associates to any $f\colon \{0,1\}^n \to \{0,1\}$ a family $\mathtt{GEN}(f) = \{g_z\}_{z \in \{0,1\}^m}$ of functions $g_z \colon \{0,1\}^\ell \to \{0,1\}$, for which the following conditions hold\emph{:}
\begin{itemize}
\item[\emph{(}i\emph{)}] \emph{Family size.} The parameter $m \leq \mathsf{poly}(n,1/\gamma)$. 
\item[\emph{(}ii\emph{)}] \emph{Complexity.} For every $z \in \{0,1\}^m$, we have $g_z \in \mathfrak{C}^f[\mathsf{poly}(m)]$.
\item[\emph{(}iii\emph{)}] \emph{Reconstruction.} Let $L = 2^{\ell}$ and $W_L$ be the distribution supported over $\{0,1\}^{L}$ that is generated by $\mathtt{tt}(g_z)$, where $z \sim \mathcal{U}_m$. There is a randomized algorithm $A^f$, taking as input a circuit $D$ and having oracle access to $f$, which when $D$ is a distinguishing circuit for $W_L$, with probability at least $1 - 1/n$ outputs a circuit of size $\mathsf{poly}(n, 1/\gamma, \mathtt{size}(D))$ that is $\gamma$-close to $f$. Furthermore, $A^f$ runs in time at most $\mathsf{poly}(n, 1/\gamma, L(n))$.
\end{itemize}
\end{definition}

This definition is realized by the following result.

\begin{theorem}
[Black-Box Generators for Restricted Classes \citep{CIKK16}]\label{t:blackbox_ACzerop}
Let $p$ be a fixed prime, and $\mathfrak{C}$ be a typical circuit class containing $\mathsf{AC}^0[p]$. For every $\gamma \colon \mathbb{N} \to [0,1]$ and $\ell \colon \mathbb{N} \to \mathbb{N}$ there exists a black-box $(\gamma, \ell)$-function generator within $\mathfrak{C}$.
\end{theorem}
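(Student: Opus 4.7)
The plan is to construct the generator by composing a locally list-decodable hardness amplifier with a Nisan--Wigderson-style generator driven by a combinatorial design, following the [CIKK16] blueprint but paying careful attention to realize every intermediate object inside $\mathsf{AC}^0[p] \subseteq \mathfrak{C}$.

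For the generator itself, I first replace $f \colon \{0,1\}^n \to \{0,1\}$ with its low-degree Reed--Muller extension $\tilde f$ over $\mathbb{F}_p$ (view the truth table of $f$ as the values of a function on a subcube of $\mathbb{F}_p^k$ and extend by multivariate interpolation). Second, I fix a Nisan--Wigderson design $S_1,\dots,S_L \subseteq [m]$ with $L = 2^\ell$, $m = \mathsf{poly}(n,1/\gamma)$, each $|S_i|$ equal to the input length of $\tilde f$, and pairwise intersections $\le \log L$. For each seed $z \in \{0,1\}^m$ and index $i \in \{0,1\}^\ell$, I set
\[
g_z(i) \;\eqdef\; \tilde f\bigl(z|_{S_i}\bigr),
\]
so that $\mathtt{tt}(g_z)$ is precisely the standard NW output on $\tilde f$ with seed $z$, and $W_L$ is the associated NW pseudorandom distribution. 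Each $g_z$ lies in $\mathfrak{C}^f[\mathsf{poly}(m)]$: on input $i$ it performs a constant-depth addressing step to locate $S_i$ and then evaluates $\tilde f$ at $z|_{S_i}$, which costs a single $f$-oracle call plus $\mathbb{F}_p$-arithmetic.

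For the reconstruction algorithm $A^f$, given a distinguisher circuit $D$ for $W_L$, I apply the classical NW hybrid argument: a standard indexing/averaging step turns $D$ into a next-bit predictor for $\tilde f$ at some coordinate $i^\star$, and using the design property this yields a circuit $C_D$ of size $\mathsf{poly}(\mathtt{size}(D), n, 1/\gamma)$ that agrees with $\tilde f$ on a $\tfrac12 + \mathsf{poly}(\gamma/n)$ fraction of inputs. I then invoke a local list-decoder for the Reed--Muller code, implemented as a constant-depth $\mathbb{F}_p$-arithmetic circuit with oracle access to $C_D$ and to $f$, to produce a short list of candidate circuits, and use $O(\log(1/\gamma))$ random $f$-queries to pick out one that is $\gamma$-close to $f$. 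The total running time is $\mathsf{poly}(n,1/\gamma,L)$ and the success probability is at least $1 - 1/n$ after a standard amount of repetition.

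The main obstacle will be the circuit-class bookkeeping for the hardness-amplification step: the Reed--Muller local list-decoder must be realized as a constant-depth $\mathbb{F}_p$-arithmetic circuit rather than a generic polynomial-time algorithm, so that when composed with the NW reconstruction the final $\gamma$-approximator for $f$ stays within $\mathsf{Circuit}[\mathsf{poly}(n,1/\gamma,\mathtt{size}(D))]$. This is exactly where the prime modulus $p$ enters: Razborov--Smolensky-style approximation and Reed--Muller correction over $\mathbb{F}_p$ fit in constant depth, which is why $\mathfrak{C}$ is required to contain $\mathsf{AC}^0[p]$ and the argument cannot be pushed below that class. Matching parameters so that the design exists, $m \le \mathsf{poly}(n,1/\gamma)$, and all size bounds line up is then a direct NW parameter calculation.
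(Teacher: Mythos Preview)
First, a framing remark: the paper does not give its own proof of this theorem; it is imported verbatim from \cite{CIKK16}. So there is nothing in the paper to compare against beyond the pointer to that reference.

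On the substance of your sketch, the high-level architecture (amplify $f$, plug the amplified function into an NW generator, reconstruct via the hybrid argument plus list decoding) is the right one, but your instantiation of the amplification step breaks requirement~(ii) of Definition~\ref{d:blackbox_generator}. You take $\tilde f$ to be the Reed--Muller low-degree extension of $f$ over $\mathbb{F}_p^k$ and assert that evaluating $\tilde f(z|_{S_i})$ ``costs a single $f$-oracle call plus $\mathbb{F}_p$-arithmetic.'' That is not correct: for a point outside the Boolean subcube the interpolation formula is a sum over all $2^n$ values of $f$, so computing $g_z(i)=\tilde f(z|_{S_i})$ needs exponentially many $f$-oracle gates, and $g_z \notin \mathfrak{C}^f[\mathsf{poly}(m)]$. (If you constrain seeds so that $z|_{S_i}$ always lands in the Boolean subcube, then $\tilde f$ coincides with $f$ there and no amplification has occurred.) The amplifier actually used in \cite{CIKK16} is a locally \emph{encodable} one---essentially a direct-product/XOR construction concatenated with a small inner code---so that each bit of the amplified function depends on only $\mathsf{poly}(n,1/\gamma)$ evaluations of $f$ and sits in $\mathsf{AC}^0$ with $f$-oracle gates.

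Relatedly, you misplace where the prime $p$ enters. Condition~(iii) in Definition~\ref{d:blackbox_generator} only asks that the reconstruction output a circuit of polynomial size, with no $\mathfrak{C}$-constraint, so the list-decoding step need not be constant-depth at all. The class $\mathsf{AC}^0[p]$ is needed solely to satisfy condition~(ii): the \cite{CIKK16} designs are built so that the map $i \mapsto S_i$ is computable by constant-depth circuits with $\mathsf{MOD}_p$ gates, and it is this that places each $g_z$ inside $\mathfrak{C}^f[\mathsf{poly}(m)]$. This is exactly the point the present paper makes when it later refers to ``the $\mathsf{AC}^0[p]$-computable designs from \cite{CIKK16}'' in the proof sketch of Theorem~\ref{t:NWNonunifHardness}.
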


\begin{definition}[Complexity Distinguisher]\label{d:probcompdist}
Let $\mathfrak{C}$ be a circuit class and consider functions $s,T \colon \mathbb{N} \rightarrow \mathbb{N}$. We say that a probabilistic oracle algorithm $A^g$ is a complexity distinguisher for $\mathfrak{C}[s(n)]$ running in time $T$ if $A^g(1^n)$ always halts in time $T(n)$ with an ouput in $\{0,1\}$, and the following hold\emph{:}
\begin{itemize}
\item For every $g \in \mathfrak{C}[s(n)]$, $\;\Pr_A[A^g(1^n) = 1] \leq 1/3$.
\item $\mathbb{E}_{g \sim \mathcal{F}_{n},A} [A^g(1^n)] \geq 2/3$.
\end{itemize}
\end{definition}

\begin{definition}[Zero-Error Complexity Distinguisher]\label{d:zerocompdist}
Let $\mathfrak{C}$ be a circuit class and $s,T \colon \mathbb{N} \rightarrow \mathbb{N}$ be functions. We say that a probabilistic oracle algorithm $A^g$ is a zero-error complexity distinguisher for $\mathfrak{C}[s(n)]$ running in time $T$ if $A^g(1^n)$ always halts in time $T(n)$ with an output in $\{0, 1, ?\}$, and the following hold\emph{:}
\begin{itemize}
\item If $g \in \mathfrak{C}[s(n)]$, $A^g(1^n)$ always outputs $0$ or $?$, and $\Pr_A[A^g(1^n) = \;?] 
\leq 1/3$. 
\item For every $n \geq 1$ there exists a family of functions $\mathcal{S}_n \subseteq \mathcal{F}_n$ with $|\mathcal{S}_n|/|\mathcal{F}_n| \geq 1 - o(1)$  such that for every $f \in \mathcal{S}_n$, $A^f(1^n)$ always outputs $1$ or $?$, and $\Pr_A[A^f(1^n) = \;?] \leq 1/3$.
\end{itemize}
\end{definition}

We will make use of the following standard concentration of measure result.

\begin{lemma}[Chernoff Bound, cf. {\citep[Theorem 2.1]{JLR_randomgraphs}}]\label{l:concentration}
Let $X\sim\mathsf{Bin}(m, p)$ and $\lambda = mp$. For any $t \geq 0$,
$$
\Pr[|X-\E[X]| \geq t] \leq \exp\! \left( -\frac{t^2}{2 (\lambda + t/3)} \right).
$$
\end{lemma}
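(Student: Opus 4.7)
The plan is to prove this by the classical Chernoff--Bernstein method: an exponential moment (Cram\'er--Markov) bound combined with an elementary numerical inequality. First I would decompose $X = \sum_{i=1}^m X_i$ with $X_i \sim \mathsf{Bern}(p)$ independent. Applying Markov's inequality to $e^{sX}$ for $s>0$, using independence, and bounding each factor via $1 - p + p e^s \leq \exp(p(e^s - 1))$ yields, for the upper tail,
$$\Pr[X \geq \lambda + t] \;\leq\; e^{-s(\lambda + t)}\,\mathbb{E}\bigl[e^{sX}\bigr] \;\leq\; \exp\bigl(\lambda(e^s - 1) - s(\lambda + t)\bigr).$$
Optimizing over $s$ via $s = \ln(1 + t/\lambda)$ produces the sharp large-deviation form $\Pr[X \geq \lambda + t] \leq \exp(-\lambda\,\varphi(t/\lambda))$, where $\varphi(u) = (1+u)\ln(1+u) - u$.

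Next I would invoke the elementary calculus inequality $\varphi(u) \geq u^2/(2 + 2u/3)$, valid for $u \geq 0$, which converts the sharp rate-function bound into the Bernstein form appearing in the lemma. For the lower tail, the symmetric argument with $s < 0$ gives $\Pr[X \leq \lambda - t] \leq \exp(-\lambda\,\varphi(-t/\lambda))$ for $0 \leq t \leq \lambda$ (with the event empty when $t > \lambda$), and here the even stronger inequality $\varphi(-u) \geq u^2/2$ suffices, which is automatically at least $u^2/(2+2u/3)$. Combining the two one-sided estimates yields the stated two-sided bound.

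The main obstacle, modest as it is, is verifying the inequality $\varphi(u) \geq u^2/(2 + 2u/3)$ for $u \geq 0$: this is precisely where the $t/3$ correction in the denominator of the exponent originates. A clean proof proceeds either by Taylor-expanding $\varphi(u) = u^2/2 - u^3/6 + \dots$ and comparing series, or by defining $\psi(u) = \varphi(u)(2 + 2u/3) - u^2$ and checking $\psi(0) = \psi'(0) = 0$ together with $\psi''(u) \geq 0$ for $u \geq 0$. Every other step is routine MGF bookkeeping; since the lemma is quoted directly from the cited Janson--Luczak--Rucinski textbook, no new analysis is needed and it is safe to invoke the bound as stated.
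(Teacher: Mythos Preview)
Your proof sketch is correct and follows the standard Cram\'er--Chernoff derivation. However, there is nothing to compare against: the paper does not prove this lemma at all, but simply quotes it with a citation to the Janson--\L uczak--Ruci\'nski textbook and uses it as a black box. Your own closing remark already anticipates this. One minor quibble: combining the two one-sided tails by a union bound yields a leading factor of $2$, which the stated inequality omits; this is either absorbed into the constants in the applications or is a slight looseness in how the paper transcribes the textbook result, but it does not affect anything downstream.
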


\section{Learning Speedups and Equivalences}\label{s:proof_gap}

\subsection{The Speedup Lemma}

We start with the observation that the usual upper bound on the number of small Boolean circuits also holds for unbounded fan-in circuit classes with additional types of gates.

\begin{lemma}[Bound on the number of functions computed by small circuits]\label{l:number_circuits}
Let $\mathfrak{C}$ be a typical circuit class. For any $s \colon \mathbb{N} \to \mathbb{N}$ satisfying $s(n) \geq n$ there are at most $2^{50 s(n) \log s(n)}$ functions in $\mathcal{F}_n$ computed by $\mathfrak{C}$-circuits of size at most $s(n)$. 
\end{lemma}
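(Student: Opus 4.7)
The plan is to bound the number of syntactically distinct $\mathfrak{C}$-circuits of size at most $s(n)$ by a straightforward encoding argument, and then observe that each Boolean function arises from at least one such circuit. Since the lemma bounds the number of \emph{functions}, the count of distinct circuits is already an upper bound on the count of functions.

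First I would set up the encoding. Each of the typical classes in the list $\{\mathsf{AC}^0, \mathsf{AC}^0[p], \mathsf{ACC}^0, \mathsf{TC}^0, \mathsf{NC}^1, \mathsf{Formula}, \mathsf{Circuit}\}$ uses gates drawn from a fixed finite alphabet (AND, OR, NOT, MAJ, or $\mathrm{MOD}_m$ for a constant $m$ depending on the class), so the number of gate types is some constant $c_{\mathfrak{C}} = O(1)$. Since size is measured by the number of wires, a circuit of size at most $s = s(n)$ contains at most $s$ non-input gates, hence at most $s + n + 1 \leq 2s + 1$ total gates when we include the $n$ input gates and a constant symbol (using $s(n) \geq n$). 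I would then describe a circuit by: (a) labeling its at most $2s + 1$ gates with indices in $[2s+1]$, (b) assigning each gate a type from the constant-size alphabet, and (c) listing each of its $\leq s$ wires as an ordered pair (source gate, destination gate).

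The bit-length of this encoding is at most
\[
(2s+1) \log c_{\mathfrak{C}} \;+\; s \cdot 2 \log(2s+1) \;\leq\; 4 s \log s + O(s)
\]
for $s$ sufficiently large, which is well under $50\, s(n) \log s(n)$. Exponentiating gives at most $2^{50 s(n) \log s(n)}$ distinct circuits, and a fortiori at most that many functions in $\mathcal{F}_n$ computed by such circuits. I would note explicitly that although the same function may be realized by many circuits, this only helps the bound, so no extra care is needed there.

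I do not anticipate a genuine obstacle: the only subtlety is handling the formulation of ``size by number of wires'' uniformly across the typical classes and making sure that gates like $\mathrm{MOD}_m$ or MAJ of unbounded fan-in are still captured by a single constant-sized type label (their semantics are determined by the gate type together with the incoming wires, so no extra description length is needed per gate). Checking this uniformly for all classes on the list is the most pedestrian part, but it is routine given that each class in the list has a fixed gate basis.
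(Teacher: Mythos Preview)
Your proposal is correct and essentially identical to the paper's proof: both encode a circuit by labeling gates (from a finite type alphabet, since $\mathfrak{C}$ is typical) and listing each wire as a pair of gate indices, use $s(n) \geq n$ to absorb the input-gate encoding, and arrive at an $O(s(n)\log s(n))$-bit description, with the constant $50$ serving as a conservative absolute bound. The paper's argument is slightly more terse and does not display the explicit bit-count inequality, but there is no substantive difference in approach.
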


\begin{proof}
A circuit over $n$ input variables and of size at most $s(n)$ can be represented by its underlying directed graph together with information about the type of each gate. A node of the graph together with its gate type can be described using $O(\log s(n))$ bits, since for a typical circuit class there are finitely many types of gates. In addition, each input variable can be described as a node of the graph using $O(\log n) = O(\log s(n))$ bits, since by assumption $s(n) \geq n$. Finally, using this indexing scheme, each wire of the circuit corresponding to a directed edge in the circuit graph can be represented with $O(\log s(n))$ bits. Consequently, as we measure circuit size by number of wires, any circuit of size at most $s(n)$ can be represented using at most $O(s(n) \log s(n))$ bits. The lemma follows from the trivial fact that a Boolean circuit computes at most one function in $\mathcal{F}_n$ and via a conservative estimate for the asymptotic notation.
\end{proof}

Lemmas \ref{l:concentration} and \ref{l:number_circuits} easily imply the following (folklore) result.

\begin{lemma}[Random functions are hard to approximate] \label{l:random_functions} 
Let $\mathfrak{C}$ be a typical circuit class, $s \geq n$, and $\delta \in [0,1/2]$. Then,
$$
\Pr_{f \sim \mathcal{F}_n}[\exists\,\mathfrak{C}\text{-circuit of size}\,\leq s(n)~\text{computing}\,f\,\text{with advantage}~\delta(n)] \,\leq\,  \exp\!\left (- \delta^2 2^{n-1} + 50s \log s \right ).
$$
\end{lemma}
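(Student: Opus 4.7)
The plan is a direct union-bound argument combining the counting bound from Lemma \ref{l:number_circuits} with the Chernoff bound from Lemma \ref{l:concentration}. Let $\mathcal{G}_n \subseteq \mathcal{F}_n$ denote the set of functions computable by $\mathfrak{C}$-circuits of size at most $s(n)$; by Lemma \ref{l:number_circuits} we have $|\mathcal{G}_n| \le 2^{50 s \log s}$. The probability we must bound is at most $\sum_{g \in \mathcal{G}_n} \Pr_{f \sim \mathcal{F}_n}[\,f \text{ agrees with } g \text{ on } \ge (1/2+\delta)\,\text{fraction of inputs}\,]$, so it suffices to bound each summand uniformly by roughly $\exp(-\delta^2 2^{n-1})$ and pay the factor $|\mathcal{G}_n|$.

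For any fixed $g \in \mathcal{F}_n$, when $f$ is drawn uniformly from $\mathcal{F}_n$ the indicator variables $\mathbf{1}[f(x)=g(x)]$ are independent $\mathsf{Ber}(1/2)$, so the agreement count $X$ is distributed as $\mathsf{Bin}(2^n, 1/2)$ with mean $\lambda = 2^{n-1}$. The event in question is $X \ge \lambda + t$ for $t = \delta \cdot 2^n$. Plugging into Lemma \ref{l:concentration},
$$
\Pr[X - \lambda \ge t] \;\le\; \exp\!\left(-\frac{t^2}{2(\lambda + t/3)}\right) \;=\; \exp\!\left(-\frac{\delta^2 \cdot 2^{2n}}{2 \cdot 2^{n-1}(1 + 2\delta/3)}\right) \;=\; \exp\!\left(-\frac{\delta^2 \cdot 2^n}{1 + 2\delta/3}\right).
$$
Since $\delta \le 1/2$, we have $1 + 2\delta/3 \le 4/3 \le 2$, so this is bounded by $\exp(-\delta^2 \cdot 2^{n-1})$.

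Combining via a union bound over $g \in \mathcal{G}_n$ and using $2^{50 s \log s} \le e^{50 s \log s}$,
$$
\Pr_{f \sim \mathcal{F}_n}[\exists\,g \in \mathcal{G}_n\text{ computing } f \text{ with advantage } \delta] \;\le\; 2^{50 s \log s} \cdot \exp(-\delta^2 \cdot 2^{n-1}) \;\le\; \exp\!\left(-\delta^2 \cdot 2^{n-1} + 50 s \log s\right),
$$
which is the claimed bound. There is no real obstacle here: the argument is a textbook Chernoff-plus-union-bound, and the only mild care is in verifying that the $1 + 2\delta/3$ denominator is absorbed into the $2^{n-1}$ via the hypothesis $\delta \le 1/2$, and that the base-2 counting bound of Lemma \ref{l:number_circuits} can be converted to base $e$ at no cost (since $\ln 2 < 1$).
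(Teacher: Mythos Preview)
Your proof is correct and follows essentially the same approach as the paper: a Chernoff bound (Lemma~\ref{l:concentration}) for a single fixed $g$, followed by a union bound over the at most $2^{50 s \log s}$ functions computable by small $\mathfrak{C}$-circuits (Lemma~\ref{l:number_circuits}). The paper's proof is terser but identical in substance; your explicit handling of the $1+2\delta/3$ denominator and the base-$2$ to base-$e$ conversion is fine.
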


\begin{proof}
Let $g \in \mathcal{F}_n$ be a fixed function. It follows from Lemma \ref{l:concentration} with $p = 1/2$, $m = 2^n$, $t = \delta2^n$, and using $\delta \leq 1/2$ that
$$
\Pr_{f \sim \mathcal{F}_n}[g~\text{computes}~f~\text{with advantage}~\delta(n)] \leq  \exp\!\left (- \frac{\delta^2 2^n}{2} \right )\!.
$$
The claim follows immediately from this estimate, Lemma \ref{l:number_circuits}, and a union bound.
\end{proof}

\begin{lemma} [Non-trivial learners imply distinguishers] 
\label{l:learningdist}
Let $\mathfrak{C}$ be a typical circuit class, $s \colon \mathbb{N} \rightarrow \mathbb{N}$ be a size bound, and $T \colon \mathbb{N} \rightarrow \mathbb{N}$ be a time bound such that $T(n) = 2^n/n^{\omega(1)}$. If $\mathfrak{C}[s]$ has weak learners running in time $T$, then $\mathfrak{C}[s(n)]$ has complexity distinguishers running in time $T(n) \cdot \mathsf{poly}(n)$.
\end{lemma}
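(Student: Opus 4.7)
The plan is to use the weak learner directly as the core of the distinguisher: run the learner on the oracle $g$, test how well its hypothesis approximates $g$, and declare $g$ to be random whenever the test fails. The soundness side (output $0$ on $g \in \mathfrak{C}[s]$) is immediate from the learning guarantee; the completeness side (output $1$ on a typical random $g$) is where Lemma~\ref{l:random_functions} does the real work, since the hypothesis produced by any $T(n) = 2^n/n^{\omega(1)}$-time algorithm is necessarily a circuit of size at most $T(n)$, and no such circuit can approximate a truly random function with non-negligible advantage.

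Concretely, let $A$ be the assumed weak learner for $\mathfrak{C}[s]$, with advantage $1/n^c$ for some fixed $c$ and confidence $\geq 1 - 1/n$. I would define the distinguisher $B^g(1^n)$ as follows: (i) simulate $A^g(1^n)$ to obtain a hypothesis circuit $h$ (of size $\leq T(n)$, since $A$ runs in time $T(n)$); (ii) draw $N = \poly(n)$ fresh uniform inputs $x_1,\dots,x_N$, query $g(x_i)$ and evaluate $h(x_i)$, and compute the empirical agreement $\hat p = \frac{1}{N}\sum_i \mathbf{1}[h(x_i) = g(x_i)]$; (iii) output $0$ if $\hat p \geq 1/2 + 1/(2n^c)$, and $1$ otherwise. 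The total running time is $T(n) + N\cdot(\text{eval time of }h) = T(n)\cdot \poly(n)$, as required.

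For the analysis, if $g \in \mathfrak{C}[s(n)]$, then with probability $\geq 1 - 1/n$ the learner outputs $h$ with true advantage $\geq 1/n^c$; a standard Chernoff bound on $\hat p$ (via Lemma~\ref{l:concentration} with $N$ chosen appropriately) keeps the estimate within $1/(4n^c)$ of the truth except with probability $1/n$, and so $B^g$ outputs $1$ with probability at most $1/3$. For random $g \sim \mathcal{F}_n$, I will apply Lemma~\ref{l:random_functions} with the circuit class $\mathsf{Circuit}$ and size bound $s' = T(n)$ and advantage $\delta = 1/(4n^c)$: the exponent $-\delta^2 2^{n-1} + 50 s' \log s'$ is $-\Omega(2^n/n^{2c}) + 2^n/n^{\omega(1)} \to -\infty$, so with probability $1 - o(1)$ over $g$ \emph{no} circuit of size $\leq T(n)$ approximates $g$ to advantage $1/(4n^c)$. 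Since the learner's output $h$ is such a circuit regardless of how it was derived from $g$, the true agreement is below $1/2 + 1/(4n^c)$, and again the empirical estimate concentrates, giving output $1$ with probability $\geq 2/3$ over both $g$ and $B$'s coins.

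The main obstacle to be careful about is the dependence of the hypothesis $h$ on $g$: one might worry that $h$ could be correlated with $g$ in ways that defeat the counting bound. The resolution is that Lemma~\ref{l:random_functions} quantifies over \emph{all} circuits of size $\leq T(n)$ simultaneously via the union bound in Lemma~\ref{l:number_circuits}, so any data-dependent choice of $h$ from this class is covered; the only thing that matters is that $h$ has size $\leq T(n)$, which is forced by the learner's running time. Once this is observed, the separation $T(n) \log T(n) = o(2^n / n^{2c})$ — which uses precisely the non-triviality hypothesis $T(n) = 2^n/n^{\omega(1)}$ — closes the argument.
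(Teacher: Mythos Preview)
Your proposal is correct and follows essentially the same approach as the paper: run the learner, estimate the hypothesis's agreement with $g$ by sampling, and threshold the estimate, using Lemma~\ref{l:random_functions} (together with the $T(n)=2^n/n^{\omega(1)}$ bound on hypothesis size) for the random-$g$ case and the learner's guarantee plus a Chernoff bound for the $g\in\mathfrak{C}[s]$ case. Your explicit remark that the union bound inside Lemma~\ref{l:random_functions} handles the dependence of $h$ on $g$ is a point the paper leaves implicit.
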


\begin{proof} 

By the assumption that $\mathfrak{C}$ is weakly learnable, there is a probabilistic oracle algorithm $A_{\mathsf{learn}}^{f}$, running in time $T(n)$ on input $1^n$, which when given oracle access to $f \in \mathfrak{C}[s]$, outputs with probability at least $1-1/n$ a Boolean circuit $h$ which agrees with $f$ on at least a $1/2+1/n^k$ fraction of inputs of length $n$, for some universal constant $k$. We show how to construct from $A_{\mathsf{learn}}^{f}$ an oracle algorithm $A_{\mathsf{dist}}^{f}$ which is a complexity distinguisher for $\mathfrak{C}[s]$.

$A_{\mathsf{dist}}^{f}$ operates as follows on input $1^n$.  It runs $A_{\mathsf{learn}}^{f}$ on input $1^n$. If $A_{\mathsf{learn}}^{f}$ does not output a hypothesis, $A_{\mathsf{dist}}^{f}$ outputs `1'. Otherwise $A_{\mathsf{dist}}^{f}$ estimates the agreement between the hypothesis $h$ output by the learning algorithm and the function $f$ by querying $f$ on $n^{5k}$ inputs of length $n$ chosen uniformly at random, and checking for each such input whether $f$ agrees with $h$. The estimated agreement is computed to be the fraction of inputs on which $f$ agrees with $h$. If it is greater than $1/2+1/n^{2k}$, $A_{\mathsf{dist}}^{f}$ outputs `0', otherwise it outputs `1'.

By the assumption on efficiency of the learner $A_{\mathsf{learn}}^{f}$, it follows that $A_{\mathsf{dist}}^{f}$ runs in time $T(n) \cdot \mathsf{poly}(n)$. Thus we just need to argue that $A_{\mathsf{dist}}^{f}$ is indeed a complexity distinguisher.

For a uniformly random $f$, the probability that $A_\mathsf{learn}^f$ outputs a hypothesis $h$ that has agreement greater than $1/2+1/n^{4k}$ with $f$ is exponentially small. This is because $A_\mathsf{learn}^f$ runs in time $T(n) = 2^n/n^{\omega(1)}$, and hence if it outputs a hypothesis, it must be of size at most $2^n/n^{\omega(1)}$. By Lemma \ref{l:random_functions}, only an exponentially small fraction of functions can be approximated by circuits of such size. Also, given that a circuit $h$ has agreement at most $1/2+1/n^{4k}$ with $f$, the probability that the estimated agreement according to the procedure above is greater than $1/2+1/n^{2k}$ is exponentially small by Lemma \ref{l:concentration}. Thus, for a uniformly random $f$, the oracle algorithm $A_{\mathsf{dist}}^{f}$ outputs `0' with exponentially small probability, and hence for large enough $n$, it outputs `1' with probability at least $2/3$.

For $f \in \mathfrak{C}[s(n)]$, by the correctness and efficiency of the learning algorithm, $A_{\mathsf{learn}}^{f}$ outputs a hypothesis $h$ with agreement at least $1/2+1/n^k$ with $f$, with probability at least $1-1/n$. For such a hypothesis $h$, using Lemma \ref{l:concentration} again, the probability that the estimated agreement is smaller than $1/2+1/n^{2k}$ is exponentially small. Thus, for $n$ large enough, with probability at least $2/3$, $A_{\mathsf{dist}}^{f}$ outputs `0'.
\end{proof}

\begin{lemma} [Faster learners from distinguishers]
\label{l:distfastlearning}
Let $\mathfrak{C}$ be a typical circuit class. If $\mathfrak{C}[\mathsf{poly}(n)]$ has complexity distinguishers running in time $2^{O(n)}$, then for every $\varepsilon > 0$, $\mathfrak{C}[\mathsf{poly}(n)]$ has strong learners running in time $O(2^{n^{\varepsilon}})$. If for some $\varepsilon > 0$, $\mathfrak{C}[2^{n^{\varepsilon}}]$ has complexity distinguishers running in time $2^{O(n)}$, then $\mathfrak{C}[\mathsf{poly}(n)]$ has strong learners running in time $2^{\mathsf{log}(n)^{O(1)}}$.
\end{lemma}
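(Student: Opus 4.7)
The plan is to build a strong learner by feeding the assumed complexity distinguisher into the reconstruction side of the black-box function generator (Theorem~\ref{t:blackbox_ACzerop}). Given a target $f \in \mathfrak{C}[n^k]$ that we wish to learn to accuracy $\gamma = 1/n^a$, first apply the black-box $(\gamma,\ell)$-generator to $f$ to obtain $\mathtt{GEN}(f) = \{g_z\}_{z \in \{0,1\}^m}$, where $m \leq \mathsf{poly}(n,1/\gamma)$ and each $g_z\colon\{0,1\}^\ell \to \{0,1\}$ lies in $\mathfrak{C}^f[\mathsf{poly}(m)]$. Expanding every $f$-oracle gate with an $n^k$-size $\mathfrak{C}$-circuit for $f$ promotes $g_z$ to an ordinary $\mathfrak{C}$-circuit of size $n^{c_0}$ for some constant $c_0=c_0(k,a)$. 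The key leverage is reconstruction clause~(iii) of Definition~\ref{d:blackbox_generator}: any distinguishing circuit for $W_L = \mathtt{tt}(g_z)_{z\sim\mathcal{U}_m}$ (with $L=2^\ell$) converts, in time $\mathsf{poly}(n,1/\gamma,L)$ and using $f$-membership queries, into a circuit of size $2^{O(\ell)}$ that is $\gamma$-close to $f$.

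The parameter $\ell$ must be chosen to reconcile two opposing demands: $\ell$ should be small enough that $2^{O(\ell)}$ stays within the target runtime, yet large enough that $g_z$ lies in the size regime the assumed distinguisher handles. For part~1, I would set $\ell = \lceil n^{\varepsilon/2}\rceil$; then $n^{c_0}\leq \ell^{2c_0/\varepsilon}$, so $g_z\in\mathfrak{C}[\mathsf{poly}(\ell)]$, and the hypothesized $2^{O(\ell)}$-time distinguisher for this polynomial class runs in time $2^{O(n^{\varepsilon/2})} \leq O(2^{n^\varepsilon})$. For part~2, set $\ell = \lceil (C\log n)^{1/\varepsilon}\rceil$ for $C$ large enough that $n^{c_0}\leq 2^{\ell^\varepsilon}$; then $g_z\in\mathfrak{C}[2^{\ell^\varepsilon}]$ matches the stronger distinguisher, $\ell = (\log n)^{O(1)}$, and $2^{O(\ell)}$ is quasi-polynomial in $n$.

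To build the distinguishing circuit $D$ from the complexity distinguisher $A$, note that by the defining properties of complexity distinguishers, $A^g(1^\ell)$ accepts with probability $\leq 1/3$ whenever $g$ is in the relevant size class (which contains the support of $W_L$), and accepts a uniform $\ell$-bit function with average probability $\geq 2/3$, giving a distinguishing advantage of at least $1/3 \geq 1/4$ as required by Definition~\ref{d:distinguisher}. An averaging argument yields a setting $r^\ast$ of $A$'s internal randomness that preserves the gap, and hard-wiring $r^\ast$ into a direct simulation of $A$ on the $L$-bit truth-table input produces a deterministic circuit $D$ of size $2^{O(\ell)}$. Since $r^\ast$ is not explicitly available, the plan is instead to sample $r$ freshly, run the reconstruction on the resulting candidate, and validate the output hypothesis $h$ by estimating $\Pr_x[f(x) = h(x)]$ with $\mathsf{poly}(n)$ new membership queries, accepting when the empirical agreement exceeds $1-\gamma/2$ and iterating $O(\log n)$ times; a Chernoff bound makes this succeed with high probability. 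The reconstruction then delivers a circuit of size $\mathsf{poly}(n,1/\gamma,2^{O(\ell)})=2^{O(\ell)}$ that is $\gamma$-close to $f$, and the total running time stays $2^{O(\ell)}$, giving the two claims. I expect the main technical obstacle to be exactly the parameter calibration: too small an $\ell$ pushes $g_z$ outside the distinguisher's regime, while too large an $\ell$ wrecks the runtime budget; the argument works only because the distinguisher hypothesis is flexible enough — any polynomial size in $\ell$ in part~1, and size $2^{\ell^\varepsilon}$ in part~2 — to absorb the fixed polynomial-in-$n$ complexity of the generator's outputs.
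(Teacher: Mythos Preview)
Your proposal is correct and follows essentially the same route as the paper: invoke the black-box generator of Theorem~\ref{t:blackbox_ACzerop} with $\ell$ sub-linear in $n$, observe that each $g_z$ lands in the size regime handled by the assumed complexity distinguisher, sample the distinguisher's randomness to get a circuit distinguisher for $W_L$ with constant probability (via the same averaging argument), and feed it to the reconstruction procedure. The only point you omit is that Theorem~\ref{t:blackbox_ACzerop} requires $\mathsf{AC}^0[p]\subseteq\mathfrak{C}$, so the case $\mathfrak{C}=\mathsf{AC}^0$ must be handled separately (the paper dispatches it unconditionally via \cite{DBLP:journals/jacm/LinialMN93}); your extra validation-and-repeat step is a harmless variant of the paper's confidence amplification.
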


\begin{proof} 
We prove the first part of the Lemma, and the second part follows analogously using a different parameter setting.

Let $\mathfrak{C}$ be a typical circuit class. If $\mathfrak{C} = \mathsf{AC}^0$, the lemma holds unconditionally since this class can be learned in quasi-polynomial time \citep{DBLP:journals/jacm/LinialMN93}. Assume otherwise that $\mathfrak{C}$ contains $\mathsf{AC}^0[p]$, for some fixed prime $p$. By assumption, $\mathfrak{C}[\mathsf{poly}(n)]$ has a complexity distinguisher $A_0^g$ running in time $2^{O(n)}$. We show that for every $\varepsilon > 0$ and every $k > 0$, $\mathfrak{C}[\mathsf{poly}(n)]$ has $(1/n^k,1/n)$-learners running in time $O(2^{n^{\varepsilon}})$. Let $\varepsilon' > 0$ be any constant such that $\varepsilon' < \varepsilon$. By Theorem \ref{t:blackbox_ACzerop} there exists a black-box $(\gamma, \ell)$-function generator $\mathtt{GEN}$ within $\mathfrak{C}$, where $\gamma = 1/n^k$ and $\ell = n^{\varepsilon'}$. For this setting of $\gamma$ and $\ell$ we have that the parameter $m$ for $\mathtt{GEN}(f)$ in Definition \ref{d:blackbox_generator} is $\mathsf{poly}(n)$, and that for each $z \in \{0,1\}^m$, we have $g_z \in \mathfrak{C}^f[\mathsf{poly}(n)]$. Let $A_1^f$ be the randomized reconstruction algorithm for $\mathtt{GEN}(f)$.

We define a $(1/n^k, 0.99)$-learner $A^f$ for $\mathfrak{C}[\mathsf{poly}(n)]$ running in time $O(2^{n^{\varepsilon}})$; the confidence can then be amplified to satisfy the definition of a strong learner while not increasing the running time of the learner by more than a polynomial factor. The learning algorithm operates as follows. It interprets the oracle algorithm $A_0^{g}$ on input $1^{\ell}$ as a probabilistic polynomial-time algorithm $D(\cdot, \vec{r})$ which is explicitly given
the truth table of $g$, of size $L = 2^{\ell}$, as input, with $\vec{r}$ the randomness for this algorithm. It guesses $\vec{r}$ at random and then computes a circuit $D_L$ of size $2^{O(\ell)}$ which is equivalent to $D(\cdot, \vec{r})$ on inputs of size $2^{\ell}$, using the standard transformation of polynomial-time algorithms into circuits. It then runs $A_1^{f}$ on input $D_L$, and halts with the same output as $A_1^{f}$. Observe that the queries made by the reconstruction algorithm can be answered by the learner, since it also has query access to $f$.

Using the bounds on running time of $A_0$ and $A_1$, it is easy to see that $A^f$ can be implemented to run in time $2^{O(\ell)}$, which is at most $2^{n^{\epsilon}}$ for large enough $n$. We need to argue that $A^f$ is a correct strong learner for $\mathfrak{C}[\mathsf{poly}(n)]$. The critical point is that when $f \in \mathfrak{C}[\mathsf{poly}(n)]$, with noticeable probability, $D_L$ is a distinguishing circuit for $W_L$ (using the terminology of Definition \ref{d:blackbox_generator}), and we can then take advantage of the properties of the reconstruction algorithm. We now spell this out in more detail.

When $f \in \mathfrak{C}[\mathsf{poly}(n)]$, using the fact that $\mathfrak{C}$ is typical and thus closed under composition with itself, and that it contains $\mathsf{AC}^0[p]$, we have that for each $z \in \{0,1\}^m$, $g_z \in \mathfrak{C}[\mathsf{poly}(n)]$. Note that the input size for $g_z$ is $\ell = n^{\varepsilon'}$, and hence also $g_z \in \mathfrak{C}[\mathsf{poly}(\ell)]$. Using now that $A_0$ is a complexity distinguisher, we have that for any $z \in \{0,1\}^{m}$, $\Pr_A[A^{g_z}(1^\ell) = 1] \leq 1/3$, while $\mathbb{E}_{g \sim \mathcal{F}_{\ell},A} [A^g(1^\ell)] \geq 2/3$. By a standard averaging argument and the fact that probabilities are bounded by $1$, this implies that with probability at least $0.05$ over the choice of $\vec{r}$, $D_L$ is a distinguishing circuit for $W_L$. Under the properties of the reconstruction algorithm $A_1^{f}$, when given as input such a circuit $D_L$, with probability at least $1-1/n$, the output of $A_1^{f}$ is $1/n^k$-close to $f$. Hence with probability at least $0.05 \cdot(1-1/n) > 0.01$ over the randomness of $A$, the output of $A^f$ is $1/n^k$-close to $f$, as desired. As observed before, the success probability of the learning algorithm can be amplified by standard techniques (cf. \citep{KearnsVazirani:94}).

The second part of the lemma follows by the same argument with a different choice of parameters, using a black-box $(\gamma, \ell)$-function generator with $\gamma = 1/n^k$ and $\ell = (\log n)^c$, where $c$ is chosen large enough as a function of $\varepsilon$. Again, the crucial point is that the relative circuit size of each $g_z$ compared to its number of input bits is within the size bound of the distinguisher.
\end{proof}

\begin{remark}
While Lemma \emph{\ref{l:distfastlearning}} is sufficient for our purposes, we observe that the same argument shows in fact that the conclusion holds under the weaker assumption that the complexity distinguisher runs in time $2^{n^c}$, for a fixed $c \in \mathbb{N}$. In other words, it is possible to obtain faster learners from complexity distinguishers running in time that is quasi-polynomial in the length of the truth-table of its oracle function.
\end{remark}

\begin{lemma} [Speedup Lemma]
\label{l:learningspeedup}
Let $\mathfrak{C}$ be a typical circuit class. The following hold\emph{:}

\begin{itemize}
\item \emph{(Low-End Speedup)} $\mathfrak{C}[\mathsf{poly}(n)]$ has non-trivial learners if and only if for each $\varepsilon > 0$, $\mathfrak{C}[\mathsf{poly}(n)]$ has strong learners running in time $O(2^{n^{\varepsilon}})$.
\item \emph{(High-End Speedup)} There exists $\varepsilon > 0$ such that $\mathfrak{C}[2^{n^{\varepsilon}}]$ has non-trivial learners if and only if $\mathfrak{C}[\mathsf{poly}(n)]$ has strong learners running in time $2^{\mathsf{log}(n)^{O(1)}}$.
\end{itemize}
\end{lemma}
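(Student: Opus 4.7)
The plan is to derive both equivalences by composing Lemma \ref{l:learningdist} and Lemma \ref{l:distfastlearning}. The forward directions are essentially a two-step reduction: non-trivial learnability yields complexity distinguishers via Lemma \ref{l:learningdist}, and complexity distinguishers yield much faster strong learners via Lemma \ref{l:distfastlearning}. The reverse directions are handled directly from the definitions, with a routine padding argument required only in the high-end case.

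For the low-end forward direction, given weak learners for $\mathfrak{C}[\mathsf{poly}(n)]$ running in time $T(n)=2^{n}/n^{\omega(1)}$, Lemma \ref{l:learningdist} produces complexity distinguishers for $\mathfrak{C}[\mathsf{poly}(n)]$ running in time $T(n)\cdot\mathsf{poly}(n)\subseteq 2^{O(n)}$. The first part of Lemma \ref{l:distfastlearning} then delivers strong learners in time $O(2^{n^{\varepsilon}})$ for every $\varepsilon>0$. Conversely, a $(1/n^{k},1/n)$-learner running in time $O(2^{n^{\varepsilon}})$ with $\varepsilon<1$ trivially satisfies the definition of a non-trivial learner, since accuracy $1-1/n^{k}$ exceeds $1/2+1/n$ and the running time is $2^{n}/n^{\omega(1)}$.

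For the high-end, the forward direction proceeds identically using the second part of Lemma \ref{l:distfastlearning}: a non-trivial learner for $\mathfrak{C}[2^{n^{\varepsilon}}]$ yields a complexity distinguisher for the same class running in time $2^{O(n)}$, which in turn yields strong learners for $\mathfrak{C}[\mathsf{poly}(n)]$ in time $2^{(\log n)^{O(1)}}$. For the converse, I would suppose $\mathfrak{C}[\mathsf{poly}(n)]$ admits strong learners running in time $2^{(\log n)^{c}}$ and choose $\varepsilon>0$ with $\varepsilon c<1$. Given $f\in\mathfrak{C}[2^{n^{\varepsilon}}]$ on $n$ variables, I pad it to $\tilde f\colon\{0,1\}^{m}\to\{0,1\}$ with $m=2^{n^{\varepsilon}}$, defined by $\tilde f(x,y)=f(x)$. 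Since $\mathfrak{C}$ is typical (hence closed under ignoring dummy inputs) and the circuit size $2^{n^{\varepsilon}}$ is polynomial in $m$, we have $\tilde f\in\mathfrak{C}[\mathsf{poly}(m)]$; membership queries to $\tilde f$ reduce to queries to $f$ by truncation. Running the strong learner on $\tilde f$ takes time $2^{(\log m)^{c}}=2^{O(n^{\varepsilon c})}=2^{n}/n^{\omega(1)}$ and outputs a hypothesis $h$ that is $1/m^{k}$-close to $\tilde f$. By a standard averaging argument over the dummy block $y$, picking a uniformly random $y^{\ast}\in\{0,1\}^{m-n}$ yields a hypothesis $h(\cdot,y^{\ast})$ that is $2/m^{k}$-close to $f$ with probability at least $1/2$, which vastly exceeds the weak accuracy demanded of a non-trivial learner.

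I do not anticipate a genuine obstacle; the proof is essentially a packaging of results already in hand. The only points requiring some care are the parameter bookkeeping (confirming that $T(n)\cdot\mathsf{poly}(n)$ stays within the $2^{O(n)}$ window required by Lemma \ref{l:distfastlearning}, and that $\varepsilon c<1$ is used consistently with the constant $c$ from the hypothesized quasi-polynomial learner in the high-end converse) and verifying that typical circuit classes are closed under the trivial padding used above. Both are routine.
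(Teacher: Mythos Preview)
Your proposal is correct and follows essentially the same approach as the paper: both directions of the low-end equivalence are handled exactly as you describe (Lemma~\ref{l:learningdist} composed with Lemma~\ref{l:distfastlearning} for the nontrivial direction, and the converse being immediate), and for the high-end ``if'' direction the paper likewise uses a padding argument, taking $n' = n + 2^{\lceil n^{1/3c}\rceil}$ and restricting the learned hypothesis to a random fixing of the dummy block, then boosting confidence afterward. The only cosmetic differences are in the specific choice of padding length and the constant governing $\varepsilon$, and the paper is slightly more explicit about the confidence amplification step; your sketch is otherwise a faithful match.
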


\begin{proof} 
First we show the Low-End Speedup result. The ``if'' direction is trivial, so we only need to consider the ``only if'' case. This follows from Lemma \ref{l:distfastlearning} and Lemma \ref{l:learningdist}. Indeed, by Lemma \ref{l:learningdist}, if $\mathfrak{C}[\mathsf{poly}(n)]$ has non-trivial learners, it has complexity distinguishers running in time $2^n/n^{\omega(1)}$. By Lemma \ref{l:distfastlearning}, the existence of such complexity distinguishers implies that for each $\varepsilon > 0$, $\mathfrak{C}[\mathsf{poly}(n)]$ has strong learners running in time $O(2^{n^{\varepsilon}})$, and we are done.

Next we show the High-End Speedup result. The proof for the ``only if'' direction is completely analogous to the corresponding proof for the Low-End Speedup result. The ``if'' direction, however, is not entirely trivial. We employ a standard padding argument to establish this case, thus completing the proof of Lemma \ref{l:learningspeedup}. 

Suppose that $\mathfrak{C}[\mathsf{poly}(n)]$ has a strong learner running in time $2^{\mathsf{log}(n)^c}$, for some constant $c > 0$. Let $A_{\mathsf{low}}$ be a learning algorithm witnessing this fact. We show how to use $A_{\mathsf{low}}$ to construct a learning algorithm $A_{\mathsf{high}}$ which is a $(1/n, 1/\mathsf{poly}(n))$-learner for $\mathfrak{C}[2^{n^{1/3c}}]$, and runs in time $\leq 2^{\sqrt{n}}$. As usual, confidence can be boosted without a significant increase of running time, and it follows that $\mathfrak{C}[2^{n^{1/3c}}]$ has non-trivial learners according to our definition.

On input $1^n$ and with oracle access to some function $f\colon \{0,1\}^n \to \{0,1\}$, $A_{\mathsf{high}}^{f}(1^n)$ simulates $A_{\mathsf{low}}^{f'}(1^{n'})$, where $n' \eqdef n + 2^{\lceil n^{1/3c} \rceil}$, and $f' \colon \{0,1\}^{n'} \to \{0,1\}$ is the (unique) Boolean function satisfying the following properties. For any input $x' = xy \in \{0,1\}^{n'}$, where $|y| = 2^{\lceil n^{1/3c} \rceil}$ and $|x| = n$, $f'(x')$ is defined to be $f(x)$. Note that if $f \in \mathfrak{C}[2^{n^{1/3c}}]$ then $f' \in \mathfrak{C}[O(n')]$: the linear-size $\mathfrak{C}$-circuit for $f'$ on an input $x'$ of length $n'$ just simulates the smallest $\mathfrak{C}$-circuit for $f$ on its $n$-bit prefix. During the simulation, whenever $A_{\mathsf{low}}$ makes an oracle call $x'$ to $f'$, $A_{\mathsf{high}}$ answers it using an oracle call $x$ to $f$, where $x$ is the prefix of $x'$ of length $n$. By definition of $f'$, this simulation step is always correct. When $A_{\mathsf{low}}^{f'}$ completes its computation and outputs a hypothesis $h'$ on $n'$ input bits, $A_{\mathsf{high}}$ outputs a modified hypothesis $h$ as follows: it chooses a random string $r$ of length $2^{\lceil n^{1/3c} \rceil}$, and outputs the circuit $h_r$ defined by $h_r(x) \eqdef h'(xr)$. Note that by the assumed efficiency of $A_{\mathsf{low}}$, $A_{\mathsf{high}}$ halts in time $\leq 2^{\sqrt{n}}$ on large enough $n$.

By the discussion above, it is enough to argue that the hypothesis $h$ output by $A_{\mathsf{high}}$ is a good hypothesis with probability at least $1/\mathsf{poly}(n)$. Since $A_{\mathsf{low}}$ is a strong learner and since the $f'$ used as oracle to $A_{\mathsf{low}}$ in the simulation has linear size, for all large enough $n$, with probability at least $1-1/n'$, $h'$ disagrees with $f'$ on at most a $1/(n')^k$ fraction of inputs of length $n'$, where $k$ is a large enough constant fixed in the construction above. Consider a randomly chosen $r$ of length $n'-n$. By a standard Markov-type argument, when $h'$ is good, for at least a $1/\mathsf{poly}(n)$ fraction of the strings $r$, $h_r(x)$ disagrees with $f(x)$ on at most a $1/n$ fraction of inputs. This completes the argument.
\end{proof}

\subsection{Equivalences for Learning, Compression, and Distinguishers} 

\begin{theorem} [Algorithmic Equivalences]
 \label{t:bdederrorequiv}

Let $\mathfrak{C}$ be a typical circuit class. The following statements are equivalent\emph{:}
\begin{enumerate}
\item[\emph{1.}] $\mathfrak{C}[\mathsf{poly}(n)]$ has non-trivial learners.
\item[\emph{2.}] For each $\varepsilon > 0$ and $k \in \mathbb{N}$, $\mathfrak{C}[\mathsf{poly}(n)]$ can be learned to error $\leq n^{-k}$ in time $O(2^{n^{\epsilon}})$.
\item[\emph{3.}] $\mathfrak{C}[\mathsf{poly}(n)]$ has probabilistic \emph{(}exact\emph{)} compression.
\item[\emph{4.}] $\mathfrak{C}[\mathsf{poly}(n)]$ has probabilistic average-case compression with error $o(1)$.
\item[\emph{5.}] $\mathfrak{C}[\mathsf{poly}(n)]$ has complexity distinguishers running in time $2^{O(n)}$.
\item[\emph{6.}] For each $\varepsilon > 0$, $\mathfrak{C}[\mathsf{poly}(n)]$ has complexity distinguishers running in time $O(2^{n^{\epsilon}})$.
\item[\emph{7.}] $\mathfrak{C}[\mathsf{poly}(n)]$ can be learned using membership and equivalence queries to sub-constant error in non-trivial time.
\end{enumerate}
\end{theorem}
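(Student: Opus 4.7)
The overall plan is to establish the equivalences by routing everything through a small core, namely items 1, 2, 5, and 6, which are essentially the content of the Speedup Lemma already proved. The items 3, 4, and 7 are attached to this core by short reductions. Concretely, I would prove 1 $\Leftrightarrow$ 2 $\Leftrightarrow$ 5 $\Leftrightarrow$ 6 directly from Lemmas \ref{l:learningdist}, \ref{l:distfastlearning}, and \ref{l:learningspeedup}; then show 2 $\Rightarrow$ 3 $\Rightarrow$ 4 $\Rightarrow$ 5 to absorb the compression statements; and finally show 2 $\Rightarrow$ 7 $\Rightarrow$ 5 to absorb the equivalence-query model.

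For the core equivalence: the implication 1 $\Leftrightarrow$ 2 is exactly the Low-End Speedup part of Lemma \ref{l:learningspeedup}. The implication 1 $\Rightarrow$ 5 follows from Lemma \ref{l:learningdist}, since a non-trivial learner yields a complexity distinguisher running in time $2^n/n^{\omega(1)} \cdot \mathrm{poly}(n) \subseteq 2^{O(n)}$. The implication 5 $\Rightarrow$ 2 is precisely the first sentence of Lemma \ref{l:distfastlearning}. Finally, 6 $\Rightarrow$ 5 is trivial (a sub-exponential distinguisher is in particular an exponential one), and 2 $\Rightarrow$ 6 is another application of Lemma \ref{l:learningdist} to the strong sub-exponential learner provided by item 2, the running time becoming $O(2^{n^{\varepsilon}}) \cdot \mathrm{poly}(n) = O(2^{n^{\varepsilon'}})$.

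For the compression statements, I would argue 2 $\Rightarrow$ 3 by error-correction: given $y = \mathtt{tt}(f)$ with $f \in \mathfrak{C}[\mathrm{poly}(n)]$, simulate the strong learner of item 2 with accuracy $1/n^k$ for a sufficiently large constant $k$, answering oracle calls from $y$; the resulting hypothesis $h$ has size at most $2^{n^{\varepsilon}}$ and disagrees with $f$ on a set $E$ with $|E| \leq 2^n/n^k$, which can be read off $y$ in $\mathrm{poly}(2^n)$ time. Output the circuit $h \oplus \bigvee_{x \in E} \mathbf{1}_x$; its total size is $2^{n^{\varepsilon}} + O(n \cdot 2^n/n^k) = o(2^n/n)$ for $k \geq 3$ and $\varepsilon < 1$. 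The implication 3 $\Rightarrow$ 4 is immediate. For 4 $\Rightarrow$ 5 I would construct the distinguisher $A^g$: read the truth table $y = \mathtt{tt}(g)$ via $2^n$ membership queries, run the average-case probabilistic compression on $y$ to obtain a circuit $D$ of size $o(2^n/n)$, and estimate the agreement between $D$ and $g$ on a random sample. For $g \in \mathfrak{C}[\mathrm{poly}(n)]$, $D$ agrees with $g$ on a $1 - o(1)$ fraction of inputs with probability at least $2/3$ and we accept; for a random $g$, Lemma \ref{l:random_functions} guarantees that no circuit of size $o(2^n/n)$ achieves such agreement, so we reject.

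For item 7, the implication 2 $\Rightarrow$ 7 is trivial since a strong membership-query learner with $1/n^k$ error is in particular a MQ+EQ learner with sub-constant error. The key reduction is 7 $\Rightarrow$ 5: given a non-trivial MQ+EQ learner $L$ with error $\epsilon(n) = o(1)$ and running time $T(n) = 2^n/n^{\omega(1)}$, define a complexity distinguisher that simulates $L$ with oracle $g$, answering each equivalence query on a hypothesis $C$ by evaluating $C$ and $g$ on all $2^n$ inputs, returning the first disagreement or ``yes'' otherwise, and finally checking whether the produced hypothesis $h$ is $\epsilon$-close to $g$. Since each EQ simulation costs $2^n \cdot \mathrm{poly}(T) = 2^{O(n)}$ and there are at most $T$ of them, the total running time is $2^{O(n)}$; for $g \in \mathfrak{C}[\mathrm{poly}(n)]$ the learner succeeds and we accept, while for random $g$ the output circuit has size at most $T = 2^n/n^{\omega(1)}$ and by Lemma \ref{l:random_functions} cannot be sub-constantly close to $g$, so we reject. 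The main obstacle across the whole argument is parameter bookkeeping: ensuring that the hardcoded-error circuit in 2 $\Rightarrow$ 3 genuinely fits in the $o(2^n/n)$ budget, that the $2^{O(n)}$ budget in 4 $\Rightarrow$ 5 absorbs both truth-table reading and the compression runtime, and that the EQ simulation in 7 $\Rightarrow$ 5 stays within $2^{O(n)}$ despite the product of the learner's running time and the per-query evaluation cost.
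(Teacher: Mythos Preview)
Your proposal is correct and follows essentially the same approach as the paper's proof: both route everything through the core equivalence of items 1, 2, 5, 6 via Lemmas \ref{l:learningdist} and \ref{l:distfastlearning}, then attach items 3, 4, 7 by the same short reductions (simulate the learner on the given truth table for compression, simulate equivalence queries by brute-force search over the truth table for item 7, and use Lemma \ref{l:random_functions} to argue the distinguisher rejects random functions). The only cosmetic differences are organizational (you use chains $2 \Rightarrow 3 \Rightarrow 4 \Rightarrow 5$ and $2 \Rightarrow 7 \Rightarrow 5$, the paper proves a star of implications into and out of item 5) and in minor details: the paper computes \emph{exact} agreement in $(4) \Rightarrow (5)$ rather than sampling, and in $(2) \Rightarrow (3)$ the paper explicitly checks whether $|E| \leq 2^n/n^3$ before building the correction circuit, which you should also do to guarantee the output size bound holds even on the low-probability event that the learner fails.
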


\begin{proof} We establish these equivalences via the following complete set of implications:\vspace{0.15cm}

$(5) \Rightarrow (2)$: This follows from Lemma \ref{l:distfastlearning}.
 
$(2) \Rightarrow (1)$ and $(6) \Rightarrow (5)$: These are trivial implications.

 $(2) \Rightarrow (4)$: Probabilistic compression for $\mathfrak{C}[\mathsf{poly}(n)]$ follows from simulating a $(1/n^3, 1/n)$-learner for the class running in time $O(2^{\sqrt{n}})$, and answering any oracle queries by looking up the corresponding bit in the truth table of the function, which is given as input to the compression algorithm. The compression algorithm returns as output the hypothesis of the strong learner, and by assumption this agrees on a $(1-1/n^3)$ fraction of inputs of length $n$ with the input function, with probability at least $1-1/n$. Moreover, since the simulated learner runs in time $O(2^{\sqrt{n}})$, the circuit that is output has size at most $O(2^{\sqrt{n}})$. It is clear that the simulation of the learner can be done in time $2^{O(n)}$, as required for a compression algorithm.

 $(2) \Rightarrow (3)$: This follows exactly as above, except that there is an additional step after the simulation of the learner.  Once the learner has output a hypothesis $h$, the compression algorithm compares this hypothesis with its input truth table entry by entry, simulating $h$ whenever needed. If $h$ differs from the input truth table on more than a $1/n^3$ fraction of inputs, the compression algorithm rejects -- this happens with probability at most $1/n$ by assumption on the learner. If $h$ and the input truth table differ on at most $1/n^3$ fraction of inputs of length $n$, the compression algorithm computes by brute force a circuit of size at most $2^n/n^2$ which computes the function $h'$ that is the XOR of $h$ and the input truth table. The upper bound on size follows from the fact that $h'$ has at most $2^n/n^3$ 1's. Finally, the compression algorithm outputs $h \oplus h'$. For any typical circuit class, the size of the corresponding circuit is $O(2^n/n^2)$. Note that $h \oplus h'$ computes the input truth table exactly.

$(2) \Rightarrow (6)$: This follows from Lemma \ref{l:learningdist}.

$(1) \Rightarrow (5)$, $(3) \Rightarrow (5)$, and $(4) \Rightarrow (5)$: The distinguisher runs the circuit output by the learner or compression algorithm on every input of length $n$, and computes the exact agreement with its input $f$ on length $n$ by making $2^n$ oracle queries to $f$. If the circuit agrees with $f$ on at least a $2/3$ fraction of inputs, the distinguisher outputs $0$, otherwise it outputs $1$. By the assumption on the learner/compression algorithm, for $f \in \mathfrak{C}[\mathsf{poly}(n)]$, the distinguisher outputs $0$ with probability at least $2/3$. Using Lemma \ref{l:random_functions}, for a random function, the probability that the distinguisher outputs $1$ is at least $2/3$.

$(7) \Rightarrow (5)$: The complexity distingisher has access to the entire truth-table, and can answer the membership and equivalence queries of the learner in randomized time $2^{O(n)}$. Randomness is needed only to simulate the random choices of the learning algorithm, while the answer to each query can be computed in deterministic time. Since the learner runs in time $2^n/n^{\omega(1)}$, whenever it succeeds it outputs a hypothesis circuit of at most this size. The complexity distinguisher can compare this hypothesis to its input truth-table, and similarly to the arguments employed before, is able to distinguish random functions from functions in $\mathfrak{C}[\mathsf{poly}(n)]$.

$(2) \Rightarrow (7)$: This is immediate since the algorithm from $(2)$ is faster, has better accuracy, and makes no equivalence queries.
\end{proof}

\begin{theorem} [Equivalences for zero-error algorithms]
\label{t:zeroerrorequiv}
Let $\mathfrak{C}[\mathsf{poly}(n)]$ be a typical circuit class. The following statements are equivalent\emph{:}
\begin{enumerate}
\item[\emph{1.}] There are $\mathsf{P}$-natural proofs useful against $\mathfrak{C}[\mathsf{poly}(n)]$.

\item[\emph{2.}] There are $\mathsf{ZPP}$-natural proofs useful against $\mathfrak{C}[\mathsf{poly}(n)]$.

\item[\emph{3.}] For each $\varepsilon > 0$, there are $\mathsf{DTIME}(O(2^{(\log N)^{\varepsilon}}))$-natural proofs useful against $\mathfrak{C}[\mathsf{poly}(n)]$, where $N = 2^n$ is the truth-table size.

\item[\emph{4.}] For each $\varepsilon > 0$, there are zero-error complexity distinguishers for $\mathfrak{C}[\mathsf{poly}(n)]$ running in time $O(2^{n^{\varepsilon}})$.
\end{enumerate}
\end{theorem}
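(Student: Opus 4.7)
The plan is to establish the cycle $(3) \Rightarrow (1) \Rightarrow (2) \Rightarrow (4) \Rightarrow (3)$. The first two implications are essentially syntactic. For $(3) \Rightarrow (1)$, pick any single $\varepsilon > 0$ with $\varepsilon \leq 1$: then $2^{(\log N)^{\varepsilon}} = 2^{n^\varepsilon} \leq 2^n \leq \mathsf{poly}(N)$, so $\mathsf{DTIME}(2^{(\log N)^\varepsilon}) \subseteq \mathsf{P}$. For $(1) \Rightarrow (2)$, use $\mathsf{P} \subseteq \mathsf{ZPP}$. The content of the theorem lies in the two remaining directions.

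For $(2) \Rightarrow (4)$, I would run the zero-error analogue of the Speedup Lemma. Starting from a $\mathsf{ZPP}$-natural property $\mathfrak{R}$ with decision procedure $A_{\mathsf{nat}}$ of runtime $\mathsf{poly}(N) = 2^{O(n)}$, first boost density from $1/2$ to $1-o(1)$ via the sub-function disjunction trick: define $\mathfrak{R}'_{n+t}(f) = \bigvee_{y \in \{0,1\}^t} \mathfrak{R}_n(f(\cdot, y))$ with $t = O(\log n)$, which preserves usefulness against $\mathfrak{C}[\mathsf{poly}(\cdot)]$ and constructivity, while making density at least $1 - 2^{-\Omega(n)}$. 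Then apply the black-box generator $\mathtt{GEN}$ of Theorem \ref{t:blackbox_ACzerop}: given oracle access to $g$ on $n$ bits, sample $z \sim \{0,1\}^m$, read off the truth table of $g_z \colon \{0,1\}^\ell \to \{0,1\}$ with $\ell = n^\varepsilon$ using $2^\ell$ oracle queries, and invoke $A_{\mathsf{nat}}$ on that truth table, forwarding its verdict. For $g \in \mathfrak{C}[\mathsf{poly}(n)]$, typicality gives $g_z \in \mathfrak{C}[\mathsf{poly}(\ell)]$, so by usefulness the boosted natural property never outputs $1$. Letting $\mathcal{S}_n = \{g : g_z \in \mathcal{R}'_\ell\ \text{for every}\ z\}$, the reconstruction guarantee in Definition \ref{d:blackbox_generator} combined with Lemma \ref{l:random_functions} shows that $A_{\mathsf{nat}}$ (which is simulable by a randomized circuit of size $2^{O(\ell)} \ll 2^n$) cannot distinguish $\mathtt{GEN}(g)$ from a random function for uniform $g$; so each $\Pr_g[g_z \notin \mathcal{R}'_\ell] \leq 2^{-\Omega(n)}$, and a union bound over the $2^m = 2^{\mathsf{poly}(n)}$ strings $z$ still gives density $1 - o(1)$ of $\mathcal{S}_n$.

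For $(4) \Rightarrow (3)$, the candidate natural property is $\mathcal{R}_n = \{f : \exists r,\ A^f(1^n; r) = 1\}$. Usefulness is immediate from the zero-error guarantee (if $f \in \mathfrak{C}[\mathsf{poly}(n)]$, $A^f$ never outputs $1$), and density follows because for $f \in \mathcal{S}_n$ we have $\Pr_r[A^f = 1] \geq 2/3$, hence some $r$ witnesses $f \in \mathcal{R}_n$. The obstacle is constructivity: naively searching over the $2^{2^{n^\varepsilon}}$ randomness strings of $A$ is far too slow. The plan is to derandomize $A$ by extracting pseudorandomness from the input itself. Treat a suitable subfunction of the truth table $y = \mathtt{tt}(f)$ — say its restriction to $n^{\delta}$ input bits for small $\delta > 0$ — as a candidate hard function, and plug it into Theorem \ref{t:hardness_PRG} to obtain a PRG against circuits of size $2^{n^\varepsilon}$ with seed length $\mathsf{polylog}(n)$, so enumerating seeds and computing the majority verdict fits within $2^{n^{\varepsilon'}}$ time. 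A standard win-win argument takes care of the case when the chosen subfunction has small circuits: then $f$ has enough structure that membership in $\mathcal{R}_n$ can be decided directly, either by exploiting the explicit small circuit for the subfunction or by exhibiting $f$ as outside $\mathcal{S}_n$.

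The principal difficulty — where the proof will need the most care — is reconciling both branches of this win-win so that the resulting deterministic procedure defines a genuine natural property: one must verify that usefulness against $\mathfrak{C}[\mathsf{poly}(n)]$ is preserved in the ``structured'' branch, and that the density lost to truth tables whose subfunction is easy is sub-constant, so that the final property still satisfies the density requirement of Definition \ref{d:natural_property}. Once this is arranged, the cycle closes and the four statements become equivalent.
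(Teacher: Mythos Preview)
Your argument for $(2) \Rightarrow (4)$ has a genuine gap in the density analysis. You claim that the reconstruction guarantee of the black-box generator, together with Lemma~\ref{l:random_functions}, yields $\Pr_g[g_z \notin \mathcal{R}'_\ell] \leq 2^{-\Omega(n)}$ for each fixed $z$, and then union-bound over $z$. But the reconstruction property concerns the distribution of $g_z$ over \emph{random $z$ for a fixed $g$}; it says nothing about fixed $z$ and random $g$. In fact the conclusion is false: in the NW-based construction underlying Theorem~\ref{t:blackbox_ACzerop}, taking $z = 0^m$ forces $g_z(w) = g'(z_{S_w}) = g'(0)$ for every $w$, so $g_z$ is a constant function regardless of $g$. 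By usefulness, this constant function is never in $\mathcal{R}'_\ell$, so your set $\mathcal{S}_n = \{g : \forall z,\ g_z \in \mathcal{R}'_\ell\}$ is empty. No amount of density boosting of $\mathcal{R}'$ helps, because the obstruction is that \emph{some} seed $z$ always produces a trivially simple $g_z$.

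The paper avoids the black-box generator entirely here. Its route is $(2)\Rightarrow(1)\Rightarrow(3)\Rightarrow(4)$. For $(2)\Rightarrow(1)$ (and analogously $(4)\Rightarrow(3)$) it uses a much more elementary ``input as randomness'' trick: given a truth table $y$ of length $2^{n'}$, split it as $y=xzw$ with $|x|=2^n$ and $|z|$ equal to the randomness bound of the $\mathsf{ZPP}$ algorithm, then run $A$ deterministically on $x$ with $z$ as its coin tosses. Usefulness is automatic because a zero-error algorithm can only accept $x$ when $\mathtt{fn}(x)$ genuinely lies in the property, and $\mathtt{fn}(x)$ is a subfunction of $\mathtt{fn}(y)$; density follows because for random $y$ the parts $x$ and $z$ are independent and uniform. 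For $(1)\Rightarrow(3)$ it uses the Razborov--Rudich subfunction restriction: run the $\mathsf{P}$-time natural property on the first $2^{n^\varepsilon}$ bits of the truth table. No generator, no PRG.

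Your $(4)\Rightarrow(3)$ via the Impagliazzo--Wigderson PRG can be salvaged, but not through the win-win you sketch. The ``structured branch'' (subfunction has small circuits) does not let you decide membership in $\mathcal{R}_n$ directly --- a small-circuit subfunction says nothing about the full $f$. The fix is to observe that the win-win is unnecessary: since $A^f$ \emph{never} outputs $1$ for $f\in\mathfrak{C}[\mathsf{poly}(n)]$ under any randomness, the majority over PRG seeds can never be $1$ for such $f$, even if the PRG is completely broken. So usefulness is free, and density follows because a random $f$ has both a hard subfunction (so the PRG works) and $f\in\mathcal{S}_n$. This works, but the paper's input-as-randomness version is simpler and needs no appeal to Theorem~\ref{t:hardness_PRG}.
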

\begin{proof} 
We establish these equivalences via the following complete set of implications:\vspace{0.15cm}

$(1) \Rightarrow (2)$: This is a trivial implication.

$(3) \Rightarrow (4)$: This is almost a direct consequence of the definitions, except that the density of the natural property has to be amplified to $1 - o(1)$ before converting the algorithm into a zero-error complexity distinguisher. This is a standard argument, and can be achieved by defining a new property from the initial one. More details can be found, for instance, in the proof of \citep[Lemma 2.7]{CIKK16}. 

$(2) \Rightarrow (1)$:\footnote{This argument is folklore. It has also appeared in more recent works, such as \citep{DBLP:journals/siamcomp/Williams16}.}  Let $A$ be an algorithm running in zero-error probabilistic time $m^k$ on inputs of length $m$ and with error probability $\leq 1/4$, for $m$ large enough and $k$ an integer, and deciding a combinatorial property $\mathfrak{R}$ useful against $\mathfrak{C}[\mathsf{poly}(n)]$. We show how to define a combinatorial property $\mathfrak{R'}$ useful against $\mathfrak{C}[\mathsf{poly}(n)]$ such that $\mathfrak{R'} \in \mathsf{P}$, and such that at least a $1/8$ fraction of the truth tables of any large enough input length belong to $L_{\mathfrak{R'}}$. This fraction can be amplified by defining a new natural property $\mathfrak{R''}$ such that
any string $yz$ with $|y| = |z|$ belongs to $L_{\mathfrak{R''}}$ if and only if either $y \in L_{\mathfrak{R'}}$ or $z \in L_{\mathfrak{R'}}$ (see e.g. \citep{CIKK16}). 

We define $\mathfrak{R'}$ via a deterministic polynomial-time algorithm $A'$ deciding $L_{\mathfrak{R'}}$. Given an input truth table $y$ of size $2^{n'}$, $A'$ acts as follows: it determines the largest integer $n$ such that $n (k+1) < n'$. It decomposes the input truth table as $y=xzw$, where $|x|=2^n$, $|z|=2^{kn}$, and the remaining part $w$ is irrelevant. It runs $A$ on $x$, using $z$ as the randomness for the simulation of $A$. If $A$ accepts, it accepts; if $A$ rejects or outputs `?', it rejects.

It should be clear that $A'$ runs in polynomial time. The fact that $A'$ accepts at least a $1/8$ fraction of truth tables of any large enough input length follows since for any $x \in L_{\mathfrak{R}}$, $A$ outputs `?' with probability at most $1/3$, and at least a $1/2$ fraction of strings of length $2^n$ are in $L_{\mathfrak{R}}$. It only remains to argue that the property $\mathfrak{R'}$ is useful against $\mathfrak{C}[\mathsf{poly}(n)]$. But any string $y$ of length $2^{n'}$ accepted by $A'$ has as a substring the truth table of a function on $n = \Omega(n')$ bits which is accepted by $A$ and hence is in $L_{\mathfrak{R}}$. Since $\mathfrak{R}$ is useful against $\mathfrak{C}[\mathsf{poly}(n)]$, this implies that $\mathfrak{R'}$ is useful against $\mathfrak{C}[\mathsf{poly}(n)]$.

$(4) \Rightarrow (3)$: The proof is analogous to $(2) \Rightarrow (1)$.

$(3) \Rightarrow (1)$: This is a trivial direction since $N = 2^n$.

$(1) \Rightarrow (3)$: This implication uses an idea of Razborov and Rudich \cite{RR97}. Suppose there are $\mathsf{P}$-natural proofs useful against $\mathfrak{C}[\mathsf{poly}(n)]$. This means in particular that for every $c \geq 1$, there is a polynomial-time algorithm $A_{c}$, which on inputs of length $2^n$, where $n \in \mathbb{N}$, accepts at least a $1/2$ fraction of inputs, and rejects all inputs $y$ such that $\mathtt{fn}(y) \in \mathfrak{C}[n^c]$. Consider an input $y$ to $A_c$ of length $2^n$, and let $\varepsilon > 0$ be fixed. Let $y'$ be the substring of $y$ such that $\mathtt{fn}(y')$ is the subfunction of $\mathtt{fn}(y)$ obtained by fixing the first $n-n^{\varepsilon}$ bits of the input to $\mathtt{fn}(y)$ to $0$. It is easy to see that if $\mathtt{fn}(y) \in \mathfrak{C}[n^c]$, then $\mathtt{fn}(y') \in \mathfrak{C}[(n')^{c/\varepsilon}]$, where $n'$ denotes the number of input bits of $\mathsf{fn}(y')$.

Let $d \geq 1$ be any constant, and $\varepsilon > 0$ be fixed. We show how to define an algorithm $B_d$ which runs in time $O(2^{\mathsf{log}(N)^{\varepsilon}})$ on an input of length $N = 2^n$, deciding a combinatorial property which is useful against $\mathfrak{C}$-circuits of size $n^d$. (Using the same approach, it is possible to design a single algorithm that works for any fixed $d$ whenever $n$ is large enough, provided that we start with a natural property that is useful in this stronger sense.) On input $y$ of length $N$, $B_d$ computes $y'$ of length $2^{\mathsf{log}(N)^{\varepsilon}}$, as defined in the previous paragraph. For the standard encoding of truth tables, $y'$ is a prefix of $y$, and can be computed in time $O(|y'|)$. $B_d$ then simulates $A_{\lceil d/\varepsilon \rceil}$ on $y'$, accepting if and only if the simulated algorithm accepts. The simulation halts in time $\mathsf{poly}(|y'|)$, as $A_{\lceil d/\varepsilon \rceil}$ is a poly-time algorithm. For a random input $y$, $B_d$ accepts with probability at least $1/2$, using that $y'$ is uniformly distributed, and the assumption that $A_{\lceil d/\varepsilon \rceil}$ witnesses natural proofs against a circuit class. For an input $y$ such that $\mathtt{fn}(y) \in \mathfrak{C}[n^d]$, $B_d$ always rejects, as in this case, $\mathtt{fn}(y') \in \mathfrak{C}[(n')^{d/\varepsilon}]$, and so $A_{\lceil d/\varepsilon \rceil}$ rejects $y'$, using the assumption that $A_{\lceil d/\varepsilon \rceil}$ decides a combinatorial property useful against $n$-bit Boolean functions in $\mathfrak{C}[n^{\lceil d/\varepsilon \rceil}]$. 
\end{proof}

\section{Learning versus Pseudorandom Functions} \label{s:learning_prfs}

\subsection{The PRF-Distinguisher Game}\label{ss:prf_dist}

In this section we consider (non-uniform) randomized oracle circuits $B^\mathcal{O}$ from $\mathsf{Circuit}^\mathcal{O}[t]$, where $t$ is an upper bound on the number of wires in the circuit. Recall that a circuit from this class has a special  gate type that computes according to the oracle $\mathcal{O}$, which will be set to some fixed Boolean function $f \colon \{0,1\}^m \to \{0,1\}$ whenever we discuss the computation of the circuit. 

We will view such circuits either as distinguishers or learning algorithms, where the oracle is the primary input to the circuit. For this reason and because our results are stated in the non-uniform setting, we assume from now on that such circuits have no additional input except for variables $y_1, \ldots, y_\ell$ representing the random bits, where $\ell \leq t$. If $w \in \{0,1\}^\ell$ is a fixed sequence of bits, we use $B^\mathcal{O}_w$ to denote the deterministic oracle circuit obtaining from the circuit $B^\mathcal{O}$ by setting its randomness to $w$. Observe that (non-uniform) learning algorithms can be naturally described by randomized oracle circuits from $\mathsf{Circuit}^\mathcal{O}$ with multiple output bits. The output bits describe the output hypothesis, under some reasonable encoding for Boolean circuits.\footnote{In this non-uniform framework it is possible to derandomize a learning circuit with some blow-up in circuit size, but we will not be concerned with this matter here.} 

We will consider pairs $(G_n,\mathcal{D}_n)$ where $G_n \subseteq \mathcal{F}_n$ and $\mathcal{D}_n$ is a distribution with $\mathsf{Support}(\mathcal{D}_n) \subseteq G_n$. This notation is convenient when defining samplable function families and pseudorandom function families. 

\begin{definition}[Pseudorandom Function Families]\label{d:prfs}
We say that a pair $(G_n, \mathcal{D}_n)$ is a $(t(n),\varepsilon(n))$-\emph{pseudorandom function family} \emph{(PRF)} in $\mathcal{C}[s(n)]$ if $G_n \subseteq \mathcal{C}[s(n)]$ and for every randomized oracle circuit $B^\mathcal{O} \in \mathsf{Circuit}^\mathcal{O}[t(n)]$,
$$
\left | \Pr_{g \sim \mathcal{D}_n, \,w}[B^g(w) = 1] - \Pr_{f \sim \mathcal{F}_n, \,w}[B^f(w) = 1]  \right  | \;\leq \; \varepsilon.
$$
\end{definition}

This definition places no constraint on the complexity of generating the pair $(G_n, \mathcal{D}_n)$. In order to capture this, we restrict attention to $G_n \subseteq \mathcal{C}_n$ for some typical circuit class $\mathfrak{C} = \{\mathcal{C}_n\}$, and assume a fixed encoding of circuits from $\mathfrak{C}$ by strings of length polynomial in the size of the circuit. We say that a circuit $A \in \mathsf{Circuit}[S]$ is a $\mathcal{C}_n$-\emph{sampler} if $A$ outputs valid descriptions of circuits from $\mathcal{C}_n$.  

\begin{definition}[Samplable Function Families]\label{d:samplable_functions}
We say that a pair $(G_n, \mathcal{D}_n)$ with $G_n \subseteq \mathcal{C}_n$ is $S$-\emph{samplable} if there exists a $\mathcal{C}_n$-sampler $A \in \mathsf{Circuit}[S]$ on $\ell \leq S$ input variables such that $A(\mathcal{U}_\ell) \equiv \mathcal{D}_n$, where we associate each output string of $A$ to its corresponding Boolean function.
\end{definition}

It is well-known that the existence of learning algorithms for a circuit class $\mathcal{C}_n[s(n)]$ implies that there are no secure pseudorandom function families in $\mathcal{C}_n[s(n)]$. Moreover, this remains true even for function families that are not efficiently samplable. Following the notation from Definition \ref{d:learning}, we can state a particular form of this observation as follows.

\begin{proposition}[Learning $\mathfrak{C}$ implies no PRFs in $\mathfrak{C}$]\label{p:learning_implies_no_prfs}
Assume there is a randomized oracle circuit in $\mathsf{Circuit}^\mathcal{O}[t(n)]$ that $(1/3,1/n)$-learns every function in $\mathcal{C}_n[s(n)]$, where $n \leq t(n) \leq 2^n/n^2$. Then for large enough $n$ there are no $(\mathtt{poly}(t(n)),1/10)$-pseudorandom function families in $\mathcal{C}_n[s(n)]$.
\end{proposition}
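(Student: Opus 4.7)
The plan is the standard ``learner-to-distinguisher'' reduction. Let $L^{\mathcal{O}}$ denote the randomized oracle circuit of size $t(n)$ that $(1/3,1/n)$-learns $\mathcal{C}_n[s(n)]$. I would build a distinguisher circuit $B^{\mathcal{O}} \in \mathsf{Circuit}^{\mathcal{O}}[\mathtt{poly}(t(n))]$ that, on random input $w = (w_1, w_2)$, (i) simulates $L^{\mathcal{O}}(w_1)$ to obtain an output string encoding a Boolean circuit $h$ of size at most $O(t(n))$, (ii) uses $w_2$ to pick $k = \Theta(n)$ independent uniform points $x_1,\ldots,x_k \in \{0,1\}^n$, (iii) queries the oracle for $\mathcal{O}(x_1),\ldots,\mathcal{O}(x_k)$ and evaluates $h(x_1),\ldots,h(x_k)$ through a hardwired universal circuit of size $\mathtt{poly}(t(n))$, and (iv) outputs $1$ iff the empirical agreement between $h$ and $\mathcal{O}$ on these points is at least $7/12$. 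The universal circuit and tally can be implemented in size $\mathtt{poly}(t(n))$, so the total size of $B^{\mathcal{O}}$ is $\mathtt{poly}(t(n))$.

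For the ``PRF side,'' if $f \sim \mathcal{D}_n$, then $f \in \mathcal{C}_n[s(n)]$, so by the learning hypothesis, with probability at least $1-1/n$ over $w_1$ the output $h$ satisfies $\Pr_x[h(x)=f(x)] \geq 2/3$. Conditioned on that event, Chernoff (Lemma \ref{l:concentration}) gives that the empirical agreement over $k = \Theta(n)$ fresh samples exceeds $7/12$ except with probability $\exp(-\Omega(n))$. Hence $\Pr_{f \sim \mathcal{D}_n,\,w}[B^f(w)=1] \geq 1 - O(1/n)$. For the ``random side,'' note that whatever $h$ the learner outputs, it is a Boolean circuit of size at most $s_h(n) = O(t(n))$. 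Applying Lemma \ref{l:random_functions} with advantage $\delta = 1/6$ gives
\[
\Pr_{f \sim \mathcal{F}_n}[\exists\, h \text{ of size} \leq s_h(n) \text{ that is } 1/3\text{-close to } f] \;\leq\; \exp\!\left(-\tfrac{1}{72}\cdot 2^n + 50\, s_h(n)\log s_h(n)\right).
\]
Under the hypothesis $t(n) \leq 2^n/n^2$ we have $s_h(n)\log s_h(n) = O(2^n/n)$, so the right-hand side is $\exp(-\Omega(2^n))$. Thus for $f \sim \mathcal{F}_n$, with probability $1 - \exp(-\Omega(2^n))$ every candidate hypothesis is at best $(1/2 + 1/6)$-close, and a second Chernoff bound forces the empirical agreement below $7/12$ except with probability $\exp(-\Omega(n))$. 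Therefore $\Pr_{f \sim \mathcal{F}_n,\,w}[B^f(w)=1] = o(1)$.

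Combining the two estimates, the distinguishing advantage of $B^{\mathcal{O}}$ is at least $1 - o(1) > 1/10$ for all sufficiently large $n$, contradicting the assumption that $(G_n,\mathcal{D}_n)$ is a $(\mathtt{poly}(t(n)),1/10)$-PRF in $\mathcal{C}_n[s(n)]$. The only delicate point I anticipate is verifying that the size bound for $h$ really is $O(t(n))$ under the chosen encoding of circuits, and that the resulting $s_h \log s_h$ is dominated by $\delta^2\cdot 2^{n-1}$; both are guaranteed by the hypothesis $t(n) \leq 2^n/n^2$, which is why that bound appears in the statement. Everything else is a routine Chernoff computation together with an invocation of Lemma \ref{l:random_functions}.
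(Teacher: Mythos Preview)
The paper does not include a proof of this proposition, treating it as the well-known ``learner-to-distinguisher'' observation, so your overall approach is exactly the expected one and is essentially correct.

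There is, however, a small quantitative gap in your analysis of the random side. You apply Lemma~\ref{l:random_functions} with advantage $\delta = 1/6$, concluding that with probability $1 - \exp(-\Omega(2^n))$ over $f \sim \mathcal{F}_n$, no size-$O(t(n))$ circuit has agreement $\geq 2/3$ with $f$. But an agreement bound of $2/3$ does \emph{not} force the empirical agreement below your threshold $7/12$: if the true agreement of $h$ with $f$ happens to lie in the interval $(7/12,\,2/3)$, Chernoff will concentrate the empirical agreement near that value, which is \emph{above} $7/12$, and $B$ would output $1$. In other words, your ``second Chernoff bound'' step goes in the wrong direction for such $h$.

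The fix is immediate: invoke Lemma~\ref{l:random_functions} with a smaller advantage, say $\delta = 1/24$, so that with probability $1 - \exp(-\Omega(2^n))$ every size-$O(t(n))$ circuit has true agreement at most $1/2 + 1/24 < 7/12 - 1/24$ with $f$; then Chernoff over $k = \Theta(n)$ fresh samples does force empirical agreement below $7/12$ except with probability $\exp(-\Omega(n))$. The exponent in Lemma~\ref{l:random_functions} becomes $-\tfrac{1}{1152}\cdot 2^n + O(t(n)\log t(n))$, which is still $-\Omega(2^n)$ under the hypothesis $t(n) \leq 2^n/n^2$. With this correction, everything goes through as you outlined.
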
 

Our goal for the rest of this section is to establish a certain converse of Proposition \ref{p:learning_implies_no_prfs} (Theorem \ref{t:no_samp_PRFs_implies_learning} and Corollary \ref{c:no_PRFs_implies_learning}). An important technical tool will be a ``small-support'' version of the min-max theorem, described next.\\ 

\noindent \textbf{Small-Support Approximate Min-Max Theorem for Bounded Games \citep{MR1274423, DBLP:conf/stoc/LiptonY94}}. We follow the notation from \citep{DBLP:conf/stoc/LiptonY94}. Let $M$ be an $r \times c$ real-valued matrix, $p$ be a probability distribution over its rows, and $q$ be a probability distribution over its columns. The classic min-max theorem \citep{MR1512486} states that
\begin{equation}\label{eq:min_max}
\min_p \max_{j \in [c]} \;\mathbb{E}_{i \sim p}[M(i,j)] \; = \; \max_q \min_{i \in [r]} \;\mathbb{E}_{j \sim q}[M(i,j)].
\end{equation}
The distributions $p$ and $q$ are called \emph{mixed strategies}, while individual indexes $i$ and $j$ are called \emph{pure strategies}. We use $v(M)$ to denote the value in Equation \ref{eq:min_max}. (Recall that this can be interpreted as a game between a row player, or \emph{Minimizer}, and a column player, or \emph{Maximizer}. The min-max theorem states that the order in which the players reveal their strategies does not change the value of the game. It is easy to see that the second player can be restricted to pure strategies.)

We will consider a game played on a matrix of exponential size, and will be interested in near-optimal mixed strategies with succinct descriptions. This motivates the following definitions. A mixed strategy is $k$-\emph{uniform} if it is selected uniformly from a multiset of at most $k$ pure strategies. We use $P_k$ and $Q_k$ to denote the set of $k$-uniform strategies for the row player and the column player, respectively. For convenience, given a mixed row strategy $p$, we let $v(p) = v_M(p) = \max_{j \in [c]} \mathbb{E}_{i \sim p}[M(i,j)]$. Similarly, we use $v(q) = v_M(q) =  \min_{i \in [r]} \mathbb{E}_{j \sim q}[M(i,j)]$ for a column mixed strategy $q$. We say that a mixed strategy $u$ is $\delta$-\emph{optimal} if $|v(u) - v(M)| \leq \delta$. 

We will need the following ``efficient'' version of the min-max theorem.

\begin{theorem}[Small-Support Min-Max Theorem \citep{MR1274423, DBLP:conf/stoc/LiptonY94}]\label{t:efficient_min_max}
Let $M$ be a $r \times c$ real-valued matrix with entries in the interval $[-1,1]$. For every $\delta > 0$, if $k_r \geq 10 \ln (c)/\delta^2$ and $k_c \geq 10 \ln (r)/\delta^2$ then
$$
\min_{p \in P_{k_r}} v(p)  \;\leq\;  v(M) + \delta, \quad \text{and} \quad
\max_{q \in Q_{k_c}} v(q)  \;\geq\;  v(M) - \delta.
$$
\end{theorem}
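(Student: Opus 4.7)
\medskip

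\noindent\textbf{Proof plan (for Theorem \ref{t:efficient_min_max}).} The plan is a standard probabilistic-method / sampling argument: start from an exactly optimal mixed strategy guaranteed by the classical min-max theorem, then show that a random $k$-uniform strategy obtained by taking $k$ independent samples from it is $\delta$-optimal against every pure strategy of the opponent, with positive probability. The main technical input is a Hoeffding-type concentration bound combined with a union bound over the opponent's pure strategies.

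\medskip

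\noindent\emph{Step 1: Set up an optimal strategy.} By the classical min-max theorem (Equation~\ref{eq:min_max}), there exists a mixed row strategy $p^{*}$ with $v(p^{*}) = v(M)$, i.e.\ $\mathbb{E}_{i \sim p^{*}}[M(i,j)] \leq v(M)$ for every $j \in [c]$. I will show that a random $k_r$-uniform strategy formed by i.i.d.\ sampling from $p^{*}$ is $\delta$-optimal with positive probability, hence one such strategy exists.

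\medskip

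\noindent\emph{Step 2: Sample and apply Hoeffding.} Draw $i_1,\dots,i_{k_r}$ independently according to $p^{*}$, and let $\hat{p}$ be the uniform distribution on the multiset $\{i_1,\dots,i_{k_r}\}$. Fix any column $j \in [c]$, and define the bounded random variables $X_t = M(i_t, j) \in [-1,1]$. Their mean is $\mu_j \eqdef \mathbb{E}_{i \sim p^{*}}[M(i,j)] \leq v(M)$, and $\mathbb{E}_{i \sim \hat{p}}[M(i,j)] = \tfrac{1}{k_r}\sum_t X_t$. By Hoeffding's inequality applied to i.i.d.\ variables with range $2$,
\[
\Pr\!\left[\, \tfrac{1}{k_r}\sum_{t=1}^{k_r} X_t \;>\; \mu_j + \delta \,\right] \;\leq\; \exp\!\left(-\tfrac{k_r \delta^{2}}{2}\right).
\]
With $k_r \geq 10 \ln(c)/\delta^{2}$, this probability is at most $c^{-5}$.

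\medskip

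\noindent\emph{Step 3: Union bound and conclude for the row player.} Taking a union bound over all $c$ columns, the probability that there exists some $j$ with $\mathbb{E}_{i \sim \hat{p}}[M(i,j)] > \mu_j + \delta \leq v(M) + \delta$ is at most $c \cdot c^{-5} < 1$. Hence there exists a realization $\hat{p} \in P_{k_r}$ with $v(\hat{p}) = \max_{j} \mathbb{E}_{i \sim \hat{p}}[M(i,j)] \leq v(M) + \delta$, which gives $\min_{p \in P_{k_r}} v(p) \leq v(M) + \delta$.

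\medskip

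\noindent\emph{Step 4: Symmetric argument for the column player.} Apply the identical argument to an optimal mixed column strategy $q^{*}$ of the Maximizer, sampling $k_c \geq 10 \ln(r)/\delta^{2}$ pure strategies i.i.d.\ from $q^{*}$ and union-bounding Hoeffding over the $r$ rows, yielding a $k_c$-uniform $\hat{q}$ with $v(\hat{q}) \geq v(M) - \delta$, and hence $\max_{q \in Q_{k_c}} v(q) \geq v(M) - \delta$.

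\medskip

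\noindent\emph{Expected obstacles.} There is no real obstacle; the only thing to be careful about is the constant in the Hoeffding tail for variables in $[-1,1]$ (range $2$, giving $\exp(-k \delta^{2}/2)$), which is exactly what makes the constant $10$ in the hypothesis $k_r \geq 10 \ln(c)/\delta^{2}$ sufficient (it yields a failure probability $\leq c^{-5}$, comfortably beating the union bound over $c$ columns). The argument is non-constructive, but that is consistent with the non-uniformity already present in our application of the theorem to Theorem~\ref{t:dichotomy}.
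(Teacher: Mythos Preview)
Your proof is correct and is exactly the standard sampling/Hoeffding/union-bound argument due to Alth\"ofer and Lipton--Young. Note, however, that the paper does not actually prove Theorem~\ref{t:efficient_min_max}: it is stated with a citation to \cite{MR1274423, DBLP:conf/stoc/LiptonY94} and used as a black box, so there is no ``paper's own proof'' to compare against beyond the original references, which use precisely the argument you wrote down.
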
 
In other words, there are $\delta$-optimal strategies for the row and column players with relatively small support size.\\

\noindent \textbf{The PRF-Distinguisher Game}. Let $\mathcal{C}_n[s]$ be a circuit class and $\mathsf{Circuit}^\mathcal{O}[t]$ be an oracle circuit class,  with size parameters $s(n)$ and $t(n)$, respectively. We consider a $[-1,1]$-valued matrix $M = M^{\mathcal{C}_n[s], \mathsf{Circuit}^\mathcal{O}[t]}$, defined as follows. The rows of $M$ are indexed by Boolean functions in $\mathcal{C}_n[s]$, and the columns of $M$ are indexed by (single-output) \emph{deterministic} oracle circuits from $\mathsf{Circuit}^\mathcal{O}[t]$. In other words, such circuit have access to constants $0$ and $1$, compute according to the values of the oracle gates, and produce an output value in $\{0,1\}$. In order not to introduce further notation, we make the simplifying assumption that the negation of every circuit from $\mathsf{Circuit}^\mathcal{O}[t]$ is also in $\mathsf{Circuit}^\mathcal{O}[t]$. For $h \in \mathcal{C}_n[s]$ and $C^\mathcal{O} \in \mathsf{Circuit}^\mathcal{O}[t]$, we let 
$$
M(h,C^\mathcal{O}) \;\eqdef
\; C^h - \Pr_{f \sim \mathcal{F}_n}[C^f = 1],
$$
where $C^g \in \{0,1\}$ denotes the output of $C^\mathcal{O}$ when computing with oracle $\mathcal{O} = g$, for a fixed $g \colon \{0,1\}^n \to \{0,1\}$. We say that the matrix $M$ is the \emph{PRF-Distinguisher game} for $\mathcal{C}_n[s]$ and $\mathsf{Circuit}^\mathcal{O}[t]$. Observe that this is a finite matrix, for every choice of $n$.

Following our notation, we use $v(M)$ to denote the value of the game corresponding to $M$, which can be interpreted as follows. Let $p$ be a mixed strategy for the row player. In other words, $p$ is simply a distribution over functions from $\mathcal{C}_n[s]$. Consequently, to each row strategy $p$ we can associate a pair $(G_p,\mathcal{D}_p)$, where $p = \mathcal{D}_p$ and $G_p = \mathsf{Support}(\mathcal{D}_p)$, as in Definition \ref{d:prfs}.  On the other hand, a mixed strategy $q$ over the columns is simply a distribution over deterministic oracle circuits from $\mathsf{Circuits}^\mathcal{O}[t]$, which can be interpreted as a (non-constructive) randomized circuit $B^\mathcal{O}$. Under this interpretation, the value of the game when played with strategies $p$ and $q$ is given by
\begin{eqnarray}
\mathbb{E}_{h \sim p,\, C^\mathcal{O} \sim q}[M(h,C^\mathcal{O})] & = & \mathbb{E}_{h,C^\mathcal{O}}[C^h - \Pr_{f \sim \mathcal{F}_n}[C^f = 1]]  \nonumber \\
&  = &  \mathbb{E}_{h,C^\mathcal{O}}[C^h] - \Pr_{f, C^\mathcal{O} \sim q}[C^f = 1]  \nonumber \\
& = & \Pr_{g \sim \mathcal{D}_p, \,B^\mathcal{O}}[B^g = 1] - \Pr_{f \sim \mathcal{F}_n, \,B^\mathcal{O}}[B^f = 1], \nonumber 
\end{eqnarray}
which corresponds to the distinguishing probability in Definition \ref{d:prfs} without taking absolute values. But since we assumed that the circuits indexing the columns of $M$ are closed under complementation, it follows that the (global) value $v(M)$ of this game captures the security of PRFs from $\mathcal{C}_n[s]$ against $\mathsf{Circuit}^\mathcal{O}[t]$-distinguishers. (Notice though that this value does not take into account the samplability of the function families involved, nor the constructivity of the ensemble of distinguishers corresponding to a ``randomized'' oracle distinguisher in the argument above.)

\subsection{A (Non-Uniform) Converse to ``Learning Implies no PRFs''}\label{ss:no_prfs_learning}

We proceed with our original goal of establishing a converse of Proposition \ref{p:learning_implies_no_prfs}. Roughly speaking, we want to show that if every samplable function family from $\mathcal{C}_n$ can be distinguished from a random function (possibly by different distinguishers), then there is a single algorithm that learns every function in $\mathcal{C}_n$. Formally, what we get is a sequence of subexponential size (non-uniform) circuits learning $\mathfrak{C}$.

The proofs of Lemmas \ref{l:no_samplable_then_no_PRF} and \ref{l:no_PRF_then_distinguisher} below rely on Theorem \ref{t:efficient_min_max}.

\begin{lemma}[$\cancel{\exists}$ samplable PRF $\rightarrow$~$\cancel{\exists}$ PRF against ensembles of circuits] \label{l:no_samplable_then_no_PRF}
There exists a universal constant $c \in \mathbb{N}$ for which the following holds. Let $t(n) \geq n$, $s(n) \geq n$, $\delta(n) > 0$, and $\varepsilon(n) > 0$ be arbitrary functions. If there is no $O(t \cdot s \cdot 1/\delta)^c$-samplable pair $(\widetilde{G_n}, \widetilde{\mathcal{D}_n})$ that is a $(t(n),\varepsilon(n) + \delta(n))$-\emph{PRF} in $\mathcal{C}_n[s(n)]$, then there is no pair $(G_n, \mathcal{D}_n)$ with $G_n \subseteq \mathcal{C}_n[s(n)]$ that $\varepsilon(n)$-fools every ensemble of deterministic $\mathsf{Circuit}^\mathcal{O}[t(n)]$-circuits.  
\end{lemma}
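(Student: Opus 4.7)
I will argue the contrapositive. Suppose there is a pair $(G_n,\mathcal{D}_n)$ with $G_n \subseteq \mathcal{C}_n[s(n)]$ that $\varepsilon(n)$-fools every ensemble of deterministic $\mathsf{Circuit}^\mathcal{O}[t(n)]$-circuits. My goal is to produce a pair $(\widetilde{G_n},\widetilde{\mathcal{D}_n})$ that is $O(t\cdot s\cdot 1/\delta)^c$-samplable and is a $(t(n),\varepsilon(n)+\delta(n))$-PRF in $\mathcal{C}_n[s(n)]$. The tool is the small-support min-max theorem applied to the PRF-Distinguisher game $M = M^{\mathcal{C}_n[s],\,\mathsf{Circuit}^\mathcal{O}[t]}$ defined in Section~\ref{ss:prf_dist}.

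First, I view $\mathcal{D}_n$ as a mixed row strategy $p$ in the game $M$. Recall that the columns of $M$ are closed under complementation, so for any deterministic oracle circuit $C^\mathcal{O}$ we have $M(h,\neg C^\mathcal{O}) = -M(h,C^\mathcal{O})$. Consequently, the assumption that $\mathcal{D}_n$ fools every ensemble of deterministic $\mathsf{Circuit}^\mathcal{O}[t]$-circuits (i.e., every distribution over columns, hence in particular every pure strategy, with distinguishing advantage $\leq \varepsilon$) translates into $v(p) = \max_{C^\mathcal{O}} \mathbb{E}_{h \sim p}[M(h,C^\mathcal{O})] \leq \varepsilon(n)$. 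Therefore $v(M) = \min_p v(p) \leq \varepsilon(n)$.

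Next, I bound the number of columns of $M$. Adapting the counting argument in Lemma~\ref{l:number_circuits} to the class $\mathsf{Circuit}^\mathcal{O}[t]$ (an oracle gate of arity $n \leq t$ is just one additional gate type, and $n$-ary wires can still be indexed with $O(\log t)$ bits), the number of such deterministic oracle circuits is at most $2^{O(t \log t)}$. So $\ln c \leq O(t\log t)$, where $c$ is the number of columns. Now apply Theorem~\ref{t:efficient_min_max} with $k_r = \lceil 10\ln(c)/\delta^2\rceil = O(t\log t/\delta^2)$: there is a $k_r$-uniform mixed row strategy $\widetilde{p} \in P_{k_r}$ with $v(\widetilde{p}) \leq v(M) + \delta \leq \varepsilon + \delta$. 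Let $\widetilde{\mathcal{D}_n}$ be the corresponding distribution and $\widetilde{G_n} \subseteq \mathcal{C}_n[s(n)]$ its support.

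It remains to (a)~construct a small sampler for $(\widetilde{G_n},\widetilde{\mathcal{D}_n})$ and (b)~verify the PRF property against randomized distinguishers. For (a), the distribution $\widetilde{\mathcal{D}_n}$ is uniform over a multiset of at most $k_r$ circuits from $\mathcal{C}_n[s(n)]$, each describable by a string of length $O(s \log s)$ under the encoding from Lemma~\ref{l:number_circuits}. A $\mathcal{C}_n$-sampler reads $\lceil \log k_r \rceil$ random bits, uses a hard-wired multiplexer to select one of these $k_r$ descriptions, and outputs it. Such a selector has size $O(k_r \cdot s \log s) = O(ts(\log t)(\log s)/\delta^2)$, which is bounded by $(t\cdot s\cdot 1/\delta)^c$ for a suitable universal constant $c$. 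For (b), let $B^\mathcal{O} \in \mathsf{Circuit}^\mathcal{O}[t]$ be any randomized distinguisher with randomness $w$; for each fixed $w$ the circuit $B^\mathcal{O}_w$ is a deterministic column of $M$, so (using complement closure) $|\mathbb{E}_{h \sim \widetilde{p}}[M(h,B^\mathcal{O}_w)]| \leq v(\widetilde{p}) \leq \varepsilon+\delta$. Averaging over $w$ and using the triangle inequality yields $|\Pr_{g\sim\widetilde{\mathcal{D}_n},w}[B^g(w)=1] - \Pr_{f\sim\mathcal{F}_n,w}[B^f(w)=1]| \leq \varepsilon+\delta$, establishing the PRF property and contradicting the hypothesis. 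The main subtlety is choosing the encoding so that (1)~the counting bound $\ln c = O(t \log t)$ is tight enough to push $k_r$ inside $\mathrm{poly}(t,s,1/\delta)$, and (2)~the multiplexer-based sampler hits the stated $(tsd)^c$ bound; both are routine once the encoding from Lemma~\ref{l:number_circuits} is adopted.
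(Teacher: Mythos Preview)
Your proposal is correct and follows essentially the same argument as the paper's own proof: contrapositive, interpret $\mathcal{D}_n$ as a row strategy with $v_M(p)\le \varepsilon$, apply the small-support min-max theorem with the column count bounded via Lemma~\ref{l:number_circuits} to obtain a $k$-uniform $\tilde{p}$ with $k=O((t\log t)/\delta^2)$, and then package $\tilde{p}$ as a $\mathsf{poly}(t,s,1/\delta)$-size sampler. You are merely more explicit than the paper about the multiplexer construction of the sampler and the averaging step that passes from deterministic to randomized distinguishers, both of which the paper leaves as routine observations.
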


\begin{proof}
We use Theorem \ref{t:efficient_min_max} to establish the contrapositive. Assume there exists a pair $(G_n, \mathcal{D}_n)$ where $\mathcal{D}_n$ is distributed over $G_n \subseteq \mathcal{C}_n[s(n)]$ such that for every distribution $q$ over $\mathsf{Circuit}^\mathcal{O}[t(n)]$ we have
$$
\left | \Pr_{g \sim \mathcal{D}_n, C^\mathcal{O} \sim q}[C^g = 1] - \Pr_{f \sim \mathcal{F}_n, C^\mathcal{O} \sim q}[C^f = 1]  \right  | \;\leq \; \varepsilon(n).
$$
Let $p = \mathcal{D}_n$, and observe that in the corresponding PRF-Distinguisher game we get $v_M(p) \leq \varepsilon(n)$. Consequently, $v(M) \leq \min_p v_M(p) \leq \varepsilon(n)$. It follows from Theorem \ref{t:efficient_min_max} and a bound on the number of columns of $M$ (similar to Lemma \ref{l:number_circuits}) that there exists a $k$-uniform distribution $\tilde{p}$ over functions in $\mathcal{C}_n[s(n)]$ with $k \leq O(\ln 2^{O(t \log t)}/\delta(n)^2) = O((t \log t)/\delta(n)^2)$ such that $v_M(\tilde{p}) \leq \varepsilon(n) + \delta(n)$.

In other words, each $f \in \mathsf{Support}(\tilde{p})$ is in $\mathcal{C}_n[s(n)]$, the support of this distribution contains at most $O((t \log t)/\delta(n)^2)$ different functions, and each such function can be encoded by a string of length $\mathtt{poly}(s(n))$ that describes the corresponding circuit. Using that $\tilde{p}$ is a $k$-uniform distribution, it is not hard to see that there exists a circuit $A \in \mathsf{Circuit}[S]$ with $A(\mathcal{U}_\ell) \equiv \tilde{p}$ for some $\ell \leq S$, where $S \leq \mathtt{poly}(t,s,1/\delta)$. Since every randomized circuit $B^{\mathcal{O}}$ can be seen as a distribution over deterministic oracle circuits, it follows that there is an $S$-samplable pair $(\widetilde{G_n}, \widetilde{\mathcal{D}_n})$ that is a $(t(n), \varepsilon(n) + \delta(n))$-PRF in $\mathcal{C}_n[s(n)]$. This completes the proof.
 \end{proof}

\begin{lemma}[$\cancel{\exists}$ PRF against ensembles of circuits $\rightarrow$~$\exists$ universal distinguisher] \label{l:no_PRF_then_distinguisher}
There exists a universal constant $c \in \mathbb{N}$ for which the following holds. Let $s(n) \geq n$, $t(n) \geq n$, $\varepsilon(n) > 0$, and $\gamma(n) > 0$ be arbitrary functions. If there is no pair $(G_n, \mathcal{D}_n)$ with $G_n \subseteq \mathcal{C}_n[s(n)]$ that $\varepsilon(n)$-fools every ensemble of deterministic $\mathsf{Circuit}^\mathcal{O}[t(n)]$-circuits, then there is a randomized oracle circuit $B^\mathcal{O} \in \mathsf{Circuit}^\mathcal{O}[O(t\cdot s \cdot 1/\gamma)^c]$ that distinguishes every such pair from a random function with advantage at least $\varepsilon(n) - \gamma(n)$. 
\end{lemma}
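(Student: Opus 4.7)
\textbf{Proof plan for Lemma \ref{l:no_PRF_then_distinguisher}.} The strategy is to apply the small-support min-max theorem (Theorem \ref{t:efficient_min_max}) to the PRF-Distinguisher game $M = M^{\mathcal{C}_n[s], \mathsf{Circuit}^\mathcal{O}[t]}$, but this time from the column player's (Maximizer's) side, mirroring the argument of Lemma \ref{l:no_samplable_then_no_PRF}. The conclusion will be a single low-complexity randomized oracle circuit whose advantage against every function family is almost $v(M)$.

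First I would translate the hypothesis into a lower bound on $v(M)$. The assumption says that for every pair $(G_n, \mathcal{D}_n)$ with $G_n \subseteq \mathcal{C}_n[s(n)]$ there exists an ensemble of deterministic $\mathsf{Circuit}^\mathcal{O}[t(n)]$-circuits whose distinguishing advantage exceeds $\varepsilon(n)$ in absolute value. Using the simplifying assumption in the definition of $M$ that $\mathsf{Circuit}^\mathcal{O}[t(n)]$ is closed under negation, the absolute value can be dropped: for every row mixed strategy $p$ (i.e., every distribution over functions in $\mathcal{C}_n[s(n)]$), there is some column mixed strategy $q$ with $\mathbb{E}_{h \sim p, C \sim q}[M(h, C)] \geq \varepsilon(n)$. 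Since the maximum of $\max_{C^\mathcal{O}} \mathbb{E}_{h \sim p}[M(h, C^\mathcal{O})]$ is attained at a pure strategy, this gives $v_M(p) \geq \varepsilon(n)$ for every $p$, hence $v(M) \geq \varepsilon(n)$.

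Next I would invoke the second inequality in Theorem \ref{t:efficient_min_max}, which says that the column player has a $k$-uniform $\gamma$-optimal mixed strategy $\tilde q$, provided $k \geq 10 \ln(r)/\gamma(n)^2$, where $r$ is the number of rows of $M$. By Lemma \ref{l:number_circuits}, $r \leq 2^{O(s(n) \log s(n))}$, so it suffices to take $k = O(s \cdot \log s / \gamma^2)$, and we obtain $v_M(\tilde q) \geq v(M) - \gamma(n) \geq \varepsilon(n) - \gamma(n)$. Unwinding the definition, this means that for \emph{every} $h \in \mathcal{C}_n[s(n)]$, $\mathbb{E}_{C \sim \tilde q}[M(h, C)] \geq \varepsilon(n) - \gamma(n)$; in particular, for every pair $(G_n, \mathcal{D}_n)$ with $G_n \subseteq \mathcal{C}_n[s(n)]$, averaging over $h \sim \mathcal{D}_n$ preserves the bound.

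Finally I would package $\tilde q$ as a single randomized oracle circuit. Since $\tilde q$ is uniform over a multiset of at most $k$ deterministic oracle circuits of size $t(n)$, we can build $B^\mathcal{O}$ that uses $O(\log k)$ random bits to select an index and then evaluates the chosen circuit via a standard selector; its size is $O(k \cdot t(n)) = O(t \cdot s \cdot 1/\gamma)^c$ for a universal constant $c$. The distinguishing guarantee on $B^\mathcal{O}$ follows directly from the bound on $v_M(\tilde q)$ established above. The main technical care needed is the bookkeeping in the two steps above: justifying the removal of absolute values via closure under negation, and bounding the number of rows correctly (circuits, not functions, up to the standard overcounting) in order to plug into Theorem \ref{t:efficient_min_max} with the right $k$; the rest is a direct application of the min-max machinery already used in the preceding lemma.
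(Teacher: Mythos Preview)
Your proposal is correct and follows essentially the same approach as the paper: both arguments establish $v(M) \geq \varepsilon(n)$ from the hypothesis (using closure under negation to drop the absolute value), apply the column-player side of Theorem~\ref{t:efficient_min_max} with $\delta = \gamma(n)$ and the row-count bound $r \leq 2^{O(s \log s)}$ from Lemma~\ref{l:number_circuits} to obtain a $k$-uniform near-optimal column strategy with $k = O((s \log s)/\gamma^2)$, and then package this strategy as a single randomized oracle circuit of size $O(t \cdot s \cdot 1/\gamma)^c$. Your write-up is in fact slightly more explicit than the paper's about the closure-under-negation step and the selector construction, but the underlying argument is identical.
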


\begin{proof}
We rely on the classical min-max theorem and on Theorem \ref{t:efficient_min_max}. It follows from the assumption of the lemma that the corresponding PRF-Distinguisher game has value $v(M) \geq \varepsilon(n)$. By the min-max theorem, there is an ensemble of $\mathsf{Circuit}^\mathcal{O}[t(n)]$-circuits that distinguishes every pair  $(G_n, \mathcal{D}_n)$ satisfying $G_n \subseteq \mathcal{C}_n[s(n)]$ with advantage at least $\varepsilon(n)$. Applying Theorem \ref{t:efficient_min_max}, we obtain a $k$-uniform distribution $q$ over deterministic $\mathsf{Circuit}^\mathcal{O}[t(n)]$-circuits with distinguishing probability at least $\varepsilon(n) - \gamma(n)$ and support size at most $k = O(\ln 2^{O(s\log s)}/\gamma(n)^2) = O((s \log s)/\gamma(n)^2)$. Similarly to the proof of Lemma \ref{l:no_samplable_then_no_PRF}, this ensemble of circuits implies the existence of a single randomized oracle circuit $B^\mathcal{O} \in \mathsf{Circuit}^\mathcal{O}[O(s \cdot t \cdot 1/\gamma)^c]$ that distinguishes every pair $(G_n, \mathcal{D}_n)$ with $G_n \subseteq \mathcal{C}_n[s(n)]$ from a random function with advantage at least $\varepsilon(n) - \gamma(n)$. This completes the proof.
\end{proof}

Lemmas \ref{l:no_samplable_then_no_PRF} and \ref{l:no_PRF_then_distinguisher} hold for each value of $n$. The next lemma is a reduction involving different values of this parameter.

\begin{lemma}[$\exists$ universal distinguisher\underline{s} $\rightarrow$~$\exists$ learning circuit\underline{s}] \label{l:distinguisher_then_learning}
Assume that for every $k \geq 1$ and large enough $n$ there exists a randomized oracle circuit $B^\mathcal{O}_n$ in $\mathsf{Circuit}^\mathcal{O}[2^{O(n)}]$ that distinguishes every pair $(G_n, \mathcal{D}_n)$ with $G_n \subseteq \mathcal{C}_n[n^k]$ from a random function with advantage $\geq 1/40$. Then for every $\ell \geq 1$ and $\varepsilon > 0$ there is a non-uniform sequence of randomized oracle circuits in $\mathsf{Circuit}^\mathcal{O}[2^{n^{\varepsilon}}]$ that learn every function $f \in \mathcal{C}_n[n^\ell]$ to error at most $n^{-\ell}$.
\end{lemma}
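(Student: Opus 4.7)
The argument parallels the distinguisher-to-learner direction of Lemma~\ref{l:distfastlearning}, with the universal distinguisher supplied by the hypothesis playing the role of the complexity distinguisher. Fix $\ell\geq 1$ and $\varepsilon>0$, pick any $\varepsilon'\in(0,\varepsilon)$, and set $m=m(n)\eqdef\lceil n^{\varepsilon'}\rceil$ and $\gamma=n^{-\ell}$. If $\mathfrak{C}=\mathsf{AC}^0$ the conclusion already holds unconditionally via \citep{DBLP:journals/jacm/LinialMN93}; otherwise $\mathfrak{C}\supseteq\mathsf{AC}^0[p]$ for some prime $p$, so Theorem~\ref{t:blackbox_ACzerop} yields a black-box $(\gamma,m)$-function generator $\mathtt{GEN}$ within $\mathfrak{C}$. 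For every $f\in\mathcal{C}_n[n^\ell]$, the family $\mathtt{GEN}(f)=\{g_z\}_{z\in\{0,1\}^s}$ satisfies $s=\mathsf{poly}(n)$ and each $g_z\in\mathfrak{C}^f[\mathsf{poly}(s)]$; folding the size-$n^\ell$ $\mathfrak{C}$-circuit for $f$ into the oracle gates and using typicality/closure of $\mathfrak{C}$ places every $g_z$ in $\mathcal{C}_m[m^{k_0}]$ for some fixed $k_0=k_0(\ell,\varepsilon')$ (recall $n=m^{1/\varepsilon'}$).

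Now apply the hypothesis at input length $m$ with exponent $k_0$: there exists $B_m^{\mathcal{O}}\in\mathsf{Circuit}^{\mathcal{O}}[2^{O(m)}]$ that distinguishes every pair $(G_m,\mathcal{D}_m)$ with $G_m\subseteq\mathcal{C}_m[m^{k_0}]$ from a uniform random $m$-bit function with advantage $\geq 1/40$. In particular this holds for the pair induced by $\mathtt{GEN}(f)$, so by an averaging argument, with constant probability over the internal randomness $w$ of $B_m^{\mathcal{O}}$, the deterministic oracle circuit $(B_m^{\mathcal{O}})_w$ retains constant distinguishing advantage on this family. Replacing its oracle gates by direct lookups into a length-$L=2^m$ truth-table input yields a deterministic Boolean circuit $D_L$ of size $2^{O(m)}$ that is a distinguishing circuit for the distribution $W_L$ of Definition~\ref{d:blackbox_generator}(iii). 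Feeding $D_L$ together with oracle access to $f$ into the reconstruction algorithm of the generator produces, with probability $\geq 1-1/n$ over its own coins, a circuit of size $\mathsf{poly}(n,1/\gamma,2^{O(m)})=2^{O(m)}$ that is $\gamma$-close (i.e., $n^{-\ell}$-close) to $f$. Both sources of randomness are then boosted to confidence $1-1/n$ by the usual repeat-and-test trick: a candidate hypothesis can be certified by estimating its agreement with $f$ on $\mathsf{poly}(n^\ell)$ random inputs drawn via the oracle.

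Chaining all components, the learner is a randomized oracle circuit of total size $2^{O(n^{\varepsilon'})}$, which for any $\varepsilon'<\varepsilon$ fits inside $\mathsf{Circuit}^{\mathcal{O}}[2^{n^{\varepsilon}}]$ for all large enough $n$, establishing the conclusion. The main subtlety (and essentially the only one) is the gap between the $1/40$-advantage guarantee of the hypothesis and the $1/4$ threshold used in Definition~\ref{d:distinguisher} and hence in Definition~\ref{d:blackbox_generator}(iii). This is cosmetic rather than substantive: the NW-style reconstruction behind Theorem~\ref{t:blackbox_ACzerop} (cf.\ \citep{CIKK16}) works with any noticeable constant distinguishing advantage, paying only a polynomial penalty in the inverse advantage in both output-circuit size and running time. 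Direct amplification of the advantage by running independent copies is awkward here, because all copies of $B_m^{\mathcal{O}}$ must share the same single oracle $g$; I would instead record this flexibility of the CIKK reconstruction explicitly and absorb the $\mathsf{poly}(1/\text{advantage})$ blowup into the $2^{O(m)}$ size bound.
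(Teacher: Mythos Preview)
Your proposal is correct and follows essentially the same approach as the paper. The paper's proof is a two-sentence sketch noting that this is a non-uniform version of Lemma~\ref{l:distfastlearning}, with the $B_n^{\mathcal{O}}$ playing the role of the complexity distinguisher, and that Theorem~\ref{t:blackbox_ACzerop} holds with an arbitrarily small constant distinguishing probability; you unpack exactly this argument in detail and correctly identify and resolve the $1/40$ versus $1/4$ advantage mismatch in the same way the paper does.
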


\begin{proof}
This lemma is simply a (weaker) non-uniform version of the proof of Lemma \ref{l:distfastlearning} from Section \ref{s:proof_gap}. It is enough to use the sequence of randomized oracle circuits $B^\mathcal{O}_n$ as distinguishing circuits, and to observe that the statement of Theorem \ref{t:blackbox_ACzerop} holds with an arbitrarily small constant in the distinguishing probability. 
\end{proof}

Recall that $\mathfrak{C} = \{\mathcal{C}_n\}$ is an arbitrary typical circuit class. The main technical result of this section follows from Lemmas \ref{l:no_samplable_then_no_PRF}, \ref{l:no_PRF_then_distinguisher}, and \ref{l:distinguisher_then_learning} together with an appropriate choice of parameters. 

\begin{theorem}[No samplable PRFs in $\mathfrak{C}$ implies Learning $\mathfrak{C}$] \label{t:no_samp_PRFs_implies_learning}
If $t(n) \leq 2^{O(n)}$ and $c' \geq 1$ is a large enough constant, the following holds. Suppose that for every $k \geq 1$ each $O((t(n)\cdot n^k)^{c'})$-samplable pair $(G_n, \mathcal{D}_n)$ with $G_n \subseteq \mathcal{C}_n[n^k]$ can be distinguished from a random function by some randomized oracle circuit from $\mathsf{Circuit}^\mathcal{O}[t(n)]$ with advantage at least $1/10$. Then, for every $k\geq 1$, $\varepsilon >0$, and large enough $n$, there is a randomized oracle circuit from $\mathsf{Circuit}^\mathcal{O}[2^{n^\varepsilon}]$ that learns every function in $\mathcal{C}_n[n^k]$ to error at most $n^{-k}$. 
\end{theorem}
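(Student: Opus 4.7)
The plan is to chain Lemmas \ref{l:no_samplable_then_no_PRF}, \ref{l:no_PRF_then_distinguisher}, and \ref{l:distinguisher_then_learning} with a careful allocation of the advantage budget $1/10$ between the three slack parameters ($\delta$, $\varepsilon$, and $\gamma$) that appear in the statements. Fix $k \geq 1$ and take $s(n) = n^k$. The hypothesis says exactly that no $O((t(n)\cdot n^k)^{c'})$-samplable pair $(G_n,\mathcal{D}_n)$ with $G_n \subseteq \mathcal{C}_n[n^k]$ is a $(t(n),1/10)$-PRF. Choosing $\delta(n) = 1/40$ and $\varepsilon(n) = 3/40$ (so $\varepsilon+\delta = 1/10$), and taking $c'$ to be at least the constant $c$ from Lemma \ref{l:no_samplable_then_no_PRF}, the sampler-size requirement $O(t\cdot n^k \cdot 1/\delta)^c = O(t\cdot n^k)^c$ is absorbed by the $O((t\cdot n^k)^{c'})$ bound in the hypothesis. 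The contrapositive of Lemma \ref{l:no_samplable_then_no_PRF} therefore yields that no pair $(G_n,\mathcal{D}_n)$ with $G_n \subseteq \mathcal{C}_n[n^k]$ can $(3/40)$-fool every ensemble of deterministic $\mathsf{Circuit}^{\mathcal{O}}[t(n)]$-circuits.

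Next, I would feed this conclusion into Lemma \ref{l:no_PRF_then_distinguisher} with $\varepsilon = 3/40$ and an additional slack $\gamma = 1/40$. This produces a single randomized oracle circuit $B_n^{\mathcal{O}} \in \mathsf{Circuit}^{\mathcal{O}}[O(t\cdot n^k \cdot 40)^c] = \mathsf{Circuit}^{\mathcal{O}}[O(t\cdot n^k)^c]$ that distinguishes every pair $(G_n,\mathcal{D}_n)$ with $G_n \subseteq \mathcal{C}_n[n^k]$ from a uniformly random function with advantage at least $3/40 - 1/40 = 1/20$. Because $t(n) \leq 2^{O(n)}$ and $n^k \leq 2^{O(n)}$, the size of $B_n^{\mathcal{O}}$ is $2^{O(n)}$, as required by the next step.

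Finally, the family $\{B_n^{\mathcal{O}}\}$ satisfies the hypothesis of Lemma \ref{l:distinguisher_then_learning}: it is a sequence of $2^{O(n)}$-size randomized oracle circuits that distinguishes every pair with support in $\mathcal{C}_n[n^k]$ with advantage $\geq 1/20 > 1/40$, and this holds for every fixed $k$. Invoking Lemma \ref{l:distinguisher_then_learning} with the given $\ell$ (playing the role of $k$ in that lemma) and the given $\varepsilon > 0$ produces, for every large enough $n$, a randomized oracle circuit in $\mathsf{Circuit}^{\mathcal{O}}[2^{n^\varepsilon}]$ that learns every $f \in \mathcal{C}_n[n^\ell]$ to error at most $n^{-\ell}$, which is exactly the conclusion.

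The main obstacle is purely bookkeeping: one has to check that the advantage $1/10$ guaranteed by the hypothesis can be split as $\varepsilon + \delta + \gamma$ with enough left over to exceed the $1/40$ threshold of Lemma \ref{l:distinguisher_then_learning}, while simultaneously ensuring the polynomial blow-ups in sampler size ($O(t\cdot s\cdot 1/\delta)^c$) and distinguisher size ($O(t\cdot s\cdot 1/\gamma)^c$) remain below the given $O((t\cdot n^k)^{c'})$ and $2^{O(n)}$ bounds, respectively. The choice $\delta = \gamma = 1/40$, $\varepsilon = 3/40$, together with $c'$ taken at least as large as the (single) universal constant $c$ coming from both intermediate lemmas, accomplishes this balancing without further work.
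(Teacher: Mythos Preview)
Your proposal is correct and follows essentially the same approach as the paper: chain Lemmas \ref{l:no_samplable_then_no_PRF}, \ref{l:no_PRF_then_distinguisher}, and \ref{l:distinguisher_then_learning} with an appropriate split of the $1/10$ advantage. The paper uses the slightly different numerical choices $\varepsilon = \delta = 1/20$ in Lemma \ref{l:no_samplable_then_no_PRF} and $\gamma = 1/40$ in Lemma \ref{l:no_PRF_then_distinguisher} (landing exactly at advantage $1/40$ rather than your $1/20$), but this is immaterial; one small language slip is that you apply Lemma \ref{l:no_samplable_then_no_PRF} directly, not its contrapositive.
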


\begin{proof}
The existence of the learning circuit will follow if we can prove that the hypothesis of Lemma \ref{l:distinguisher_then_learning} is satisfied. Thus it is enough to argue that, for every $k \geq 1$ and large enough $n$, there is a (single) randomized oracle circuit $B^\mathcal{O}$ from $\mathsf{Circuit}^\mathcal{O}[2^{O(n)}]$ that distinguishes with advantage $\geq 1/40$ every pair $(G_n, \mathcal{D}_n)$ with $G_n \subseteq \mathcal{C}_n[n^k]$. In turn, this follows from Lemma \ref{l:no_PRF_then_distinguisher} for $s(n) = n^k$, $\varepsilon(n) = 1/20$, and $\gamma(n) = 1/40$ if there is no pair $(G_n, \mathcal{D}_n)$ with $G_n \subseteq \mathcal{C}_n[n^k]$ that $1/20$-fools every ensemble of deterministic oracle circuits from $\mathsf{Circuit}^\mathcal{O}[2^{O(n)}]$, for a slightly smaller constant in the latter exponent. But this is implied by the hypothesis of Theorem \ref{t:no_samp_PRFs_implies_learning} together with Lemma \ref{l:no_samplable_then_no_PRF}, instantiated with our value $t(n) \leq 2^{O(n)}$, $s(n) = n^k$,  $\varepsilon(n) = 1/20$, and $\delta(n) = 1/20$, provided that we take $c'$ sufficiently large. This completes the proof. 
\end{proof}

Dropping the samplability condition, we get the following weaker statement, which provides a converse of Proposition \ref{p:learning_implies_no_prfs} in the regime where $t(n)$ is exponential and $s(n)$ is polynomial.

\begin{corollary}[No PRFs in $\mathfrak{C}$ implies Learning $\mathfrak{C}$] \label{c:no_PRFs_implies_learning}
Let $t(n) \leq 2^{O(n)}$. If for every $k \geq 1$ and large enough $n$ there are no $(\mathsf{poly}(t(n)),1/10)$-pseudorandom function families in $\mathcal{C}_n[n^k]$, then for every $\varepsilon > 0$, $k \geq 1$, and large enough $n$, there is a randomized oracle circuit in $\mathsf{Circuit}^\mathcal{O}[2^{n^\varepsilon}]$ that $(n^{-k},1/n)$-learns every function in $\mathcal{C}_n[n^k]$.  
\end{corollary}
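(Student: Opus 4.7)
The plan is to derive Corollary~\ref{c:no_PRFs_implies_learning} as a direct consequence of Theorem~\ref{t:no_samp_PRFs_implies_learning} (whose proof has already been assembled from Lemmas~\ref{l:no_samplable_then_no_PRF}, \ref{l:no_PRF_then_distinguisher}, and \ref{l:distinguisher_then_learning}). The key observation is that the hypothesis of the corollary is strictly stronger than the hypothesis of the theorem along two axes: it forbids \emph{all} pseudorandom function families rather than only the samplable ones, and it is stated with distinguisher-size bound $\mathsf{poly}(t(n))$ rather than $t(n)$. Both relaxations are harmless in our regime where $t(n)\le 2^{O(n)}$.

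Concretely, I would first unpack Definition~\ref{d:prfs}: the assumption that no $(\mathsf{poly}(t(n)),1/10)$-PRF in $\mathcal{C}_n[n^k]$ exists is equivalent to asserting that for every pair $(G_n,\mathcal{D}_n)$ with $G_n\subseteq\mathcal{C}_n[n^k]$, some randomized oracle circuit in $\mathsf{Circuit}^\mathcal{O}[\mathsf{poly}(t(n))]$ distinguishes $\mathcal{D}_n$ from a uniformly random function with advantage strictly greater than $1/10$. In particular, this holds for every such pair that is additionally $O((\mathsf{poly}(t(n))\cdot n^k)^{c'})$-samplable, where $c'$ is the constant supplied by Theorem~\ref{t:no_samp_PRFs_implies_learning}. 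So the theorem's hypothesis is satisfied verbatim once we instantiate it with $\widetilde{t}(n)\eqdef \mathsf{poly}(t(n))$ in place of $t(n)$, using the fact that the set of samplable pairs is a subset of the set of all pairs.

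Next I would verify the parameter condition of the theorem, namely that $\widetilde{t}(n)\le 2^{O(n)}$. This is immediate since $t(n)\le 2^{O(n)}$ and a fixed polynomial of $2^{O(n)}$ is still $2^{O(n)}$ (with a larger constant in the exponent). Applying Theorem~\ref{t:no_samp_PRFs_implies_learning} with $\widetilde{t}$ then yields, for every $k\ge 1$, every $\varepsilon>0$, and all sufficiently large $n$, a randomized oracle circuit from $\mathsf{Circuit}^\mathcal{O}[2^{n^\varepsilon}]$ that learns every $f\in\mathcal{C}_n[n^k]$ to error $n^{-k}$. Standard confidence amplification for learning algorithms (cf.~\citep{KearnsVazirani:94}) upgrades this to a $(n^{-k},1/n)$-learner with only a polynomial overhead in size, which is absorbed by replacing $\varepsilon$ with a slightly smaller $\varepsilon'>0$ and redefining the ambient $\varepsilon$. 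There is no genuine obstacle in this argument beyond carefully keeping track of the two distinct $t$-parameters (the $t(n)$ appearing in the hypothesis of the corollary versus the $\widetilde{t}(n)=\mathsf{poly}(t(n))$ passed to the theorem); all the real technical content, namely the min-max-based reductions and the invocation of the black-box function generator, has already been expended in the proof of Theorem~\ref{t:no_samp_PRFs_implies_learning}.
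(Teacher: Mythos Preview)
Your proposal is correct and matches the paper's approach exactly: the paper simply states that the corollary follows from Theorem~\ref{t:no_samp_PRFs_implies_learning} by ``dropping the samplability condition,'' and your argument spells this out by noting that the corollary's hypothesis (no PRF families whatsoever) in particular rules out all samplable ones, so the theorem applies with $\widetilde{t}(n)=\mathsf{poly}(t(n))\le 2^{O(n)}$. The only additional remark you make beyond the paper's one-line justification is the confidence amplification to reach the $(n^{-k},1/n)$ specification, which is routine and harmless.
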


We observe that smaller time bounds $t(n)$ do not necessarily lead to smaller learning circuits, due to the running time of the black-box generator in Definition \ref{d:blackbox_generator} and Theorem \ref{t:blackbox_ACzerop}. However, a smaller $t(n)$ implies a weaker samplability condition in the statement of Theorem \ref{t:no_samp_PRFs_implies_learning}, which makes it stronger. A natural question is whether a more efficient distinguisher implies that larger circuits can be distinguished by subexponential size oracle circuits, in analogy to Lemma \ref{l:learningspeedup}. We mention that no simple reduction via padding seems to work, since a random function on $n$ bits mapped into a larger domain via projections is no longer a uniformly random function. Finally, the distinguishing advantage $1/10$ is arbitrary. Indeed, it can be assumed to be much lower, by following the estimates in the proof of Theorem \ref{t:blackbox_ACzerop}.

\begin{remark}
In order to prove Theorem \emph{\ref{t:no_samp_PRFs_implies_learning}}, we have made essential use of the ``efficient'' min-max theorem from \emph{\citep{MR1274423, DBLP:conf/stoc/LiptonY94}}, which guarantees the existence of near-optimal mixed strategies with simple descriptions. Unfortunately, this result does not provide an efficient algorithm to produce such strategies, which would lead to an equivalence between learning algorithms and the nonexistence of pseudorandom functions with respect to uniform computations. While there are more recent works that explore uniform versions of the min-max theorem \emph{(cf. \citep{DBLP:conf/crypto/VadhanZ13})}, they assume the existence of certain auxiliary algorithms in order to construct the near-optimal strategies, and it is unclear to us if they can be applied in the context of Theorem \emph{\ref{t:no_samp_PRFs_implies_learning}}. 
\end{remark}

\section{Lower Bounds from Nontrivial Algorithms}\label{s:proof_learning_lbs}

\begin{theorem} [Circuit lower bounds from nontrivial learning algorithms] \label{t:learning_lbs}~\\
Let $\mathfrak{C}$ be any typical circuit class. If for each $k$, $\mathfrak{C}[n^k]$ has non-trivial learning algorithms, then for each $k$, $\mathsf{BPTIME}(2^{O(n)}) \not \subseteq \mathfrak{C}[n^k]$.
\end{theorem}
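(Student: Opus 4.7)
The plan is to first apply the Speedup Lemma (Lemma~\ref{l:learningspeedup}) to promote the non-trivial learners for $\mathfrak{C}[\mathsf{poly}(n)]$ into strong $(n^{-k},1/n)$-learners for $\mathfrak{C}[n^k]$ running in time $O(2^{n^\varepsilon})$ for every $\varepsilon>0$ and every $k\geq 1$. Fix a target exponent $k$; the goal is to exhibit a language in $\mathsf{BPE}$ that lies outside $\mathfrak{C}[n^k]$. Lemma~\ref{l:iohardness} supplies an almost-everywhere hard language $H_k \in \mathsf{DSPACE}(\mathrm{poly}(n)) \subseteq \mathsf{PSPACE}$ outside $\mathfrak{C}[n^k]$. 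The argument then splits on whether $\mathsf{PSPACE} \subseteq \mathfrak{C}[\mathrm{poly}(n)]$.

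In the easy case $\mathsf{PSPACE} \not\subseteq \mathfrak{C}[\mathrm{poly}(n)]$ no learning machinery is needed. I take a witness $L \in \mathsf{DSPACE}(n^c)$ with $L \notin \mathfrak{C}[\mathrm{poly}(n)]$ and pad each $n$-bit input to length $m = n^c$ using a fixed filler symbol, obtaining $L'$. Then $L' \in \mathsf{DSPACE}(m) \subseteq \mathsf{E} \subseteq \mathsf{BPE}$, while any $\mathfrak{C}[m^k]$-circuit for $L'$ restricts to a $\mathfrak{C}[n^{ck}]$-circuit for $L$, contradicting the choice of $L$; hence $L'$ witnesses $\mathsf{BPE} \not\subseteq \mathfrak{C}[m^k]$.

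In the interesting case $\mathsf{PSPACE} \subseteq \mathfrak{C}[\mathrm{poly}(n)]$, I invoke the Trevisan--Vadhan $\mathsf{PSPACE}$-complete function $f = \{f_n\}$ \cite{DBLP:journals/cc/TrevisanV07}, which is simultaneously downward-self-reducible and random-self-reducible. Since $f \in \mathsf{PSPACE} \subseteq \mathfrak{C}[n^c]$ for some constant $c$, the sub-exponential strong learner applies to $f_n$. I will construct randomized sub-exponential-time algorithms $B_n$ deciding $f_n$ by induction on $n$: assuming $B_1,\ldots,B_{n-1}$ each have failure probability at most $2^{-n^2}$, every membership query $f_n(y)$ posed by the learner on $1^n$ is resolved by the downward-self-reducibility procedure using recursive oracle calls on shorter input lengths, implemented via $B_{n-1},\ldots,B_1$ with errors suppressed by majority voting. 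A union bound over the at most $2^{n^\varepsilon}$ learner queries ensures the learner returns an $n^{-k}$-close hypothesis $h_n$ with overwhelming probability, and random-self-reducibility then converts $h_n$ into $B_n$ computing $f_n$ correctly on every input with probability $\geq 1 - 2^{-(n+1)^2}$, preserving the induction invariant. Taking $\varepsilon$ arbitrarily small places $f$ in $\mathsf{BPSUBEXP}$. Finally, the $\mathsf{PSPACE}$-completeness of $f$ yields a polynomial-time reduction of $H_k$ to $f$ producing inputs of length $|y| \leq n^d$ for a constant $d = d(H_k)$; then $H_k$ is decided in randomized time $2^{n^{d\varepsilon}}$, and choosing $\varepsilon < 1/d$ gives $H_k \in \mathsf{BPTIME}(2^{o(n)}) \subseteq \mathsf{BPE}$ while preserving $H_k \notin \mathfrak{C}[n^k]$. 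The main technical obstacle is the Case~B induction: the compounded simulation errors across $n$ recursive levels, the union bound over the learner's queries, and the sampling error of the random-self-reduction must all be controlled simultaneously within the sub-exponential time budget, which is why the Trevisan--Vadhan function (with its tight self-reducibility parameters) is used rather than a generic $\mathsf{PSPACE}$-complete problem.
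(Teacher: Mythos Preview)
Your proposal is correct and follows essentially the same approach as the paper. The paper packages your Case~B argument as a black-box citation to Theorem~\ref{t:KKO} (the learning-to-decision reduction via the downward- and random-self-reducible $\mathsf{PSPACE}$-complete function $L^\star$ of \cite{DBLP:journals/cc/TrevisanV07}), whereas you unroll that reduction explicitly; and in Case~A the paper avoids your padding step by using directly that $L^\star \in \mathsf{DSPACE}(n) \subseteq \mathsf{E}$, so that when $L^\star \notin \mathfrak{C}[n^k]$ the witness already lives in $\mathsf{BPE}$. One small citation slip: the hard language $H_k \in \mathsf{PSPACE}$ outside $\mathfrak{C}[n^k]$ comes from Corollary~\ref{c:iockthard} (which passes through $\mathfrak{C}[n^k] \subseteq \mathsf{Circuit}[n^{c(k)}]$) rather than Lemma~\ref{l:iohardness} directly, but this does not affect the argument.
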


Our proof of Theorem \ref{t:learning_lbs} relies on previous results relating randomized learning algorithms and lower bounds. The following connection was established  in \citep{DBLP:conf/coco/KlivansKO13}, using ideas from \citep{DBLP:journals/jcss/ImpagliazzoW01,DBLP:journals/jcss/FortnowK09}) and most crucially the construction of a
downward self-reducible and random self-reducible $\mathsf{PSPACE}$-complete language in
\citep{DBLP:journals/cc/TrevisanV07}.

\begin{theorem}[Connection between learning and lower bounds \citep{DBLP:conf/coco/KlivansKO13, DBLP:journals/jcss/FortnowK09, DBLP:journals/jcss/ImpagliazzoW01}]\label{t:KKO}~\\
There is a $\mathsf{PSPACE}$-complete language $L^\star \in \mathsf{DSPACE}(n)$ and a constant $b \in \mathbb{N}$ for which the following holds. Let $\mathfrak{C}$ be any typical circuit class, and $s \colon \mathbb{N} \to \mathbb{N}$ be any function with $s(n) \geq n$. If $\mathfrak{C}[s(n)]$ is learnable to error $\leq n^{-b}$ in time $T(n) \geq n$, then at least one of the following conditions hold:
\begin{itemize}
\item[\emph{(}i\emph{)}] $L^\star \notin \mathfrak{C}[s(n)]$.
\item[\emph{(}ii\emph{)}] $L^\star \in \mathsf{BPTIME}(\mathsf{poly}(T(n)))$. 
\end{itemize}
\end{theorem}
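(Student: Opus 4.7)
My plan is to combine the Speedup Lemma (Lemma \ref{l:learningspeedup}) with Theorem \ref{t:KKO} and the folklore fact that $\mathsf{PSPACE}$ contains languages requiring arbitrary fixed-polynomial circuit size. Fix $k \geq 1$ and let $b$ denote the universal constant from Theorem \ref{t:KKO}.

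First I would apply the Low-End Speedup direction of Lemma \ref{l:learningspeedup} to the hypothesis that $\mathfrak{C}[n^k]$ has non-trivial learners: for every $\varepsilon > 0$, the class $\mathfrak{C}[n^k]$ admits an $(n^{-b}, 1/n)$-learner running in time $T(n) = O(2^{n^\varepsilon})$. Feeding this learner into Theorem \ref{t:KKO} with $s(n) = n^k$ yields the dichotomy: either (i) $L^\star \notin \mathfrak{C}[n^k]$, or (ii) $L^\star \in \mathsf{BPTIME}(\mathsf{poly}(T(n))) = \mathsf{BPTIME}(2^{O(n^\varepsilon)})$. In case (i) the conclusion is immediate, since $L^\star \in \mathsf{DSPACE}(n) \subseteq \mathsf{DTIME}(2^{O(n)}) \subseteq \mathsf{BPTIME}(2^{O(n)})$, so $L^\star$ itself witnesses $\mathsf{BPTIME}(2^{O(n)}) \not\subseteq \mathfrak{C}[n^k]$.

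For case (ii), fix an auxiliary language $L_k \in \mathsf{DSPACE}(\mathsf{poly}(n)) \subseteq \mathsf{PSPACE}$ that, at every sufficiently large input length $n$, is not computed by any $\mathsf{Circuit}[n^k]$-circuit. Such $L_k$ exists by the standard bounded-space greedy diagonalization: at each length $n$, build the first $O(n^k \log n)$ entries of the truth table one bit at a time, each bit chosen so as to halve the set of still-consistent $n^k$-size circuits, and fill the remaining entries with zeros. The $j$-th selected bit can be recomputed in $\mathsf{PSPACE}$ by recursion of depth $O(n^k \log n)$, each level simulating one size-$n^k$ circuit in polynomial space. By the $\mathsf{PSPACE}$-completeness of $L^\star$ under polynomial-time many-one reductions, $L_k$ reduces to $L^\star$ by a map sending $n$-bit inputs to $\leq n^e$-bit $L^\star$-queries for some $e = e(k)$. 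Combined with the hypothesis of case (ii), this gives $L_k \in \mathsf{BPTIME}(2^{O(n^{e\varepsilon})})$. Choosing $\varepsilon$ small enough at the outset that $e\varepsilon < 1$ (legal because $e$ depends only on $k$, and $\varepsilon$ is arbitrary in the Speedup Lemma) yields $L_k \in \mathsf{BPTIME}(2^{O(n)})$, while $L_k \notin \mathsf{Circuit}[n^k] \supseteq \mathfrak{C}[n^k]$, completing the separation.

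The main obstacle is the parameter coordination in case (ii). Because the reduction exponent $e$ from $L_k$ to $L^\star$ depends on $k$, one must choose $L_k$ before invoking the Speedup Lemma, and only then pick $\varepsilon = \varepsilon(k) < 1/e$. This inversion of the usual proof order is harmless, since Theorem \ref{t:learning_lbs} is quantified over each $k$ separately, but it must be tracked carefully. A minor side issue is that the folklore diagonalization must produce a $\mathsf{PSPACE}$-language rather than, say, an $\mathsf{EXPSPACE}$-language; this is why the greedy bit-by-bit construction is used instead of the naive ``lex-smallest bad truth table'' scheme, which would require enough space to store a full $2^n$-bit truth table.
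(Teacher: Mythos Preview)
Your proposal does not prove Theorem~\ref{t:KKO} at all: it \emph{assumes} Theorem~\ref{t:KKO} and uses it as a black box to derive the circuit lower bound of Theorem~\ref{t:learning_lbs}. In your very first sentence you write ``combine the Speedup Lemma \ldots\ with Theorem~\ref{t:KKO}'', and you then invoke the dichotomy (i)/(ii) as a given. As a proof of Theorem~\ref{t:learning_lbs} your argument is fine and in fact essentially identical to the paper's (your hand-rolled greedy diagonalization is exactly Lemma~\ref{l:iohardness}/Corollary~\ref{c:iockthard}), but as a proof of Theorem~\ref{t:KKO} it is circular.

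An actual proof of Theorem~\ref{t:KKO} has nothing to do with the Speedup Lemma or with diagonalization against circuit size. The content is a \emph{learning-to-deciding} reduction: take $L^\star$ to be the Trevisan--Vadhan $\mathsf{PSPACE}$-complete function $f_{\mathsf{TV}}$, which is both downward self-reducible and random self-reducible (Theorem~\ref{t:PSPACECompRsrDsr}). Assume $L^\star \in \mathfrak{C}[s(n)]$, so the learner applies to it. Build, by induction on $i = 1,\ldots,n$, a randomized circuit for $f_{\mathsf{TV}}$ on $i$-bit inputs: run the learner with membership-query oracle $f_{\mathsf{TV}}$ on length $i$, answering each query via the downward self-reduction together with the circuits already built for lengths $<i$; this yields a hypothesis $(1/n^b)$-close to $f_{\mathsf{TV}}$ on length $i$, which the random self-reduction then corrects to a circuit that is everywhere correct with high probability. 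The whole recursion runs in $\mathsf{poly}(T(n))$ randomized time, giving $L^\star \in \mathsf{BPTIME}(\mathsf{poly}(T(n)))$. The paper carries out exactly this argument (in a more general advice-taking form) in Lemma~\ref{l:PSPACELearnSolve}; that is the proof you were meant to supply.
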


A self-contained proof of a generalization of Theorem \ref{t:KKO} is presented in Section \ref{s:karp_lipton_collapses}. We will also need a consequence of the following diagonalization lemma.

\begin{lemma}[A nonuniform almost everywhere hierarchy for space complexity]
\label{l:iohardness}~\\
Let $S, S' \colon \mathbb{N} \rightarrow \mathbb{N}$ be space-constructible functions such that $S(n) = o(S'(n)), S(n) = \Omega(\log n)$ and $S'(n) = o(2^n)$. There is a language $L \in \mathsf{DSPACE}(S')$ such that $L \not \in \mathtt{i.o.}\mathsf{DSPACE}(S)/S$.
\end{lemma}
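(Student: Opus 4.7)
The plan is a direct diagonalization against every machine in $\mathsf{DSPACE}(S)/S$, arranging for $L\cap\{0,1\}^n$ to differ from the length-$n$ slice of every such language at every sufficiently large $n$. First I would fix a standard enumeration $M_1, M_2, \ldots$ of Turing machines, where $M_i$ has description length $O(\log i)$ and, once equipped with a hard-wired space clock, can be simulated on $n$-bit inputs by a universal machine using $S(n)+O(\log i)$ work-tape cells. Choose a slowly growing $g(n) \to \infty$, say $g(n)=\lfloor \log\log n\rfloor$, and let
$$E_n \;\eqdef\; \{(i,\alpha) : 1\le i \le g(n),\; \alpha\in \{0,1\}^{\le S(n)}\},$$
so $|E_n|\le 4\,g(n)\cdot 2^{S(n)}$ and $K(n) \eqdef \lceil\log_2|E_n|\rceil+1 = S(n)+O(\log g(n))$. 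Since $S(n)=o(2^n)$ we have $K(n)<2^n$ for large $n$, and I can take $y_1,\ldots,y_{K(n)}$ to be the first $K(n)$ strings of $\{0,1\}^n$ in lexicographic order.

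Next I would define $L$ by $L(x)=0$ whenever $x \notin \{y_1,\ldots,y_{K(n)}\}$, and $L(y_j) \eqdef b_j$ for bits $b_1,\ldots,b_{K(n)}$ produced by the following majority-vote scheme: set $S_0 = E_n$, and for $k=1,2,\ldots,K(n)$ let $b_k \in \{0,1\}$ be the value \emph{disagreeing} with the majority of $\{M_i(y_k,\alpha) : (i,\alpha)\in S_{k-1}\}$, then put $S_k = \{(i,\alpha)\in S_{k-1} : M_i(y_k,\alpha) = b_k\}$. Every step at least halves the surviving set, so $S_{K(n)} = \emptyset$; equivalently, for each $(i,\alpha)\in E_n$ there is some $k \le K(n)$ with $L(y_k) \ne M_i(y_k,\alpha)$, which forces $L\cap\{0,1\}^n \ne L(M_i,\alpha)\cap\{0,1\}^n$ as length-$n$ languages.

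The key step is implementing this computation in $O(S(n))$ work space --- and the crucial trick is to produce the $b_k$'s \emph{iteratively}, not recursively. To evaluate $L(x)$ I would first check in $O(\log n)$ space whether $x = y_j$ for some $j \le K(n)$; if not, output $0$. Otherwise build $b_1,b_2,\ldots,b_j$ one at a time and return $b_j$. With the stored prefix $b_1,\ldots,b_{k-1}$ occupying at most $K(n) = O(S(n))$ bits, the bit $b_k$ is computed by enumerating each pair $(i,\alpha)\in E_n$ (its index uses $O(S(n))$ bits), testing whether the pair lies in $S_{k-1}$ by re-simulating $M_i$ on $y_1,\ldots,y_{k-1}$ and comparing against the stored bits (each simulation takes $O(S(n))$ space, reused across $\ell$), and updating an $O(S(n))$-bit majority counter. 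All of this reuses the same workspace as $k$ advances, giving total work-tape usage $O(S(n)) = o(S'(n))$, and hence $L \in \mathsf{DSPACE}(S')$ by space-constructibility of $S'$.

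Finally, for the almost-everywhere guarantee, fix any $L' \in \mathsf{DSPACE}(S)/S$ realized by some $M_{i_0}$ with advice sequence $\{\alpha_n\}$ satisfying $|\alpha_n| \le S(n)$. For every $n$ with $g(n)\ge i_0$ we have $(i_0,\alpha_n) \in E_n$, so by construction $L\cap\{0,1\}^n \ne L'\cap\{0,1\}^n$; thus $L$ and $L'$ agree on only finitely many lengths, giving $L \notin \mathtt{i.o.}\mathsf{DSPACE}(S)/S$. The main obstacle will be the regime $S(n) \gg n$: then $|E_n|$ substantially exceeds $2^n$, so a naive per-pair diagonalization (one distinct witnessing input per target pair) is impossible. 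The majority-vote scheme bypasses this because only $\log|E_n| = O(S(n))$ input positions are needed, and the iterative (rather than nested-recursive) evaluation of the $b_k$'s is precisely what keeps the space bound at $O(S(n))$ rather than the $O(S(n)^2)$ that direct recursion would produce.
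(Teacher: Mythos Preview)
Your proposal is correct and is essentially the same argument as the paper's proof: both diagonalize against all machine--advice pairs simultaneously using a minority/majority-vote halving scheme on the first $O(S(n))$ inputs, and both implement the scheme iteratively (storing the prefix of committed bits and recomputing survivor status on the fly) to keep the total work space at $O(S(n))=o(S'(n))$. The only cosmetic differences are your choice of $g(n)=\lfloor\log\log n\rfloor$ versus the paper's $\log n$ for the number of machines considered, and your explicit inclusion of all advice lengths $\le S(n)$; neither affects the structure of the argument.
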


\begin{proof}
This is a folklore argument. We define a space-bounded Turing machine $M$ operating in space $S'$ such that $L(M) \not \in \mathtt{i.o.}\mathsf{DSPACE}(S)/S$. On inputs of length $n$, $M$ uses the space-constructibility of $S'$ to compute $S'(n)$ in unary using space $O(S'(n))$. It marks out $S'(n)$ cells on each of its tapes, and if at any point in its computation, it reads an unmarked cell, it halts and rejects. Thus, on any input of length $n$, $M$ uses space $O(S'(n))$. $M$ also computes and stores $S(n)$ on one of its tapes.

The high-level intuition is that $M$ diagonalizes against machine $M_i$ with advice $z$, for each $1 \leq i \leq \log n$ and advice $z \in \{0,1\}^{S(n)}$. In particular, for any fixed $i$ and large enough $n$, $M$ diagonalizes against $M_i$ with any advice $z \in \{0,1\}^{S(n)}$, and hence $L(M)$ satisfies the conclusion of the Lemma.

 By a counting argument, there are at most $\log n \cdot 2^{S(n)}$ truth tables of Boolean functions $f$ on $n$ bits such that $f$ is computed by a machine $M_i$ with $1 \leq i \leq \log n$ operating in space $S(n)$ and using $S(n)$ bits of advice. Thus, since $S(n) = o(2^n)$, for large enough $n$, by the pigeon-hole principle there exists a Boolean function $f'\colon \{0,1\}^n \to \{0,1\}$ which is $0$ on all but the first $\log n + S(n)$ inputs of length $n$, such that $f'$ is not computed by machine $M_i$ with advice $z$ for any $i$ with $1 \leq i \leq \log n$ and $z \in \{0,1\}^{S(n)}$. 

$M$ computes such a function iteratively as follows. It processes the inputs of length $n$ in lexicographic order. At stage $i+1$, where $i \geq 0$, $M$ has stored a binary string $y_i$ of length $i$ representing the values of $f'$ on the first $i$ inputs of length $n$, and $M$ is trying to determine $f'$ on the $(i+1)$-th input of length $n$. For each machine $M_i$, $1 \leq i \leq \log n$, and each advice string $z$ for $M_i$ of length $S(n)$, by simulating those $M_i$'s with advice $z$ which do not use space more than $S(n)$ on any of the first $i$ inputs, $M$ determines if the truth table of $M_i$ with advice $z$ is consistent with $y_i$ on the first $i$ inputs. Call such a pair $(i,z)$ a consistent machine-advice pair at stage $i+1$. $M$ sets $f'$ to $0$ for the $(i+1)$-th string if a minority of consistent machine-advice pairs halt with $0$ on the $(i + 1)$-th string, and to $1$ otherwise. Determining whether a minority of consistent machine-advice pairs halt with $0$ on the $(i+1)$-th string can be done by merely keeping a count of how many consistent machine-advice pairs halt with $0$, and how many halt with $1$, which only requires space $O(S(n))$. Note that using the minority value cuts down the number of consistent machine-advice pairs for the next stage by at least a factor of half. This implies that at stage $\log n + S(n)$, there are no consistent machine-advice pairs left, and hence $M$ has successfully diagonalized. It is not hard to see that the overall simulation can be carried out in space $O(S(n))$, using the fact that $S(n) = \Omega(\log n)$. 
\end{proof}

\begin{corollary} [Diagonalizing in uniform space against non-uniform circuits]
\label{c:iockthard}~\\Let $S_1, S_2 \colon \mathbb{N} \rightarrow \mathbb{N}$ be space-constructible functions such that $S_2(n)^2 = o(S_1(n))$, $S_2(n) = \Omega(\log n)$ and $S_1(n) = o(2^n)$. There is a language $L \in \mathsf{DSPACE}(S_1)$ such that $L \not \in \mathtt{i.o.}\mathsf{Circuit}[S_2]$. In particular, for each $k$, there is a language $L_k \in \mathsf{PSPACE}$ such that $L_k \not \in \mathsf{Circuit}[n^k]$.
\end{corollary}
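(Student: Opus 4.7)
The plan is to reduce the corollary to Lemma \ref{l:iohardness} by showing that every language in $\mathtt{i.o.}\mathsf{Circuit}[S_2]$ can be computed in deterministic space $O(S_2 \log S_2)$ with advice of length $O(S_2 \log S_2)$, where the advice encodes the sequence of circuits. Then, applying the lemma with a suitable choice of the two space parameters will yield a language in $\mathsf{DSPACE}(S_1)$ that escapes $\mathtt{i.o.}\mathsf{Circuit}[S_2]$.

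First I would record the simulation step: by the counting argument already used in Lemma \ref{l:number_circuits}, a Boolean circuit of size at most $S_2(n)$ on $n$ inputs admits a description of length $O(S_2(n) \log S_2(n))$. Given that description as non-uniform advice, a Turing machine can evaluate the circuit on any $n$-bit input in workspace $O(S_2(n))$ by storing the values of the (at most $S_2(n)$) gates and using $O(\log S_2(n))$ indexing space to traverse the description. Hence
\[
\mathtt{i.o.}\mathsf{Circuit}[S_2] \;\subseteq\; \mathtt{i.o.}\mathsf{DSPACE}(S)/S, \qquad \text{where } S(n) = c\, S_2(n)\log S_2(n)
\]
for a suitable constant $c$, because a single advice string per input length encodes the relevant circuit and the total advice plus workspace bound is dominated by $c\, S_2 \log S_2$.

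Next I would verify the hypotheses of Lemma \ref{l:iohardness} with this $S$ and with $S' \eqdef S_1$. We have $S(n) = \Omega(\log n)$ since $S_2(n) = \Omega(\log n)$; we have $S'(n) = o(2^n)$ by assumption; and we need $S(n) = o(S'(n))$, i.e.\ $S_2(n) \log S_2(n) = o(S_1(n))$. This follows from the assumption $S_2(n)^2 = o(S_1(n))$ together with $\log S_2(n) \leq S_2(n)$ (valid since $S_2(n) \geq \log n \geq 2$ eventually). The lemma then supplies a language $L \in \mathsf{DSPACE}(S_1)$ with $L \notin \mathtt{i.o.}\mathsf{DSPACE}(S)/S$, and by the inclusion above this $L$ is not in $\mathtt{i.o.}\mathsf{Circuit}[S_2]$. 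The ``in particular'' clause follows by setting $S_2(n) = n^k$ and $S_1(n) = n^{2k+1}$, so that $S_2^2 = n^{2k} = o(n^{2k+1}) = o(S_1)$ and $L_k \in \mathsf{DSPACE}(n^{2k+1}) \subseteq \mathsf{PSPACE}$ while $L_k \notin \mathsf{Circuit}[n^k]$ (since $\mathsf{Circuit}[n^k] \subseteq \mathtt{i.o.}\mathsf{Circuit}[n^k]$).

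The only delicate point is the circuit-evaluation step: I need the simulation to fit within workspace $O(S_2 \log S_2)$ while the advice also fits in $O(S_2 \log S_2)$ bits, so that a single parameter $S$ controls both quantities as required by Lemma \ref{l:iohardness}. This is where the convention that circuit size counts wires is used, since the explicit encoding of wires, gate types, and gate values all then fit within the same order $S_2 \log S_2$ bound. Everything else is a routine check of parameters, so I do not foresee a real obstacle beyond bookkeeping.
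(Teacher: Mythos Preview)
Your proposal is correct and follows essentially the same route as the paper: reduce to Lemma~\ref{l:iohardness} via a space-with-advice simulation of small circuits. The paper invokes the looser bound $\mathsf{Circuit}[S_2] \subseteq \mathsf{DSPACE}(S_2^2)/S_2^2$ rather than your $O(S_2 \log S_2)$, and in fact explicitly remarks that a tighter simulation (such as yours) holds but is not needed; both versions go through under the hypothesis $S_2^2 = o(S_1)$.
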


\begin{proof}
Corollary \ref{c:iockthard} follows from Lemma \ref{l:iohardness} using the fact that $\mathsf{Circuit}[S] \subseteq \mathsf{DSPACE}(S^2)/S^2$. 
\end{proof}

In fact, a tighter simulation holds, and therefore a tighter separation in Corollary \ref{c:iockthard}, but we will not need this for our purposes. We are now ready to prove Theorem \ref{t:learning_lbs}.

\begin{proof}[Proof of Theorem \emph{\ref{t:learning_lbs}}]
Let $\mathfrak{C}$ be a typical circuit class. By assumption, $\mathfrak{C}[n^k]$ has a non-trivial learner for each $k > 0$. Since $\mathfrak{C}$ is typical, we can use Lemma \ref{l:learningspeedup} to conclude that for each $\varepsilon > 0$ and for each $k > 0$, $\mathfrak{C}[n^k]$ is strongly learnable in time $2^{n^{\varepsilon}}$. 

Let $L^{\star}$ be the $\mathsf{PSPACE}$-complete language in the statement of Theorem \ref{t:KKO}. Using Theorem \ref{t:KKO} and the conclusion of the previous paragraph, we have that at least one of the following is true: (1) For all $k$, $L^{\star} \not \in \mathfrak{C}[n^k]$, or (2) For all $\varepsilon > 0$, $L^{\star} \in \mathsf{BPTIME}(2^{n^{\varepsilon}})$.

In case (1), since $L^{\star} \in \mathsf{DSPACE}(n) \subseteq \mathsf{DTIME}(2^{O(n)})$, we have that for each $k > 0$, $\mathsf{DTIME}(2^{O(n)}) \not \subseteq \mathfrak{C}[n^k]$, and hence also $\mathsf{BPTIME}(2^{O(n)}) \not \subseteq \mathfrak{C}[n^k]$.

In case (2), we have that $L^{\star} \in \mathsf{BPTIME}(2^{n^{\varepsilon}})$ for every $\varepsilon > 0$. Since $L^{\star}$ is $\mathsf{PSPACE}$-complete, this implies that the language $L_k$ in the statement of Corollary \ref{c:iockthard} is also in
$\mathsf{BPTIME}(2^{n^{\varepsilon}})$, for every fixed $\varepsilon > 0$ and $k \in \mathbb{N}$. (Here the polynomial blowup of instance size in the reduction from $L_k$ to $L^{\star}$ is taken care of by the universal quantification over $\varepsilon$.) In particular, we have $L_k \in \mathsf{BPTIME}(2^n)$, for every $k$. Since for any typical circuit class we have $\mathfrak{C}[n^k] \subseteq \mathsf{Circuit}[n^c]$ for a large enough $c = c(k)$, there is a language $L_c \in \mathsf{BPTIME}[2^n]$ such that $L_c \notin \mathfrak{C}[n^k]$. This establishes the desired result.
\end{proof}

We mention for completeness that the same approach yields a trade-off involving the running time of the learning algorithm and its accuracy in the hypothesis of Theorem \ref{t:learning_lbs}.

\begin{theorem}[Trade-off between error and running time] \label{t:tradeoff_learning_lbs}~\\
Let $\mathfrak{C}$ be a typical circuit class, and $\gamma \colon \mathbb{N} \to (0,1/2] \,\cap\, \mathbb{Q}$ be a polynomial time computable function. If for each $k$, $\mathfrak{C}[n^k]$ can be learned with advantage at least $\gamma(n)$ in time $\gamma(n)^2 \cdot 2^n/n^{\omega(1)}$, then for each $k$, $\mathsf{BPTIME}[2^{O(n)}] \nsubseteq \mathfrak{C}[n^k]$.
\end{theorem}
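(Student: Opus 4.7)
My plan is to imitate the proof of Theorem \ref{t:learning_lbs} almost verbatim, modifying only the learning-to-distinguisher conversion to absorb the $\gamma(n)^{2}$ speedup in the running-time budget. The main new ingredient is a trade-off version of Lemma \ref{l:learningdist}: a $(1/2-\gamma(n),1/n)$-learner running in time $\gamma(n)^{2}\cdot 2^{n}/n^{\omega(1)}$ should be converted into a complexity distinguisher for $\mathfrak{C}[n^{k}]$ that still runs in time $2^{n}/n^{\omega(1)}$. Once this is established, Lemma \ref{l:distfastlearning} delivers sub-exponential time strong learners for $\mathfrak{C}[\mathsf{poly}(n)]$, and the remaining combination of Theorem \ref{t:KKO} with Corollary \ref{c:iockthard} is identical to that of Theorem \ref{t:learning_lbs}.

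First I would revisit Lemma \ref{l:learningdist}. The original construction runs the learner, estimates the agreement of the output hypothesis $h$ with the oracle $f$ on $n^{O(k)}$ uniformly random inputs, and rejects when the empirical agreement is large. In the trade-off regime the required precision for the estimate is $\gamma(n)/4$ rather than $1/n^{2k}$, so by Chernoff (Lemma \ref{l:concentration}) the query count becomes $O(n/\gamma(n)^{2})$. The total running time of the distinguisher is then
\[
\gamma(n)^{2}\cdot 2^{n}/n^{\omega(1)} \;+\; \operatorname{poly}(n)/\gamma(n)^{2},
\]
and the two $\gamma(n)^{2}$ contributions balance out so that both terms are bounded by $2^{n}/n^{\omega(1)}$ whenever the hypothesis of the theorem is nontrivial (i.e.\ whenever the assumed learning time is at least $1$, equivalently $\gamma(n)^{2}\geq n^{\omega(1)}/2^{n}$). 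Correctness on random functions uses Lemma \ref{l:random_functions} with advantage parameter $\gamma(n)/2$ and circuit size $\gamma(n)^{2}\cdot 2^{n}/n^{\omega(1)}$; the factor of $\gamma(n)^{2}$ again cancels, so a hypothesis of that size matches a random function with the learner's accuracy only with exponentially small probability.

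With this trade-off distinguisher in hand, the remainder of the proof is exactly as in Theorem \ref{t:learning_lbs}. Lemma \ref{l:distfastlearning} upgrades the distinguisher into strong sub-exponential time learners for $\mathfrak{C}[\mathsf{poly}(n)]$, and Theorem \ref{t:KKO} applied to the $\mathsf{PSPACE}$-complete language $L^{\star}$ splits into two cases. Either $L^{\star}\notin\mathfrak{C}[n^{k}]$, in which case, since $L^{\star}\in\mathsf{DSPACE}(n)\subseteq\mathsf{BPTIME}[2^{O(n)}]$, we are done; or $L^{\star}\in\mathsf{BPTIME}[2^{n^{\varepsilon}}]$ for every $\varepsilon>0$, in which case the same argument as in Theorem \ref{t:learning_lbs} places the $\mathsf{PSPACE}$-language $L_{c}\notin\mathsf{Circuit}[n^{c}]$ from Corollary \ref{c:iockthard} (for $c=c(k)$ chosen so that $\mathfrak{C}[n^{k}]\subseteq\mathsf{Circuit}[n^{c}]$) inside $\mathsf{BPTIME}[2^{n}]$, again giving $\mathsf{BPTIME}[2^{O(n)}]\nsubseteq\mathfrak{C}[n^{k}]$.

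I do not anticipate any serious obstacle. The only delicate point is the parameter accounting in the trade-off version of Lemma \ref{l:learningdist}: one must verify that the $O(1/\gamma(n)^{2})$ sampling overhead for agreement estimation is precisely compensated by the $\gamma(n)^{2}$ slack in the assumed learning time, both in the overall running time bound and inside the application of Lemma \ref{l:random_functions} that controls random oracles. Once this bookkeeping is in place, no new ideas beyond those already developed for Theorem \ref{t:learning_lbs} are required.
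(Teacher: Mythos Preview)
Your proposal is correct and matches the paper's own proof, which is stated only as a sketch: generalize Lemma~\ref{l:learningdist} using the quantitative bound from Lemma~\ref{l:random_functions} (so that the $\gamma(n)^2$ in the running-time budget cancels against the $\delta^2$ in the concentration bound), and then rerun the argument of Theorem~\ref{t:learning_lbs} verbatim via Lemma~\ref{l:distfastlearning}, Theorem~\ref{t:KKO}, and Corollary~\ref{c:iockthard}. The only point worth tightening is your running-time accounting for the agreement-estimation step: the cost is (number of samples)$\times$(hypothesis evaluation time) $= O(n/\gamma(n)^2)\cdot O(\gamma(n)^2\cdot 2^n/n^{\omega(1)}) = 2^n/n^{\omega(1)}$, not merely $\mathsf{poly}(n)/\gamma(n)^2$, but the final bound is unaffected.
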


\begin{proof}(Sketch) The proof is entirely analogous to the argument in Theorem \ref{t:learning_lbs}. It is enough to observe that such learning algorithms yield the complexity distinguishers required in Lemma \ref{l:learningspeedup} via a natural generalization of the proof of Lemma \ref{l:learningdist}. The quantitative trade-off between accuracy and running time is a consequence of Lemma \ref{l:random_functions}.
\end{proof}

\begin{remark}
Observe that as the advantage $\gamma(n)$ approaches $2^{-n/2}$ from above, the running time required in Theorem \emph{\ref{t:tradeoff_learning_lbs}} becomes meaningless. This quantitative connection between $\gamma(n)$ and the running time is not entirely unexpected. On the one hand, it is a consequence of the concentration bound, which is essentially optimal. But also note that every function $g \colon \{0,1\}^n \to \{0,1\}$ can be approximated with advantage $\geq 2^{-n/2}$ by a parity function \emph{(}or its negation\emph{)}, and that heavy fourier coefficients corresponding to such parity functions can be found using membership queries by the Goldreich-Levin Algorithm \emph{(}see e.g. \emph{\citep{DBLP:books/daglib/0033652}}\emph{)}. 
\end{remark}

We can expand the scope of application of Theorem \ref{t:learning_lbs}, using a win-win argument. The more general result below applies to subclasses of Boolean circuits satisfying the very weak requirement that they are closed under projections, rather than just to the more specialized ``typical'' classes.

\begin{theorem} [Lower bounds from non-trivial learning algorithms for subclasses of circuits]
\label{t:learning_lbs_general}~\\
Let $\mathfrak{C}$ be any subclass of Boolean circuits closed under projections. If for each $k$, $\mathfrak{C}[n^k]$ has non-trivial learning algorithms, then for each $k$, $\mathsf{BPTIME}(2^{O(n)}) \not \subseteq \mathfrak{C}[n^k]$.
\end{theorem}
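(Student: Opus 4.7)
The plan is to establish Theorem~\ref{t:learning_lbs_general} via a win-win argument that reduces the general case to the already-proved Theorem~\ref{t:learning_lbs} for the typical class $\mathsf{Circuit}$. Fix $k \geq 1$ arbitrary; the goal is $\mathsf{BPTIME}(2^{O(n)}) \not\subseteq \mathfrak{C}[n^k]$. \emph{Case A:} Suppose $\mathsf{Circuit}[\mathsf{poly}(n)]$ has non-trivial learners for every polynomial size bound. Since $\mathsf{Circuit}$ is typical, Theorem~\ref{t:learning_lbs} applies and yields $\mathsf{BPTIME}(2^{O(n)}) \not\subseteq \mathsf{Circuit}[n^k]$. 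Since $\mathfrak{C}$ is by hypothesis a subclass of Boolean circuits, $\mathfrak{C}[n^k] \subseteq \mathsf{Circuit}[n^k]$, and the desired lower bound follows immediately.

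\emph{Case B:} Otherwise, fix $j$ such that $\mathsf{Circuit}[n^j]$ admits no non-trivial learner, and assume for contradiction that $\mathsf{BPTIME}(2^{O(n)}) \subseteq \mathfrak{C}[n^k]$. Consider the circuit evaluation function $\mathsf{EVAL}_j(\langle C\rangle, x) = C(x)$, where $\langle C\rangle$ encodes a $\mathsf{Circuit}[n^j]$-circuit on $n$ variables and $x \in \{0,1\}^n$. Its input has length $N = O(n^j \log n)$, and since $\mathsf{EVAL}_j \in \mathsf{P}$ we have $\mathsf{EVAL}_j \in \mathsf{BPTIME}(2^{O(N)})$; the contradiction hypothesis therefore places $\mathsf{EVAL}_j \in \mathfrak{C}[N^k]$. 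Now for any $f \in \mathsf{Circuit}[n^j]$ realized by some circuit $C_f$, the function $x \mapsto f(x) = \mathsf{EVAL}_j(\langle C_f\rangle, x)$ is obtained from $\mathsf{EVAL}_j$ by hard-wiring the description coordinates to the bits of $\langle C_f\rangle$, which is a projection. Closure of $\mathfrak{C}$ under projections then gives $f \in \mathfrak{C}[n^{k'}]$ on $n$ inputs, for some $k' = k'(j,k)$ independent of $f$. Applying the hypothesized non-trivial learner for $\mathfrak{C}[n^{k'}]$ with oracle $f$ produces a non-trivial learner for $\mathsf{Circuit}[n^j]$, contradicting the choice of $j$.

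The main step requiring care is tracking the circuit-size blowup under projections in Case B: since size is measured by number of wires, hard-wiring input coordinates only removes wires, so the projected $\mathfrak{C}$-circuit on $n$ inputs inherits the size bound $N^k = O(n^{jk} \log^k n) \leq n^{jk+1}$ for large $n$, so $k' = jk+1$ works uniformly for every $f \in \mathsf{Circuit}[n^j]$. Everything else is a routine invocation of Theorem~\ref{t:learning_lbs} applied to $\mathsf{Circuit}$ and of the closure hypothesis on $\mathfrak{C}$; in particular, no analogue of the Speedup Lemma is needed for $\mathfrak{C}$ itself, which is precisely the reason the argument extends beyond typical classes.
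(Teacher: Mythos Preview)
Your proof is correct and follows essentially the same approach as the paper. The paper splits on whether the Circuit Value Problem lies in $\mathfrak{C}[\mathsf{poly}]$: if not, CVP itself witnesses the lower bound; if so, completeness of CVP under projections transfers the $\mathfrak{C}$-learner to a $\mathsf{Circuit}$-learner, and Theorem~\ref{t:learning_lbs} applies. Your case split is the contrapositive reorganization---you branch on whether $\mathsf{Circuit}[\mathsf{poly}]$ already has non-trivial learners, and in the negative case recover CVP (your $\mathsf{EVAL}_j$) inside $\mathfrak{C}$ from the contradiction hypothesis $\mathsf{BPE}\subseteq\mathfrak{C}[n^k]$ rather than as a standing case assumption---but the underlying mechanism (projection-completeness of circuit evaluation bridging $\mathfrak{C}$ and $\mathsf{Circuit}$, then invoking Theorem~\ref{t:learning_lbs}) is identical.
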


\begin{proof}
Consider the Circuit Value Problem (CVP), which is complete for $\mathsf{Circuit}[\mathsf{poly}]$ under polynomial size projections. Either CVP is in $\mathfrak{C}[n^c]$ for some fixed $c$, or it is not. If it is not, then we have the desired lower bound for CVP and hence also for the class $\mathsf{BPTIME}(2^{O(n)})$, which contains this problem. If CVP is in $\mathfrak{C}[n^c]$, then since CVP is closed under poly-size projections, we have by completeness and the assumption of the theorem that for each $k$, $\mathsf{Circuit}[n^k]$ has non-trivial learning algorithms. Now applying Theorem \ref{t:learning_lbs}, we have that for each $k$, $\mathsf{BPTIME}(2^{O(n)}) \not \subseteq \mathsf{Circuit}[n^k]$, which implies that $\mathsf{BPTIME}(2^{O(n)}) \not \subseteq \mathfrak{C}[n^k]$, since $\mathfrak{C}$ is a subclass of Boolean circuits.
\end{proof}

\begin{remark}
Observe that it is possible to instantiate Theorem \emph{\ref{t:learning_lbs_general}} for very particular classes such as $\mathsf{AND} \circ \mathsf{OR} \circ \mathsf{THR}$ circuits, and that the lower bound holds for exactly the same circuit class. In particular, there is no circuit depth blow-up. 
\end{remark}

We get an improved lower bound consequence for the circuit class $\mathsf{ACC}^0$, but under the assumption that subexponential-size circuits are non-trivially learnable. (Recall that there are satisfiability algorithms for such circuits with non-trivial running time \citep{DBLP:journals/jacm/Williams14}.)

\begin{theorem} [Improved lower bounds from non-trivial learning algorithms for $\mathsf{ACC}^0$]
\label{t:learning_lbs_ACC}~\\
If for every depth $d \in \mathbb{N}$ and modulo $m \in \mathbb{N}$ there is $\varepsilon > 0$ such that $\mathsf{ACC}^0_{d,m}[2^{n^{\varepsilon}}]$ has non-trivial learning algorithms, then $\mathsf{REXP} \not \subseteq \mathsf{ACC}^0[\mathsf{poly}]$.
\end{theorem}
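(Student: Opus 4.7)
The plan is to follow the proof of Theorem~\ref{t:learning_lbs} but invoke the \emph{high-end} half of the Speedup Lemma and then combine the resulting complexity collapse with Williams' lower bound $\mathsf{NEXP} \not\subseteq \mathsf{ACC}^0[\mathsf{poly}]$~\citep{DBLP:journals/jacm/Williams14}. Fix any depth $d$ and modulus $m$, and let $\varepsilon = \varepsilon(d,m)$ be the exponent supplied by the hypothesis. By Lemma~\ref{l:learningspeedup}, the class $\mathsf{ACC}^0_{d,m}[\mathsf{poly}(n)]$ has strong learners running in quasi-polynomial time $2^{(\log n)^{O(1)}}$. Applying Theorem~\ref{t:KKO} to $\mathsf{ACC}^0_{d,m}$ with $s(n)=n^k$ and $T(n)=2^{(\log n)^{O(1)}}$ yields the dichotomy: either $L^{\star} \notin \mathsf{ACC}^0_{d,m}[n^k]$, or $L^{\star} \in \mathsf{BPQP}$.

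If the first alternative holds for every $d, m, k$, then $L^{\star} \notin \mathsf{ACC}^0[\mathsf{poly}]$, and since $L^{\star} \in \mathsf{DSPACE}(n) \subseteq \mathsf{EXP} \subseteq \mathsf{REXP}$ we immediately obtain $\mathsf{REXP} \not\subseteq \mathsf{ACC}^0[\mathsf{poly}]$. Otherwise, some choice of $d, m, k$ forces $L^{\star} \in \mathsf{BPQP}$, and $\mathsf{PSPACE}$-completeness of $L^{\star}$ under polynomial-time reductions gives the global collapse $\mathsf{PSPACE} \subseteq \mathsf{BPQP}$.

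In this second case the task is to upgrade the two-sided collapse to the one-sided inclusion $\mathsf{NEXP} \subseteq \mathsf{REXP}$. Since $\mathsf{SAT} \in \mathsf{NP} \subseteq \mathsf{PSPACE} \subseteq \mathsf{BPQP}$, I would amplify the $\mathsf{BPQP}$ algorithm for $\mathsf{SAT}$ to error $1/(10n)$ (at a $\log n$ overhead) and run the standard prefix-search self-reduction: make one amplified query per variable to walk down the self-reducibility tree, and then \emph{explicitly} verify the resulting candidate assignment against the input formula. On unsatisfiable inputs no assignment passes the verification, so the resulting algorithm has one-sided error, placing $\mathsf{NP} \subseteq \mathsf{RQP}$. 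The usual padding argument (pad an $\mathsf{NTIME}(2^{n^c})$ instance to length $N = 2^{n^c}$ and decide it in $\mathsf{RTIME}(2^{(\log N)^{O(1)}}) = \mathsf{RTIME}(2^{n^{O(1)}})$) then lifts this to $\mathsf{NEXP} \subseteq \mathsf{REXP}$. Combined with Williams' $\mathsf{NEXP} \not\subseteq \mathsf{ACC}^0[\mathsf{poly}]$, this yields $\mathsf{REXP} \not\subseteq \mathsf{ACC}^0[\mathsf{poly}]$, as desired.

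The only real obstacle is the Case~2 derivation of $\mathsf{NEXP} \subseteq \mathsf{REXP}$ from $\mathsf{PSPACE} \subseteq \mathsf{BPQP}$; the remaining steps are direct applications of machinery already developed in the paper. A minor point to check is that the Speedup Lemma and Theorem~\ref{t:KKO} are formally stated for ``typical'' classes, whereas we apply them to the depth-and-modulus-restricted subclasses $\mathsf{ACC}^0_{d,m}$; their proofs go through for these subclasses since they only rely on the availability of the $\mathsf{AC}^0[p]$-style black-box generators of Theorem~\ref{t:blackbox_ACzerop}, which holds uniformly in $d$ and $m$.
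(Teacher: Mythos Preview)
Your proposal is correct and follows essentially the same route as the paper's proof: apply the High-End Speedup Lemma to obtain quasi-polynomial time strong learners, invoke Theorem~\ref{t:KKO} to get the dichotomy, and in the collapse case derive $\mathsf{NP}\subseteq\mathsf{RQP}$ from $\mathsf{PSPACE}\subseteq\mathsf{BPQP}$ via the search-to-decision self-reduction for $\mathsf{SAT}$, pad to $\mathsf{NEXP}=\mathsf{REXP}$, and conclude using Williams' lower bound. Your explicit treatment of the per-$(d,m)$ application of the Speedup Lemma is a detail the paper leaves implicit, but otherwise the arguments coincide.
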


\begin{proof}
Under the assumption on learnability, using Lemma \ref{l:learningspeedup}, we have that for each $k > 0$, $\mathsf{ACC}^0[n^k]$ has strong learners running in time $2^{\mathsf{polylog}(n)}$. Now applying Theorem \ref{t:KKO}, we have that at least one of the following is true for the $\mathsf{PSPACE}$-complete language $L^{\star}$ in the statement of the theorem: (1) $L^{\star} \not \in \mathsf{ACC}^0[n^k]$ for any $k$, or (2) $L^{\star} \in \mathsf{BPQP}$, where $\mathsf{BPQP}$ is bounded error probabilistic quasi-polynomial time.

In case (1), we have that $L^{\star} \not \in \mathsf{ACC}^0[\mathsf{poly}]$, and are done as in the proof of Theorem \ref{t:learning_lbs}.

In case (2), by $\mathsf{PSPACE}$-completeness of $L^{\star}$, we have that $\mathsf{PSPACE} \subseteq \mathsf{BPQP}$. This implies that $\mathsf{NP} \subseteq \mathsf{BPQP}$, and hence that $\mathsf{NP} \subseteq \mathsf{RQP}$, where $\mathsf{RQP}$ is probabilistic quasi-polynomial time with one-sided error. The second implication follows using downward self-reducibility to find a witness for SAT given the assumption that SAT is in $\mathsf{BPQP}$, thus eliminating error on negative instances. Now $\mathsf{NP} \subseteq \mathsf{RQP}$ implies $\mathsf{NEXP} = \mathsf{REXP}$, using a standard translation argument. Williams showed that $\mathsf{NEXP} \not \subseteq \mathsf{ACC}^0[\mathsf{poly}]$, and so it follows that $\mathsf{REXP} \not \subseteq \mathsf{ACC}^0[\mathsf{poly}]$, as desired.
\end{proof}

More generally, the same argument combined with the connection between non-trivial satisfiability algorithms and circuit lower bounds \citep{DBLP:journals/jacm/Williams14} imply the following result.

\begin{corollary}[Lower bounds from learning and satisfiability]\label{c:combination}
Let $\mathfrak{C}$ be any typical circuit class. Assume that for each $k$, $\mathfrak{C}[n^k]$ admits a non-trivial satisfiability algorithm, and that for some $\varepsilon > 0$, $\mathfrak{C}[2^{n^\varepsilon}]$ admits a non-trivial learning algorithm. Then $\mathsf{REXP} \nsubseteq \mathfrak{C}[\mathsf{poly}]$.
\end{corollary}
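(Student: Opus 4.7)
The plan is to mirror the proof of Theorem \ref{t:learning_lbs_ACC} almost verbatim, replacing the appeal to Williams' specific $\mathsf{ACC}^0$ lower bound at the very end with a generic application of the algorithmic method to the hypothesized non-trivial satisfiability algorithm for $\mathfrak{C}$. First I would invoke the High-End Speedup part of Lemma \ref{l:learningspeedup}: the hypothesis that $\mathfrak{C}[2^{n^\varepsilon}]$ admits a non-trivial learner for some $\varepsilon > 0$ upgrades to the statement that $\mathfrak{C}[\mathsf{poly}(n)]$ has strong learners running in time $T(n) = 2^{\log(n)^{O(1)}}$, i.e., quasi-polynomial time.

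Next, for each fixed $k \in \mathbb{N}$, I would apply Theorem \ref{t:KKO} with $s(n) = n^k$ and the $T(n)$ above to the $\mathsf{PSPACE}$-complete language $L^\star$. This produces a dichotomy. In the first case, $L^\star \notin \mathfrak{C}[n^k]$ for every $k$, so $L^\star \notin \mathfrak{C}[\mathsf{poly}]$; since $L^\star \in \mathsf{DSPACE}(n) \subseteq \mathsf{EXP} \subseteq \mathsf{REXP}$, we are immediately done. In the second case, $L^\star \in \mathsf{BPTIME}(\mathsf{poly}(T(n))) = \mathsf{BPQP}$, and by $\mathsf{PSPACE}$-completeness of $L^\star$ we conclude $\mathsf{PSPACE} \subseteq \mathsf{BPQP}$, which in particular gives $\mathsf{NP} \subseteq \mathsf{BPQP}$. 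Using the standard downward self-reducibility of SAT (as in the proof of Theorem \ref{t:learning_lbs_ACC}) one eliminates error on negative instances, obtaining $\mathsf{NP} \subseteq \mathsf{RQP}$. A routine padding translation then yields $\mathsf{NEXP} \subseteq \mathsf{REXP}$; the reverse inclusion is trivial, so $\mathsf{NEXP} = \mathsf{REXP}$.

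Finally, I would use the second hypothesis of the corollary: by the algorithmic method of Williams \cite{DBLP:journals/jacm/Williams14}, the assumption that for every $k$ the class $\mathfrak{C}[n^k]$ admits a non-trivial satisfiability algorithm already implies $\mathsf{NEXP} \nsubseteq \mathfrak{C}[\mathsf{poly}]$. Combined with the equality $\mathsf{NEXP} = \mathsf{REXP}$ from the second case above, this yields $\mathsf{REXP} \nsubseteq \mathfrak{C}[\mathsf{poly}]$, completing the proof.

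The argument is a direct splicing of two prior results, so there is no substantive new obstacle; the only point requiring minor care is the second case, where one must check that the composition of downward self-reduction with two-sided error removal and the padding translation goes through cleanly at the quasi-polynomial scale (the blow-up in the $\mathsf{NP}$-to-$\mathsf{NEXP}$ padding is absorbed by the universal quantification over the exponent in $\mathsf{BPQP}$, exactly as in Theorem \ref{t:learning_lbs_ACC}). Conceptually, the corollary simply decouples the learning side (used to force the $\mathsf{NEXP} = \mathsf{REXP}$ collapse via $\mathsf{PSPACE} \subseteq \mathsf{BPQP}$) from the circuit-lower-bound side (provided in a black-box fashion by the satisfiability hypothesis through Williams' framework).
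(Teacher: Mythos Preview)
Your proposal is correct and follows essentially the same approach as the paper: the authors state explicitly that the corollary is obtained by the same argument as Theorem~\ref{t:learning_lbs_ACC}, with Williams' $\mathsf{ACC}^0$ lower bound replaced by the generic connection between non-trivial satisfiability algorithms and $\mathsf{NEXP}$ lower bounds. Your write-up of the two cases, including the High-End Speedup, the $\mathsf{NP} \subseteq \mathsf{RQP}$ step via downward self-reducibility, and the padding to $\mathsf{NEXP} = \mathsf{REXP}$, matches the paper's argument.
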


Recall that randomized learning algorithms and $\mathsf{BPP}$-natural properties are strongly related by results of \citep{CIKK16}. We can give still stronger lower bound conclusions from assumptions about $\mathsf{P}$-natural proofs. The idea is to combine the arguments above with an application of the easy witness method of Kabanets \citep{DBLP:journals/jcss/Kabanets01}.

\begin{theorem} [Improved lower bounds from natural proofs]
\label{t:natural_lbs}~\\
Let $\mathfrak{C}$ be any subclass of Boolean circuits closed under projections. If there are $\mathsf{P}$-natural proofs useful against $\mathfrak{C}[2^{n^{\varepsilon}}]$ for some $\varepsilon > 0$, then $\mathsf{ZPEXP} \not \subseteq \mathfrak{C}[\mathsf{poly}]$. 
\end{theorem}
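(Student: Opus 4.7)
The strategy mirrors that of Theorem~\ref{t:learning_lbs_general}, but upgrades every randomized step to a zero-error step by exploiting the $\mathsf{P}$-constructivity of the natural property, which is strictly stronger than the $\mathsf{BPP}$-constructivity implicit in the bounded-error learning setting.

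First I would weaken the hypothesis to ``$\mathsf{P}$-natural proofs useful against $\mathfrak{C}[\mathsf{poly}(n)]$'' and apply Theorem~\ref{t:zeroerrorequiv} to obtain, for every $\delta > 0$, a zero-error complexity distinguisher for $\mathfrak{C}[\mathsf{poly}(n)]$ running in time $O(2^{n^\delta})$. Next, I would establish a zero-error analogue of Lemma~\ref{l:distfastlearning}: feeding such a distinguisher into the black-box generator of Theorem~\ref{t:blackbox_ACzerop} should yield a zero-error strong learner for $\mathfrak{C}[\mathsf{poly}(n)]$ running in time $O(2^{n^\delta})$. The key observation is that the only source of error in the proof of Lemma~\ref{l:distfastlearning} is the distinguisher itself, and the ``?'' output of a zero-error distinguisher can be propagated through the reconstruction procedure; the learner can verify its output hypothesis by estimating agreement with the target function via membership queries and emits ``?'' on any failure.

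With a zero-error subexponential-time learner for $\mathfrak{C}[\mathsf{poly}(n)]$ in hand, I would apply a zero-error version of Theorem~\ref{t:KKO}. The Trevisan--Vadhan construction combines downward and random self-reducibility of $L^\star$ to convert a learner into a decision procedure; because $L^\star$ is random self-reducible, any candidate hypothesis can be tested against the oracle on a polynomial number of random inputs and rejected with output ``?'' if it disagrees on any of them, so a zero-error learner yields a zero-error decision algorithm for $L^\star$. The conclusion splits as in Theorem~\ref{t:learning_lbs}: either (a) $L^\star \notin \mathfrak{C}[\mathsf{poly}]$, whence $\mathsf{PSPACE} \nsubseteq \mathfrak{C}[\mathsf{poly}]$ and so $\mathsf{ZPEXP} \nsubseteq \mathfrak{C}[\mathsf{poly}]$; or (b) $L^\star \in \mathsf{ZPSUBEXP}$, in which case $\mathsf{PSPACE}$-completeness of $L^\star$ together with the hierarchy of Corollary~\ref{c:iockthard} yields, for every $k$, a language in $\mathsf{ZPTIME}(2^n) \subseteq \mathsf{ZPEXP}$ outside $\mathfrak{C}[n^k]$. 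The extension from typical classes to arbitrary subclasses of Boolean circuits closed under projections is then obtained by the CVP-based win--win argument of Theorem~\ref{t:learning_lbs_general}.

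The main obstacle is the zero-error upgrade of the self-reducibility reduction: one must check that the recursive descent in the Trevisan--Vadhan argument, which at each level applies the learner and then checks correctness via self-reducibility, propagates ``?'' cleanly without introducing two-sided error. This is where $\mathsf{P}$-constructivity of the natural property (rather than merely a bounded-error distinguisher) is essential, since it ensures that every internal verification step is error-free. The easy witness method of \citep{DBLP:journals/jcss/Kabanets01} would enter as a standard auxiliary tool for certifying that whenever the reduction outputs a definite answer, that answer is correct on every input, analogously to its role in the Karp--Lipton arguments for exponential classes \citep{DBLP:journals/jcss/ImpagliazzoKW02}.
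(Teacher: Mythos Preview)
Your proposal has a genuine gap, and it also diverges substantially from the paper's argument.

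\textbf{The zero-error learner does not work.} You assert that ``the only source of error in the proof of Lemma~\ref{l:distfastlearning} is the distinguisher itself,'' but this is not the case. The black-box generator of Theorem~\ref{t:blackbox_ACzerop} has a reconstruction procedure that is inherently two-sided randomized: the Nisan--Wigderson reconstruction must guess a hybrid index and fix the values of the off-design coordinates at random, and the hardness-amplification decoding step (local list decoding) is likewise randomized. Even with a perfectly deterministic distinguisher these steps introduce two-sided error. Your proposed fix --- estimate agreement of the output hypothesis with $f$ by sampling and output ``?'' on low agreement --- does not help: a bad hypothesis can pass a sampled agreement test with nonzero probability, so you still output an incorrect hypothesis with positive probability. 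There is no known way to turn the reconstruction into a zero-error procedure, and this is exactly why the paper does not attempt it.

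\textbf{The paper's route is different, and the easy witness method is the core, not an auxiliary.} The paper never tries to make the learner or the Trevisan--Vadhan reduction zero-error. Instead it proves two separate collapses and combines them. First, Lemma~\ref{l:natural_easywitness} uses the $\mathsf{P}$-natural property together with the easy witness method and the Impagliazzo--Wigderson generator to show $\mathsf{BPP}\subseteq\mathsf{ZPQP}$, hence $\mathsf{BPEXP}=\mathsf{ZPEXP}$: guess a random string, test with the natural property whether it encodes a hard function, and if so derandomize deterministically; this is where zero error is obtained. Second, the ordinary bounded-error Speedup Lemma plus Theorem~\ref{t:KKO} give $\mathsf{PSPACE}\subseteq\mathsf{BPQP}$, hence $\mathsf{EXPSPACE}=\mathsf{BPEXP}$. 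Combining the two collapses yields $\mathsf{EXPSPACE}=\mathsf{ZPEXP}$, and then a single application of Corollary~\ref{c:iockthard} with superpolynomial parameters produces one language in $\mathsf{ZPEXP}$ outside $\mathsf{Circuit}[\mathsf{poly}]$.

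\textbf{A secondary issue.} By discarding the hypothesis down to usefulness against $\mathfrak{C}[\mathsf{poly}(n)]$ you also lose the quasi-polynomial learner coming from the high-end speedup. With only subexponential-time learning you get $\mathsf{PSPACE}\subseteq\mathsf{BPSUBEXP}$, which does not translate to $\mathsf{EXPSPACE}=\mathsf{BPEXP}$, and your case (b) then yields only ``for every $k$ there is a language in $\mathsf{ZPE}$ outside $\mathfrak{C}[n^k]$'' rather than a single language outside $\mathfrak{C}[\mathsf{poly}]$. The paper needs the full hypothesis (natural proofs against $\mathfrak{C}[2^{n^\varepsilon}]$) precisely to reach $\mathsf{EXPSPACE}=\mathsf{ZPEXP}$ and hence the single-language conclusion.
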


The following immediate consequence is of particular interest.

\begin{corollary}[$\mathsf{ACC}^0$ lower bounds from natural proofs]\label{c:williams_natural_lb}~\\
If for some $\delta > 0$ there are $\mathsf{P}$-natural proofs against $\mathsf{ACC}^0[2^{n^\delta}]$ then $\mathsf{ZPEXP} \nsubseteq \mathsf{ACC}^0[\mathsf{poly}]$.
\end{corollary}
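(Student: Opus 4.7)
The plan is to apply Theorem \ref{t:natural_lbs} directly with $\mathfrak{C} = \mathsf{ACC}^0$. First I would verify the hypothesis: $\mathsf{ACC}^0$ is a class of Boolean circuits trivially closed under projections, since substituting a variable by a constant $0$, $1$, or by another variable preserves the AND, OR, NOT, and $\mathsf{MOD}_m$ gate structure and cannot increase depth or size. The assumption of the corollary is exactly the hypothesis of Theorem \ref{t:natural_lbs} with $\varepsilon = \delta$, so the conclusion $\mathsf{ZPEXP} \nsubseteq \mathsf{ACC}^0[\mathsf{poly}]$ follows immediately. This is why the excerpt labels the statement ``an immediate consequence.''

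For context on how the underlying Theorem \ref{t:natural_lbs} would be established (so that the invocation above is justified), the plan is to combine three ingredients. First, use the equivalences of Theorem \ref{t:zeroerrorequiv} to convert the assumed $\mathsf{P}$-natural property useful against $\mathsf{ACC}^0[2^{n^\delta}]$ into zero-error complexity distinguishers for $\mathsf{ACC}^0[\mathsf{poly}(n)]$ running in time $2^{n^{\varepsilon'}}$ for every $\varepsilon' > 0$. Second, feed these distinguishers through the reconstruction reasoning behind Lemma \ref{l:distfastlearning} and the Speedup Lemma (Lemma \ref{l:learningspeedup}), but in the zero-error randomized regime, to obtain sub-exponential-time learners for $\mathsf{ACC}^0[n^k]$ whose output is either a good hypothesis or $?$. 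Third, plug these learners into the self-reducibility approach of Theorem \ref{t:KKO} (using the downward and random self-reducible $\mathsf{PSPACE}$-complete language of Trevisan--Vadhan), combined with the almost-everywhere hierarchy Corollary \ref{c:iockthard}, exactly as in the proof of Theorem \ref{t:learning_lbs_general}.

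The extra twist needed to push the resulting lower bound from $\mathsf{BPEXP}$ up to $\mathsf{ZPEXP}$ is the easy witness method of Kabanets together with the Impagliazzo--Wigderson generator: either every $\mathsf{BPEXP}$ language has easy witnesses, in which case the zero-error distinguisher (not the two-sided one) can be used to identify a correct hypothesis deterministically after the learning step, yielding a zero-error simulation; or there is a witness of high worst-case circuit complexity that fuels the IW generator to derandomize $\mathsf{BPEXP}$ into $\mathsf{ZPEXP}$. Either way the final lower bound upgrades to $\mathsf{ZPEXP} \nsubseteq \mathsf{ACC}^0[\mathsf{poly}]$.

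The main obstacle for the underlying Theorem \ref{t:natural_lbs} is precisely this upgrade from two-sided to zero-error error, i.e., making sure the hypothesis returned by the learning stage can be certified (or discarded) without introducing two-sided error — this is where $\mathsf{P}$-constructivity of the natural property, as opposed to mere $\mathsf{BPP}$-constructivity, is essential, since it provides the zero-error distinguisher demanded by Definition \ref{d:zerocompdist}. For the corollary itself no further obstacle arises: once Theorem \ref{t:natural_lbs} is in hand, the deduction is a single substitution $\mathfrak{C} := \mathsf{ACC}^0$.
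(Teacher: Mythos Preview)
Your first paragraph is exactly right and matches the paper: the corollary is stated immediately after Theorem~\ref{t:natural_lbs} with no separate proof, as ``an immediate consequence,'' and your substitution $\mathfrak{C} := \mathsf{ACC}^0$ together with the observation that $\mathsf{ACC}^0$ is closed under projections is all that is needed.

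Your supplementary sketch of how Theorem~\ref{t:natural_lbs} itself is proved, however, drifts from the paper's actual argument. The paper does not work in a ``zero-error learning regime'' nor certify hypotheses via the zero-error distinguisher. Instead it runs a CVP win-win (as in Theorem~\ref{t:learning_lbs_general}) to reduce to the case of $\mathsf{P}$-natural proofs against \emph{general} $\mathsf{Circuit}[2^{n^\delta}]$; then it invokes Lemma~\ref{l:natural_easywitness} directly---the natural property is used to \emph{recognize} hard truth tables among random strings, which fuels the Impagliazzo--Wigderson generator and gives $\mathsf{BPP} \subseteq \mathsf{ZPQP}$, hence $\mathsf{BPEXP} = \mathsf{ZPEXP}$. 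Separately, the distinguisher/learning/Theorem~\ref{t:KKO} machinery yields $\mathsf{EXPSPACE} = \mathsf{BPEXP}$, and combining the two collapses with Corollary~\ref{c:iockthard} finishes. Your description of the easy-witness step (``either every $\mathsf{BPEXP}$ language has easy witnesses\ldots'') conflates this with a different use of the method and is not how the upgrade to $\mathsf{ZPEXP}$ works here. None of this affects the correctness of your proof of the corollary, which stands on Theorem~\ref{t:natural_lbs} as a black box.
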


In order to prove Theorem \ref{t:natural_lbs}, we will need the following lemma.

\begin{lemma} [Simulating bounded error with zero error given natural proofs]
\label{l:natural_easywitness}
Suppose there is a constant $\delta > 0$ such that there are $\mathsf{P}$-natural proofs against $\mathsf{Circuit}[2^{n^{\delta}}]$. Then $\mathsf{BPEXP} = \mathsf{ZPEXP}$.
\end{lemma}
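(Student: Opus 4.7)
The plan is to prove $\mathsf{BPEXP}\subseteq \mathsf{ZPEXP}$; the reverse inclusion is immediate. The strategy is a standard hardness-to-randomness derandomization using the Impagliazzo--Wigderson generator from Theorem~\ref{t:hardness_PRG}, where the $\mathsf{P}$-natural property is deployed as a zero-error ``hardness test'' that lets us manufacture truth tables of hard functions on demand.

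Fix $L\in \mathsf{BPEXP}$ decided by a randomized machine $M$ running in time $T(n)=2^{n^a}$ with error at most $1/10$ (after standard amplification), and let $A$ be the polynomial-time algorithm witnessing the $\mathsf{P}$-natural property: for large enough $n$, $A$ accepts at least half of all truth tables in $\{0,1\}^{2^n}$ and rejects every truth table whose function is in $\mathsf{Circuit}[2^{n^{\delta}}]$. Let $c$ be the constant from Theorem~\ref{t:hardness_PRG} and set $m=m(n)\eqdef \lceil (c\,n^a)^{1/\delta}\rceil$, so that $2^{m^{\delta}}\geq T(n)^c$. On input $x$ with $|x|=n$, the $\mathsf{ZPEXP}$ algorithm would proceed as follows:
\begin{enumerate}
\item[(1)] Sample $n$ independent uniform strings $y_1,\dots,y_n\in\{0,1\}^{2^m}$ and compute $A(y_i)$ for each $i$ (using that $A$ runs in time $\mathrm{poly}(2^m)$).
\item[(2)] If no $y_i$ is accepted, output ``?'' and halt.
\item[(3)] Otherwise, let $y$ be the first accepted $y_i$; by usefulness, $\mathtt{fn}(y)$ has no circuits of size $2^{m^{\delta}}\geq T(n)^c$.
\item[(4)] Feed $\mathtt{tt}(\mathtt{fn}(y))=y$ into Theorem~\ref{t:hardness_PRG} to obtain a PRG $G\colon\{0,1\}^{\ell}\to\{0,1\}^{T(n)}$ with seed length $\ell=O(m^2/\log T(n))=O(n^{a(2-\delta)/\delta})$ that $(1/T(n))$-fools $\mathsf{Circuit}[T(n)]$.
\item[(5)] Convert $M(x,\cdot)$ into a circuit $C_x$ of size $O(T(n))$ and output the majority vote of $C_x(G(s))$ over all $s\in\{0,1\}^{\ell}$.
\end{enumerate}

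Correctness is straightforward: by density, each $y_i$ is rejected with probability at most $1/2$, so the probability of outputting ``?'' is at most $2^{-n}\ll 1/3$; whenever step~(3) succeeds the PRG fools $C_x$ with error at most $1/T(n)\ll 1/10$, so the fraction of accepting seeds lies in $[0,\,1/10+1/T(n)]$ for NO-instances and in $[9/10-1/T(n),\,1]$ for YES-instances, making the majority vote \emph{deterministically} correct. The running time is dominated by $n\cdot 2^{O(m)}$ in step~(1) and $2^{\ell}\cdot T(n)$ in step~(5), both of which are $2^{n^{O(1)}}$, placing the procedure in $\mathsf{ZPEXP}$.

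The one step that requires some care is the parameter alignment in~(4): the hardness $2^{m^{\delta}}$ delivered by the natural property must exceed the PRG's target hardness $T(n)^c$, while the resulting seed length $\ell=O(m^2/\log T(n))$ must remain polynomial in $n$ so that enumerating $2^{\ell}$ seeds stays within exponential time. The choice $m=\Theta(n^{a/\delta})$ achieves both simultaneously, giving $2^{m^{\delta}}=T(n)^{\Theta(1)}$ and $\ell=n^{O(1)}$. All the remaining ingredients---error amplification of $M$, majority vote over PRG seeds, and polynomial-time evaluation of $A$ on inputs of length $2^m$---are routine.
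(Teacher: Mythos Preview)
Your proof is correct and follows essentially the same approach as the paper: use the $\mathsf{P}$-natural property as a zero-error hardness certifier on a randomly sampled truth table, then plug the certified-hard function into the Impagliazzo--Wigderson generator (Theorem~\ref{t:hardness_PRG}) and enumerate seeds. The only difference is presentational: the paper first shows $\mathsf{BPP}\subseteq\mathsf{ZPQP}$ (sampling a single truth table of quasi-polynomial length and invoking the generator at polynomial scale) and then pads up to $\mathsf{BPEXP}\subseteq\mathsf{ZPEXP}$, whereas you work directly at the exponential level with $m=\Theta(n^{a/\delta})$; the paper's route additionally records the stronger intermediate collapse $\mathsf{BPP}\subseteq\mathsf{ZPQP}$, but for the lemma as stated your direct argument is equally valid.
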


\begin{proof}
Note that zero-error probabilistic time is trivially contained in bounded-error probabilistic time, so we only need to show that $\mathsf{BPEXP} \subseteq \mathsf{ZPEXP}$ under the assumption. We will in fact show that $\mathsf{BPP} \subseteq \mathsf{ZPQP}$, where $\mathsf{ZPQP}$ is zero-error bounded probabilistic quasi-polynomial time. The desired conclusion follows from this using a standard translation argument.

By assumption, there is a natural property $\mathfrak{R}$ useful against $\mathsf{Circuit}[2^{n^{\delta}}]$ for some constant $\delta > 0$, such that $L_{\mathfrak{R}} \in \mathsf{P}$. Let $M$ be any machine operating in bounded-error probabilistic time $n^d$ for some $d > 0$. We define a zero-error machine $M'$ deciding $L(M)$ in quasi-polynomial time as follows. On input $x$ of length $n$, $M'$ guesses a random string $r$ of size $2^{\mathsf{log}(n)^{d'}}$, where $d'$ is a large enough constant to be defined later. It then checks if $r \in L_{\mathfrak{R}}$ or not, using the polynomial-time decision procedure for the natural property $\mathfrak{R}$. If not, it outputs `?' and halts. If it does, it runs the procedure of Theorem \ref{t:hardness_PRG} on input $n^{2d}$ in unary and $r$, to obtain the range of a $(\mathsf{polylog}(n), 1/n^{2d})$ PRG against $\mathsf{Circuit}[n^{2d}]$. Since $r \in L_{\mathfrak{R}}$, Theorem \ref{t:hardness_PRG} applies, and the output of the procedure is guaranteed to be the range of such a PRG. $M'$ then runs $M$ on $x$ independently with each element of the range of the PRG used as randomness, and takes the majority vote. This is guaranteed to be correct when $r \in L_{\mathfrak{R}}$, which happens with probability at least $1/2$ by the density property of $\mathfrak{R}$. Thus $M'$ is a zero-error machine, and it is clear that $M'$ can be implemented in quasi-polynomial time.
\end{proof}

\begin{proof}[Proof of Theorem \emph{\ref{t:natural_lbs}}]
We proceed as in the proof of Theorem \ref{t:learning_lbs_general}. Either CVP is in $\mathfrak{C}[\mathsf{poly}]$, or it is not. If not, we have the desired lower bound for CVP, and hence for $\mathsf{ZPEXP}$, which contains this problem. 

On the other hand, if CVP is in $\mathfrak{C}[\mathsf{poly}]$, we have that CVP is in $\mathfrak{C}[n^k]$ for some $k > 0$. By the completeness of CVP for poly-size circuits under poly-size projections, and the closure of $\mathfrak{C}$ under projections, we have that $\mathsf{Circuit}[n] \subseteq \mathfrak{C}[n^k]$ for some $k > 0$, and hence by a standard translation argument, we have that $\mathsf{Circuit}[2^{n^\delta}] \subseteq \mathfrak{C}[2^{n^{\varepsilon}}]$ for any $\delta < \varepsilon$. By assumption, we have $\mathsf{P}$-natural properties useful against $\mathfrak{C}[2^{n^{\varepsilon}}]$ and hence we also have $\mathsf{P}$-natural properties useful against $\mathsf{Circuit}[2^{n^{\delta}}]$ for any $\delta < \varepsilon$. Now, applying Lemma \ref{l:natural_easywitness}, we get $\mathsf{BPEXP} = \mathsf{ZPEXP}$.

We argue next that under the existence of $\mathsf{P}$-natural properties useful against $\mathsf{Circuit}[2^{n^{\delta}}]$ for a fixed $\delta > 0$, we also have $\mathsf{EXPSPACE} = \mathsf{BPEXP}$. The mentioned hypothesis implies that there exist complexity distinguishers against $\mathsf{Circuit}[2^{n^{\delta}}]$ running in deterministic time $2^{O(n)}$ (the acceptance probability can be amplified using truth-table concatenation). As a consequence, Lemma \ref{l:distfastlearning} provides strong learning algorithms for $\mathsf{Circuit}[\mathsf{poly}]$ running in quasi-polynomial time. By Theorem \ref{t:KKO}, either $\mathsf{PSPACE} \nsubseteq \mathsf{Circuit}[\mathsf{poly}]$, and we are done, or $\mathsf{PSPACE} \subseteq \mathsf{BPQP}$. Now a standard upward translation gives $\mathsf{EXPSPACE} \subseteq \mathsf{BPEXP}$, which shows that $\mathsf{EXPSPACE} = \mathsf{BPEXP}$.

Altogether, we have $\mathsf{EXPSPACE} = \mathsf{ZPEXP}$. Now this collapse and Corollary \ref{c:iockthard} with $S_1(n) = 2^{\sqrt{n}}$ and $S_2(n) = n^{\log n}$ yield a language $L \in \mathsf{ZPEXP}$ such that $L \notin \mathsf{Circuit}[\mathsf{poly}]$, which completes the proof of Theorem \ref{t:natural_lbs}.
\end{proof}

Recall that the existence of \emph{useful} properties against a circuit class $\mathfrak{C}$ is essentially equivalent to the existence of non-deterministic exponential time lower bounds against $\mathfrak{C}$ \citep{DBLP:journals/siamcomp/Williams16, oliveirathesis}. We do not expect a similar equivalence in the case of \emph{natural} properties and lower bounds for probabilistic exponential time. The results described in this section show that natural properties imply such lower bounds. However, if the other direction were true, then any lower for $\mathfrak{C}$ with respect to probabilistic exponential time classes would also provide a non-trivial learning algorithm for $\mathfrak{C}$. In particular, since we believe in separations such as $\mathsf{EXP} \nsubseteq \mathsf{Circuit}[\mathsf{poly}]$, this would imply via the Speedup Lemma that polynomial size circuits can be learned in sub-exponential time, which seems unlikely.

\section{Karp-Lipton Collapses for Probabilistic Classes}\label{s:karp_lipton_collapses}

\subsection{A Lemma About Learning with Advice}

In this section we will need some notions of computability with advice. While this is a standard notion, we provide some definitions, as bounded-error randomized algorithms taking advice can be defined in different ways. 

Recall that an {\it advice-taking} Turing machine is a Turing machine equipped with an extra tape, the advice tape. At the start of any computation of an advice-taking Turing machine, the input is present on the input tape of the machine and a string called the ``advice'' on the advice tape of the machine, to both of which the machine has access.

\begin{definition} [Probabilistic time with advice]
\label{d:BPTIMEAdv}
Let $T \colon \mathbb{N} \rightarrow \mathbb{N}$ and $a \colon \mathbb{N} \rightarrow \mathbb{N}$ be functions. $\mathsf{BPTIME}(T)/a$ is the class of languages $L \subseteq \{0,1\}^*$ for which there is an advice-taking probabilistic Turing machine $M$ which always halts in time $T(n)$ and a sequence $\{z_n\}_{n \in \mathbb{N}}$ of strings such that\emph{:}
\begin{enumerate}
\item[\emph{1.}] For each $n$, $|z_n| \leq a(n).$
\item[\emph{2.}] For any input $x \in L$ such that $|x| = n$, $M$ accepts $x$ with probability at least $2/3$ when using advice string $z_n$.
\item[\emph{3.}] For any input $x \not \in L$ such that $|x| = n$, $M$ rejects $x$ with probability at least $2/3$ when using advice string $z_n$.
\end{enumerate}
\end{definition}

Note that in the above definition, there are no guarantees on the behaviour of the machine for advice strings other than the ``correct'' advice string $z_n$. In particular, for an arbitrary advice string, the machine does not have to satisfy the bounded-error condition on an input, though it does have to halt within time $T$.

The notion of resource-bounded computation with advice is fairly general and extends to other models of computation, such as deterministic computation and computation of non-Boolean functions. These extensions are natural, and we will not define them formally.

A slightly less standard notion of computation with advice is learnability with advice. We extend Definition \ref{d:learning} to capture learning with advice by giving the learning algorithm an advice string, and only requiring the learning algorithm to work correctly for a ``correct'' advice string of the requisite length.

We will also need the standard notions of downward self-reducibility and random self-reducibility.

\begin{definition} [Downward self-reducibility]
\label{d:dsr}
A function $f \colon \{0,1\}^* \to \{0,1\}$ is said to be downward self-reducible if there is a polynomial-time oracle procedure $A^f(x)$ such that\emph{:}
\begin{enumerate}
 \item[\emph{1.}] On any input $x$ of length $n$, $A^f(x)$ only makes queries of length $< n$.
 \item[\emph{2.}] For every input $x$, $A^f(x) = f(x)$.
\end{enumerate}
\end{definition} 
 
\begin{definition} [Random self-reducibility]
\label{d:rsr}
A function $f \colon \{0,1\}^* \to \{0,1\}$ is said to be random self-reducible if there are constants $k, \ell \geq 1$ and polynomial-time computable functions $g \colon \{0,1\}^* \to \{0,1\}^*$ and $h \colon \{0,1\}^* \to \{0,1\}$ satisfying the following conditions\emph{:}

\begin{enumerate}
\item[\emph{1.}] For large enough $n$, for every $x \in \{0,1\}^n$ and for each $i \in \mathbb{N}$ such that $1 \leq i \leq n^k$, $g(i,x,r) \sim U_n$ when $r \sim U_{n^{\ell}}$.
\item[\emph{2.}] For large enough $n$ and for every function $\tilde{f}_n \colon \{0,1\}^n \to \{0,1\}$ that is $(1/n^k)$-close to $f$ on $n$-bit strings, for every $x \in \{0,1\}^n$\emph{:}
$$f(x) = h(x, r, \tilde{f}_n(g(1,x,r)), \tilde{f}_n(g(2,x,r)), \ldots , \tilde{f}_n(g(n^k, x, r)))$$ 
with probability $\geq 1-2^{-2n}$ when $r \sim U_{n^{\ell}}$.
\end{enumerate}
\end{definition}

\begin{theorem} [A special $\mathsf{PSPACE}$-complete function \citep{DBLP:journals/cc/TrevisanV07}]
\label{t:PSPACECompRsrDsr}~\\
There is a $\mathsf{PSPACE}$-complete function $f_{\mathsf{TV}} \colon \{0,1\}^* \to \{0,1\}$ such that $f_{\mathsf{TV}}$ is downward self-reducible and random self-reducible.
\end{theorem}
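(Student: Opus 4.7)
The plan is to realize $f_{\mathsf{TV}}$ as a bit of a polynomial arithmetization of a $\mathsf{PSPACE}$-complete language, essentially replaying the Shamir/LFKN construction behind $\mathsf{IP}=\mathsf{PSPACE}$ with the bookkeeping tuned for self-reducibility. First I would fix a canonical $\mathsf{PSPACE}$-complete problem, say TQBF, and for each quantified Boolean formula $\phi = Q_1 x_1 \cdots Q_m x_m\,\psi(\vec x)$ apply standard arithmetization together with Shamir's linearization trick to produce a sequence of polynomials $P^\phi_0, P^\phi_1, \ldots, P^\phi_t$ over a field $\mathbb{F}_q$ with $q = 2^s = \mathrm{poly}(m)$ (the field of characteristic $2$ is chosen so that uniform bitstrings correspond to uniform field elements). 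Here $t = \mathrm{poly}(m)$, $P^\phi_0$ is the direct arithmetization of the quantifier-free $\psi$, each $P^\phi_i$ has individual degrees $\mathrm{poly}(m)$, and $P^\phi_t$ is a constant whose value is $1$ iff $\phi \in \mathrm{TQBF}$.

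Next I would define $f_{\mathsf{TV}}$ on an input of length $n$ that encodes a tuple $(\phi, i, \vec z, j)$ with $\phi$ a TQBF, $i \in \{0,\ldots,t\}$ an arithmetization level, $\vec z \in \mathbb{F}_q^m$ a point, and $j$ a bit index, outputting the $j$-th bit of $P^\phi_i(\vec z)$. The encoding is designed with two properties: (a) the length depends monotonically on $i$, so that all level-$(i-1)$ inputs are strictly shorter than any level-$i$ input, and (b) the block of bits encoding $\vec z$ is read bijectively from a contiguous suffix, so that randomizing those bits yields a uniformly distributed $n$-bit string among the well-formed inputs sharing the same $(\phi,i,j)$. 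Ill-formed inputs can be sent to $0$ without disturbing these properties. $\mathsf{PSPACE}$-hardness is immediate because TQBF reduces to reading the bits of $P^\phi_t$, and $f_{\mathsf{TV}} \in \mathsf{PSPACE}$ follows by straightforward recursive evaluation of the $P^\phi_i$.

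Downward self-reducibility reduces to the core recurrence of the sum-check/linearization protocol: each value $P^\phi_i(\vec z)$ is a fixed polynomial-time-computable combination of $\mathrm{poly}(m)$ values of $P^\phi_{i-1}$ at points obtained from $\vec z$ by substituting specific field elements (namely $0$, $1$, and a few interpolation points) in the coordinate peeled off at step $i$. By property (a) each such query lives at a strictly smaller input length, and the base case $i = 0$ is direct polynomial-time evaluation of the arithmetization of $\psi$ with no oracle calls, so the whole procedure matches Definition \ref{d:dsr}.

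For random self-reducibility I would fix $\phi,i,j$ and self-correct $P^\phi_i$ along a random line through the target point: pick a uniform direction $\vec r \in \mathbb{F}_q^m$, query $f_{\mathsf{TV}}$ at the encodings of $(\phi, i, \vec z + k\vec r, j)$ for enough distinct $k \in \mathbb{F}_q$ and for every bit position $j' \leq s$, reassemble field elements bit by bit, and recover the univariate restriction of $P^\phi_i$ to the line (a polynomial of degree $\mathrm{poly}(m)$) via Berlekamp--Welch. Property (b) of the encoding guarantees that each such query is marginally uniform in $\{0,1\}^n$, as demanded by Definition \ref{d:rsr}. The main technical obstacle is quantitative: Definition \ref{d:rsr} allows a bitwise $(1/n^k)$-close oracle and requires failure probability $\leq 2^{-2n}$, so one must take $q$ large, union-bound bit errors across the $O(\log q)$ bits per field element, and amplify the Beaver--Feigenbaum-style self-corrector over independently chosen random lines followed by a majority vote in order to drive the failure probability below $2^{-2n}$ while keeping the whole reduction polynomial time and its queries uniform.
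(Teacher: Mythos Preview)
The paper does not give its own proof of this theorem; it is stated with a citation to Trevisan and Vadhan \citep{DBLP:journals/cc/TrevisanV07} and used as a black box. Your sketch correctly outlines the essential ideas of that construction: arithmetize TQBF with Shamir's linearization to obtain a tower of low-degree polynomials over a suitable field, read off downward self-reducibility from the sumcheck recurrence, and obtain random self-reducibility via low-degree self-correction on random lines, decoded with Berlekamp--Welch and amplified by majority over independent lines to drive the failure probability below $2^{-2n}$.

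One technical point in your encoding does not quite match Definition~\ref{d:rsr}. That definition requires each query $g(i,x,r)$ to be marginally uniform over \emph{all} of $\{0,1\}^n$ when $r$ is uniform, not merely uniform over the well-formed inputs that share the prefix $(\phi,i,j)$ of $x$. Your property~(b) as stated only delivers the latter: if the prefix encoding $(\phi,i,j)$ is held fixed and only the $\vec z$-suffix is randomized, the resulting string is confined to a strict subset of $\{0,1\}^n$. In the Trevisan--Vadhan construction this is handled by arranging that, at each input length $n$, essentially the entire input is interpreted as a point in $\mathbb{F}_q^{m}$ (with the identity of the polynomial and the level determined by $n$ itself via padding, rather than carried in a fixed prefix), so that randomizing the point genuinely yields a uniform $n$-bit string. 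This is purely a bookkeeping issue and is resolvable along the lines you indicate, but as written your property~(b) and the subsequent claim that ``each such query is marginally uniform in $\{0,1\}^n$'' are in tension.
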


Below we consider the learnability of the class of Boolean functions $\{f_{\mathsf{TV}}\}$ that contains only the function $f_{\mathsf{TV}}$.

\begin{lemma} [Learnability with advice for $\mathsf{PSPACE}$ implies randomized algorithms]
\label{l:PSPACELearnSolve}~\\
For any polynomial-time computable non-decreasing function $a \colon \mathbb{N} \rightarrow \mathbb{N}$ with $a(n) \leq n$, and for any non-decreasing function $T \colon \mathbb{N} \rightarrow \mathbb{N}$ such that $n \leq T(n) \leq 2^n$, if $\{f_{\mathsf{TV}}\}$ is strongly learnable in time $T$ with $a$ bits of advice, then $f_{\mathsf{TV}}$ is computable in bounded-error probabilistic time $T(n)^2 \cdot 2^{a(n)} \cdot n^{O(1)}$, and hence $\mathsf{PSPACE} \subseteq \mathsf{BPTIME}(T(\mathsf{poly}(n))^2 \cdot 2^{a(\mathsf{poly}(n))} \cdot \mathsf{poly}(n))$.
\end{lemma}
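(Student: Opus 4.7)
The plan is to exploit the downward self-reducibility and random self-reducibility of $f_{\mathsf{TV}}$ (Theorem~\ref{t:PSPACECompRsrDsr}) to bootstrap the assumed learner into an exact randomized procedure by induction on input length. Concretely, I will construct a chain of randomized procedures $B_1, B_2, \ldots, B_n$, where $B_m$ decides $f_{\mathsf{TV}}$ on inputs of length $m$ with very high probability, using $B_{m-1}$ (through the downward self-reducer) to answer the oracle queries required by the learner at stage $m$.

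At stage $m$, I try each of the $2^{a(m)}$ candidate advice strings $z$ in turn. For each $z$, I simulate the $(1/m^k, 1/m)$-learner on input $1^m$ with advice $z$; whenever the learner queries $f_{\mathsf{TV}}(y)$ for some $y\in\{0,1\}^m$, I answer by invoking the polynomial-time downward self-reducer, whose $\mathsf{poly}(m)$ queries at lengths below $m$ are in turn answered by $B_{m-1}, B_{m-2}, \ldots$ (already constructed). The learner returns a hypothesis circuit $h$ of size at most $T(m)$, which I accept iff it passes a consistency check: on $\mathsf{poly}(m)$ uniformly chosen test points $y\in\{0,1\}^m$, the value $h(y)$ agrees with the answer produced by the downward self-reducer on all but a $2/m^k$ fraction of the tests. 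Choosing $k$ larger than the constant associated with the random self-reducibility of $f_{\mathsf{TV}}$ (Definition~\ref{d:rsr}) ensures that any $h$ passing the check is close enough to $f_{\mathsf{TV}}$ on length $m$ for random self-correction to recover exact values with probability $1-2^{-2m}$; and the correct $z$ will pass the check with high probability by the learner's success guarantee. Having fixed such an $h$, the procedure $B_m(x)$ on input $|x|=m$ computes $f_{\mathsf{TV}}(x)$ by applying the random self-correction of Definition~\ref{d:rsr} to $h$, i.e.\ by making $\mathsf{poly}(m)$ evaluations of $h$ at randomly generated points and combining the outputs through the reconstruction map.

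For the running time, amplifying each learner call and each consistency check to failure probability $1/\mathrm{poly}(n)$ keeps per-call cost at $T(m)\cdot\mathsf{poly}(m)$, and a union bound over all stages $m\leq n$ and all recursive invocations inside stage $m$ guarantees overall success probability at least $2/3$. The cost of stage $m$ is $2^{a(m)}\cdot T(m)\cdot\mathsf{poly}(m)\cdot t_{m-1}$, where $t_{m-1}=T(m-1)\cdot\mathsf{poly}(m-1)$ is the evaluation cost of $B_{m-1}$ (self-correction makes $\mathsf{poly}(m-1)$ queries to a circuit of size at most $T(m-1)$). Using monotonicity of $a$ and $T$, summing over $m\leq n$ telescopes to $O(n\cdot 2^{a(n)}\cdot T(n)^2\cdot\mathsf{poly}(n))$, and the final evaluation $B_n(x)$ adds only $T(n)\cdot\mathsf{poly}(n)$. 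The $\mathsf{PSPACE}$ conclusion then follows from the $\mathsf{PSPACE}$-completeness of $f_{\mathsf{TV}}$ under polynomial-time reductions: an input of length $n$ for $L\in\mathsf{PSPACE}$ reduces to an $f_{\mathsf{TV}}$-instance of length $\mathsf{poly}(n)$, and the procedure above solves it in the claimed time $T(\mathsf{poly}(n))^2\cdot 2^{a(\mathsf{poly}(n))}\cdot\mathsf{poly}(n)$. The main subtlety is the advice elimination at each level: the consistency check must reject wrong advice while accepting the correct one, and this must hold even though the ``ground truth'' used for comparison is itself only guaranteed to be correct with high probability by the inductive hypothesis, so the analysis has to carefully propagate and accumulate the failure probabilities across the $n$ levels without letting them explode.
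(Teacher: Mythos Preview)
Your proposal is correct and follows essentially the same approach as the paper: inductively build exact solvers level by level, using downward self-reducibility to answer the learner's membership queries via the previously built levels, cycling over all $2^{a(m)}$ advice strings and testing the resulting hypotheses against values computed through the downward self-reducer, and finally applying random self-correction to the surviving hypothesis. The paper selects the candidate with \emph{maximum} empirical agreement rather than the first one crossing a threshold, but this is immaterial.

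One technical point to tighten: your stated amplification to failure probability $1/\mathsf{poly}(n)$ is not enough for the union bound, because the number of recursive invocations of $B_j$ inside a single stage can be as large as $2^{a(m)}\cdot T(m)\cdot\mathsf{poly}(m)$, and $T(m)$ may be as large as $2^m$. The paper handles this by repeating the random self-reduction $O(n)$ times and taking majority, so that each $C_j$ (your $B_j$) errs with probability at most $2^{-2n}$ on any fixed input; with that adjustment your union bound goes through and the running time is unaffected up to $\mathsf{poly}(n)$ factors.
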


\begin{proof} 
The argument is based on and extends ideas from \citep{DBLP:conf/coco/KlivansKO13, DBLP:journals/jcss/FortnowK09, DBLP:journals/cc/TrevisanV07, DBLP:journals/jcss/ImpagliazzoW01}. Recall that $f_{\mathsf{TV}}$ is the same Boolean function as in the statement of Theorem \ref{t:PSPACECompRsrDsr}. As stated there, this function is downward self-reducible and random self-reducible. Now suppose $\{f_{\mathsf{TV}}\}$ is learnable in time $T$ with $a$ bits of advice. We design a probabilistic machine $M$ solving $f_{\mathsf{TV}}$ on inputs of length $n$ with bounded error in time $T(n)^2 \cdot 2^{a(n)} \cdot n^{O(1)}$. The addition inclusion in the statement of Lemma \ref{l:PSPACELearnSolve} follows from the completeness of $f_\mathsf{TV}$.

Let $x \in \{0,1\}^n$ be the input to $M$. Let $A_{\mathsf{learn}}$ be a $(1/n^{4k}, 1/2^{2n})$-learner for $\{f_{\mathsf{TV}}\}$ that takes $a(n)$ bits of advice and runs in time $T(n)$.\footnote{In this argument, we do not care about $\mathsf{poly}(n)$ multiplicative factors applied to the final running time, so we can assume the failure probability of the learner to be exponentially small by amplification. This is a standard argument, and we refer to \citep{KearnsVazirani:94} for more details.} Here $k$ is the exponent in the number of queries in the random self-reduction for $f_{\mathsf{TV}}$ given by Theorem \ref{t:PSPACECompRsrDsr}.\\

\noindent \textbf{Overview.} The plan of the proof is that $M$ will use the advice-taking learner to inductively produce, with high probability, circuits computing $f_{\mathsf{TV}}$ correctly on inputs of length $1 \ldots n$. The crucial aspect is not to allow the size of these circuits to grow too large. There will be $n$ phases in the operation of $M$ -- during Phase $i$, $M$ will produce with high probability a randomized circuit computing $f_{\mathsf{TV}}$ on inputs of length $i$. 

Each phase consists of 2 parts. In Part 1 of Phase $i$, $M$ computes, for each possible advice string $z$ of length $a(i)$ that can be fed to the advice-taking learner on input $1^i$, a {\it candidate deterministic circuit} $C_i^z$ on $i$-bit inputs of size at most $T(i)$. In order for $M$ to do this, it uses the properties of the learner, as well as the circuits for smaller lengths that have already been computed. The only guarantee on the candidate circuits is that at least one of them is a approximately correct circuit for $f_{\mathsf{TV}}$ at length $i$, in the sense that it is correct on most inputs of this length. In Part 2 of Phase $i$, $M$ uses the random self-reducibility and downward self-reducibility of $f_{\mathsf{TV}}$ to select the ``best-performing'' candidate among these circuits and compute a ``correction'' $C_i$ of the best-performing circuit. The circuit $C_i$ will have size $T(i) \cdot \mathsf{poly}(n)$, and with high probability, it will be a randomized circuit that computes $f_{\mathsf{TV}}$ correctly on all $i$-bit inputs, in the sense that on each such string it is correct with overwhelming probability over its internal randomness. At the end of Phase $n$, $M$ evaluates the circuit $C_n$ on $x$ and outputs the answer.\\

We now give the details of how Part 1 and Part 2 work for each phase. We will then need to argue that $M$ is correct, and that it is as efficient as claimed. Phase $1$, which is the base case for $M$'s inductive operation, is trivial. The circuit $C_1$ computing $f_{\mathsf{TV}}$ correctly on inputs of length $1$ is simply hard-coded into $M$.

Now let $i > 1$ be an integer. We describe how Part 1 and Part 2 of Phase $i$ work, assuming inductively that $M$ already has stored in memory a sequence of circuit $\{C_j\}$, for $1 \leq j \leq i-1$, such that for each such $j$, $C_j$ has size at most $T(j) \cdot \mathsf{poly}(n)$, and with all but exponentially small probability, computes $f_{\mathsf{TV}}$ correctly on each input of length $j$.\\

\noindent \textbf{Part 1.} $M$ first uses the polynomial-time computability of $a$ to compute $a(i)$. It then cycles over strings $z \in \{0,1\}^{a(i)}$, and for each string $z$ it does the following.
It simulates $A_{\mathsf{learn}}(1^i)$ with advice $z$. Each time $A_{\mathsf{learn}}$ makes a membership query of length $i$, $M$ answers the membership query using the downward self-reducibility of $f_{\mathsf{TV}}$ as follows. If the downward self-reduction makes a query of length $j < i$, $M$ answers it by running the stored circuit $C_{j}$ on the corresponding query.

If $A_{\mathsf{learn}}(1^i)$ with advice $z$ does not halt with an output that is a circuit on $i$ bits, $M$ sets $C_i^z$ to be a trivial circuit on $i$ bits, say the circuit that always outputs 0. Otherwise $M$ sets $C_i^z$ to be the circuit output by the learning algorithm.  Since $A_{\mathsf{learn}}$ is guaranteed to halt in time $T(i)$ for every advice string, the circuit $C_i^z$ has size at most $T(i)$.\\

\noindent \textbf{Part 2.} $M$ samples strings $y_1, \ldots, y_t$, where $t = n^{10k}$, uniformly and independently at random amongst $i$-bit strings. It computes ``guesses'' $b_1, \ldots, b_t \in \{0,1\}$ for the values of $f_{\mathsf{TV}}$ on these inputs by running the downward self-reducibility procedure for $f_{\mathsf{TV}}$, and answering any queries of length $j < i$ using the stored circuits $C_j$. Then, for each advice string $z$, it simulates $C_i^z$ on each input $y_{\ell}$, where  $1 \leq \ell \leq t$, and computes the fraction $\rho_z$ of inputs $y_{\ell}$ for which $C_i^z(y_{\ell}) = b_{\ell}$. Let $z_{\mathsf{max}}$ be the advice string $z$ for which $\rho_{z_{\mathsf{max}}}$ is maximum among all such advice strings. Let $D_i$ be the (deterministic) circuit $C_i^{z_{\mathsf{max}}}$. $M$ produces a randomized circuit $C_i$ from $D_i$ as follows. $C_i$ applies the random self-reduction procedure for $f_{\mathsf{TV}}$ $O(n)$ times independently, using the circuit $D_i$ to answer the random queries to $f_{\mathsf{TV}}$, and outputs the majority answer of these runs. Note that $C_i$ can easily be implemented in size $T(i) \cdot \mathsf{poly}(n)$, using the fact that the random self-reduction procedure runs in polynomial time. (We stress that $C_i$ is a randomized circuit even though $D_i$ is deterministic.)\\

It is sufficient to argue that $M$ halts in time $T(n) \cdot \mathsf{poly}(n) \cdot 2^{a(n)}$, and that the final circuit $C_n$ computed by $M$ is a correct randomized circuit for $f_{\mathsf{TV}}$ on inputs of length $n$ with high probability over the random choices of $M$.\\

\noindent \textbf{Complexity of $M$.} We will show that $M$ uses time at most $T(i)^2 \cdot \mathsf{poly}(n) \cdot 2^{a(i)}$ in Phase $i$, and computes a circuit $C_i$ of size at most $T(i) \cdot \mathsf{poly}(n)$. Since $a$ and $T$ are non-decreasing, this implies that $M$ uses time at most $T(n)^2 \cdot \mathsf{poly}(n) \cdot 2^{a(n)}$ in total. We will analyze Part 1 and Part 2 separately.

The first step in Part 1, which is computing $a(i)$, can be done in time $\mathsf{poly}(n)$. For each $z$, simulating the learner and computing the circuit $C_i^z$ can be done in time at most $T(i) \cdot T(i-1) \cdot \mathsf{poly}(n)$, since the learner runs in time $T(i)$ and makes at most that many oracle queries, each of which can be answered by simulating a circuit $C_{j}$ of size at most $T(j) \cdot \mathsf{poly}(n)$, where $j \leq i-1$. There are $2^{a(i)}$ advice strings $z$ which $M$ cycles over, hence the total time taken by $M$ in Part 1 of Phase $i$ is at most $T(i)^2 \cdot 2^{a(i)} \cdot \mathsf{poly}(n)$ by the non-decreasing property of $T$.

In Part 2 of Phase $i$, computing the bits $b_1, \ldots, b_{t}$ takes time at most $T(i) \cdot \mathsf{poly}(n)$, since the downward self-reducibility procedure runs in time $\mathsf{poly}(n)$, and every query can be answered by simulation of a circuit $C_j$ with $j < i$ in time at most $T(i) \cdot \mathsf{poly}(n)$. For each $C_i^z$, computing the fraction $\rho_z$ takes time at most $T(i) \cdot \mathsf{poly}(n)$, since it involves simulating $C_i^z$ on $\mathsf{poly}(n)$ inputs, and $C_i^z$ is of size at most $T(i)$. Doing this for each $z$ takes time at most $T(i) \cdot 2^{a(i)} \cdot \mathsf{poly}(n)$ time, as there are $2^{a(i)}$ possible advice strings of length $a(i)$. Computing $z_{\mathsf{max}}$ takes time $\mathsf{poly}(n)$, and then computing the ``corrected'' circuit $C_i$ takes time at most $T(i) \cdot \mathsf{poly}(n)$, since the random self-reducibility procedure runs in polynomial time and can therefore be simulated using polynomial-size circuits.\\

\noindent \textbf{Correctness of $M$.} Clearly Phase 1 concludes with a correct circuit $C_1$ for $f_{\mathsf{TV}}$ on $1$-bit inputs. We will argue inductively that, given that the randomized circuit $C_{i-1}$ computed at the end of Phase $i-1$ is a correct circuit for $f_{\mathsf{TV}}$ on inputs of length $i-1$ such that its error probability is at most $2^{-2n}$ on any input, with all but exponentially small probability over the random choices of $M$ in Phase $i$, the randomized circuit $C_i$ computed at the end of Phase $i$ is a correct circuit for $f_{\mathsf{TV}}$ on inputs of length $i$, with error probability at most $2^{-2n}$ on any input. By a union bound over the phases, it follows from this that with all but exponentially small probability, the final circuit $C_n$ is a correct randomized circuit for $f_{\mathsf{TV}}$ on inputs of length $n$ (with error probability at most $2^{-2n}$), and hence that carrying out all the phases and then simulating $C_n$ on $x$ yields the correct value $f_{\mathsf{TV}}(x)$ with overwhelming probability.

Therefore our task reduces to arguing the correctness of Phase $i$ given the correctness of Phase $i-1$, for an arbitrary $i$ such that $1 < i \leq n$. We discuss the correctness of Part 1 and Part 2 separately. 

In Part 1, we argue that with all but exponentially small probability, at least one of the circuits $C_i^z$ computes $f_{\mathsf{TV}}$ correctly on all but a $1/i^{3k}$ fraction inputs of length $i$. Consider the string $z_i$ of length $a(i)$ that is the ``correct'' advice string for $A_{\mathsf{learn}}$ on input $1^i$. We only analyze Part 1 for the advice string $z_i$ -- the other advice strings are irrelevant to our analysis of correctness for this part. $A_{\mathsf{learn}}$ with advice $z_i$ is a correct learner for $\{f_{\mathsf{TV}}\}$; hence with probability at least $1 - 2^{-2i}$, it outputs a circuit that computes $f_{\mathsf{TV}}$ on at least a $1-1/i^{4k}$ fraction of inputs of length $i$, when it is given access to a \emph{correct} oracle for $f_{\mathsf{TV}}$. By running the learner $\mathsf{poly}(n)$ times independently and doing standard amplification, the success probability can be boosted to $1 - 2^{-2n}$, while keeping the agreement of the hypothesis with $f_{\mathsf{TV}}$ at least $1-1/i^{3k}$, and not affecting the efficiency of $M$ by more than a polynomial factor. $M$ might not be able to answer queries to $f_{\mathsf{TV}}$ with perfect accuracy, however by the inductive  hypothesis that $C_{j}$ has error at most $2^{-2n}$ on any specific input for $j < i$, it follows by a union bound that with probability at least $1 - T(i) 2^{-2n} \geq 1 - 2^{-n}$ over the internal randomness of $M$, the simulation of the learner is correct. Hence with probability at least $1 - 2^{-n}$, $M$ outputs a circuit $C_i^{z_i}$ during Phase $i$, Part 1, such that $C_i^{z_i}$ agrees with $f_{\mathsf{TV}}$ on at least a $1-1/i^{3k}$ fraction of inputs of length $i$. 

Next we analyze Part 2 of Phase $i$. By a union bound, with probability at least $1 - \mathsf{poly}(n)/2^{2n}$, the ``guesses'' $b_1, \ldots, b_{t} \in \{0,1\}$ are all the correct values for $f_{\mathsf{TV}}$ on inputs $y_1, \ldots , y_{t} \in \{0,1\}^{i}$, where by construction $t = n^{10k}$. By using a standard concentration bound such as Lemma \ref{l:concentration}, we have that the estimate $\rho_{z_i}$ is at least $1-1/i^{2k}$ with probability at least $1-2^{-4n}$, and that with probability at least $1-2^{-4n}$ any $z$ such that the agreement $\rho_z$ is at least $1-1/i^{2k}$ must be such that $C_i^z$ agrees with $f_{\mathsf{TV}}$ on at least a $1-1/i^{3k/2}$ fraction of inputs of length $i$. Thus with probability at least $1 - \mathsf{poly}(n)/2^{2n}$, we have that $C_i^{z_{\mathsf{max}}}$ has agreement at least $1-1/i^{3k/2}$ with $f_{\mathsf{TV}}$ on inputs of length $i$. By again using a union bound and a standard concentration bound such as Lemma \ref{l:concentration}, we have that with all but exponentially small probability, the corrected circuit $C_i$ is a randomized circuit which computes $f_{\mathsf{TV}}$ correctly on all inputs of length $i$, making error $< 2^{-2n}$ on any single input. This completes the inductive argument for correctness.
\end{proof}

\subsection{Karp-Lipton Results for Bounded-Error Exponential Time}

\begin{lemma} [Learnability with advice from distinguishability]
\label{l:GenericDistLearn}
Let $f \in \mathsf{EXP}$ be a Boolean function and $a \colon \mathbb{N} \rightarrow \mathbb{N}$ be a advice function.
\begin{enumerate}

\item[\emph{1.}] \emph{(High-End Generator)} There is a constant $c \geq 1$ such that for any $\varepsilon \in (0,1]$, there is a sequence of functions $\{G^{\mathsf{HE}}_n\}_{n \in \mathbb{N}}$ with $G^{\mathsf{HE}}_n \colon \{0,1\}^{n^c} \rightarrow \{0,1\}^{2^{n^{\varepsilon}}}$ computable in deterministic time $2^{O(n^c)}$ such that if there is a probabilistic procedure $A(1^n)$ taking $a(n)$ bits of advice and running in time $2^{O(n^{\varepsilon})}$, and outputting a circuit distinguisher for $G^{\mathsf{HE}}_n(U_{n^c})$ with constant probability for all large enough $n$, then $\{f\}$ is strongly learnable in time $2^{O(n^{\varepsilon})}$ with $a(n)$ bits of advice.

\item[\emph{2.}] \emph{(Low-End Generator)} There is a constant $c \geq 1$ such that for any $d \geq 1$, there is a sequence of functions $\{G^{\mathsf{LE}}_n\}_{n \in \mathbb{N}}$ with $G^{\mathsf{LE}}_n \colon \{0,1\}^{n^c} \rightarrow \{0,1\}^{2^{(\log n)^d}}$ computable in deterministic time $2^{O(n^c)}$ such that if there is a probabilistic quasipolynomial-time procedure $A(1^{n})$ taking $a(n)$ bits of advice and outputting a circuit distinguisher for $G^{\mathsf{LE}}_n(U_{n^c})$ with constant probability for all large enough $n$, then $\{f\}$ is strongly learnable in quasi-polynomial time with $a(n)$ bits of advice.
\end{enumerate}
\end{lemma}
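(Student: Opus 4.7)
The plan is to apply the Nisan--Wigderson generator construction to $f$ itself (after hardness amplification), exploiting the fact that the NW proof provides an explicit reconstruction procedure turning any distinguishing circuit, together with oracle access to the base function, into an approximator for that function. Since $f \in \mathsf{EXP}$, its truth tables at input length $n$ can be tabulated in time $2^{\mathsf{poly}(n)}$, which is what lets the generator itself be constructed within the claimed deterministic time bound.

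For the high-end case, fix $\varepsilon \in (0,1]$ and take $c = \max(2,k)$ where $k$ is chosen so that $f$ is computable in time $2^{n^k}$; note that $c$ depends only on $f$, not on $\varepsilon$. First I would apply the standard direct-product plus Goldreich--Levin amplification to $f|_n$, producing a function $\tilde f$ on $O(n)$ bits with the property that any circuit weakly approximating $\tilde f$ can be converted, with oracle access to $f$, into a circuit that is $1/n^j$-close to $f|_n$ for any fixed $j$. Then I instantiate Theorem~\ref{t:hardness_PRG} with target size $N = 2^{n^\varepsilon}$ and base function $\tilde f$: the seed length becomes $\ell(N) = O(n^2 / n^\varepsilon) \leq n^c$, and the whole range is computable in time $2^{O(\ell(N))} + 2^{O(n^k)} \leq 2^{O(n^c)}$ after tabulating $\tilde f$. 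Define $G^{\mathsf{HE}}_n$ to be the resulting function.

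Now if $A(1^n)$, with $a(n)$ bits of advice, outputs a circuit distinguisher $D$ of size $2^{O(n^\varepsilon)}$ for $G^{\mathsf{HE}}_n(U_{n^c})$ with constant probability, the NW reconstruction applied to $D$, simulating each oracle call to $\tilde f$ by the appropriate direct-product query to $f$, outputs with good probability a circuit that weakly approximates $\tilde f$; this runs in time $\mathsf{poly}(|D|,\,2^{n^\varepsilon}) = 2^{O(n^\varepsilon)}$. Composing this circuit with the reverse direction of the amplification yields a circuit $1/n^j$-close to $f|_n$. The whole procedure is the sought learner for $\{f\}$: it only makes membership queries to $f$, reuses the same $a(n)$ bits of advice fed to $A$, runs in time $2^{O(n^\varepsilon)}$, and has its success probability boosted to $1 - 1/n$ by standard repetition. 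The low-end case is identical modulo parameter choices: replace $N = 2^{n^\varepsilon}$ by $N = 2^{(\log n)^d}$, so that $\ell(N) = O(n^2 / (\log n)^d) \leq n^c$, the generator is still computable in time $2^{O(n^c)}$, and a quasi-polynomial-size distinguisher is turned by the same reconstruction into a quasi-polynomial-time learner.

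The main obstacle is the parameter bookkeeping: the lemma demands a single constant $c$ that works uniformly for all $\varepsilon \in (0,1]$ (resp.\ for all $d \geq 1$), so I must verify that the seed length of the NW generator, the time to tabulate $\tilde f$ using $f \in \mathsf{EXP}$, and the polynomial blow-ups coming from direct-product amplification all simultaneously fit under $n^c$ for a fixed $c$. A secondary concern is ensuring that the reverse direction of the hardness amplification is implementable with only membership queries to $f$, so that the reconstruction really yields a learner of the form required by Definition~\ref{d:learning}; this is precisely the constructive content of the Yao--Impagliazzo--Wigderson machinery that sits underneath Theorem~\ref{t:hardness_PRG}.
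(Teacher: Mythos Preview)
Your proposal is correct and follows essentially the same approach as the paper: the paper's proof is a one-line sketch stating that the lemma ``follows from the reconstruction procedure for the Nisan--Wigderson generator together with hardness amplification,'' and you have fleshed out precisely that argument with the appropriate parameter choices. One small inaccuracy: after the full worst-case-to-average-case amplification (local encoding plus XOR/direct product), $\tilde f$ lives on $\mathsf{poly}(n)$ rather than $O(n)$ bits, so the seed length becomes $O(n^{2b}/n^{\varepsilon})$ for some fixed $b$; this still yields a single constant $c$ independent of $\varepsilon$, so your conclusion stands.
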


\begin{proof} (Nutshell) This follows from the reconstruction procedure for the Nisan-Wigderson generator together with hardness amplification. We refer to \citep{DBLP:journals/jcss/NisanW94} for more details.
\end{proof}

\begin{theorem} [Low-end Karp-Lipton Theorem for bounded-error exponential time]
\label{t:BPEXP_KarpLipton}~\\
If there is a $k \geq 1$ such that $\mathsf{BPE} \subseteq \mathtt{i.o.}\mathsf{Circuit}[n^k]$, then $\mathsf{BPEXP} \subseteq \mathtt{i.o.}\mathsf{EXP}/O(\log n)$.
\end{theorem}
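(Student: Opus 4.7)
We prove the contrapositive: assuming $\mathsf{BPEXP} \nsubseteq \mathtt{i.o.}\mathsf{EXP}/O(\log n)$, we will show that for every fixed $k \geq 1$, $\mathsf{BPE} \nsubseteq \mathtt{i.o.}\mathsf{Circuit}[n^k]$. Fix such $k$, and let $L_0 \in \mathsf{BPEXP}$ witness the hypothesis, computed by a randomized machine $M$ in time $T(n) \leq 2^{n^{c_0}}$. The plan is to describe a candidate deterministic simulation of $L_0$ taking $O(\log n)$ bits of advice, infer from the hypothesis that this simulation must fail on all but finitely many input lengths, convert these failures into a sub-exponential randomized algorithm for the $\mathsf{PSPACE}$-complete function $f_\mathsf{TV}$ of Theorem~\ref{t:PSPACECompRsrDsr}, and finally apply diagonalization to obtain the $\mathsf{BPE}$ lower bound.

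The candidate simulation $A$ works as follows. On input $x \in \{0,1\}^n$, its $O(\log n)$-bit advice encodes an integer $\ell_n \in [n^c, (n+1)^c]$, where $c$ is a sufficiently large constant depending on $c_0$ and $k$. $A$ computes the truth table of $f_\mathsf{TV}$ at length $\ell_n$ by brute force in time $2^{O(\ell_n)}$, feeds it into the Nisan--Wigderson generator of Theorem~\ref{t:hardness_PRG} to obtain a PRG of seed length $\mathsf{poly}(n)$ fooling circuits of size $T(n)^{O(1)}$, enumerates all seeds, and outputs the majority of $M(x,\cdot)$ across these seeds. Thus $A$ runs in deterministic time $2^{\mathsf{poly}(n)}$ with $O(\log n)$ advice. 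By hypothesis, no choice of advice makes $A$ agree with $L_0$ on infinitely many lengths. Since the intervals $[n^c,(n+1)^c]$ cover all sufficiently large integers, by choosing $\ell_n$ appropriately we arrange that for every large enough $m$ there is an $n$ with $\ell_n = m$ and a bad input $x_n \in \{0,1\}^n$ on which $A$ outputs the wrong answer. For such an $x_n$, the circuit $s \mapsto M(x_n,s) \oplus L_0(x_n)$ has size $\mathsf{poly}(T(n)) \leq 2^{\mathsf{poly}(n)}$ and distinguishes the PRG seeded by $f_\mathsf{TV}$ at length $m$ from uniform with constant advantage.

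By Lemma~\ref{l:GenericDistLearn} (Nisan--Wigderson reconstruction composed with hardness amplification), this distinguisher yields a strong learner for $\{f_\mathsf{TV}\}$ at length $m$ running in time $2^{O(m^{\varepsilon})}$ and taking as advice the triple $(x_n, L_0(x_n), \ell_n)$; crucially, because $n \leq m^{1/c}$, this advice has length only $O(m^{1/c})$, well within the $\leq n$ bound required by Lemma~\ref{l:PSPACELearnSolve}. Applying Lemma~\ref{l:PSPACELearnSolve} then upgrades this advice-dependent learner to a bounded-error randomized algorithm for $f_\mathsf{TV}$ of running time $2^{O(m^{\varepsilon})} \cdot 2^{O(m^{1/c})} \cdot \mathsf{poly}(m)$ with the same $O(m^{1/c})$ advice bits. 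By the $\mathsf{PSPACE}$-completeness of $f_\mathsf{TV}$, this yields $\mathsf{PSPACE} \subseteq \mathsf{BPSUBEXP}/m^{1/c}$.

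The main obstacle is to eliminate the $(x_n,L_0(x_n),\ell_n)$ advice while remaining in $\mathsf{BPSUBEXP}$. Following the advice-elimination technique of \citep{DBLP:journals/cc/TrevisanV07}, we exploit the random self-reducibility of $f_\mathsf{TV}$ as a self-corrector. The simulation enumerates all $2^{O(m^{1/c})}$ candidate advice strings; for each, it runs the inductive circuit-construction of Lemma~\ref{l:PSPACELearnSolve}, and then cross-checks the resulting circuit on random inputs by comparing the value obtained by the candidate to the value reconstructed through the self-correction procedure applied to trusted circuits at smaller lengths. The correct advice survives with probability $1 - 2^{-\Omega(m)}$, while any candidate whose output disagrees with $f_\mathsf{TV}$ on a noticeable fraction of inputs is rejected with the same probability. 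The enumeration contributes only a factor $2^{O(m^{1/c})}$; by choosing $c$ large relative to $\varepsilon$, the total running time stays $2^{O(m^{\varepsilon'})}$ for any prescribed $\varepsilon' > 0$, so $\mathsf{PSPACE} \subseteq \mathsf{BPSUBEXP}$ with no advice. Finally, by Corollary~\ref{c:iockthard} applied with $S_1 = n^{3k}$ and $S_2 = n^k$, there is a language $L_k \in \mathsf{PSPACE}$ with $L_k \notin \mathtt{i.o.}\mathsf{Circuit}[n^k]$; since $\mathsf{BPSUBEXP} \subseteq \mathsf{BPE}$, this $L_k$ lies in $\mathsf{BPE} \setminus \mathtt{i.o.}\mathsf{Circuit}[n^k]$, completing the contrapositive.
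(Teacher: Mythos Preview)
Your proof is correct and follows essentially the same approach as the paper: contrapositive, a candidate $\mathsf{EXP}/O(\log n)$ simulation whose advice encodes the index length fed to the NW/IW generator built from $f_{\mathsf{TV}}$, failure of the simulation yields distinguishers at every length, reconstruction gives a learner for $\{f_{\mathsf{TV}}\}$ with sublinear advice, and self-reducibility plus Corollary~\ref{c:iockthard} finish. One correction: you misread Lemma~\ref{l:PSPACELearnSolve}---its conclusion already has \emph{no} advice (the $2^{a(n)}$ factor in the running time comes precisely from the lemma's proof cycling over all advice strings and using downward/random self-reducibility to pick out a good candidate at each length), so your sentence ``with the same $O(m^{1/c})$ advice bits'' and the entire subsequent advice-elimination paragraph are unnecessary; you are re-deriving by hand what the lemma already packages.
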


\begin{proof}
We will prove the contrapositive. For each bounded-error probabilistic exponential time machine
$M$, we will define for each rational $\varepsilon > 0$ a deterministic exponential-time machine $M_{\varepsilon}$ taking logarithmic advice which attempts to simulate it. If all of the attempted simulations $M_{\varepsilon}$ fail almost everywhere, we will show that $\mathsf{PSPACE} \subseteq \mathsf{BPSUBEXP}$, and we will then use a translation argument and Corollary \ref{c:iockthard} to conclude that $\mathsf{BPE} \not \subseteq \mathtt{i.o.}\mathsf{Circuit}[n^k]$, thus establishing the contrapositive. 

Let $M$ be any bounded-error probabilistic machine running in time $2^{m^j}$, where $m$ is the input length, and $j$ is a constant. We assume without loss of generality that $j \geq 1$, and that $M$ has error $< 1/4$ on any input. Let $\varepsilon > 0$ be any rational. We define the deterministic exponential-time machine $M_{\varepsilon}$ taking $O(\log m)$ bits of advice on inputs of length $m$ below. It uses the generators $\{G^{\mathsf{HE}}_n\}$ given by Lemma \ref{l:GenericDistLearn} corresponding to the $\mathsf{PSPACE}$-complete language $f_{\mathsf{TV}}$ in the statement of Theorem \ref{t:PSPACECompRsrDsr}, which is clearly in exponential time.

On input $x$ of length $m$, $M_{\varepsilon}$ first uses the advice on its tape to determine an integer $n$ such that $2^{2m^j} \leq 2^{n^{\varepsilon}} < 2^{2(m+1)^j}$. Note that any $n \in \mathbb{N}$ satisfying these conditions is such that $n = \Theta(m^{j/\varepsilon})$. Hence there are $\mathsf{poly}(m)$ possibilities for $n$, and any of these possibilities can be encoded using $O(\log m)$ bits on the advice tape. Given a number $i$ on the advice tape, $M_{\varepsilon}$ can decode the relevant $n$ by determining the $i$-th number in increasing order satisfying both inequalities. This can be done in $\mathsf{poly}(m)$ time since we can assume $\varepsilon$ is hard-coded into $M_{\varepsilon}$, and any single inequality verification can be done in $\mathsf{poly}(m)$ time. $M_{\varepsilon}$ then computes $R(y) = G^{\mathsf{HE}}_{n}(y)$ for every string $y \in \{0,1\}^{n^c}$. It simulates $M$ on $x$ using each string $R(y)$ in turn as the randomness for $M$, and outputs the majority result of these simulations. It is easy to see that $M_{\varepsilon}$ can be implemented to run in $2^{O(n^c)} = 2^{O(m^{cj/\varepsilon})}$ time, i.e, in time that is exponential on its input length $m$.

If any of the simulations $M_{\varepsilon}$ succeeds on infinitely many input lengths $m$, we have that $L(M) \in \mathtt{i.o.}\mathsf{EXP}/O(\log m)$. Suppose, contrariwise, that all of the simulations $M_{\varepsilon}$ fail almost everywhere. We will argue that $f_{\mathsf{TV}} \in \mathsf{BPSUBEXP}$ and hence, by completeness of $f_{\mathsf{TV}}$, $\mathsf{PSPACE} \subseteq \mathsf{BPSUBEXP}$.

For any $x \in \{0,1\}^m$, let $C_x$ be the circuit of size at most $2^{2m^j}$ defined as follows: the input of $C_x$ is the sequence of random bits $r$ used by $M$ in its computation on $x$. $C_x(r)$ accepts iff $M$ accepts on $x$ using the sequence $r$ of random bits. By the standard translation of deterministic computations into circuits, $C_x$ can be implemented in size at most $2^{2m^j}$, using the fact that $M$ halts in time $2^{m^j}$.

Fix any $\varepsilon > 0$. Let $n$ be an arbitrary positive integer, and let $m(n)$ be the unique $m$ such that $2^{2m^j} \leq 2^{n^{\varepsilon}} < 2^{2(m+1)^j}$ (observe that $h(a) \eqdef 2^{2a^j}$ is an increasing function, so this $m$ is indeed unique if $n$ is not too small). We claim that for every large enough $n$, there is an input $x$ of length $m(n)$ such that $C_x$ is a distinguisher for $G^{\mathsf{HE}}_n(U_{n^c})$. Indeed, if not, there are infinitely many $n$ such that for all inputs $x$ of length $m(n)$, $C_x$ is not a distinguisher, but this implies that the simulation $M_{\varepsilon}$ on inputs of length $m(n)$ would succeed infinitely often with advice encoding the input length $n$. Since for each $m$, there are only finitely many $n$ such that $m = m(n)$, it follows that the simulation $M_{\varepsilon}$ succeeds on infinitely many input lengths with logarithmic advice. But this  contradicts our assumption that the simulation $M_{\varepsilon}$ fails almost everywhere.

Now that our claim is established, we define a deterministic procedure $A(1^n)$ taking $O(n^{\varepsilon})$ bits of advice and running in time $2^{O(n^{\varepsilon})}$, which for each large enough $n$ produces a circuit distinguisher for $G^{\mathsf{HE}}_n$. The procedure $A$ computes $m(n)$ in polynomial time. Note that $m(n) = O(n^{\varepsilon/j}) = O(n^{\varepsilon})$, by our assumption that $j \geq 1$. $A$ then interprets its advice as an string $x$ of length $m(n)$. It computes $C_x$, which it can do given $x$ in time polynomial in the size of $C_x$, and outputs $C_x$. The time taken by $A$ is dominated by the time required to compute $C_x$, which is $2^{O(n^{\varepsilon})}$, and the advice used by $A$ is of size $O(n^{\varepsilon})$.

By applying Lemma \ref{l:GenericDistLearn}, we get that $\{f_{\mathsf{TV}}\}$ is strongly learnable in time $2^{O(n^{\varepsilon})}$ with $O(n^{\varepsilon})$ bits of advice. By applying Lemma \ref{l:PSPACELearnSolve}, we get that $f_{\mathsf{TV}}$ is computable in bounded-error probabilistic time $2^{O(n^{\varepsilon})}$. Note that this is the case for every $\varepsilon > 0$, since our choice of $\varepsilon$ was arbitrary. Thus we have $f_{\mathsf{TV}} \in \mathsf{BPSUBEXP}$, and hence by completeness that $\mathsf{PSPACE} \subseteq \mathsf{BPSUBEXP}$. Using a standard upward translation argument and applying Corollary \ref{c:iockthard}, we get that for every $k > 0$, $\mathsf{BPE} \not \subseteq \mathtt{i.o.}\mathsf{Circuit}[n^k]$, which is the desired conclusion.
\end{proof}

\begin{theorem} [High-end Karp-Lipton Theorem for bounded-error exponential time]
\label{t:BPEXP_KarpLipton_LowEnd}~\\
If $\mathsf{BPEXP} \subseteq \mathtt{i.o.}\mathsf{Circuit}[2^{n/3}]$, then for each $\varepsilon > 0$, $\mathsf{BPEXP} \subseteq \mathtt{i.o.}\mathsf{DTIME}(2^{2^{n^{\varepsilon}}})/n^{\varepsilon}$.
\end{theorem}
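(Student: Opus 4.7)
The proof is by contrapositive. Fix $\varepsilon > 0$ and suppose some $L \in \mathsf{BPEXP}$, computed by a bounded-error machine $M$ running in time $2^{m^j}$, satisfies $L \notin \mathtt{i.o.}\mathsf{DTIME}(2^{2^{m^{\varepsilon}}})/m^{\varepsilon}$. The goal is to produce a language in $\mathsf{BPEXP}$ outside $\mathtt{i.o.}\mathsf{Circuit}[2^{n/3}]$. The argument follows the template of Theorem~\ref{t:BPEXP_KarpLipton}, but applies the \emph{low-end} generator of Lemma~\ref{l:GenericDistLearn}(2) in place of the high-end one, which is what makes the conclusion $\mathsf{PSPACE} \subseteq \mathsf{BPQP}$ (rather than only $\mathsf{BPSUBEXP}$) accessible.

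Fix $\varepsilon' \in (0,\varepsilon)$ and set $d = \lceil jc/\varepsilon' \rceil + 1$, where $c$ is the constant supplied by Lemma~\ref{l:GenericDistLearn}(2). I would define a deterministic simulation $M_{\varepsilon'}$ that, on input $x$ of length $m$ and $m^{\varepsilon'}$ bits of advice, decodes an integer $n$ with $n^c \in [2^{m^{\varepsilon'}}, 2^{(m+1)^{\varepsilon'}}]$ (so $\log n = \Theta(m^{\varepsilon'})$ fits in the advice budget), enumerates each seed $s \in \{0,1\}^{n^c}$, feeds $G^{\mathsf{LE}}_n(s) \in \{0,1\}^{2^{(\log n)^d}}$ as randomness to $M(x)$, and outputs the majority vote; $M$ is first amplified to have success probability $1 - 2^{-m}$ so that any incorrect majority forces a distinguishing advantage of at least $1/4$. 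The choice of $d$ ensures $2^{(\log n)^d} \geq 2^{m^j}$, so there is enough randomness, and the runtime is $2^{O(n^c)} = 2^{O(2^{m^{\varepsilon'}})}$, which sits inside $2^{2^{m^{\varepsilon}}}$ for large $m$ by the choice $\varepsilon' < \varepsilon$. Under our assumption on $L$, the simulation $M_{\varepsilon'}$ must fail on all but finitely many $m$ for every advice sequence; hence for each large $n$ (equivalently each large $m = m(n) = O((\log n)^{1/\varepsilon'})$) there is some $x \in \{0,1\}^m$ such that the circuit $C_x$ of size at most $2^{O(m^j)}$ obtained by hardwiring $x$ into $M$ and viewing randomness as input is a distinguisher for $G^{\mathsf{LE}}_n(\mathcal{U}_{n^c})$.

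Since $|C_x| = 2^{O((\log n)^{j/\varepsilon'})}$ is quasi-polynomial in $n$ and $|x| = O((\log n)^{1/\varepsilon'})$ is polylogarithmic in $n$, there is a deterministic procedure $A(1^n)$ that, given $\mathrm{polylog}(n)$ bits of advice, outputs in quasi-polynomial time a distinguisher for $G^{\mathsf{LE}}_n$ for all sufficiently large $n$. Invoking Lemma~\ref{l:GenericDistLearn}(2) with $f = f_{\mathsf{TV}}$ gives that $\{f_{\mathsf{TV}}\}$ is strongly learnable in quasi-polynomial time with $\mathrm{polylog}(n)$ advice, and Lemma~\ref{l:PSPACELearnSolve} then places $f_{\mathsf{TV}} \in \mathsf{BPTIME}(2^{\mathrm{polylog}(n)}) = \mathsf{BPQP}$. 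By $\mathsf{PSPACE}$-completeness of $f_{\mathsf{TV}}$ (Theorem~\ref{t:PSPACECompRsrDsr}), we conclude $\mathsf{PSPACE} \subseteq \mathsf{BPQP}$.

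A standard padding step upgrades this to $\mathsf{EXPSPACE} \subseteq \mathsf{BPEXP}$: padding $L' \in \mathsf{DSPACE}(2^{n^c})$ to inputs of length $N = 2^{n^c}$ places it in $\mathsf{PSPACE} \subseteq \mathsf{BPQP} = \mathsf{BPTIME}(2^{\mathrm{polylog}(N)})$, so $L' \in \mathsf{BPTIME}(2^{\mathrm{poly}(n)}) \subseteq \mathsf{BPEXP}$ after unpadding. Finally, applying Corollary~\ref{c:iockthard} with $S_1(n) = 2^{0.8 n}$ and $S_2(n) = 2^{n/3}$ produces a language $L^\star \in \mathsf{DSPACE}(2^{0.8 n}) \subseteq \mathsf{EXPSPACE} \subseteq \mathsf{BPEXP}$ with $L^\star \notin \mathtt{i.o.}\mathsf{Circuit}[2^{n/3}]$, which negates the hypothesis of the theorem and completes the contrapositive. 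The main technical hurdle is the parameter juggling in the derandomization step: one must simultaneously choose $d$ large enough relative to $j/\varepsilon'$ so that $G^{\mathsf{LE}}_n$ produces enough randomness to feed $M$, keep $\varepsilon' < \varepsilon$ so that the seed enumeration fits in the target time bound, and keep $|C_x|$ quasi-polynomial in $n$ so that part~(2) of Lemma~\ref{l:GenericDistLearn} (rather than the weaker part~(1)) is the one that applies.
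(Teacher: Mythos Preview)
Your proposal is correct and follows essentially the same route as the paper's own proof, which is given only as a sketch: replace the high-end generator $G^{\mathsf{HE}}_n$ of Theorem~\ref{t:BPEXP_KarpLipton} by the low-end generator $G^{\mathsf{LE}}_n$ from Lemma~\ref{l:GenericDistLearn}(2), derive $\mathsf{PSPACE} \subseteq \mathsf{BPQP}$ from the failure of the deterministic simulation, translate upward to $\mathsf{EXPSPACE} = \mathsf{BPEXP}$, and finish with Corollary~\ref{c:iockthard}. Your parameter choices (taking $\varepsilon' < \varepsilon$, $d > j/\varepsilon'$, and noting that $|x| = \mathrm{polylog}(n)$ and $|C_x| = 2^{\mathrm{polylog}(n)}$ so that part~(2) of Lemma~\ref{l:GenericDistLearn} applies) are exactly the details the paper elides under ``adjusting other parameters accordingly.''
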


\begin{proof} (Sketch) The proof is entirely analogous to the proof of Theorem \ref{t:BPEXP_KarpLipton}, except that we use generators $G^{\mathsf{LE}}_n$ rather than the generators $G^{\mathsf{HE}}_n$, adjusting other parameters accordingly. We get that either $\mathsf{PSPACE} \subseteq \mathsf{BPQP}$, or that for every $\varepsilon > 0$, $\mathsf{BPEXP} \subseteq \mathtt{i.o.}\mathsf{DTIME}(^{2^{n^{\varepsilon}}})/n^{\varepsilon}$. In the first case, by upward translation, we get that $\mathsf{EXPSPACE} = \mathsf{BPEXP}$, and then by using Corollary \ref{c:iockthard}, we conclude that $\mathsf{BPEXP} \not \subseteq \mathtt{i.o.}\mathsf{Circuit}[2^{n/3}]$. 
\end{proof}

\begin{theorem} [Low-end fully uniform Karp-Lipton style theorem for probabilistic time]
\label{t:REXP_KarpLipton}~\\
If there is a $k \geq 1$ such that $\mathsf{BPE} \subseteq \mathtt{i.o.}\mathsf{Circuit}[n^k]$, then $\mathsf{REXP} \subseteq \mathtt{i.o.}\mathsf{EXP}$.
\end{theorem}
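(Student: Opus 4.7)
The plan is to revisit the proof of Theorem~\ref{t:BPEXP_KarpLipton} and strip away the logarithmic advice by exploiting the one-sided error of $\mathsf{REXP}$ machines. Let $L \in \mathsf{REXP}$ be decided by a probabilistic machine $M$ running in time $2^{m^j}$ such that $M$ never accepts on $x \notin L$ and accepts with probability at least $2/3$ on $x \in L$.

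First I would apply the construction from the proof of Theorem~\ref{t:BPEXP_KarpLipton} verbatim to this one-sided error $M$. This produces, for each rational $\varepsilon > 0$, a deterministic exponential-time machine $M_\varepsilon$ that takes $O(\log m)$ bits of advice on inputs of length $m$, where the advice merely specifies which candidate input length $n$ (among the $\mathsf{poly}(m)$ integers satisfying $2^{2m^j} \leq 2^{n^\varepsilon} < 2^{2(m+1)^j}$) to use inside the high-end NW generator $G^{\mathsf{HE}}_n$. Under the hypothesis $\mathsf{BPE} \subseteq \mathtt{i.o.}\mathsf{Circuit}[n^k]$, the contrapositive argument carried out in the proof of Theorem~\ref{t:BPEXP_KarpLipton} guarantees that for some $\varepsilon > 0$ there is an advice sequence under which $M_\varepsilon$ decides $L$ correctly on all inputs of infinitely many input lengths $m$.

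Next I would define an advice-free deterministic exponential-time machine $M'$ which, on input $x$ of length $m$, enumerates the $\mathsf{poly}(m)$ possible advice values, simulates $M_\varepsilon$ with each, and accepts if and only if some simulation accepts. Equivalently, $M'$ runs $M$ on every seed of $G^{\mathsf{HE}}_n$ for every candidate $n$, and accepts iff at least one such run of $M$ accepts. The overall running time remains exponential since $M'$ performs only polynomially many simulations of an exponential-time procedure. For correctness, the one-sided error of $M$ ensures that if $x \notin L$ then every internal run of $M$ rejects and hence $M'$ rejects; while if $x \in L$ and $m$ is among the infinitely many ``good'' input lengths, then the correct advice causes $M_\varepsilon$ to accept on $x$, whence $M'$ also accepts. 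This yields $L \in \mathtt{i.o.}\mathsf{EXP}$, and since $L \in \mathsf{REXP}$ was arbitrary, $\mathsf{REXP} \subseteq \mathtt{i.o.}\mathsf{EXP}$.

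The main point to be careful about — and the only place where I expect a genuine issue to hide — is justifying the replacement of the majority vote in Theorem~\ref{t:BPEXP_KarpLipton} by an OR over advice strings. This is precisely where the one-sidedness of $\mathsf{REXP}$ is essential: wrong choices of $n$ may cause the pseudorandom simulations to suppress true acceptances, but they can never fabricate an acceptance, because any acceptance of $M'$ must be witnessed by some genuine accepting run of $M$ on some string, which is impossible when $x \notin L$. Thus the OR preserves soundness unconditionally while inheriting the completeness guarantee supplied by the correct advice on infinitely many lengths, completing the plan.
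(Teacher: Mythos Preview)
Your proposal is correct and takes essentially the same approach as the paper: the paper directly defines the advice-free simulator to run $M$ on every element in the range of $G^{\mathsf{HE}}_n$ for \emph{all} candidate $n$ and accept iff any run accepts (phrased as ``the union of hitting sets is a hitting set''), which is precisely your ``equivalently'' formulation. Your presentation via first instantiating the majority-vote $M_\varepsilon$ and then OR-ing over advice is a slightly more roundabout route to the same machine, and your use of ``equivalently'' is a bit loose (the two procedures are not literally identical, only both correct for one-sided $M$), but the key insight---that one-sided error makes wrong choices of $n$ harmless for soundness---is exactly the paper's.
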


\begin{proof} (Sketch) We use the crucial fact that the union of hitting sets is also a hitting set to eliminate the advice in the simulation. The argument is the same as in the proof of Theorem \ref{t:BPEXP_KarpLipton}, except that the simulating machine $M_{\varepsilon}$ runs $M$ on $x$ using as randomness $R$ every element in turn that is in the range of $G^{\mathsf{HE}}_n$ for every $n$ such that $2^{2m^j} \leq 2^{n^{\varepsilon}} < 2^{2(m+1)^j}$, accepting if and only if any of these runs accepts. Note that $M_{\varepsilon}$ does not take advice. We do not need to give the ``correct'' $n$ as advice to the machine because, if any $n$ in the interval produces an accepting path (corresponding to a string in the range of the generator), then $\bigcup_n G_n^{\mathsf{HE}}(U_{n^c})$ for $n$ as above contains an accepting path for $M$ on $x$. Finally, we observe that computing the range of the generator for every such $n$ does not blow-up the complexity of the simulation by more than a polynomial factor.
\end{proof}

\begin{theorem} [High-end fully uniform Karp-Lipton style theorem for probabilistic time]
\label{t:REXP_KarpLipton_LowEnd}~\\
If $\mathsf{BPEXP} \subseteq \mathtt{i.o.}\mathsf{Circuit}[2^{n/3}]$, then $\mathsf{REXP} \subseteq \mathtt{i.o.}\mathsf{ESUBEXP}$.
\end{theorem}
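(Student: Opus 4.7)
The plan is to combine the low-end generator construction sketched in Theorem~\ref{t:BPEXP_KarpLipton_LowEnd} with the advice-elimination trick from Theorem~\ref{t:REXP_KarpLipton}, exploiting the one-sidedness of $\mathsf{REXP}$ to take the union of candidate hitting sets across all possible choices of the generator's input length. Fix $L \in \mathsf{REXP}$ decided by a one-sided error machine $M$ running in time $2^{m^j}$ (with zero error on negative instances), and fix $\varepsilon > 0$. Let $c$ be the constant from Lemma~\ref{l:GenericDistLearn}(2), and choose $d$ a sufficiently large integer with $d > cj/\varepsilon$; this determines the ``stretch'' of the low-end generator we will invoke.

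I would define the deterministic simulation $M_\varepsilon$ as follows: on input $x$ of length $m$, enumerate every integer $n$ in the interval $I_m \eqdef [\,2^{m^{j/d}},\, 2^{(m+1)^{j/d}}\,)$ and every seed $y \in \{0,1\}^{n^c}$, compute $G^{\mathsf{LE}}_n(y)$, interpret a prefix of length $2^{m^j}$ as randomness for $M$, simulate $M(x)$ with this randomness, and accept iff some simulation accepts. The parameter choice guarantees that $(\log n)^d \geq m^j$ so the generator output is long enough to feed $M$, and that the total enumeration takes time $|I_m| \cdot 2^{n^c} \cdot 2^{O(n^c)} \cdot 2^{m^j} \leq 2^{2^{O(m^{j/d})}} \leq 2^{2^{m^\varepsilon}}$, placing $M_\varepsilon$ in $\mathsf{DTIME}(2^{2^{m^\varepsilon}})$. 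Because $M$ rejects every negative instance on all random strings, $M_\varepsilon$ is correct on negative instances unconditionally; crucially, $M_\varepsilon$ takes \emph{no} advice, since on positive instances it suffices that \emph{some} $n \in I_m$ produces a hitting set for the $\geq 3/4$ fraction of accepting random strings of $M$ on $x$.

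Now suppose for contradiction that $M_\varepsilon$ fails on almost every input length. Then for infinitely many $m$ there exists $x \in L \cap \{0,1\}^m$ such that the circuit $C_x$ simulating $r \mapsto M(x;r)$, of size $2^{O(m^j)}$, outputs $0$ on every element of $\bigcup_{n \in I_m} G^{\mathsf{LE}}_n(U_{n^c})$, while $\Pr_r[C_x(r)=1]\geq 3/4$. In particular, $C_x$ is a distinguisher for $G^{\mathsf{LE}}_n(U_{n^c})$ for \emph{every} $n \in I_m$. Taking, for each sufficiently large $n$, the lexicographically smallest such bad $x$ of the unique length $m(n)$ with $n \in I_{m(n)}$ as an advice string of size $m(n) = \Theta((\log n)^{d/j}) = \mathsf{polylog}(n)$ yields a deterministic quasi-polynomial-time distinguisher for $G^{\mathsf{LE}}_n(U_{n^c})$ with $\mathsf{polylog}(n)$ advice. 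Plugging this into Lemma~\ref{l:GenericDistLearn}(2) gives a quasi-polynomial-time strong learner for $\{f_{\mathsf{TV}}\}$ with $\mathsf{polylog}(n)$ advice, and then Lemma~\ref{l:PSPACELearnSolve} gives $f_{\mathsf{TV}} \in \mathsf{BPQP}$, so $\mathsf{PSPACE} \subseteq \mathsf{BPQP}$.

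A standard upward translation lifts this to $\mathsf{EXPSPACE} \subseteq \mathsf{BPEXP}$, and Corollary~\ref{c:iockthard} applied with $S_1(n)=2^n$ and $S_2(n)=2^{n/3}$ then produces a language in $\mathsf{EXPSPACE}\subseteq\mathsf{BPEXP}$ that is not in $\mathtt{i.o.}\mathsf{Circuit}[2^{n/3}]$, contradicting the hypothesis. The main obstacle is the parameter bookkeeping: one must simultaneously ensure (i) that the generator output $2^{(\log n)^d}$ is long enough for $n$ at the low end of $I_m$ to feed $M$, (ii) that the doubly-exponential enumeration over $n \in I_m$ and seeds $y \in \{0,1\}^{n^c}$ still fits inside $2^{2^{m^\varepsilon}}$, and (iii) that the implicit advice $m(n)$ arising in the contrapositive stays polylogarithmic in $n$, which is what makes Lemma~\ref{l:PSPACELearnSolve} land in $\mathsf{BPQP}$ rather than in a weaker class; constraint (iii) is ultimately what forces the choice $d > cj/\varepsilon$.
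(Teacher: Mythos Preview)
Your proposal is correct and follows essentially the same route as the paper: you instantiate the argument of Theorem~\ref{t:REXP_KarpLipton} with the low-end generators $G^{\mathsf{LE}}_n$ in place of $G^{\mathsf{HE}}_n$, use the one-sidedness of $M$ to take the union of candidate hitting sets over all $n$ in the appropriate interval (eliminating advice), and in the contrapositive derive $\mathsf{PSPACE}\subseteq\mathsf{BPQP}$ via Lemma~\ref{l:GenericDistLearn}(2) and Lemma~\ref{l:PSPACELearnSolve}, then translate upward and invoke Corollary~\ref{c:iockthard}. Two small cosmetic points: after assuming $M_\varepsilon$ fails on \emph{almost every} input length you should say ``for all large enough $m$'' rather than ``for infinitely many $m$'' (you implicitly use the stronger statement when you pick a bad $x$ for \emph{every} large $n$), and the constraint that actually forces $d>cj/\varepsilon$ is (ii), the running-time bound $2^{O(n^c)}\le 2^{2^{m^\varepsilon}}$, not (iii), since any constant $d$ already makes $m(n)=\Theta((\log n)^{d/j})$ polylogarithmic.
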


\begin{proof} (Sketch)
The proof is entirely analogous to the proof of Theorem \ref{t:REXP_KarpLipton}, except that we use generators $G^{\mathsf{LE}}_n$ rather than the generators $G^{\mathsf{HE}}_n$, adjusting other parameters accordingly.
\end{proof}

These results can be combined with a Karp-Lipton collapse for deterministic exponential time. For instance, the following holds. 

\begin{corollary}
If there is $k \in \mathbb{N}$ such that $\mathsf{BPE} \subseteq \mathsf{Circuit}[n^k]$, then $\mathsf{REXP} \subseteq \mathtt{i.o.}\mathsf{MA}$.
\end{corollary}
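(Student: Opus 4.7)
The plan is to combine the fully uniform Karp--Lipton style theorem for probabilistic time (Theorem~\ref{t:REXP_KarpLipton}) with the classical Karp--Lipton theorem for deterministic exponential time due to Babai, Fortnow, Nisan and Wigderson \citep{DBLP:journals/cc/BabaiFNW93}, which states that $\mathsf{EXP} \subseteq \mathsf{P}/\mathsf{poly}$ implies $\mathsf{EXP} = \mathsf{MA}$. Since the hypothesis $\mathsf{BPE} \subseteq \mathsf{Circuit}[n^k]$ (with no $\mathtt{i.o.}$) is strictly stronger than the hypothesis of Theorem~\ref{t:REXP_KarpLipton}, one side of the argument is immediate; the remaining ingredient is to upgrade the circuit upper bound to $\mathsf{EXP} \subseteq \mathsf{P}/\mathsf{poly}$ so that Babai--Fortnow--Nisan--Wigderson becomes applicable.

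First I would apply Theorem~\ref{t:REXP_KarpLipton} directly: the hypothesis $\mathsf{BPE} \subseteq \mathsf{Circuit}[n^k]$ trivially implies $\mathsf{BPE} \subseteq \mathtt{i.o.}\mathsf{Circuit}[n^k]$, and therefore $\mathsf{REXP} \subseteq \mathtt{i.o.}\mathsf{EXP}$. Next I would observe, by a standard padding argument, that $\mathsf{BPE} \subseteq \mathsf{Circuit}[n^k]$ implies $\mathsf{BPEXP} \subseteq \mathsf{Circuit}[\mathsf{poly}]$: any $L \in \mathsf{BPTIME}(2^{n^c})$ pads to a language in $\mathsf{BPE}$ on inputs of length $n^c$, which by hypothesis has circuits of size $n^{ck}$, and these pull back to polynomial-size circuits for $L$. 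In particular $\mathsf{EXP} \subseteq \mathsf{BPEXP} \subseteq \mathsf{P}/\mathsf{poly}$.

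Now I would invoke the Babai--Fortnow--Nisan--Wigderson Karp--Lipton theorem to conclude that $\mathsf{EXP} = \mathsf{MA}$. Combining this with the first step, every $L \in \mathsf{REXP}$ agrees on infinitely many input lengths with some $L' \in \mathsf{EXP} = \mathsf{MA}$, hence $L \in \mathtt{i.o.}\mathsf{MA}$, establishing $\mathsf{REXP} \subseteq \mathtt{i.o.}\mathsf{MA}$.

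The argument is essentially a two-line combination of existing results, so there is no serious technical obstacle; the main conceptual point is simply the observation that the hypothesis is strong enough to feed \emph{both} Theorem~\ref{t:REXP_KarpLipton} (yielding a deterministic simulation of $\mathsf{REXP}$ on infinitely many lengths) \emph{and} the classical EXP Karp--Lipton collapse (upgrading that deterministic simulation to an $\mathsf{MA}$ simulation), with the padding step being the only nontrivial bookkeeping required to bridge $\mathsf{BPE}$ and $\mathsf{EXP}$.
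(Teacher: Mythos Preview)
Your proposal is correct and follows essentially the same route as the paper's proof: apply Theorem~\ref{t:REXP_KarpLipton} to get $\mathsf{REXP} \subseteq \mathtt{i.o.}\mathsf{EXP}$, use the hypothesis plus padding to obtain $\mathsf{EXP} \subseteq \mathsf{Circuit}[\mathsf{poly}]$, and then invoke the Babai--Fortnow--Nisan--Wigderson collapse $\mathsf{EXP} = \mathsf{MA}$. The only cosmetic difference is that the paper pads via $\mathsf{E} \subseteq \mathsf{BPE} \subseteq \mathsf{Circuit}[n^k]$ whereas you pad $\mathsf{BPE}$ up to $\mathsf{BPEXP}$ first; both routes arrive at $\mathsf{EXP} \subseteq \mathsf{P}/\mathsf{poly}$ equally directly.
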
 

\begin{proof}
It follows from the hypothesis that $\mathsf{E} \subseteq \mathsf{Circuit}[n^k]$, and hence $\mathsf{EXP} \subseteq \mathsf{Circuit}[\mathsf{poly}]$ by translation. This in turn implies that $\mathsf{EXP} = \mathsf{MA}$ \citep{DBLP:journals/cc/BabaiFNW93}. Moreover, the hypothesis gives $\mathsf{REXP} \subseteq \mathtt{i.o.}\mathsf{EXP}$ using Theorem \ref{t:REXP_KarpLipton}. Consequently, we get $\mathsf{REXP} \subseteq \mathtt{i.o.}\mathsf{MA}$, which completes the proof.
\end{proof}

\subsection{Karp-Lipton Results for Zero-Error Exponential Time}

\begin{lemma} [Fully uniform simulations using easy witness and truth-table concatenation]
\label{l:EasyWitness}~\\
Either $\mathsf{BPP} \subseteq \mathsf{ZPQP}$, or $\mathsf{ZPEXP} \subseteq \mathtt{i.o.}\mathsf{ESUBEXP}$.
\end{lemma}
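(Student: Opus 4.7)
My plan is to prove the contrapositive: assuming $\mathsf{ZPEXP} \nsubseteq \mathtt{i.o.}\mathsf{ESUBEXP}$, I will build a $\mathsf{ZPQP}$ simulator for every $\mathsf{BPP}$ machine, combining Kabanets' easy-witness method, a truth-table concatenation trick, and the Impagliazzo--Wigderson generator of Theorem~\ref{t:hardness_PRG}. Let $L^\star \in \mathsf{ZPEXP}$ witness the assumption, computed by a zero-error machine $M^\star$ running in time $2^{n^c}$ and using $R(n) \leq 2^{n^c}$ random bits. For each $\varepsilon > 0$, consider the deterministic attempted simulation $N_\varepsilon$: on input $x$ of length $n$, enumerate every $r \in \{0,1\}^{R(n)}$ that is the truth table of a circuit of size $\leq 2^{n^{\varepsilon/3}}$, evaluate $M^\star(x,r)$ for each, and output the first definite answer encountered, defaulting to $0$ otherwise. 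By Lemma~\ref{l:number_circuits} this enumeration has at most $2^{2^{n^{\varepsilon/2}}}$ terms, so $N_\varepsilon$ runs in $\mathsf{ESUBEXP}$ time; the assumption therefore forces $N_\varepsilon(x_n) \neq L^\star(x_n)$ for some $x_n$ at every sufficiently large $n$. Combined with the zero-error property of $M^\star$, the only possible form of disagreement is $x_n \in L^\star$ with every $r$ satisfying $M^\star(x_n,r)=1$ being the truth table of a function on $m(n) := \log R(n)$ bits of circuit complexity at least $2^{n^{\varepsilon/3}}$.

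Given a $\mathsf{BPP}$ machine $M'$ running in time $N^k$ on inputs of length $N$, I will set $n := (k c_0 \log N)^{4/\varepsilon}$, where $c_0$ is the constant in Theorem~\ref{t:hardness_PRG}, so that $2^{n^{\varepsilon/3}}$ exceeds the hardness threshold $N^{k c_0}$ while $n + m(n)$ remains $\mathrm{polylog}(N)$. On input $y \in \{0,1\}^N$, the $\mathsf{ZPQP}$ simulator iterates over every $x \in \{0,1\}^n$ (a loop of length $2^n = 2^{\mathrm{polylog}(N)}$); for each $x$ it performs rejection sampling, drawing $r \sim \mathcal{U}_{R(n)}$ independently up to $2n$ times and evaluating $M^\star(x,r)$ until a definite answer appears, setting $r_x$ to that $r$. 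If any rejection sampling never yields a definite answer, the simulator halts with output `?'. Otherwise it forms the concatenation $T_n := r_0 \parallel r_1 \parallel \cdots \parallel r_{2^n-1}$, viewed as the truth table of a function on $n + m(n)$ bits, feeds it into the generator of Theorem~\ref{t:hardness_PRG} to obtain a $(\mathrm{polylog}(N),1/N)$-pseudorandom generator against $\mathsf{Circuit}[N^k]$, and derandomizes $M'(y)$ by majority vote over all seeds.

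The concatenation is the crucial step: because $T_n$ restricted to the $x_n$-block equals $r_{x_n}$, the circuit complexity of $T_n$ inherits that of $r_{x_n}$, and is therefore at least $2^{n^{\varepsilon/3}} \geq N^{k c_0}$. Hence the PRG is genuinely secure, the majority-vote derandomization is correct with certainty, and the simulator never outputs a wrong Boolean answer; the only `?' outputs come from rejection-sampling failures, whose total probability is at most $2^n \cdot 2^{-2n} = 2^{-n}$ by a union bound, far below the $\mathsf{ZPQP}$ threshold. The trick of concatenating every $r_x$ into $T_n$ is what lets the simulator succeed without knowing which $x \in \{0,1\}^n$ plays the role of $x_n$: even if most $r_x$ happen to be truth tables of small circuits, the single hard block suffices.

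The principal obstacle I expect is parameter balancing among $c$, $\varepsilon$, and $k$: $\varepsilon$ must be chosen small relative to $c$ so that $N_\varepsilon \in \mathsf{ESUBEXP}$, yet large enough that $2^{n^{\varepsilon/3}}$ with $n = \mathrm{polylog}(N)$ clears the hardness threshold of Theorem~\ref{t:hardness_PRG} against $\mathsf{Circuit}[N^k]$, while the resulting seed length and outer loop both remain quasi-polynomial. A secondary delicate point is that $N_\varepsilon$ must fail at \emph{every} sufficiently large $n$, not only infinitely many, which is precisely what $L^\star \notin \mathtt{i.o.}\mathsf{ESUBEXP}$ gives; weakening the hypothesis to $L^\star \notin \mathsf{ESUBEXP}$ would collapse the argument to an i.o.\ simulation of $\mathsf{BPP}$, yielding a weaker conclusion than $\mathsf{BPP} \subseteq \mathsf{ZPQP}$.
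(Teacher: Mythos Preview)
Your proposal is correct and follows essentially the same route as the paper: the easy-witness enumeration over small-circuit truth tables for the $\mathsf{ZPEXP}$ machine, the observation that failure of this simulation on some input $x_n$ at every large length yields random strings of high circuit complexity, and the concatenation of the witness strings $r_x$ over all $x\in\{0,1\}^n$ so that the hard block $r_{x_n}$ forces the whole truth table to be hard enough to drive the Impagliazzo--Wigderson generator. The only cosmetic differences are that the paper runs $M^\star$ once per $x$ (outputting `?' on any single failure) rather than rejection-sampling $2n$ times, and that the paper works with a single $\varepsilon$ for which the deterministic simulation fails almost everywhere rather than asserting it for every $\varepsilon$ (your phrasing ``for each $\varepsilon>0$ \ldots\ the assumption therefore forces'' slightly overstates what $L^\star\notin\mathtt{i.o.}\mathsf{ESUBEXP}$ literally gives, but one fixed $\varepsilon$ is all you actually use downstream).
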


\begin{proof} We use the ``easy witness'' method of Kabanets \citep{DBLP:journals/jcss/Kabanets01}. Let $M$ be any probabilistic Turing machine with zero error running in time $2^{m^j}$ for some $j \geq 1$, such that on each random computation path of $M$ on any input $x$, the output is either the correct answer for $M$ on $x$ or `?', and moreover the probability of outputting `?' is less than $2^{-2m}$ for any input $x \in \{0,1\}^m$. For each $\varepsilon > 0$, we define the following attempted deterministic simulation $M_{\varepsilon}$ for $M$. 
On input $x$ of length $m$, $M_{\varepsilon}$ cycles over all circuits $C$ of size $2^{m^{\varepsilon/2}}$ on $m^j$ inputs. For each such circuit, it explicitly computes the truth table $\mathsf{tt}(C)$ of the circuit $C$, and runs $M$ on $x$ with $\mathsf{tt}(C)$ as randomness. If the run accepts, it accepts; if the run rejects, it rejects. If the run outputs `?', it moves on to the next circuit $C$ in lexicographic order of circuit encodings. If all runs output `?', the machine rejects. It should be clear that the simulation $M_{\varepsilon}$ runs in deterministic time $\leq 2^{2^{m^{\varepsilon}}}$ on inputs of length $m$ for any sufficiently large $m$, and only accepts inputs $x \in L(M)$. 

If for each $\varepsilon > 0$, we have that the simulation $M_{\varepsilon}$ solves $L(M)$ correctly on all inputs of length $m$ for infinitely many input lengths $m$, we have that $L(M) \subseteq \mathtt{i.o.}\mathsf{ESUBEXP}$. 

Suppose, on the contrary, that there is some $\varepsilon > 0$ such that the simulation $M_{\varepsilon}$ fails on at least one input $x$ of each large enough input length $m$. We show how to use this to decide every language in $\mathsf{BPP}$ in $\mathsf{ZPQP}$.

Let $N$ be any bounded-error probabilistic machine running in time at most $n^k$ for some constant $k$ and large enough $n$. Assume without loss of generality that $N$ has error $\leq 1/10$ on any input of length $n$. We use $M$ to give a zero-error simulation $N'$ of $N$ on all inputs of large enough length. Given an input $y$ of length $n$, $N'$ simulates $M$ on each input $x$ of length $m(n) \eqdef (\lceil \log n \rceil)^d$ in turn, for some constant $d \geq 1$ to be specified later. If $M$ outputs `?', $N'$ outputs `?', otherwise it moves on to the next input in lexicographic order. If running $M$ gives `?' outputs for every input $x$ of length $m(n)$, $N'$ outputs `?'. Otherwise, $N'$ concatenates the random strings used on the computation paths of $M$ for each input of length $m(n)$ into a single string $R_n$ of length $O(2^{\mathsf{polylog}(n)})$. It then uses $R_n$ as the truth-table of the hard function for the generator in Theorem \ref{t:hardness_PRG}, setting parameters so that at least $n^{2k}$ pseudorandom bits are produced by the generator. It cycles over all possible seeds of the generator and runs $N$ using each output in turn as the sequence of random choices, accepting if and only if a majority of runs accepts.

Setting $d$ to be a large enough constant depending on $j, k, \varepsilon$ and the constant $c$ in the statement of Theorem \ref{t:hardness_PRG}, it can be shown that this simulation can be done in quasi-polynomial time, and that it is correct for each input $y$ of large enough length whenever $N'$ does not output `?'. The key here is that by the failure of $M_{\varepsilon}$ for at least one input of any large enough length, the string $R_n$ is guaranteed to be hard enough that the generator is correct. This is because $R_n$ contains a subfunction of sufficiently large worst-case circuit complexity. Hence cycling over all seeds of the generator and taking the majority value gives the correct answer for $N$ on input $y$. Finally, under our initial assumption that $M$ has exponentially small failure probability, by a union bound the probability that $N'$ outputs `?' on any large enough input is small. This concludes the proof that $\mathsf{BPP} \subseteq \mathsf{ZPQP}$.
\end{proof}

\begin{theorem} [High-end fully uniform Karp-Lipton theorem for zero-error exponential time]
\label{t:ZPEXP_Karp-Lipton}~\\
If $\mathsf{ZPEXP} \subseteq \mathtt{i.o.}\mathsf{Circuit}[2^{n/3}]$, then $\mathsf{ZPEXP} \subseteq \mathtt{i.o.}\mathsf{ESUBEXP}$.
\end{theorem}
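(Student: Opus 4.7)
The plan is to combine the easy-witness dichotomy in Lemma \ref{l:EasyWitness} with the fully uniform high-end Karp--Lipton collapse in Theorem \ref{t:REXP_KarpLipton_LowEnd}, using a standard padding translation to bridge between bounded-error and zero-error classes.

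First I would invoke Lemma \ref{l:EasyWitness}, which gives the dichotomy: either $\mathsf{ZPEXP} \subseteq \mathtt{i.o.}\mathsf{ESUBEXP}$, in which case the desired conclusion holds unconditionally and we are done; or else $\mathsf{BPP} \subseteq \mathsf{ZPQP}$. The remainder of the argument treats the second branch and is where the hypothesis $\mathsf{ZPEXP} \subseteq \mathtt{i.o.}\mathsf{Circuit}[2^{n/3}]$ is used.

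Next, from $\mathsf{BPP} \subseteq \mathsf{ZPQP}$ I would upward-translate by padding. Given $L \in \mathsf{BPEXP}$ decided in time $2^{n^c}$, the padded language $L' = \{(x, 1^{2^{|x|^c}}) : x \in L\}$ lies in $\mathsf{BPP}$ and hence in $\mathsf{ZPQP}$; reading off the running time on inputs of length $N = 2^{n^c}+n$ gives a zero-error simulation of $L$ in time $2^{n^{O(1)}}$, so $\mathsf{BPEXP} \subseteq \mathsf{ZPEXP}$ (and in fact $\mathsf{BPEXP} = \mathsf{ZPEXP}$). Combining this equality with the hypothesis $\mathsf{ZPEXP} \subseteq \mathtt{i.o.}\mathsf{Circuit}[2^{n/3}]$ yields $\mathsf{BPEXP} \subseteq \mathtt{i.o.}\mathsf{Circuit}[2^{n/3}]$, which is exactly the hypothesis of the high-end Karp--Lipton theorem for $\mathsf{REXP}$.

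Finally I would apply Theorem \ref{t:REXP_KarpLipton_LowEnd} to conclude $\mathsf{REXP} \subseteq \mathtt{i.o.}\mathsf{ESUBEXP}$, and then use the trivial containment $\mathsf{ZPEXP} \subseteq \mathsf{REXP}$ to obtain $\mathsf{ZPEXP} \subseteq \mathtt{i.o.}\mathsf{ESUBEXP}$, completing the argument. The main thing to verify carefully is the padding step turning $\mathsf{BPP} \subseteq \mathsf{ZPQP}$ into $\mathsf{BPEXP} \subseteq \mathsf{ZPEXP}$: one has to check that the quasi-polynomial blow-up on the padded input of length $2^{n^c}$ translates back to a $2^{n^{O(1)}}$ bound on the original input, which is immediate since $(\log N)^{O(1)} = n^{O(1)}$ when $N = 2^{n^c}$. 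No obstacle beyond careful bookkeeping arises; the entire proof is a short chain of prior results, with the key conceptual move being the use of Lemma \ref{l:EasyWitness} to eliminate advice in passing from the bounded-error collapse of Theorem \ref{t:BPEXP_KarpLipton_LowEnd} to a fully uniform statement for $\mathsf{ZPEXP}$.
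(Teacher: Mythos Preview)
Your proof is correct and relies on the same two ingredients as the paper's, namely Lemma \ref{l:EasyWitness} and Theorem \ref{t:REXP_KarpLipton_LowEnd}. The only difference is organizational: the paper argues the contrapositive and, rather than invoking Theorem \ref{t:REXP_KarpLipton_LowEnd} as a black box, re-opens its proof to extract the intermediate claim $\mathsf{REXP}\not\subseteq\mathtt{i.o.}\mathsf{ESUBEXP}\Rightarrow\mathsf{PSPACE}\subseteq\mathsf{BPQP}$, combines this with $\mathsf{BPQP}=\mathsf{ZPQP}$ (from $\mathsf{BPP}\subseteq\mathsf{ZPQP}$ by padding) to obtain $\mathsf{EXPSPACE}=\mathsf{ZPEXP}$, and then applies Corollary \ref{c:iockthard} directly; your route is the more modular packaging of the same argument.
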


\begin{proof}
Observe that the proof of Theorem \ref{t:REXP_KarpLipton_LowEnd} establishes that if $\mathsf{REXP} \not \subseteq \mathtt{i.o.}\mathsf{ESUBEXP}$, then $\mathsf{PSPACE} \subseteq \mathsf{BPQP}$. By Lemma \ref{l:EasyWitness}, if $\mathsf{ZPEXP} \not \subseteq \mathtt{i.o.}\mathsf{ESUBEXP}$, then $\mathsf{BPP} \subseteq \mathsf{ZPQP}$, and hence by upward translation, $\mathsf{BPQP} = \mathsf{ZPQP}$. Putting these together, we have that if $\mathsf{ZPEXP} \not \subseteq \mathtt{i.o.}\mathsf{ESUBEXP}$, then $\mathsf{PSPACE} \subseteq \mathsf{ZPQP}$. Now by upward translation, we have that $\mathsf{EXPSPACE} = \mathsf{ZPEXP}$, and hence by Corollary \ref{c:iockthard}, we get $\mathsf{ZPEXP} \not \subseteq \mathtt{i.o.}\mathsf{Circuit}[2^{n/3}]$.
\end{proof}

We have learned from Valentine Kabanets (private communication) that he has independently established Theorem \ref{t:ZPEXP_Karp-Lipton} in an unpublished manuscript.

In fact, we can get a non-trivial consequence from the {\it weakest} possible non-trivial assumption about the circuit size of Boolean functions computable in zero-error exponential time. This extension of Theorem \ref{t:ZPEXP_Karp-Lipton} relies on the following simple lemma.

\begin{lemma} [Maximally hard functions in exponential space]
\label{l:maxhard}
Let $s_{\mathsf{max}} \colon \mathbb{N} \rightarrow \mathbb{N}$ be such that for each $n \in \mathbb{N}$, $s_{\mathsf{max}}(n)$ is the maximum circuit complexity among Boolean functions on $n$ bits. Then $\mathsf{EXPSPACE} \not \subseteq \mathtt{i.o.}\mathsf{Circuit}[s_{\mathsf{max}}-1]$.
\end{lemma}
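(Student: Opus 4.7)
The plan is to construct a language $L \in \mathsf{EXPSPACE}$ such that for every $n$, the characteristic function $L_n$ of $L$ restricted to $\{0,1\}^n$, viewed as an element of $\mathcal{F}_n$, has minimum $\mathsf{Circuit}$-size exactly $s_{\mathsf{max}}(n)$. Any such $L$ trivially satisfies $L \notin \mathsf{Circuit}[s_{\mathsf{max}}-1]$, and in particular $L \notin \mathtt{i.o.}\mathsf{Circuit}[s_{\mathsf{max}}-1]$, so this is stronger than what the lemma requires.

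First I would pin down the language. For each $n$, define $f_n^\star \in \mathcal{F}_n$ to be the lexicographically first Boolean function (under the natural order on $2^n$-bit truth tables) whose minimum circuit size equals $s_{\mathsf{max}}(n)$. Such an $f_n^\star$ exists by the very definition of $s_{\mathsf{max}}(n)$ as the maximum circuit complexity over $\mathcal{F}_n$. Set $L \eqdef \{x \in \{0,1\}^n : f_n^\star(x) = 1,\; n \in \mathbb{N}\}$.

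Next I would verify $L \in \mathsf{EXPSPACE}$. On input $x$ of length $n$, a machine proceeds as follows. It loops over all $2^{2^n}$ strings $y \in \{0,1\}^{2^n}$, interpreted as truth tables of functions in $\mathcal{F}_n$. For each such $y$, it computes the minimum circuit size of $\mathtt{fn}(y)$ by brute-force enumeration: for $s = 1, 2, \ldots$ up to the trivial bound $O(2^n/n)$, it iterates over all $\leq 2^{O(s \log s)}$ circuit encodings of size $s$, and for each encoding evaluates the circuit on all $2^n$ inputs to test equality with $y$. This whole procedure uses space $2^{O(n)}$ (to hold $y$, a candidate circuit encoding, and counters), which is $\mathsf{EXPSPACE}$ in the input length $n$. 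In a first pass the machine extracts $s_{\mathsf{max}}(n)$ as the largest minimum-complexity value encountered, and in a second pass of the same enumeration it locates the lex-first $y^\star$ achieving that value, i.e.\ the truth table of $f_n^\star$. It then outputs bit $x$ of $y^\star$, that is $f_n^\star(x)$. All intermediate data fit in space $2^{O(n)}$, so $L \in \mathsf{DSPACE}(2^{O(n)}) \subseteq \mathsf{EXPSPACE}$.

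Finally, by construction $L_n = f_n^\star$ has minimum circuit size exactly $s_{\mathsf{max}}(n)$ for every $n$. Hence for any sequence of circuits $\{C_n\}$ with $|C_n| \leq s_{\mathsf{max}}(n)-1$ and for every $n$, $C_n$ cannot compute $L_n$. This gives $L \notin \mathsf{Circuit}[s_{\mathsf{max}}-1]$, which in particular implies $L \notin \mathtt{i.o.}\mathsf{Circuit}[s_{\mathsf{max}}-1]$, completing the proof. There is no real obstacle here: the only thing to be careful about is the bookkeeping showing that computing circuit complexity by brute force fits in space $2^{O(n)}$, which is immediate since both the number of candidate circuits ($2^{O(s \log s)}$ with $s \leq O(2^n/n)$) and the truth-table size ($2^n$) can be traversed with counters of $\mathsf{poly}(2^n)$ bits.
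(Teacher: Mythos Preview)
Your proof is correct and follows exactly the paper's approach: brute-force search in exponential space for a function of maximum circuit complexity on each input length, then look up the relevant bit of its truth table. One phrasing issue worth fixing: the implication ``$L \notin \mathsf{Circuit}[s_{\mathsf{max}}-1]$ in particular implies $L \notin \mathtt{i.o.}\mathsf{Circuit}[s_{\mathsf{max}}-1]$'' is backwards in general (the $\mathtt{i.o.}$ class is the larger one), but what your construction actually establishes is the stronger \emph{everywhere} hardness---$L_n$ has circuit complexity exactly $s_{\mathsf{max}}(n)$ for every $n$---and that directly yields $L \notin \mathtt{i.o.}\mathsf{Circuit}[s_{\mathsf{max}}-1]$.
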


\begin{proof} (Sketch) The proof is by simple diagonalization. In exponential space, we can systematically list the truth tables of Boolean functions on $n$ bits, and maintain the one with the highest circuit complexity. To compute the circuit complexity of a listed truth table can be done by cycling over all circuits, starting from the smallest one, and checking for each circuit whether it computes the given truth table. Once the truth table of a function with maximum circuit complexity has been computed, we simply look up the corresponding entry in the truth table for any particular input.
\end{proof}

Now by using the same proof as for Theorem \ref{t:ZPEXP_Karp-Lipton} but applying Lemma \ref{l:maxhard} instead of Corollary \ref{c:iockthard}, we have the following stronger version of Theorem \ref{t:ZPEXP_Karp-Lipton}. (We note that Theorems \ref{t:BPEXP_KarpLipton_LowEnd} and \ref{t:REXP_KarpLipton_LowEnd} admit similar extensions.) 

\begin{theorem} [Strong Karp-Lipton Theorem for zero-error probabilistic exponential time]
\label{t:ZPEXP_max-hard}~\\
Let $s_{\mathsf{max}} \colon \mathbb{N} \rightarrow \mathbb{N}$ be such that for each $n \in \mathbb{N}$, $s_{\mathsf{max}}(n)$ is the maximum circuit complexity among Boolean functions on $n$ bits. If $\mathsf{ZPEXP} \subseteq \mathtt{i.o.}\mathsf{Circuit}[s_{\mathsf{max}}-1]$, then $\mathsf{ZPEXP} \subseteq \mathtt{i.o.}\mathsf{ESUBEXP}$.
\end{theorem}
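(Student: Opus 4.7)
The plan is to follow the proof of Theorem~\ref{t:ZPEXP_Karp-Lipton} essentially verbatim, replacing only the final invocation of Corollary~\ref{c:iockthard} with an application of Lemma~\ref{l:maxhard}. I will argue the contrapositive: assume that $\mathsf{ZPEXP} \not\subseteq \mathtt{i.o.}\mathsf{ESUBEXP}$, and derive $\mathsf{ZPEXP} \not\subseteq \mathtt{i.o.}\mathsf{Circuit}[s_{\mathsf{max}}-1]$.

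First, since $\mathsf{ZPEXP} \subseteq \mathsf{REXP}$, the assumption immediately gives $\mathsf{REXP} \not\subseteq \mathtt{i.o.}\mathsf{ESUBEXP}$. Inspecting the proof of Theorem~\ref{t:REXP_KarpLipton_LowEnd} (which reuses the argument of Theorem~\ref{t:BPEXP_KarpLipton_LowEnd} with the low-end generator $G_n^{\mathsf{LE}}$), one sees that this failure produces, for some $\mathsf{REXP}$ machine, infinitely many input lengths on which the attempted derandomization is incorrect; the bad inputs furnish circuit distinguishers which, via Lemmas~\ref{l:GenericDistLearn} and~\ref{l:PSPACELearnSolve}, collapse $f_{\mathsf{TV}}$ into quasi-polynomial bounded-error time. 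Thus $\mathsf{PSPACE} \subseteq \mathsf{BPQP}$.

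Second, Lemma~\ref{l:EasyWitness} applied to the same assumption $\mathsf{ZPEXP} \not\subseteq \mathtt{i.o.}\mathsf{ESUBEXP}$ forces the alternative conclusion $\mathsf{BPP} \subseteq \mathsf{ZPQP}$. A standard padding translation upgrades this to $\mathsf{BPQP} \subseteq \mathsf{ZPQP}$, and combined with the trivial reverse inclusion this gives $\mathsf{BPQP} = \mathsf{ZPQP}$. Chaining with the previous step yields $\mathsf{PSPACE} \subseteq \mathsf{ZPQP}$, and upward translation then produces $\mathsf{EXPSPACE} \subseteq \mathsf{ZPEXP}$. Together with the routine inclusion $\mathsf{ZPEXP} \subseteq \mathsf{EXPSPACE}$, we get the collapse $\mathsf{EXPSPACE} = \mathsf{ZPEXP}$.

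Finally, Lemma~\ref{l:maxhard} guarantees $\mathsf{EXPSPACE} \not\subseteq \mathtt{i.o.}\mathsf{Circuit}[s_{\mathsf{max}}-1]$, and under the collapse just established this rewrites as $\mathsf{ZPEXP} \not\subseteq \mathtt{i.o.}\mathsf{Circuit}[s_{\mathsf{max}}-1]$, which is the desired contrapositive. There is no genuine obstacle: the entire argument is bookkeeping on top of already-proved modules. The one point worth highlighting is why the substitution is legitimate, namely that Lemma~\ref{l:maxhard} diagonalizes in $\mathsf{EXPSPACE}$ against the \emph{maximal} non-trivial circuit size bound, which is precisely what allows the strengthening from $\mathsf{Circuit}[2^{n/3}]$ in Theorem~\ref{t:ZPEXP_Karp-Lipton} to $\mathsf{Circuit}[s_{\mathsf{max}}-1]$ here, with no further loss elsewhere in the proof.
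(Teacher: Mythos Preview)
Your proposal is correct and matches the paper's own argument essentially verbatim: the paper simply states that the proof of Theorem~\ref{t:ZPEXP_Karp-Lipton} goes through with Lemma~\ref{l:maxhard} substituted for Corollary~\ref{c:iockthard}, and you have spelled out exactly that substitution. The intermediate collapse $\mathsf{EXPSPACE} = \mathsf{ZPEXP}$ via $\mathsf{PSPACE} \subseteq \mathsf{BPQP}$ and $\mathsf{BPQP} = \mathsf{ZPQP}$ is identical in both.
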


\section{Hardness of the Minimum Circuit Size Problem}\label{s:mcsp}

We will be dealing with various notions of non-uniform reduction to versions of the Minimum Circuit Size Problem (MCSP). Reductions computable in a non-uniform class $\mathfrak{C}$ are formalized using oracle $\mathfrak{C}$-circuits, which are $\mathfrak{C}$-circuits with oracle gates. %Correctly defining size and depth of such circuits can be subtle, but our setting is simpler since 
We only use oracle circuits where oracle gates appear all at the same level. In this setting, we can define size and depth of oracle circuits to be the size and depth respectively of the oracle circuits with oracle gates replaced by AND/OR gates. 

\begin{definition} [Non-uniform Reductions]
\label{d:nonunifref}
Let $\mathfrak{C}$ be a typical class of circuits, and $L$ and $L'$ be languages.
\begin{itemize}

\item \emph{($m$-reduction)} We say $L$ $\mathfrak{C}$-reduces to $L'$ via $m$-reductions if there is a sequence of poly-size oracle $\mathfrak{C}$-circuits computing the slices $L_n$ of $L$ when the circuits are given oracle $L'$, and such that each oracle circuit has a single oracle gate, which is also the top gate of the circuit.

\item \emph{($tt$-reduction)} We say $L$ $\mathfrak{C}$-reduces to $L'$ via $tt$-reductions if there is a sequence of poly-size oracle $\mathfrak{C}$-circuits computing the slices $L_n$ of $L$ when the circuits are given oracle $L'$, and such that no oracle circuit has a directed path from one oracle gate to another.

\item \emph{($\varepsilon$-approximate reductions)} We extend these notions to hold between approximations of a language. Given a function $\varepsilon \colon \mathbb{N} \rightarrow [0,1]$ and languages $L$ and $L'$, we say that $L$ reduces to $\varepsilon$-approximating $L'$ under a certain notion of reduction if for each $\widetilde{L'}$ which agrees with $L'$ on at least a $1-\varepsilon(n)$ fraction of inputs of length $n$ for large enough $n$, $L$ reduces to $\widetilde{L'}$ under that notion of reduction. We say that $\varepsilon$-approximating $L$ reduces to $L'$ if there is a language $\widetilde{L}$ which agrees with $L$ on at least a $1-\varepsilon(n)$ fraction of inputs of length $n$ for large enough $n$, such that $\widetilde{L}$ reduces to $L'$. More generally, we say that $\varepsilon$-approximating $L$ reduces to $\varepsilon'$-approximating $L'$ under a certain notion of reduction if for any language $\widetilde{L'}$ that $\varepsilon'(n)$-approximates $L'$ on inputs of length $n$ for large enough $n$, there is a language $\widetilde{L}$ that $\varepsilon(n)$-approximates $L$ on inputs of length $n$ for large enough $n$, and a corresponding reduction from $\widetilde{L}$ to $\widetilde{L'}$.

\item \emph{(Parameterized reduction)} If a reduction is not computed by polynomial size circuits, we extend these definitions in the natural way, and say that $L$ $\mathfrak{C}[s]$-reduces to $L'$, where $s \colon \mathbb{N} \to \mathbb{N}$ is the appropriate circuit size bound. 

\end{itemize}

\end{definition}

The following proposition is immediate from the definitions and the fact that typical circuit classes are closed under composition.

\begin{proposition} [Transitivity of reductions]
\label{p:transitive}
Let $\mathfrak{C}$ be a typical circuit class, $L$, $L'$, $L''$ be languages, and $\varepsilon, \varepsilon', \varepsilon'' \colon \mathbb{N} \rightarrow [0,1]$ be functions.  
\begin{itemize}
\item [\emph{(}i\emph{)}] If $L$ $\mathfrak{C}$-reduces to $L'$ via $m$-reductions \emph{(}resp. $tt$-reductions\emph{)} and $L'$ $\mathfrak{C}$-reduces to $L''$ via $m$-reductions \emph{(}resp. $tt$-reductions\emph{)}, then $L$ $\mathfrak{C}$-reduces to $L''$ via $m$-reductions \emph{(}resp. $tt$-reductions\emph{)}.
\item [\emph{(}ii\emph{)}] If $\varepsilon(\mathsf{poly}(n))$-approximating $L$ $\mathfrak{C}$-reduces to $\varepsilon'(\mathsf{poly}(n))$-approximating $L'$ via $m$-reductions \emph{(}$tt$-reductions\emph{)} and $\varepsilon'(\mathsf{poly}(n))$-approximating $L'$ $\mathfrak{C}$-reduces to $\varepsilon''(\mathsf{poly}(n))$-approximating $L''$ via $m$-reductions \emph{(}$tt$-reductions\emph{)}, it follows that $\varepsilon(\mathsf{poly}(n))$-approximating $L$ $\mathfrak{C}$-reduces to $\varepsilon''(\mathsf{poly}(n))$-approximating $L''$ via $m$-reductions \emph{(}$tt$-reductions\emph{)}.
\end{itemize}
\end{proposition}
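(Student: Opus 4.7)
My plan is to prove both parts by a straightforward substitution argument, leveraging the closure of typical circuit classes under composition of polynomial-size circuits. For part (i), given oracle $\mathfrak{C}$-circuits $\{C_n\}$ reducing $L$ to $L'$ and oracle $\mathfrak{C}$-circuits $\{C'_m\}$ reducing $L'$ to $L''$, I would form a composed family $\{D_n\}$ by replacing every oracle gate in $C_n$---which queries $L'$ on some string of length $m = \mathsf{poly}(n)$---with a copy of $C'_m$, whose oracle gates now query $L''$. Because $\mathfrak{C}$ is typical and both reductions are of polynomial size, $D_n$ is a polynomial-size $\mathfrak{C}$-circuit with oracle access to $L''$, yielding the desired reduction from $L$ to $L''$.

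The two structural conditions must then be verified. For $m$-reductions, $C_n$ has a single oracle gate as its output gate, and after substitution the output of $D_n$ coincides with the output of the single copy of $C'_{m(n)}$ placed at the top, which itself has a single oracle gate at the top by the $m$-reduction assumption on $C'$; hence $D_n$ is again an $m$-reduction. For $tt$-reductions, I must check that no directed path in $D_n$ connects two oracle gates. Inside any single copy of $C'_m$ the $tt$-property of $C'$ precludes such a path, and across different copies of $C'_m$ the only way to leave one copy is through its output wire, which occupies the position of a former oracle gate of $C_n$; by the $tt$-property of $C_n$, the output of that oracle gate cannot reach another oracle gate of $C_n$, and in particular cannot reach an oracle gate sitting in a different $C'_m$ copy. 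This case analysis is the one spot where a genuinely routine check is needed, and I expect it to be the main (though mild) obstacle.

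For part (ii), I would simply chain the approximate definitions and invoke part (i). Given any $\widetilde{L''}$ that $\varepsilon''(\mathsf{poly}(n))$-approximates $L''$, the second hypothesis supplies a language $\widetilde{L'}$ that $\varepsilon'(\mathsf{poly}(n))$-approximates $L'$ together with a corresponding reduction from $\widetilde{L'}$ to $\widetilde{L''}$; the first hypothesis then supplies a language $\widetilde{L}$ that $\varepsilon(\mathsf{poly}(n))$-approximates $L$ together with a reduction from $\widetilde{L}$ to $\widetilde{L'}$. Applying part (i) to these two concrete reductions produces a composite reduction from $\widetilde{L}$ to $\widetilde{L''}$ of the required type, matching the definition. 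The only subtle point is the polynomial blow-up of input lengths between the two reductions, which is precisely what the $\mathsf{poly}(n)$ wrapping in the statement is designed to absorb; no additional accuracy loss appears, because the approximate hypotheses quantify uniformly over all admissible approximants.
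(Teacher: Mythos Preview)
Your proposal is correct and follows essentially the same approach as the paper, which simply states that the proposition is immediate from the definitions and the closure of typical circuit classes under composition. You have supplied the details the paper omits, including the careful case analysis for the $tt$-reduction structural condition and the chaining of approximants for part~(ii), all of which are sound.
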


Using the notation introduced above, the following fact is trivial to establish.

\begin{proposition} [Relation between parameterized and unparameterized versions of MCSP]
\label{p:MCSPFact}
~\\For any typical circuit class $\mathfrak{C}$, $\mathsf{MCSP}$-$\mathfrak{C}[2^{n/2}]$ $\mathsf{AC}^0$-reduces to $\mathsf{MCSP}$-$\mathfrak{C}$ via m-reductions. 
\end{proposition}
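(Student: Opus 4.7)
The plan is to exhibit an explicit non-uniform $\mathsf{AC}^0$ circuit with a single $\mathsf{MCSP}$-$\mathfrak{C}$ oracle gate as its top gate that implements the obvious identity-plus-padding map. Concretely, for each input length $N = 2^n$ (which identifies the slice $\mathsf{MCSP}$-$\mathfrak{C}[2^{n/2}]$), the reduction takes $y \in \{0,1\}^{2^n}$ and produces the pair $(y, 1^{\lceil 2^{n/2}\rceil})$ under whatever encoding of (truth-table, unary-size) pairs is used in Definition \ref{d:mcsp}, then feeds the encoded pair into a single $\mathsf{MCSP}$-$\mathfrak{C}$ oracle gate whose answer is the output of the whole circuit.

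First I would specify the circuit: it has $2^n$ input wires carrying the bits of $y$, which are routed directly to the positions of the oracle query that encode $y$; all remaining positions of the oracle query, which encode any separator symbols together with the unary parameter $1^{\lceil 2^{n/2}\rceil}$, are filled by hardwired constant-$0$ and constant-$1$ gates. The size bound $2^{n/2}$ is a function of the input length only, so nothing about the suffix depends on $y$, and a different circuit is allowed for each $N$ by non-uniformity. The circuit clearly lies in $\mathsf{AC}^0$: it has depth $0$ below the oracle gate (only wires and constants), total size $O(2^n)$, and the single oracle gate is the top gate, matching the definition of an $m$-reduction in Definition \ref{d:nonunifref}.

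Correctness is immediate from the two MCSP definitions: by Definition \ref{d:umcsp}, $y \in \mathsf{MCSP}$-$\mathfrak{C}[2^{n/2}]$ iff $\mathtt{fn}(y)$ has a $\mathfrak{C}$-circuit of size at most $\lceil 2^{n/2}\rceil$, and by Definition \ref{d:mcsp} this is exactly the condition verified by the oracle gate on the query $(y,1^{\lceil 2^{n/2}\rceil})$. There is no genuine obstacle in this proof; the only mild point worth noting is that a fixed encoding convention for the pair $(y,s)$ must be chosen so that the padding is produced by a constant-depth circuit, but since the padding is an input-independent suffix of length $O(2^{n/2})$ this is automatic for any reasonable encoding. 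The proposition thus just records the standard fact that padding the size parameter as constants is a trivial $\mathsf{AC}^0$ many-one reduction from any slice of MCSP to its unparameterized version.
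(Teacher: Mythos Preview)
Your proposal is correct and matches the paper's intended approach; the paper itself simply declares the fact ``trivial to establish'' without proof, and the identity-plus-hardwired-unary-parameter map you describe is exactly the obvious reduction. One inconsequential nitpick: when $n$ is odd, $2^{n/2}$ is not an integer, so to match Definition~\ref{d:umcsp} exactly you should hardwire $s = \lfloor 2^{n/2}\rfloor$ rather than $\lceil 2^{n/2}\rceil$, but this does not affect anything of substance.
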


\begin{theorem} [Hardness of MCSP for weakly approximating functions in typical circuit classes]\label{t:NWNonunifHardness}
Let $\mathfrak{C}$ be a typical circuit class that contains $\mathsf{AC}^0[p]$, for some fixed prime $p$. For every Boolean function $f \in \mathfrak{C}[n^k]$ there exists $c = c(k,\delta) \in \mathbb{N}$ such that $(1/2- \Omega(1/n^c))$-approximating $f$ $\mathsf{AC}^0$-reduces to $\mathsf{MCSP}$-$\mathfrak{C}[2^{n/2}]$ via $tt$-reductions, as well as to any property with density at least $1/4$ that is useful against $\mathfrak{C}[2^{\delta n}]$ for some fixed $\delta \in (0,1)$.
\end{theorem}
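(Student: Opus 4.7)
The plan is to combine the black-box generator of Theorem~\ref{t:blackbox_ACzerop} with the observation that both $\mathsf{MCSP}$-$\mathfrak{C}[2^{n/2}]$ and any dense property useful against $\mathfrak{C}[2^{\delta n}]$ serve as complexity distinguishers against the generator's output, and then to compile the associated Nisan--Wigderson reconstruction into a non-uniform $\mathsf{AC}^0$ tt-reduction that makes a single oracle call per input.

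First I would instantiate Theorem~\ref{t:blackbox_ACzerop} with $\gamma = 1/n^c$ (for $c = c(k,\delta)$ to be tuned) and $\ell = c_1 \log n$ (for $c_1$ large enough depending on $k$ and $\delta$), obtaining a family $\mathtt{GEN}(f) = \{g_z\}_{z \in \{0,1\}^m}$ with seed length $m = \mathsf{poly}(n)$, truth-table length $L = 2^\ell = n^{c_1}$, and each $g_z \in \mathfrak{C}^f[\mathsf{poly}(m)]$. Hardwiring the $n^k$-size $\mathfrak{C}$-circuit for $f$ into the $f$-oracle gates of $g_z$ yields $g_z \in \mathfrak{C}[\mathsf{poly}(n)]$, and taking $c_1$ large enough pushes this size below both $2^{\ell/2}$ and $2^{\delta\ell}$. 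Consequently $\mathsf{MCSP}$-$\mathfrak{C}[2^{n/2}]$, evaluated on inputs of length $L$, accepts every $\mathtt{tt}(g_z)$, while by Lemma~\ref{l:random_functions} it rejects all but an exponentially small fraction of random $\ell$-bit truth tables; a single MCSP query therefore clears the $1/4$ threshold of Definition~\ref{d:distinguisher}. The same reasoning shows that any combinatorial property of density $\geq 1/4$ useful against $\mathfrak{C}[2^{\delta\ell}]$ rejects every $\mathtt{tt}(g_z)$ and accepts a $\geq 1/4$ fraction of random functions, again yielding a distinguisher of advantage $\geq 1/4$.

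Given such a distinguisher, the reconstruction clause of Definition~\ref{d:blackbox_generator}(iii) hands us an $f$-oracle procedure returning a circuit $\gamma$-close to $f$. To compile it into an $\mathsf{AC}^0$ tt-reduction I would unfold the NW hybrid-to-predictor transformation underlying Theorem~\ref{t:blackbox_ACzerop}, exploiting two structural facts: (a) each bit of $\mathtt{tt}(g_z)$ equals $f(z|_{D_y})$ for an NW-design subset $D_y \subseteq [m]$ of size $n$ satisfying $|D_y \cap D_{y'}| \leq O(\log L) = O(\log n)$, so once we fix a target index $y^*$ and set $z|_{D_{y^*}} = x$, every remaining bit $g_z(y) = f(z|_{D_y})$ depends on only $O(\log n)$ bits of $x$ and is therefore an $\mathsf{AC}^0$ table-lookup into advice hardcoded per input length; (b) the hybrid-to-predictor trick converts a distinguisher of advantage $\geq 1/4$ into a next-bit predictor for the $y^*$-th output position that makes a single call to the distinguisher, realized here as a single MCSP query on the truth table assembled via (a). The reduction on input $x$ thus computes the seed $z(x)$ with $z(x)|_{D_{y^*}} = x$ and the remaining bits fixed by advice, assembles the candidate truth table of $g_{z(x)}$ with the $y^*$-th entry replaced by a hardcoded coin-flip bit $b$, issues one MCSP query, and outputs $b$ if MCSP accepts and $\bar b$ otherwise. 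The hybrid argument guarantees that the best choices of $y^*$ and $b$ (both fixed non-uniformly) yield a $1/2 + \Omega(1/L) = 1/2 + \Omega(1/n^{c_1})$ advantage of this prediction against $f(x)$ over uniformly random $x$.

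The main obstacle is keeping every step inside non-uniform $\mathsf{AC}^0$ with tt-structured oracle queries. Majority voting over multiple parallel seeds — the usual means of amplifying the hybrid's $1/L$ advantage — is not in $\mathsf{AC}^0$, but is unnecessary because the theorem only demands weak $(1/2 - \Omega(1/n^c))$-approximation, which the single-attempt prediction already achieves for an appropriate $c = c(k,\delta)$; internal sampling in the NW reconstruction is absorbed into non-uniform advice via standard averaging, the NW design is a bit-selection and hence $\mathsf{AC}^0$, and the final output combines the single MCSP bit with a hardcoded bit via XOR. Calibrating $c_1$ to simultaneously beat the $\mathfrak{C}$-size bound of $g_z$ and the usefulness threshold $2^{\delta\ell}$, and tracking the polynomial loss through the hybrid transformation, yields the claimed $c = c(k,\delta)$; the same single reduction works against both oracle choices because only distinguishing advantage $\geq 1/4$ is used.
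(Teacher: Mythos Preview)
Your proposal is essentially correct and follows the same approach as the paper: instantiate the Nisan--Wigderson construction with the $\mathsf{AC}^0[p]$-computable designs of \cite{CIKK16}, observe that $\mathsf{MCSP}$-$\mathfrak{C}[2^{n/2}]$ (or any dense useful property) distinguishes the generator's output from random, and then run the NW hybrid-to-predictor reconstruction, hardwiring the non-uniform choices and exploiting the $O(\log n)$ design intersection so that each auxiliary bit is a table-lookup in $\mathsf{AC}^0$.

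One presentational point worth cleaning up: your opening invocation of Theorem~\ref{t:blackbox_ACzerop} with $\gamma = 1/n^c$ is a mismatch. That black-box generator, for small $\gamma$, wraps hardness amplification around NW (its $g_z$ are built from an amplified $\tilde f$, and its reconstruction decodes back to $\gamma$-closeness), and the amplification layer is precisely what cannot be kept inside $\mathsf{AC}^0$. You immediately ``unfold'' and use only the bare NW layer---so your actual argument is fine---but then the claim ``each bit of $\mathtt{tt}(g_z)$ equals $f(z|_{D_y})$'' and the $\gamma$-closeness from Definition~\ref{d:blackbox_generator}(iii) do not both hold for the object you instantiated. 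The paper avoids this by working directly with $\mathsf{NW}_c(f_n)$ (no amplification), which is exactly what your unfolded argument does. With that adjustment the two proofs coincide.
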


\begin{proof} (Sketch) Let $f = \{f_n\}_{n \in \mathbb{N}}$ be a function in $\mathfrak{C}[n^k]$, where $f_n \colon \{0,1\}^n \to \{0,1\}$ and $\mathsf{AC}^0[p] \subseteq \mathfrak{C}[\mathsf{poly}]$. Further, let $0 < \delta < 1$ be a constant. We let
$$
\mathsf{NW}_c(f_n) \eqdef \{g_z \colon \{0,1\}^{c \log n}  \to \{0,1\} \mid z \in \{0,1\}^{\Theta(n^2)}\}
$$
be the family (multiset) of functions obtained by instantiating the Nisan-Wigderson \citep{DBLP:journals/jcss/NisanW94} construction with the $\mathsf{AC}^0[p]$-computable designs from \citep{CIKK16} and $f_n$. A bit more precisely, each $g_z$ is a function specified by a seed $z$ of length $\Theta(n^2)$, the family of sets $\mathcal{S}_n = \{S_w \subseteq [\Theta(n^2)] \mid w \in \{0,1\}^{c \log n}\}$, and $f_n$, and we have $g_z(w) \eqdef f_n(z_{S_w})$. Here each $S_w \subseteq [\Theta(n^2)]$ contains exactly $n$ elements ($S_w$ is the $w$-th set in the design), and $z_{S_w} \in \{0,1\}^n$ is the projection of $z$ to coordinates $S_w$. By taking $c =  c(\delta, k)$ sufficiently large and using that $\delta > 0$, $f_n \in \mathfrak{C}[n^k]$, and that the design can be implemented in $\mathfrak{C}[\mathsf{poly}]$, it follows from \citep{DBLP:journals/jcss/NisanW94, CIKK16} that for large enough $n$:\\ 

\noindent (A) Each $g_z$ is a function on $m \eqdef c \log n$ input bits of $\mathfrak{C}$-circuit complexity $\leq 2^{\delta m}$.\\

On the other hand, if $h_m \sim \mathcal{F}_m$ is a uniformly random Boolean function on $m$ input bits, using that $\delta < 1$ it easily follows from a counting argument (e.g. Lemma \ref{l:number_circuits}) that for large enough $n$ (recall that $m = c \log n$):\\

\noindent (B) $h_m$ has $\mathfrak{C}$-circuit complexity $> 2^{\delta m}$ with probability $1 - o(1)$.\\

Consequently, from (A) and (B) we get that an oracle to $\mathsf{MCSP}$-$\mathfrak{C}[2^{n/2}]$ (corresponding to $\delta = 1/2$) can be used to distinguish the multiset $\mathsf{NW}_c(f_n)$ (sampled according to $z \sim \{0,1\}^{\Theta(n^2)})$ from a random function on $m$ input bits. 

We argue next that it follows from the description of the Nisan-Wigderson reconstruction procedure \citep{DBLP:journals/jcss/NisanW94} that there is a $tt$-reduction from $(1/2 - \Omega(1/n^c))$-approximating $f_n$ to $\mathsf{MCSP}$-$\mathfrak{C}[2^{n/2}]$ that is computable by $\mathsf{AC}^0$-circuits. That some non-uniform approximate reduction with oracle access to $f_n$ exists immediately follows from the proof of their main result. That it can be computed in $\mathsf{AC}^0[\mathsf{poly}(n)]$ with oracle access to the distinguisher $\mathsf{MCSP}$-$\mathfrak{C}[2^{n/2}]$ (and without oracle access to $f_n$) follows by our choice of parameters (in particular, $|S_{w_1} \cap S_{w_2}| = O(\log n)$ for every pair $w_1 \neq w_2$), non-uniformity of the reduction, and the fact that the output of $f_n$ on any particular $n$-bit input can be hardwired into the (non-uniform) $\mathsf{AC}^0$ circuit computing the reduction. Finally, we remark that the $\Omega(1/n^c)$ advantage in the approximation comes from the truth-table size of each $g_z$ and the hybrid argument in \citep{DBLP:journals/jcss/NisanW94}, and that we get a $tt$-reduction because the reconstruction procedure is non-adaptive.

In fact, the same argument shows that $(1/2 - \Omega(1/n^c))$-approximating $f$ $\mathsf{AC}^0$-reduces via $tt$-reductions to any property useful against $\mathfrak{C}[2^{\delta n}]$ for some $\delta \in (0,1)$ and with density at least $1/4$, since this suffices to implement the Nisan-Wigderson reconstruction routine. This completes the proof of Theorem \ref{t:NWNonunifHardness}.
\end{proof}

\begin{corollary} [Hardness of the standard circuit version of MCSP]
\label{c:MCSPCkt}
For any Boolean function $f \in \mathsf{Circuit}[\mathsf{poly}(n)]$, there exists $c \geq 1$ such that  $(1/2 - 1/n^c)$-approximating $f$ $\mathsf{AC}^0$-reduces via $tt$-reductions to $\mathsf{MCSP}$-$\mathsf{Circuit}$ and to any property with density at least $1/4$ that is useful against $\mathsf{Circuit}[2^{n/2}]$. 
\end{corollary}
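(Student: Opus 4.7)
The plan is to derive the corollary directly from Theorem \ref{t:NWNonunifHardness} by (i) instantiating the typical class $\mathfrak{C}$ as $\mathsf{Circuit}$ and (ii) composing the resulting reduction with Proposition \ref{p:MCSPFact} to move from the parameterized version $\mathsf{MCSP}$-$\mathsf{Circuit}[2^{n/2}]$ to the unparameterized $\mathsf{MCSP}$-$\mathsf{Circuit}$.

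First, note that $\mathsf{Circuit}$ is a typical circuit class and trivially contains $\mathsf{AC}^0[p]$ for any prime $p$. Given $f \in \mathsf{Circuit}[\mathsf{poly}(n)]$, fix $k \geq 1$ with $f \in \mathsf{Circuit}[n^k]$ and set $\delta = 1/2$. Theorem \ref{t:NWNonunifHardness} then supplies a constant $c = c(k,1/2) \in \mathbb{N}$ together with an $\mathsf{AC}^0$-computable non-adaptive (truth-table) oracle reduction showing that $(1/2 - \Omega(1/n^c))$-approximating $f$ reduces to $\mathsf{MCSP}$-$\mathsf{Circuit}[2^{n/2}]$, and likewise to any property of density $\geq 1/4$ that is useful against $\mathsf{Circuit}[2^{n/2}]$. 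The second half of the corollary is already exactly this conclusion, so nothing further needs to be done for it.

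For the first half, I would invoke Proposition \ref{p:MCSPFact}, which yields an $\mathsf{AC}^0$-computable $m$-reduction from $\mathsf{MCSP}$-$\mathsf{Circuit}[2^{n/2}]$ to $\mathsf{MCSP}$-$\mathsf{Circuit}$ (the $m$-reduction simply appends the size parameter $s = 2^{n/2}$, written in unary, to the input truth table). I would then appeal to part (ii) of Proposition \ref{p:transitive} (transitivity of approximate reductions) with $\varepsilon(n) = 1/2 - \Omega(1/n^c)$, $\varepsilon'(n) = 0$, and $\varepsilon''(n) = 0$ to compose the two reductions into a single $\mathsf{AC}^0$-computable $tt$-reduction from $(1/2 - \Omega(1/n^c))$-approximating $f$ to $\mathsf{MCSP}$-$\mathsf{Circuit}$. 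Adjusting $c$ by a constant factor (to absorb the polynomial blow-up from the reduction of Proposition \ref{p:MCSPFact} when we apply the theorem with input length $\mathsf{poly}(n)$) gives the stated $(1/2 - 1/n^c)$ approximation bound.

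There is really no genuine obstacle here since both ingredients are already proved: the potentially delicate point is only to verify that composition preserves both the $tt$-nature of the outer reduction (which holds because the inner $m$-reduction has a single, top-level oracle gate, so plugging it in place of each oracle gate of the $tt$-reduction keeps all oracle gates on a single level with no directed path between them) and the $\mathsf{AC}^0$ resource bound (which holds by closure of $\mathsf{AC}^0$ under composition, as $\mathsf{Circuit}$ is typical). Once these points are checked, the corollary is immediate.
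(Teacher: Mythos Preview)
Your proposal is correct and follows essentially the same approach as the paper: instantiate Theorem~\ref{t:NWNonunifHardness} with $\mathfrak{C} = \mathsf{Circuit}$ and $\delta = 1/2$ to get both conclusions for the parameterized problem, then compose with Proposition~\ref{p:MCSPFact} via Proposition~\ref{p:transitive} to pass to the unparameterized $\mathsf{MCSP}$-$\mathsf{Circuit}$. The additional checks you spell out (that composing a $tt$-reduction with an $m$-reduction yields a $tt$-reduction, and that $\mathsf{AC}^0$ is preserved) are exactly the routine verifications the paper leaves implicit.
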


\begin{proof}
The second item follows immediately from Theorem \ref{t:NWNonunifHardness}, since $\mathsf{Circuit}$ is typical. The first item follows from Theorem \ref{t:NWNonunifHardness}, Proposition \ref{p:MCSPFact} and Proposition \ref{p:transitive}.
\end{proof}

\begin{proposition} [Hardness amplification for $\mathsf{Formula}$]
\label{p:NC1ampl}
There exists a Boolean function $f \in \mathsf{Formula}$ that is $\mathsf{Formula}$-hard under $\mathsf{AC}^0$-reductions such that for every integer $d \geq 1$, $f$ $\mathsf{TC}^0$-reduces via $tt$-reductions to $(1/2-1/n^d)$-approximating $f$. 
\end{proposition}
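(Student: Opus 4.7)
The plan is to take $f$ to be a $\mathsf{Formula}$-complete function endowed with a low-degree polynomial structure that admits $\mathsf{TC}^0$-computable self-correction, then amplify via list-decoding of Reed--Muller codes.

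First, I would fix a $\mathsf{Formula}$-complete language $L$ under $\mathsf{AC}^0$-reductions (for instance, the word problem over $S_5$, via Barrington's theorem, or the Formula Evaluation Problem). Using the standard arithmetization, $L$ can be realized as the Boolean evaluation of a polynomial $P \in \mathbb{F}[y_1,\ldots,y_m]$ of total degree $d = \mathsf{poly}(n)$ over a finite field $\mathbb{F}$ of size $\mathsf{poly}(n)$, with $m = \mathsf{poly}(n)$ variables. I would define $f_n(x)$ as follows: parse $x$ as an element of $\mathbb{F}^m$ (padded appropriately), compute $P(x) \in \mathbb{F}$, and output a fixed bit of the encoding of $P(x)$. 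Since $P$ can itself be evaluated by a polynomial-size formula, $f \in \mathsf{Formula}$. Moreover, $f$ is $\mathsf{Formula}$-hard under $\mathsf{AC}^0$-reductions: an instance of $L$ of size $n$ is mapped in $\mathsf{AC}^0$ to the evaluation point of $\mathbb{F}^m$ encoding its variable assignment, and the answer of $L$ is recovered from $f$ by an $\mathsf{AC}^0$ postprocessing.

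Next, for the $tt$-reduction from $f$ to $(1/2 - 1/n^d)$-approximating $f$, I would use the Sudan--Trevisan--Vadhan worst-case-to-average-case reduction for low-degree polynomials, instantiated at the list-decoding radius $1/2 - 1/n^d$. On input $x \in \mathbb{F}^m$, the reduction picks a small random-dimensional affine subspace $V \subseteq \mathbb{F}^m$ containing $x$, queries the faulty oracle $\tilde f$ at every point of $V$, and runs a list-decoder for the Reed--Muller code to recover a short list of candidate low-degree restrictions $\{Q_i\}$ of $P$ to $V$. The correct $P|_V$ appears in the list with high probability by the list-decoding guarantee; one then evaluates each $Q_i(x)$ and disambiguates the correct value by a majority / consistency test using independent random subspaces through $x$. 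All queries are made to $\tilde f$ non-adaptively, giving a $tt$-reduction.

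The main obstacle is verifying that the entire reduction sits in $\mathsf{TC}^0$. Field arithmetic over $\mathbb{F}$ with $|\mathbb{F}|=\mathsf{poly}(n)$ reduces to iterated integer arithmetic of polynomially many $O(\log n)$-bit numbers, which is in $\mathsf{TC}^0$; random-subspace generation, univariate/bivariate polynomial interpolation, and majority voting are likewise $\mathsf{TC}^0$. The delicate piece is the Reed--Muller list-decoder itself: the Sudan/Guruswami--Sudan style decoders involve polynomial interpolation over a bounded-size grid and a factoring step whose outputs have constant-size lists at our choice of parameters. By picking $V$ to be an affine \emph{plane} (2-dimensional subspace) of constant size over $\mathbb{F}$, so that $P|_V$ has constant degree and the candidate list has constant size, the decoding amounts to a brute-force enumeration of low-degree bivariate polynomials checkable by counting agreements, which is straightforwardly in $\mathsf{TC}^0$. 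The disambiguation step -- repeating the procedure for polynomially many independent subspaces through $x$ and taking the plurality candidate -- is a threshold computation, hence also in $\mathsf{TC}^0$. Putting these pieces together yields the claimed $\mathsf{TC}^0$ $tt$-reduction from $f$ to its $(1/2-1/n^d)$-approximation.
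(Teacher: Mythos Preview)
The paper's route is different from yours: it takes a random self-reducible $\mathsf{NC}^1$-complete function and appeals to the XOR lemma for the amplification step, deferring the $\mathsf{TC}^0$-complexity of both the random-self-reduction and the XOR-lemma decoder to \cite{DBLP:journals/siamcomp/ShaltielV10, DBLP:conf/approx/AllenderAW10, DBLP:books/sp/goldreich2011/GoldreichNW11}. Your STV-style list-decoding strategy is a legitimate alternative in principle, but the concrete construction you sketch has a genuine gap.

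You define $f_n(x)$ to output \emph{one fixed bit} of the field element $P(x)$. A $(1/2+1/n^d)$-approximator $\tilde f$ then hands you a single noisy bit of each $P(y)$; Reed--Muller list-decoding on a random subspace through $x$ needs (approximate) full field values at the points of the subspace, and a single bit of a low-degree polynomial is not itself a low-degree polynomial over $\mathbb{F}$. The standard STV construction avoids exactly this by making the bit index part of the input, i.e.\ $f(x,r)=\langle P(x),r\rangle$, and first running a Goldreich--Levin step to recover a short list of candidates for each $P(y)$ before the Reed--Muller decoding; your ``fixed bit'' function omits this concatenation and the reduction does not go through. There is a second inconsistency in your parameters: with $|\mathbb{F}|=\mathsf{poly}(n)$ and $\deg P=\mathsf{poly}(n)$, a $2$-dimensional affine plane $V$ has $|\mathbb{F}|^2=\mathsf{poly}(n)$ points and $P|_V$ has polynomial (not constant) degree, so ``brute-force enumeration of low-degree bivariate polynomials'' is exponential, not $\mathsf{TC}^0$. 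Showing that the full decoder sits in non-uniform $\mathsf{TC}^0$ is precisely the delicate content that the paper outsources to the cited references; your sketch does not supply it.
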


\begin{proof} (Sketch) This is achieved using a standard hardness amplification argument using the existence of a random self-reducible complete problem in $\mathsf{NC}^1$, as well as the XOR lemma. It is known that the circuits used in the hardness amplification reconstruction procedure and for random-self-reducibility can be implemented in non-uniform $\mathsf{TC}^0$. For more details, we refer to  \citep{DBLP:journals/siamcomp/ShaltielV10, DBLP:conf/approx/AllenderAW10, DBLP:books/sp/goldreich2011/GoldreichNW11}.
\end{proof}

\begin{corollary} [Hardness of MCSP for $\mathsf{NC}^1$]
\label{c:NC1MCSP}
For every Boolean function $f \in \mathsf{Formula}$, $f$ $\mathsf{TC}^0$-reduces to the following problems via $tt$-reductions\emph{:}
\begin{enumerate}

\item[\emph{1.}] $\mathsf{MCSP}$-$\mathsf{Formula}[2^{n/2}]$. 

\item[\emph{2.}] Any property useful against $\mathsf{Formula}[2^{\delta n}]$ for $\delta \in (0,1)$ and with density at least $1/4$.

\item[\emph{3.}] $\mathsf{MCSP}$-$\mathsf{Formula}$.

\item[\emph{4.}] $\mathsf{MCSP}$-$\mathfrak{C}$ for any typical circuit class $\mathfrak{C} \supseteq \mathsf{Formula}$.

\end{enumerate}

\end{corollary}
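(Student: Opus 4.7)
The plan is to combine the two main ingredients already developed in the section, namely Proposition~\ref{p:NC1ampl} (hardness amplification inside $\mathsf{Formula}$ computable in $\mathsf{TC}^0$) and Theorem~\ref{t:NWNonunifHardness} (a Nisan--Wigderson style reduction from weak approximation of any $\mathfrak{C}$-function to $\mathsf{MCSP}$-$\mathfrak{C}[2^{n/2}]$ computable in $\mathsf{AC}^0$), and then to close under the transitivity of $tt$-reductions stated in Proposition~\ref{p:transitive}. Since $\mathsf{Formula}[\mathsf{poly}] = \mathsf{NC}^1$ contains $\mathsf{AC}^0[p]$ and is typical, both ingredients apply with $\mathfrak{C} = \mathsf{Formula}$.

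Concretely, let $f^\star \in \mathsf{Formula}$ be the $\mathsf{Formula}$-hard, amplifiable function supplied by Proposition~\ref{p:NC1ampl}. First, I would apply Theorem~\ref{t:NWNonunifHardness} to $f^\star$ with $\mathfrak{C} = \mathsf{Formula}$ to obtain a constant $c$ such that $(1/2 - \Omega(1/n^c))$-approximating $f^\star$ $\mathsf{AC}^0$-reduces via $tt$-reductions both to $\mathsf{MCSP}$-$\mathsf{Formula}[2^{n/2}]$ and to any density-$\tfrac14$ property useful against $\mathsf{Formula}[2^{\delta n}]$. Next, picking $d \geq c$ in Proposition~\ref{p:NC1ampl} gives a $\mathsf{TC}^0$ $tt$-reduction from $f^\star$ itself to that very approximation problem. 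Composing these via Proposition~\ref{p:transitive}(ii) yields a $\mathsf{TC}^0$ $tt$-reduction from $f^\star$ to each of the two targets, which establishes items~1 and~2 for $f = f^\star$. Finally, since $f^\star$ is $\mathsf{Formula}$-hard under $\mathsf{AC}^0$-reductions, any $f \in \mathsf{Formula}$ $\mathsf{AC}^0$-reduces to $f^\star$, and transitivity lifts the conclusion to every $f \in \mathsf{Formula}$.

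For item~3, I would compose the reduction from item~1 with Proposition~\ref{p:MCSPFact}, which gives an $\mathsf{AC}^0$ $m$-reduction from $\mathsf{MCSP}$-$\mathsf{Formula}[2^{n/2}]$ to $\mathsf{MCSP}$-$\mathsf{Formula}$. For item~4, the same template works verbatim after replacing $\mathsf{Formula}$ by an arbitrary typical $\mathfrak{C} \supseteq \mathsf{Formula}$ inside Theorem~\ref{t:NWNonunifHardness}: since $f^\star \in \mathsf{Formula}[\mathsf{poly}] \subseteq \mathfrak{C}[\mathsf{poly}]$, the NW-reduction produces an $\mathsf{AC}^0$ $tt$-reduction from $(1/2-\Omega(1/n^c))$-approximating $f^\star$ to $\mathsf{MCSP}$-$\mathfrak{C}[2^{n/2}]$, which is in turn $\mathsf{AC}^0$ $m$-reducible to $\mathsf{MCSP}$-$\mathfrak{C}$ by Proposition~\ref{p:MCSPFact}; composition with Proposition~\ref{p:NC1ampl} and the $\mathsf{Formula}$-hardness of $f^\star$ finishes the job.

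The only genuinely delicate point will be aligning the approximation parameters: Proposition~\ref{p:NC1ampl} gives a $\mathsf{TC}^0$ reduction from $f^\star$ to $(1/2 - 1/n^d)$-approximating $f^\star$ for \emph{every} $d$, while Theorem~\ref{t:NWNonunifHardness} produces some fixed exponent $c = c(k,\delta)$ that depends on the formula size of $f^\star$ and the slack $\delta$ built into the $\mathsf{MCSP}$ threshold. As long as we first fix $c$ from the NW step and then instantiate $d \geq c$ in the amplification step, the approximate reduction in Proposition~\ref{p:transitive}(ii) composes cleanly and the final reduction remains a non-uniform $\mathsf{TC}^0$ $tt$-reduction (using that $\mathsf{TC}^0$ is closed under composition with $\mathsf{AC}^0$ and that oracle gates occur at a single layer in both constituent reductions). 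No further quantitative estimates should be needed beyond this bookkeeping.
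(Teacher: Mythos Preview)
Your proposal is correct and follows essentially the same approach as the paper: combine Proposition~\ref{p:NC1ampl} with Theorem~\ref{t:NWNonunifHardness} via Proposition~\ref{p:transitive}, then use Proposition~\ref{p:MCSPFact} to pass from the parameterized to the unparameterized problem. The only minor deviation is in Item~4: you re-apply Theorem~\ref{t:NWNonunifHardness} directly with the larger class $\mathfrak{C}$ (which is valid since $f^\star \in \mathfrak{C}[\mathsf{poly}]$), whereas the paper instead observes that $\mathsf{MCSP}$-$\mathfrak{C}[2^{n/2}]$ is itself a dense property useful against $\mathsf{Formula}[2^{n/2}]$ and invokes the already-established Item~2; both routes are equally short and yield the same conclusion.
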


\begin{proof}
Items 1 and 2 follow from Theorem \ref{t:NWNonunifHardness} applied to the typical class $\mathsf{Formula}$, together with Propositions \ref{p:transitive} and \ref{p:NC1ampl}. Item 3 follows from Item 1 and Propositions \ref{p:transitive} and \ref{p:MCSPFact}. Finally, in order to prove Item 4, note that $\mathsf{MCSP}$-$\mathfrak{C}[2^{n/2}]$ is useful against $\mathsf{Formula}[2^{n/2}]$, using the assumption that formulas are subclasses of $\mathfrak{C}$ circuits. Moreover, $\mathsf{MCSP}$-$\mathfrak{C}[2^{n/2}]$ as a property has density $1-o(1)$, since a random function has circuit complexity higher than $2^{n/2}$ with probability exponentially close to $1$ by the usual counting argument. Thus, it follows using the same argument as for Item 2 that $\mathsf{MCSP}$-$\mathfrak{C}[2^{n/2}]$ is $\mathsf{TC}^0$-hard under $tt$-reductions for $\mathsf{Formula}$. Item 4 now follows from this via Propositions \ref{p:transitive} and \ref{p:MCSPFact}.
\end{proof}

Hardness results as in Corollary \ref{c:NC1MCSP} also follow for other classes such as non-uniform  logarithmic space and the class of problems reducible to the determinant using non-uniform $\mathsf{TC}^0$ reductions, since these classes also have random self-reducible complete problems and admit worst-case to average-case reducibility in low complexity classes. We will not further elaborate on this here. 

A closely related problem is whether a string has high KT complexity (cf. \citep{DBLP:journals/siamcomp/AllenderBKMR06}). KT complexity is a version of Kolmogorov complexity, where a string has low complexity if it has a short description from which its bits are efficiently computable. We will not explore consequences for this notion in this work, but we expect that some of our results can be transferred to the problem of whether a string has high KT-complexity using standard observations about the relationship between this problem and MCSP.

\section{Open Problems and Further Research Directions}\label{s:open_problems}

We describe here a few directions and problems that we find particularly interesting, and that deserve further investigation.\\

\noindent \textbf{$\bullet$ Speedups in Computational Learning Theory.} One of our main conceptual contributions is the discovery of a surprising speedup phenomenon in learning under the uniform distribution using membership queries (Lemma \ref{l:learningspeedup}). Naturally, it would be relevant to understand which learning models admit similar speedups. In particular, is there an analogous result for learning under the uniform distribution using random examples? An orthogonal question is to weaken the assumptions on concept classes for which learning speedups hold.\\

\noindent \textbf{$\bullet$ Applications in Machine Learning.} Is it possible to use part of the machinery behind the proof of the Speedup Lemma (Lemma \ref{l:learningspeedup}) to obtain faster algorithms in practice? Notice that speedups are available for classes containing a constant number of layers of threshold gates, as $\mathsf{TC}^0$ is a typical circuit class according to our definition. Since these circuits can be seen as discrete analogues of neural networks, which have proven quite successful in several contexts of practical relevance, we believe that it is worth exploring these implications.\\

\noindent \textbf{$\bullet$ Non-Uniform Circuit Lower Bounds from Learning Algorithms.} As discussed in \citep{DBLP:conf/stoc/Williams14}, strong lower bounds are open even for seemingly weak classes such as $\mathsf{MOD}_2 \circ \mathsf{AND} \circ \mathsf{THR}$ and $\mathsf{AND} \circ \mathsf{OR} \circ \mathsf{MAJ}$ circuits. We would like to know if the learning approach to non-uniform lower bounds (Theorem \ref{t:learning_lbs_general}) can lead to new lower bounds against such heavily constrained circuits. More ambitiously, it would be extremely interesting to understand the learnability of $\mathsf{ACC}^0$, given that the existence of a nontrivial algorithm for large enough circuits implies $\mathsf{REXP} \nsubseteq \mathsf{ACC}^0$ (Theorem \ref{t:learning_lbs_ACC}).\\

\noindent \textbf{$\bullet$ The Frontier of Natural Proofs.} Is there a natural property against $\mathsf{ACC}^0$? Williams \citep{DBLP:journals/jacm/Williams14} designed a non-trivial satisfiability algorithm for sub-exponential size $\mathsf{ACC}^0$ circuits, which implies in particular that $\mathsf{NEXP} \nsubseteq \mathsf{ACC}^0$. On the other hand, Corollary \ref{c:williams_natural_lb} shows that the existence of a natural property against such circuits implies the stronger lower bound $\mathsf{ZPEXP} \nsubseteq \mathsf{ACC}^0$.\\

\noindent \textbf{$\bullet$ Connections between Learning, Proofs, Satisfiability, and Derandomization.} Together with previous work (e.g. \citep{DBLP:journals/siamcomp/Williams13, DBLP:journals/jacm/Williams14, DBLP:journals/cc/SanthanamW14, DBLP:conf/icalp/JahanjouMV15}), it follows that non-trivial learning, non-trivial proofs of tautologies (in particular, nontrivial satisfiability algorithms), and non-trivial derandomization algorithms all imply (randomized or nondeterministic) exponential time circuit lower bounds. These are distinct algorithmic frameworks, and the argument in each case is based on a different set of techniques. Is there a more general theory that is able to explain and to strengthen these connections? We view Corollary \ref{c:combination} as a very preliminary result indicating that a more general theory along these lines might be possible.\\

\noindent \textbf{$\bullet$ Unconditional Nontrivial Zero-Error Simulation of $\mathsf{REXP}$.} Establish unconditionally that $\mathsf{REXP} \subseteq \mathtt{i.o.}\mathsf{ZPESUBEXP}$. We view this result as an important step towards the ambitious goal of unconditionally derandomizing probabilistic computations, and suspect that it might be within the reach of current techniques. In particular, this would follow if one can improve Lemma \ref{l:EasyWitness}, which unconditionally establishes that either $\mathsf{BPP} \subseteq \mathsf{ZPQP}$ or $\mathsf{ZPEXP} \subseteq \mathtt{i.o.}\mathsf{ESUBEXP}$, to a result of the same form but with $\mathsf{REXP}$ in place of $\mathsf{ZPEXP}$.\\

\noindent \textbf{$\bullet$ Learning Algorithms vs. Pseudorandom Functions.} The results from Section \ref{s:learning_prfs} establish an equivalence between learning algorithms and the lack of pseudorandom functions in a typical circuit class, in the non-uniform exponential time regime. It would be interesting to further investigate this dichotomy, and to understand whether a more uniform equivalence can be established.\\

\noindent \textbf{$\bullet$ Hardness of the Minimum Circuit Size Problem.} Show that $\mathsf{MCSP} \notin \mathsf{AC}^0[p]$. We have established that if $\mathsf{MCSP} \in \mathsf{TC}^0$ then $\mathsf{NC}^1 \subseteq \mathsf{TC}^0$. Prove that if $\mathsf{MCSP} \in \mathsf{TC}^0$ then $\mathsf{Circuit}[\mathsf{poly}] \subseteq \mathsf{TC}^0$.\\

\noindent \rule{\textwidth}{0.5pt}\vspace{0.1cm}
\noindent \textbf{Acknowledgements.} We thank Eric Allender, Marco Carmosino, Russell Impagliazzo, Valentine Kabanets, Antonina Kolokolova, Jan Kraj\'{\i}\v{c}ek, Tal Malkin, J{\'a}n Pich, Rocco Servedio and Ryan Williams for discussions. We also thank Ruiwen Chen for several conversations during early stages of this work. 

The first author received support from CNPq grant 200252/2015-1. The second author was supported by the European Research Council under the European Union's Seventh Framework Programme (FP7/2007-2013)/ERC Grant No. 615075. Part of this work was done during a visit of the first author to Oxford supported by the second author's ERC grant.

\bibliographystyle{alpha}	
\bibliography{refs}	

\end{document}